\def\beq{\begin{equation}}
\def\eeq{\end{equation}}
\def\bea{\begin{eqnarray}}
\def\eea{\end{eqnarray}}
\def\wx{\widetilde x}
\def\wy{\widetilde y}
\def\wz{\widetilde z}
\def\wm{\widetilde m}
\def\wip{\widetilde p}
\newtheorem{theorem}{Theorem}
\newtheorem{proposition}[theorem]{Proposition}
\newtheorem{prop}[theorem]{Proposition}
\newtheorem{definition}[theorem]{Definition}
\newtheorem{corollary}[theorem]{Corollary}
\let\expandafter
\def\subeqnarray{\arraycolsep1pt
   \def\@eqnnum\stepcounter##1{\stepcounter{subequation}
       {\reset@font\rm(\theequation\alph{subequation})}}
\jot5mm     \eqnarray}
\newcommand{\bbZ}{{\mathbb Z}}
\newcommand{\bbR}{{\mathbb R}}
\def\al{\alpha}
\def\ep{\epsilon}
\def\l{\lambda}
\def\ri{{\rm{i}}}
\def\til{\widetilde}
\def\su2{{\mathfrak {su}}(2)}
\def\e3{{\mathfrak {e}}(3)}
\def\half{\frac{1}{2}}
\def\nn{\nonumber}
\newbox\meibox
\def\placeunder#1#2#3#4{\setbox\meibox%
\vbox{\hbox{\hskip#4$\hphantom{#2}$}\hbox{$\hphantom{#1}$}}%
\vtop{\baselineskip=0pt\lineskiplimit=\baselineskip%
\lineskip=#3\hbox to \wd\meibox{\hfil\hskip#4$#2$\hfil}%
\hbox to \wd\meibox{\hfil$#1$\hfil}}}
\def\undertilde#1{\mathchoice{%
\placeunder{\vbox to 1.4pt{\hbox{$\displaystyle\widetilde{\,\,\,
}$}\vss}}{\displaystyle#1}{1.5pt}{1.5pt}}%
{\placeunder{\vbox to 1.4pt{\hbox{$\textstyle\widetilde{\,\,
}$}\vss}}{\textstyle#1}{1.5pt}{1.5pt}}%
{\placeunder{\vbox to 1.4pt{\hbox{$\scriptstyle\tilde{
}$}\vss}}{\scriptstyle#1}{1pt}{1pt}}%
{\placeunder{\vbox to 1.4pt{\hbox{$\scriptscriptstyle\tilde{
}$}\vss}}{\scriptscriptstyle#1}{1pt}{1pt}}%
}
\begin{document}
\title{On integrability of Hirota-Kimura type discretizations}

\author{Matteo Petrera \and Andreas Pfadler\and Yuri B. Suris }

\thanks{Institut f\"ur Mathematik, MA 7-2,
Technische Universit\"at Berlin, Str. des 17. Juni 136, 10623 Berlin, Germany.}
\thanks{E-mail: {\tt  petrera@math.tu-berlin.de, pfadler@math.tu-berlin.de, suris@math.tu-berlin.de}}

\date{\today}
\begin{abstract}
We give an overview of the integrability of the Hirota-Kimura discretization method applied to algebraically completely integrable (a.c.i.) systems with quadratic vector fields. Along with the description of the basic mechanism of integrability (Hirota-Kimura bases), we provide the reader with a fairly complete list of the currently available results for concrete a.c.i. systems.
\end{abstract}
\maketitle

\tableofcontents



\section{Introduction}
\label{Sect: intro}

The discretization method studied in this paper seems to be
introduced in the geometric integration literature by W. Kahan in
the unpublished notes \cite{K}. It is applicable to any system of
ordinary differential equations for $x:\bbR\to\bbR^n$ with a
quadratic vector field:
\begin{equation}\nn
\dot{x}=Q(x)+Bx+c,
\end{equation}
where each component of $Q:\bbR^n\to\bbR^n$ is a quadratic form,
while $B\in{\rm Mat}_{n\times n}$ and $c\in\bbR^n$. Kahan's
discretization reads as
\begin{equation}\label{eq: Kahan gen}
\frac{\widetilde{x}-x}{\epsilon}=Q(x,\widetilde{x})+
\frac{1}{2}B(x+\widetilde{x})+c,
\end{equation}
where
\[
Q(x,\widetilde{x})=\frac{1}{2}\left(Q(x+\widetilde{x})-Q(x)-
Q(\widetilde{x})\right)
\]
is the symmetric bilinear form corresponding to the quadratic form
$Q$. Here and below we use the following notational convention
which will allow us to omit a lot of indices: for a sequence
$x:\bbZ\to\bbR$ we write $x$ for $x_k$ and $\widetilde{x}$ for
$x_{k+1}$. Eq. (\ref{eq: Kahan gen}) is {\em linear} with respect
to $\widetilde x$ and therefore defines a {\em rational} map
$\widetilde{x}=f(x,\epsilon)$. Clearly, this map approximates the
time-$\epsilon$-shift along the solutions of the original
differential system, so that $x_k\approx x(k\epsilon)$. (Sometimes it will be more convenient to use  $2\epsilon$ for the time step, in order to avoid appearance of various powers of 2 in numerous
formulas.) Since eq. (\ref{eq: Kahan gen}) remains invariant under the interchange $x\leftrightarrow\widetilde{x}$ with the simultaneous sign inversion $\epsilon\mapsto-\epsilon$, one has the {\em reversibility} property
\begin{equation}\nn
f^{-1}(x,\epsilon)=f(x,-\epsilon).
\end{equation}
In particular, the map $f$ is {\em birational}.

W.~Kahan applied this discretization scheme to the famous
Lotka-Volterra system and showed that in this case it possesses a
very remarkable non-spiralling property. This example is briefly discussed in \cite{PPS}.
Some further applications of
this discretization have been explored in \cite{KHLI}.

The next, even more intriguing appearance of this discretization
was in the two papers by R. Hirota and K. Kimura who (being
apparently unaware of the work by Kahan) applied it to two famous
{\em integrable} system of classical mechanics, the Euler top and
the Lagrange top \cite{HK, KH}. For the purposes of the present
text, integrability of a dynamical system is synonymous with the
existence of a sufficient number of functionally independent
conserved quantities, or integrals of motion, that is, functions
constant along the orbits. We leave aside other aspects of the
multi-faceted notion of integrability, such as Hamiltonian ones or
explicit solutions. Surprisingly, the Kahan-Hirota-Kimura
discretization scheme produced in both the Euler and the Lagrange
cases of the rigid body motion {\em integrable} maps. Even more
surprisingly, the mechanism which assures integrability in these
two cases seems to be rather different from the majority of
examples known in the area of integrable discretizations, and,
more generally, integrable maps, cf. \cite{S}. The case of the
discrete time Euler top is relatively simple, and the proof of its
integrability given in \cite{HK} is rather straightforward and
easy to verify by hands. As it often happens, no explanation was
given in \cite{HK} about how this result has been {\em
discovered}. The ``derivation'' of integrals of motion for the
discrete time Lagrange top in \cite{KH} is rather cryptic and
almost uncomprehensible.

We use the term ``Hirota-Kimura (HK) type discretization'' for Kahan discretization in the context of
integrable systems. At the Oberwolfach meeting on Geometric Integration in 2006, T.~Ratiu proposed to apply the Hirota-Kimura discretization to the Clebsch case of the rigid body motion in an ideal fluid and to the Kovalevsky top. The claim on the integrability of the HK discretization of the Clebsch case was proven only several years later in \cite{PPS}, while the integrability of the HK discretization of the Kovalevsky top remains an open problem (although there are some indications in favor of its non-integrability). Anyway, the general question of integrability of the HK type discretizations turns out to be very intriguing and rather difficult. In the present overview, we will present a rather long list of examples of integrable HK discretizations. Actually, this list is so impressive that in \cite{PPS} we conjectured that HK discretizations of algebraically integrable systems always remain integrable. At present, we have some indications that this conjecture is wrong (although a rigorous proof of non-integrability for any of the apparently non-integrable cases remains elusive). Nevertheless, the sheer length of our list of examples clearly shows that there exist some general mechanisms that ensure integrability at least under certain additional assumptions. We think that to uncover general structures behind the integrability of HK discretizations is a big and important challenge for the modern theory of (algebraically) integrable systems.

The structure of our overview is as follows. In Section \ref{Sect: meas} we demonstrate some general sufficient conditions for a HK discretization to be measure preserving. These conditions are not related to integrability and do not cover all special cases considered in the main text. All our examples turn out to be measure preserving but for the majority of them we only can prove this property individually and do not know any general mechanisms. In Section \ref{Sect: HK mechanism} we present a formalization of the HK mechanism from \cite{KH}, which will hopefully unveil its main idea and contribute towards demystifying at least some of its aspects. We introduce a notion of a ``Hirota-Kimura basis'' (HK basis) for a given map $f$. Such a basis $\Phi$ is a set of simple (often monomial) functions, $\Phi=(\varphi_l,\ldots,\varphi_l)$, such that for every orbit $\{f^i(x)\}$ of the map $f$ there is a certain linear combination $c_1\varphi_1+\ldots+c_l\varphi_l$ of functions from $\Phi$ vanishing on this orbit. As explained in Sect. \ref{Sect: HK mechanism}, this is a {\em new} mathematical notion, not reducible to that of integrals of motion, although closely related to the latter. We lay a theoretical fundament for the search for HK bases for a given discrete time system, and discuss some practical recipes and tricks for doing this. Sections \ref{Sect: weierstrass}--\ref{Sect: dG} contain our list of examples of algebraically integrable systems with quadratic vector fields, for which the HK discretization preserves integrability. For all the examples we provide the reader with a rather complete set of currently available results. The proofs are omitted almost everywhere. One of the reasons is that our investigations are based mainly on computer experiments, which are used both for {\em discovery} of new results and for their rigorous {\em proof}. For a given system, a search for HK bases can be done with the help of {\em numerical experiments}. If the search has been successful and a certain set of functions $\Phi$
has been identified as a HK basis for a given map $f$, then numerical experiments can provide a very convincing evidence in favor of such a statement. A rigorous proof of such a statement turns out to be much more demanding. At present, we are not in possession of any theoretical proof strategies and are forced to
verify the corresponding statements by means of {\em symbolic computations} which in some cases turn out to be hardly feasible due to complexity issues. All these issues are intentionally avoided in our presentation here; an interested reader may consult \cite{PPS} for a detailed discussion of one concrete example (Clebsch system). Our main goal here is to document the available results on integrable HK discretizations and to attract  attention of specialists in integrable systems and in algebraic geometry to these beautiful and mysterious objects which are definitely worth further investigation.

\section{Some general results on invariant measures}
\label{Sect: meas}

In this section, we prove the existence of invariant measures for Kahan discretizations of two classes of dynamical systems with quadratic vector fields (not necessarily integrable).

The first class reads:
\begin{equation}\nn
    \dot{x}_i=\sum_{j=1}^N a_{ij}x_j^2+c_i,\qquad 1\le i\le N,
\end{equation}
with a  skew-symmetric matrix $A=(a_{ij})_{i,j=1}^N=-A^{\rm T}$. The Kahan's discretization reads:
\begin{equation}\label{eq: ddress type}
   \wx_i-x_i=\epsilon\sum_{j=1}^N a_{ij}x_j\wx_j+\epsilon c_i,\qquad 1\le i\le N.
\end{equation}
\begin{prop}\label{th: ddress type}
    The map $\wx=f(x,\epsilon)$ defined by eqs. (\ref{eq: ddress type}) has an invariant volume form:
\begin{equation} \nn
\det\frac{\partial \widetilde{x}}{\partial x}=
\frac{\phi(\widetilde{x},\epsilon)}{\phi(x,\epsilon)}\quad\Leftrightarrow\quad f^*\omega=\omega,\quad
\omega= \frac{dx_1\wedge\ldots\wedge dx_N}{\phi(x,\epsilon)},
\end{equation}
where $\phi(x,\epsilon)=\det(\mathds{1}-\epsilon AX)$, with $X={\rm diag}(x_1,\ldots, x_N)$, is an even polynomial in $\epsilon$.
\end{prop}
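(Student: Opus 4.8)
The plan is to compute the Jacobian determinant $\det(\partial\wx/\partial x)$ directly from the implicit form of the map and then to recognize the resulting ratio of determinants as $\phi(\wx,\epsilon)/\phi(x,\epsilon)$, the only nontrivial input being the skew-symmetry of $A$. First I would rewrite eq.~(\ref{eq: ddress type}) in matrix form. With $X={\rm diag}(x_1,\ldots,x_N)$ the equations read $(\mathds{1}-\epsilon AX)\wx=x+\epsilon c$, so implicitly $F(x,\wx):=(\mathds{1}-\epsilon AX)\wx-x-\epsilon c=0$. The crucial observation, used twice, is that the coupling term is symmetric under $x\leftrightarrow\wx$: writing $\widetilde{X}={\rm diag}(\wx_1,\ldots,\wx_N)$ one has $\sum_j a_{ij}x_j\wx_j=(AX\wx)_i=(A\widetilde{X}x)_i$.

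Next, by the implicit function theorem $\det(\partial\wx/\partial x)=(-1)^N\det(\partial F/\partial x)/\det(\partial F/\partial\wx)$. A direct differentiation gives $\partial F/\partial\wx=\mathds{1}-\epsilon AX$ and, using the symmetric form of the coupling, $\partial F/\partial x=-(\mathds{1}+\epsilon A\widetilde{X})$. Hence $\det(\partial F/\partial\wx)=\phi(x,\epsilon)$ and $\det(\partial F/\partial x)=(-1)^N\det(\mathds{1}+\epsilon A\widetilde{X})$, so the two factors $(-1)^N$ cancel and $\det(\partial\wx/\partial x)=\det(\mathds{1}+\epsilon A\widetilde{X})/\phi(x,\epsilon)$.

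It remains to identify the numerator with $\phi(\wx,\epsilon)=\det(\mathds{1}-\epsilon A\widetilde{X})$, i.e.\ to prove that $\phi$ is even in $\epsilon$; this is the one step where skew-symmetry enters and is the heart of the matter. Transposing and using $A^{\rm T}=-A$ together with $X^{\rm T}=X$ (recall $X$ is diagonal) yields $\phi(x,\epsilon)=\det(\mathds{1}-\epsilon AX)=\det(\mathds{1}+\epsilon XA)$, and the Sylvester determinant identity $\det(\mathds{1}+BC)=\det(\mathds{1}+CB)$ with $B=\epsilon X$, $C=A$ turns this into $\det(\mathds{1}+\epsilon AX)=\phi(x,-\epsilon)$. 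Thus $\phi(x,\epsilon)=\phi(x,-\epsilon)$ for all $x$, so $\det(\mathds{1}+\epsilon A\widetilde{X})=\det(\mathds{1}-\epsilon A\widetilde{X})=\phi(\wx,\epsilon)$ and the claimed relation $\det(\partial\wx/\partial x)=\phi(\wx,\epsilon)/\phi(x,\epsilon)$ follows. The stated equivalence with $f^*\omega=\omega$, $\omega=dx_1\wedge\cdots\wedge dx_N/\phi(x,\epsilon)$, is then immediate, since $f^*\omega=\phi(\wx,\epsilon)^{-1}\det(\partial\wx/\partial x)\,dx_1\wedge\cdots\wedge dx_N$.

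I expect the only genuine obstacle to be the evenness of $\phi$ in the last paragraph (and, relatedly, the bookkeeping of the symmetric coupling that produces $\widetilde{X}$ in $\partial F/\partial x$); the Jacobian computation itself is mechanical once the map is written in implicit matrix form.
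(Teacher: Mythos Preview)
Your proof is correct and follows essentially the same route as the paper's: both differentiate the implicit relation $(\mathds{1}-\epsilon AX)\wx=x+\epsilon c$ to obtain $(\mathds{1}-\epsilon AX)\,\partial\wx/\partial x=\mathds{1}+\epsilon A\widetilde{X}$ (the paper writes this directly, you package it via the implicit function theorem with cancelling $(-1)^N$ factors), and then both invoke skew-symmetry of $A$ together with $\det(\mathds{1}+XA)=\det(\mathds{1}+AX)$ to conclude that $\phi$ is even in $\epsilon$. The only cosmetic difference is that you name the Sylvester identity explicitly, while the paper states the equality without attribution.
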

\begin{proof} Equations (\ref{eq: ddress type}) can be put as
\begin{equation}\nn
    \wx=A^{-1}(x,\epsilon)(x+\epsilon c),\qquad A(x,\epsilon)=\mathds{1}-\epsilon AX.
\end{equation}
As for any Kahan type discretization, there holds the formula
\begin{equation} \label{Jac Deltas}
\det\frac{\partial \widetilde{x}}{\partial x}=
\frac{\det A(\widetilde{x},-\epsilon)}{\det A(x,\epsilon)}.
\end{equation}
Indeed, differentiation of equations (\ref{eq: ddress type}) with respect to $x_j$ will give the $j$-th column of the matrix equation
\[
A(x,\epsilon)\frac{\partial \widetilde{x}}{\partial x}=A(\wx,-\epsilon).
\]
It remains to notice that $\det A(x,\epsilon)=\det A(x,-\epsilon)$. Indeed, due to the skew-symmetry of $A$, we have:
$
\det(\mathds{1}-\epsilon AX)=\det(\mathds{1}-\epsilon X^{\rm T}A^{\rm T})=\det(\mathds{1}+\epsilon XA)=\det(\mathds{1}+\epsilon AX).
$
\end{proof}

The second class consists of equations of the Lotka-Volterra type:
\begin{equation} \nn
    \dot{x}_i=x_i\left(b_i+\sum_{j=1}^N a_{ij}x_j\right),\qquad 1\le i\le N,
\end{equation}
with a  skew-symmetric matrix $A=(a_{ij})_{i,j=1}^N=-A^{\rm T}$. The Kahan's discretization (with the stepsize $2\epsilon$) reads:
\begin{equation}\label{eq: dLV type}
   \wx_i-x_i=\epsilon b_i(x_i+\wx_i)+\epsilon\sum_{j=1}^N a_{ij}(x_i\wx_j+\wx_ix_j),\qquad 1\le i\le N.
\end{equation}
\begin{prop}\label{th: dLV type}
    The map $\wx=f(x,\epsilon)$ defined by equations (\ref{eq: dLV type}) has an invariant volume form:
\begin{equation} \nn
\det\frac{\partial \widetilde{x}}{\partial x}=
\frac{\wx_1\wx_2\cdots\wx_N}{x_1x_2\cdots x_N}\quad\Leftrightarrow\quad f^*\omega=\omega,\quad
\omega= \frac{dx_1\wedge\ldots\wedge dx_N}{x_1x_2\cdots x_N}.
\end{equation}
\end{prop}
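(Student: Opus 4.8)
The plan is to follow the template of the preceding proposition: first record the universal Kahan Jacobian identity, then exploit the skew-symmetry of $A$ together with the equations of motion to evaluate the resulting determinant ratio. Specifically, I would bring the defining equations (\ref{eq: dLV type}) into implicit form $F(x,\wx,\epsilon)=0$ and differentiate, obtaining
\[
P\,\frac{\partial\wx}{\partial x}=Q,\qquad \det\frac{\partial\wx}{\partial x}=\frac{\det Q}{\det P},
\]
where $P=\partial F/\partial\wx$ and $Q=-\partial F/\partial x$. A direct computation gives the explicit entries
\[
P_{ij}=\delta_{ij}\bigl(1-\epsilon b_i-\epsilon(Ax)_i\bigr)-\epsilon x_i a_{ij},\qquad
Q_{ij}=\delta_{ij}\bigl(1+\epsilon b_i+\epsilon(A\wx)_i\bigr)+\epsilon\wx_i a_{ij},
\]
so that $P$ depends only on $(x,\epsilon)$ and $Q$ only on $(\wx,\epsilon)$; in fact $Q(\wx,\epsilon)=P(\wx,-\epsilon)$, in exact analogy with the roles of $A(x,\epsilon)$ and $A(\wx,-\epsilon)$ in Proposition \ref{th: ddress type}. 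Everything then reduces to proving $\det Q/\det P=\wx_1\cdots\wx_N/(x_1\cdots x_N)$, equivalently $\det(\widetilde{X}P)=\det(XQ)$, where $\widetilde{X}={\rm diag}(\wx_1,\ldots,\wx_N)$.

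The key step is to show that $\widetilde{X}P$ and $XQ$ are conjugate up to transposition. Forming the two matrices, the off-diagonal entry $(\widetilde{X}P)_{ij}=-\epsilon x_i\wx_i a_{ij}$ is compared with $(XQ)^{\rm T}_{ij}=(XQ)_{ji}=-\epsilon x_j\wx_j a_{ij}$, where the skew-symmetry $a_{ji}=-a_{ij}$ has been used to line up the signs; the two differ only by the scalar factors $x_i\wx_i$ versus $x_j\wx_j$, so conjugation by the diagonal matrix $C=X\widetilde{X}$ carries one onto the other. For the diagonal entries I would observe that $(\widetilde{X}P)_{ii}=\wx_i(1-\epsilon b_i-\epsilon(Ax)_i)$ and $(XQ)_{ii}=x_i(1+\epsilon b_i+\epsilon(A\wx)_i)$ coincide precisely because the equations of motion (\ref{eq: dLV type}) rearrange exactly to $\wx_i(1-\epsilon b_i-\epsilon(Ax)_i)=x_i(1+\epsilon b_i+\epsilon(A\wx)_i)$. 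Since conjugation by a diagonal matrix leaves diagonal entries untouched (and $a_{ii}=0$ makes the off-diagonal formula valid on the diagonal as well), these two facts together give $\widetilde{X}P=C\,(XQ)^{\rm T}C^{-1}$, whence $\det(\widetilde{X}P)=\det(XQ)$ and the claim follows.

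The main obstacle, and the point where this proof departs from that of Proposition \ref{th: ddress type}, is that $\det P$ is \emph{not} even in $\epsilon$, so the pure symmetry argument that closed the previous proof is unavailable here. One genuinely has to feed in the equations of motion — but, pleasantly, only to match the diagonal of the two matrices, the off-diagonal being disposed of by skew-symmetry and a single diagonal conjugation. The only remaining delicate point is the sign-and-factor bookkeeping in the conjugation, namely confirming that $C=X\widetilde{X}$ is the correct choice; this is a routine verification, easily cross-checked on the case $N=2$.
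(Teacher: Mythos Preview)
Your proof is correct. It uses the same three ingredients as the paper --- the Kahan Jacobian identity $\det(\partial\wx/\partial x)=\det P(\wx,-\epsilon)/\det P(x,\epsilon)$, the rearranged equations of motion $\wx_i\bigl(1-\epsilon b_i-\epsilon(Ax)_i\bigr)=x_i\bigl(1+\epsilon b_i+\epsilon(A\wx)_i\bigr)$, and the skew-symmetry of $A$ --- but packages them differently. The paper introduces the auxiliary quantities $y_i=x_i/\bigl(1-\epsilon b_i-\epsilon(Ax)_i\bigr)$, observes from the equations of motion that the diagonal matrix $Y$ is the same whether built from $(x,\epsilon)$ or from $(\wx,-\epsilon)$, and then factors $P(x,\epsilon)=D(x,\epsilon)(\mathds{1}-\epsilon YA)$ and $P(\wx,-\epsilon)=D(\wx,-\epsilon)(\mathds{1}+\epsilon YA)$; skew-symmetry makes $\det(\mathds{1}\pm\epsilon YA)$ equal, and the $D$-ratio gives $\prod\wx_i/\prod x_i$. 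Your conjugation identity $\widetilde{X}P=C(XQ)^{\rm T}C^{-1}$ with $C=X\widetilde{X}$ achieves the same thing in one stroke, without the auxiliary $y_i$: the equations of motion match the diagonals directly, and skew-symmetry plus conjugation match the off-diagonals. The two arguments are really the same computation viewed from two sides; your version is arguably tidier since it avoids the extra layer of notation.
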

\begin{proof} Equations  (\ref{eq: dLV type}) are equivalent to
\begin{equation}\label{eq: dLV type alt}
   \frac{\wx_i}{1+\epsilon b_i+\epsilon\sum_{j=1}^Na_{ij}\wx_j}=
    \frac{x_i}{1-\epsilon b_i-\epsilon\sum_{j=1}^Na_{ij}x_j}=: y_i.
\end{equation}
We denote $d_i(x,\epsilon)=1-\epsilon b_i-\epsilon\sum_{j=1}^Na_{ij}x_j$. In the matrix form equation (\ref{eq: dLV type}) can be put as
\begin{equation}\nn
    \wx=A^{-1}(x,\epsilon)(\mathds{1}+\epsilon B)x,
\end{equation}
where the $i$-th diagonal entry of $A(x,\epsilon)$ equals $d_i(x,\epsilon)$, while the $(i,j)$-th off-diagonal entry equals $-\epsilon x_ia_{ij}$. In other words, $A(x,\epsilon)=D(\mathds{1}-\epsilon YA)$, where $D=D(x,\epsilon)={\rm diag}(d_1,\ldots,d_N)$ and $Y={\rm diag}(y_1,\ldots y_N)$.
Formula (\ref{Jac Deltas}) holds true also in the present case, and it implies:
\[
\det\frac{\partial \widetilde{x}}{\partial x}=\frac{\det D(\wx,-\epsilon)}{\det D(x,\epsilon)}\,
\frac{\det(\mathds{1}+\epsilon YA)}{\det(\mathds{1}-\epsilon YA)}.
\]
The second factor equals 1 due to the skew-symmetry of $A$, while the first factor equals
\[
\frac{d_1(\wx,-\epsilon)\cdots d_N(\wx,-\epsilon)}{d_1(x,\epsilon)\cdots d_N(x,\epsilon)}=
\frac{\wx_1\cdots \wx_N}{x_1\cdots x_N},
\]
by virtue of (\ref{eq: dLV type alt}).
\end{proof}

The statement of Theorem \ref{th: dLV type} for the Kahan discretization of the classical Lotka-Volterra model with two species, $\dot{x}=x(a-by)$,\, $\dot{y}=y(cx-d)$, was found in \cite{SZ} and used to explain the non-spiralling behavior of the numerical orbits in this case.

\section{Hirota-Kimura bases and integrals}
\label{Sect: HK mechanism}

In this section a general formulation of a remarkable mechanism
will be given, which seems to be responsible for the integrability
of the Hirota-Kimura type (or Kahan type) discretizations of
algebraically completely integrable systems. This mechanism is so
far not well understood, in fact at the moment we do not know what
mathematical structures make it actually work.

Throughout this section $f:\bbR^n\to\bbR^n$ is a birational map,
while $h_i,\varphi_i:\bbR^n\to\bbR$ stand for rational, usually
polynomial functions on the phase space. We start with recalling a
well known definition.

\begin{definition}\label{Def: integral}
A function $h:\bbR^n\to\bbR$ is called an {\bf integral}, or a
{\bf conserved quantity}, of the map $f$, if for every
$x\in\bbR^n$ there holds
\begin{equation}
\nonumber
h(f(x))=h(x),
\end{equation}
so that $h(f^i(x))=h(x)$ for all $i\in\bbZ$.
\end{definition}

Thus, each orbit of the map $f$ lies on a certain level set of its
integral $h$. As a consequence, if one knows $d$ functionally
independent integrals $h_1,\ldots,h_d$ of $f$, one can claim that
each orbit of $f$ is confined to an $(n-d)$-dimensional invariant
set, which is a common level set of the functions
$h_1,\ldots,h_d$.

\begin{definition}\label{Def: Hirota mech}
A set of functions $\Phi=(\varphi_1,\ldots,\varphi_l)$, linearly
independent over $\bbR$, is called a {\bf Hirota-Kimura basis
(HK basis)}, if for every $x\in\bbR^n$ there exists a vector
$c=(c_1,\ldots,c_l)\neq 0$ such that
\begin{equation}\label{eq: fundamental}
c_1\varphi_1(f^i(x))+\ldots+c_l\varphi_l(f^i(x))=0
\end{equation}
holds true for all $i\in\bbZ$. For a given $x\in\bbR^n$, the vector space consisting of all $c\in\bbR^l$ with
this property will be denoted by $K_\Phi(x)$ and called the
null-space of the basis $\Phi$ (at the point $x$).
\end{definition}

Thus, for a HK basis $\Phi$ and for $c\in K_\Phi(x)$ the function
$h=c_1\varphi_1 + ... + c_l\varphi_l$ vanishes along the $f$-orbit
of $x$. Let us stress that we cannot claim that $h=c_1\varphi_1 +
..... + c_l\varphi_l$ is an integral of motion, since vectors $c\in
K_\Phi(x)$ do not have to belong to $K_\Phi(y)$ for initial points
$y$ not lying on the orbit of $x$. However, for any $x$ the orbit
$\{f^i(x)\}$ is confined to the common zero level set of $d$
functions
\begin{equation}
\nonumber
h_j=c_1^{(j)}\varphi_1+\ldots+c_l^{(j)}\varphi_l=0,\quad
j=1,\ldots,d,
\end{equation}
where the vectors
$c^{(j)}=\big(c_1^{(j)},\ldots,c_l^{(j)}\big)\in\bbR^l$ form a
basis of $K_\Phi(x)$. We will say that the HK basis $\Phi$ is {\bf regular}, if the differentials
$dh_1,\ldots,dh_d$ are lineraly independent along the the common zero level set of the functions
$h_1,\ldots,h_d$. Thus, knowledge of a regular HK basis with a $d$-dimensional
null-space leads to a similar conclusion as
knowledge of $d$ independent integrals of $f$, namely to the
conclusion that the orbits lie on $(n-d)$-dimensional invariant
sets. Note, however, that a HK basis gives no immediate
information on how these invariant sets foliate the phase space
$\bbR^n$, since the vectors $c^{(j)}$, and therefore the functions
$h_j$, change from one initial point $x$ to another.

Although the notions of integrals and of HK bases cannot be
immediately translated into one another, they turn out to be
closely related.

The simplest situation for a HK basis corresponds to $l=2$, $\dim
K_\Phi(x)=d=1$. In this case we immediately see that
$h=\varphi_1/\varphi_2$ is an integral of motion of the map $f$.
Conversely, for any rational integral of motion
$h=\varphi_1/\varphi_2$ its numerator and denominator $\varphi_1$,
$\varphi_2$ satisfy
$$
c_1\varphi_1(f^i(x))+c_2\varphi_2(f^i(x))=0,\quad i\in\bbZ,
$$
with $c_1=1$, $c_2=-h(x)$, and thus build a HK basis with $l=2$.
Thus, the notion of a HK basis {\em generalizes} (for $l\ge 3$)
the notion of integrals of motion.

On the other hand, knowing a HK basis $\Phi$ with $\dim
K_\Phi(x)=d\ge 1$ allows one to find integrals of motion for the
map $f$. Indeed, from Definition \ref{Def: Hirota mech} there
follows immediately:
\begin{proposition}\label{Th: K_integral}
If $\Phi$ is a HK basis for a map $f$, then
\begin{equation}
\nonumber
K_\Phi(f(x))=K_\Phi(x).
\end{equation}
\end{proposition}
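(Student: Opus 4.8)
The plan is to unwind the definitions and exploit the fact that the defining condition for the null-space ranges over the \emph{entire} two-sided orbit. First I would fix $x\in\bbR^n$ and an arbitrary vector $c=(c_1,\ldots,c_l)\in\bbR^l$, and write out what membership in each of the two null-spaces means. By Definition \ref{Def: Hirota mech}, $c\in K_\Phi(x)$ is the condition
\begin{equation}
\nonumber
c_1\varphi_1(f^i(x))+\ldots+c_l\varphi_l(f^i(x))=0\qquad\text{for all }i\in\bbZ,
\end{equation}
while $c\in K_\Phi(f(x))$ is the same condition with $x$ replaced by $f(x)$.

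Next I would use the group property $f^i\circ f=f^{i+1}$, valid because $f$ is birational and hence invertible, to rewrite $\varphi_k(f^i(f(x)))=\varphi_k(f^{i+1}(x))$. Thus $c\in K_\Phi(f(x))$ is equivalent to
\begin{equation}
\nonumber
c_1\varphi_1(f^{i+1}(x))+\ldots+c_l\varphi_l(f^{i+1}(x))=0\qquad\text{for all }i\in\bbZ.
\end{equation}
The key step is then the observation that the substitution $j=i+1$ is a bijection of $\bbZ$ onto itself, so quantifying over all $i\in\bbZ$ is identical to quantifying over all $j\in\bbZ$. Hence the displayed condition is literally the condition $c\in K_\Phi(x)$, and since $c$ was arbitrary we conclude $K_\Phi(f(x))=K_\Phi(x)$.

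I do not expect any genuine obstacle here; the whole content of the statement is contained in the fact that $K_\Phi(x)$ is an invariant of the orbit rather than of the point $x$. The one point deserving emphasis is the role of invertibility: if one only required the vanishing condition for $i\ge 0$, the shift $i\mapsto i+1$ would discard the $i=0$ term and yield merely the inclusion $K_\Phi(x)\subseteq K_\Phi(f(x))$; it is precisely the availability of $f^{-1}$, guaranteed by birationality, that restores the missing term and secures the reverse inclusion. A minor technical caveat is that, $f$ being only birational, one should restrict to the dense open set on which all iterates $f^i$ are defined, but this does not affect the algebraic identity of the two null-spaces.
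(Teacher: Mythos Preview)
Your proof is correct and is precisely the unwinding of ``follows immediately from Definition \ref{Def: Hirota mech}'' that the paper gives as its entire proof. The paper does not spell out any details beyond that phrase, so your argument is essentially identical in approach, just made explicit.
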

Thus, the $d$-dimensional null-space $K_\Phi(x)\in Gr(d,l)$,
regarded as a function of the initial point $x\in\bbR^n$, is
constant along trajectories of the map $f$, i.e., it is a
$Gr(d,l)$-valued integral. Its Pl\"ucker coordinates are then scalar integrals:
\begin{corollary}\label{Th: mechanism} Let $\Phi$ be a
HK basis for $f$ with $\dim K_\Phi(x)=d$ for all
$x\in\mathbb{R}^n$. Take a basis of $K_\Phi(x)$ consisting of $d$
vectors $c^{(i)}\in\bbR^l$ and put them into the columns of a
$l\times d$ matrix $C(x)$. For any $d$-index
$\alpha=(\alpha_1,\ldots,\alpha_d)\subset\{1,2,\ldots,n\}$ let
$C_\alpha=C_{\alpha_1\ldots\alpha_d}$ denote the $d\times d$ minor
of the matrix $C$ built from the rows $\alpha_1,\ldots,\alpha_d$.
Then for any two $d$-indices $\alpha,\beta$ the function
$C_\alpha/C_\beta$ is an integral of $f$.
\end{corollary}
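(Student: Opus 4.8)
The plan is to recognize the quantities $C_\alpha/C_\beta$ as the homogeneous Pl\"ucker coordinates of the point $K_\Phi(x)\in Gr(d,l)$, and to combine two facts: that these ratios are genuinely functions of the subspace $K_\Phi(x)$ rather than of the chosen matrix representative $C(x)$, and that the subspace is $f$-invariant by Proposition \ref{Th: K_integral}. Together these give $f$-invariance of the ratios at once.

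First I would establish independence of the choice of basis. Let $C(x)$ be any $l\times d$ matrix whose columns form a basis of $K_\Phi(x)$; since these columns are linearly independent, $C(x)$ has rank $d$, so at least one maximal minor $C_\alpha$ is nonzero and the ratios are well defined as rational functions. If $C'(x)$ is another such matrix, then, because both sets of columns span the same $d$-dimensional space, one has $C'(x)=C(x)G$ for some $G\in GL_d(\bbR)$. For a fixed row-index $\alpha=(\alpha_1,\ldots,\alpha_d)$, write $C_{[\alpha]}$ for the $d\times d$ submatrix of $C$ formed by the rows $\alpha_1,\ldots,\alpha_d$; then the corresponding rows of $C'=CG$ form the matrix $C_{[\alpha]}G$, whence $C'_\alpha=\det(C_{[\alpha]}G)=(\det G)\,C_\alpha$. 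The scalar $\det G$ is common to all minors, so $C'_\alpha/C'_\beta=C_\alpha/C_\beta$ for every pair $\alpha,\beta$. Thus each ratio depends only on $K_\Phi(x)$.

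Next I would use $f$-invariance of the null-space. By Proposition \ref{Th: K_integral} we have $K_\Phi(f(x))=K_\Phi(x)$. Hence the columns of $C(x)$ form a basis not only of $K_\Phi(x)$ but also of $K_\Phi(f(x))$, and by the previous step $C(f(x))=C(x)G$ for some $G\in GL_d(\bbR)$. Applying the minor computation once more gives $C_\alpha(f(x))=(\det G)\,C_\alpha(x)$, so that
\[
\frac{C_\alpha(f(x))}{C_\beta(f(x))}=\frac{C_\alpha(x)}{C_\beta(x)},
\]
which is precisely the assertion that $C_\alpha/C_\beta$ is an integral of $f$.

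I do not expect a serious obstacle here, as the argument is purely linear-algebraic once Proposition \ref{Th: K_integral} is available; the only points requiring care are the well-definedness of the ratios (guaranteed by $\mathrm{rank}\,C=d$, so that not all denominators vanish) and the consistent identification of $C(f(x))$ with a right-$GL_d$ translate of $C(x)$, which hinges on the standing hypothesis $\dim K_\Phi(x)=d$ for all $x$. As with any rational integral, one should note that the identity holds on the dense open locus where the chosen denominator $C_\beta$ is nonzero.
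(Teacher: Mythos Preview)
Your proposal is correct and follows exactly the approach the paper intends: the corollary is presented immediately after Proposition \ref{Th: K_integral} with only the one-line remark that the Pl\"ucker coordinates of $K_\Phi(x)\in Gr(d,l)$ are scalar integrals, and you have simply spelled out that argument in detail.
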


Especially simple is the situation when the null-space of a
HK basis has dimension $d=1$.
\begin{corollary}\label{Th: mechanism d=1} Let $\Phi$ be a
HK basis for $f$ with $\dim K_\Phi(x)=1$ for all
$x\in\mathbb{R}^n$. Let $K_\Phi(x)=[c_1(x):\ldots :c_l(x)]\in
\bbR\mathbb{P}^{l-1}$. Then the functions $c_j/c_k$ are integrals
of motion for $f$.
\end{corollary}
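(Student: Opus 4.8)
The plan is to read off the statement from Proposition~\ref{Th: K_integral}, namely the orbit-invariance $K_\Phi(f(x))=K_\Phi(x)$ of the null-space; for $d=1$ this is just the special case of Corollary~\ref{Th: mechanism} in which the $d\times d$ minors $C_\alpha$ degenerate to the individual coordinates $c_\alpha$. The single point that requires care is the well-definedness of the functions $c_j/c_k$, because the spanning vector $c(x)=(c_1(x),\ldots,c_l(x))$ of the line $K_\Phi(x)$ is fixed only up to an overall nonzero scalar. First I would note that, since $\dim K_\Phi(x)=1$, the null-space is a point of $\bbR\mathbb{P}^{l-1}$ and any two spanning vectors differ by a nonzero factor; the ratio $c_j/c_k$, being homogeneous of degree zero in $c_1,\ldots,c_l$, is invariant under the rescaling $c\mapsto\lambda c$. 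It therefore descends to a function of the projective point $K_\Phi(x)$ alone, so that $x\mapsto c_j(x)/c_k(x)$ is a bona fide rational function on the open locus $\{x:c_k(x)\neq 0\}\subset\bbR^n$, independent of the chosen normalization.

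Next I would invoke Proposition~\ref{Th: K_integral} pointwise: it gives $K_\Phi(f(x))=K_\Phi(x)$, so $c(f(x))$ and $c(x)$ span the same line and hence $c(f(x))=\lambda(x)\,c(x)$ for some nonzero scalar $\lambda(x)$. Substituting and cancelling $\lambda(x)$ then gives
\[
\frac{c_j(f(x))}{c_k(f(x))}=\frac{\lambda(x)\,c_j(x)}{\lambda(x)\,c_k(x)}=\frac{c_j(x)}{c_k(x)},
\]
which is exactly the defining property of an integral in Definition~\ref{Def: integral}.

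I do not anticipate any real obstacle: once the degree-zero homogeneity has been used to make $c_j/c_k$ a genuine function on phase space, the conclusion is immediate from the invariance of the line $K_\Phi(x)$. The only subtlety is thus conceptual rather than computational — one must ensure that the scalar ambiguity in $c(x)$ cancels, i.e. that $c_j/c_k$ really descends to $\bbR\mathbb{P}^{l-1}$; granting this, invariance of the null-space forces invariance of every such ratio.
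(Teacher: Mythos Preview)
Your proposal is correct and is exactly the argument the paper has in mind: the corollary is stated without proof as the $d=1$ specialization of Corollary~\ref{Th: mechanism}, which in turn rests on the invariance $K_\Phi(f(x))=K_\Phi(x)$ from Proposition~\ref{Th: K_integral}. Your extra remark on the degree-zero homogeneity of $c_j/c_k$ simply makes explicit why the Pl\"ucker quotients are well-defined in this one-dimensional case.
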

An interesting (and difficult) question is about the number of
functionally independent integrals obtained from a given HK basis
according to Corollaries \ref{Th: mechanism} and \ref{Th:
mechanism d=1}. It is possible for a
HK basis with a one-dimensional null-space to produce more than
one independent integral. The first examples of this mechanism (with $d=1$) were found in
\cite{KH} and (somewhat implicitly) in \cite{HK}.

We note, however, that HK bases appeared in a disguised form in the continuous time theory long ago. We mention here two relevant examples.
\begin{itemize}
\item Classically, integration of a given system of ODEs in terms of elliptic functions started with the derivation of an equation of the type $\dot y^2=P_4(y)$, where $y$ is one of the components of the solution, and $P_4(y)$ is a polynomial of degree 4 with constant coefficients (depending on parameters of the system and on its integrals of motion), see examples in Sections \ref{Sect: dL}, \ref{Sect: dK}. This can be interpreted as the claim about $\Phi=(\dot{y}^2,\, y^4,\, y^3,\, y^2,\, y,\, 1)$ being a HK basis with a one-dimensional null-space.
\item According to \cite[Sect.\ 7.6.6]{AvM}, for any algebraically integrable system, one can choose projective coordinates $y_0,y_1,\ldots, y_n$ so that {\em quadratic Wronskian equations} are satisfied:
    \[
    \dot{y}_iy_j-y_i\dot{y}_j=\sum_{k,l=0}^n\alpha_{ij}^{kl}y_ky_l,
    \]
    with coefficients $\alpha_{ij}^{kl}$ depending on integrals of motion of the original system. Again, this admits an immediate interpretation in terms of HK bases consisting of the Wronskians and the quadratic monomials of the coordinate functions: $\Phi_{ij}=\big(\dot{y}_iy_j-y_i\dot{y}_j,\, \{y_ky_l\}_{k,l=0}^n\big)$.
\end{itemize}
Thus, these HK bases consist not only of simple monomials, but include also more complicated functions composed of the vector field of the system at hand. We will encounter discrete counterparts of these HK bases, as well.


At present, we cannot give any theoretical sufficient conditions
for existence of a HK basis $\Phi$ for a given map $f$,
and the only way to find such a basis remains the experimental
one. Definition \ref{Def: Hirota mech} requires to verify
condition (\ref{eq: fundamental}) for all $i\in\bbZ$, which is, of
course, impractical. However, it is enough to check this
condition for a finite number of iterates $f^i$.

Typically (that is, for general maps $f$ and general monomial sets $\Phi$), the dimension of the vector space of solutions of the homogeneous system of $s$ linear equations (\ref{eq: fundamental}) with $i=i_0,i_0+1,\ldots,i_0+s-1$ decays with the growing $s$, from $l-1$ for $s=1$ down to $0$ for $s=l$. If, however, $\Phi$ is a HK basis for $f$ with a $d$-dimensional null-space, then this dimension fails to drop starting with $s=l-d+1$. Thus, the dimension of the solution space of the system (\ref{eq: fundamental}) with $i=i_0,i_0+1,\ldots,i_0+s-1$ is equal to $l-s$ for all $1 \leq s \leq l-d$, and remains equal to $d$ for $s=l-d+1$. It is easy to see that this situation can be also characterized as follows: the $d$-dimensional spaces of solutions of the system (\ref{eq: fundamental}) with $i=i_0,i_0+1,\ldots,i_0+l-d-1$ and of the system (\ref{eq: fundamental}) with $i=i_0+1,i_0+2,\ldots,i_0+l-d$ coincide with each other and with $K_\Phi(x)$. The most important particular case of this characterization corresponds to $d=1$ and reads as follows:

\begin{prop}\label{Lem: dim K} A set $\Phi=(\varphi_1,\ldots,\varphi_l)$ is a HK basis for a map $f$ with $\dim K_\Phi(x)=1$ if and only if the unique solution $[c_1:\ldots :c_l]\in\bbR\mathbb P^{l-1}$ of the system (\ref{eq: fundamental}) with $i=i_0,i_0+1,\ldots,i_0+l-2$ coincides with the unique solution of the analogous system with $i=i_0+1,i_0+2,\ldots,i_0+l-1$, in other words, if this unique solution  $[c_1:\ldots :c_l]$ is an integral of motion:
\[
 [c_1(x):\ldots :c_l(x)]=[c_1(f(x)):\ldots :c_l(f(x))].
\]
\end{prop}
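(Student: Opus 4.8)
The plan is to reduce the whole statement to a single \emph{shift identity} and then propagate it along the orbit. First I would fix the base index $i_0$ and regard the system (\ref{eq: fundamental}) with $i=i_0,\ldots,i_0+l-2$ as a homogeneous linear system of $l-1$ equations in the $l$ unknowns $c=(c_1,\ldots,c_l)$, whose coefficient matrix has rows $\big(\varphi_1(f^i(x)),\ldots,\varphi_l(f^i(x))\big)$. Assuming this matrix has rank $l-1$, so that the projective solution is unique, I denote it $p(x)=[c_1(x):\ldots:c_l(x)]$. The crucial elementary observation is that, because $f^{i+1}(x)=f^i(f(x))$, the system with the shifted indices $i=i_0+1,\ldots,i_0+l-1$ at the point $x$ is the \emph{same} system as the one with indices $i=i_0,\ldots,i_0+l-2$ at the point $f(x)$. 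Hence its unique solution is $p(f(x))$, and the right-hand condition of the proposition is precisely $p(x)=p(f(x))$, i.e.\ that $p$ is a projective-valued integral.

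For the direction $(\Leftarrow)$ I would argue by propagation. Suppose $p\circ f=p$ as an identity in $x$; since $f$ is birational this yields $p\circ f^k=p$ for every $k\in\bbZ$. Choose a representative vector $c(x)$ of $p(x)$. For each $k$, the vector $c(f^k(x))$ solves the defining system at the base point $f^k(x)$, which after using $f^i(f^k(x))=f^{i+k}(x)$ reads
\begin{equation}\nonumber
\sum_{j=1}^l c_j(f^k(x))\,\varphi_j(f^{i+k}(x))=0,\qquad i=i_0,\ldots,i_0+l-2.
\end{equation}
Because $p(f^k(x))=p(x)$, the vector $c(f^k(x))$ is proportional to $c(x)$, so $c(x)$ itself annihilates $\big(\varphi_1(f^m(x)),\ldots,\varphi_l(f^m(x))\big)$ for every $m$ in the window $\{i_0+k,\ldots,i_0+l-2+k\}$. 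Letting $k$ range over all of $\bbZ$, these windows cover every integer $m$, whence $c(x)\in K_\Phi(x)$ with $c(x)\neq 0$. Thus $\Phi$ is a HK basis; moreover $K_\Phi(x)$ is contained in the one-dimensional solution space of the first system, so $\dim K_\Phi(x)=1$.

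For the direction $(\Rightarrow)$ I would invoke the dimension count recalled just before the proposition together with Proposition \ref{Th: K_integral}. If $\Phi$ is a HK basis with $\dim K_\Phi(x)=1$, then the solution space of the $(l-1)$-equation system with indices $i_0,\ldots,i_0+l-2$ has dimension $l-(l-1)=1$ and therefore equals $K_\Phi(x)$, so that its unique projective solution is $K_\Phi(x)$. By the shift identity the solution of the second system at $x$ equals that of the first system at $f(x)$, namely $K_\Phi(f(x))$, and Proposition \ref{Th: K_integral} gives $K_\Phi(f(x))=K_\Phi(x)$. Hence the two unique solutions coincide, which is the asserted condition.

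The step I expect to be the most delicate is the genericity hidden in the word ``unique'': both directions presuppose that the $(l-1)$-equation system has rank exactly $l-1$, so that $p(x)$ is well defined and depends rationally on $x$. In the forward direction this is supplied by the dimension count for HK bases, whereas in the backward direction it is part of the hypothesis; in either case I must still verify that the locus where the rank drops is a proper subvariety, so that the identity $p\circ f=p$ and the propagation argument hold on a Zariski-dense set and then extend by continuity. The remaining bookkeeping — covering the negative iterates via birationality of $f$ and checking that the union of windows exhausts $\bbZ$ — is routine.
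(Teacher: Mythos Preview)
Your argument is correct and is precisely the approach the paper has in mind: the paragraph immediately preceding the proposition sketches the general $d$-dimensional version (the solution space of the $(l-d)$-equation window equals $K_\Phi(x)$ and coincides with that of the shifted window), and the proposition is stated as the $d=1$ special case without further proof. You have simply written out the shift identity and the propagation/containment details that the paper leaves as ``easy to see'', and your flagging of the genericity (rank $l-1$) assumption is appropriate, since the paper's discussion is explicitly framed in terms of the typical behavior.
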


At this point it should be mentioned that a numerical testing of the above criterium usually represents no problems, but the corresponding symbolic computations might be extremely complicated, due to the complexity of the iterates $f^i(x)$. While the expression for $f(x)$ is typically of a moderate size, already the second iterate $f^2(x)$ becomes typically prohibitively big. See the detailed discussion of the complexity issue for the case of the Clebsch system in \cite{PPS}. In such a situation it becomes crucial to reduce the number of iterates involved in (\ref{eq: fundamental}) as far as possible. Several tricks which can be used for this aim are also described in \cite{PPS}. Here is one of them.

\begin{prop}
Consider the non-homogeneous system of $l-1$ equations
\begin{equation}\label{eq: smaller system}
c_1\varphi_1(f^i(x)) +\ldots +c_{l-1} \varphi_{l-1}(f^i(x))=
\varphi_l(f^i(x)),\quad i=i_0,i_0+1,\ldots,i_0+l-2.\quad
\end{equation}
Suppose that the index range $i\in[i_0,i_0+l-2]$ in eq. (\ref{eq: smaller system}) contains 0 but is non-symmetric. If the solution of this system $\big(c_1(x,\epsilon),\ldots,c_{l-1}(x,\epsilon)\big)$ is unique and is even with respect to $\epsilon$, then all $c_k(x,\epsilon)$ are conserved quantities of the map $f$, and $\Phi=(\varphi_1,\ldots,\varphi_l)$ is a HK basis for $f$ with $\dim K_\Phi(x)=1$.
\end{prop}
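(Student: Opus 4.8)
The plan is to prove that the unique solution, assembled into the normalized coefficient vector $c(x,\epsilon)=\big(c_1(x,\epsilon),\ldots,c_{l-1}(x,\epsilon),-1\big)\in\bbR^l$, is a projective integral of $f$, and then to invoke Proposition \ref{Lem: dim K}. The first observation is that the non-homogeneous system (\ref{eq: smaller system}) is exactly the homogeneous system (\ref{eq: fundamental}) over the index range $I:=\{i_0,\ldots,i_0+l-2\}$, normalized by $c_l=-1$. Hence the assumed uniqueness of the solution of (\ref{eq: smaller system}) is equivalent to saying that the solution space of (\ref{eq: fundamental}) over $I$ is one-dimensional, spanned by $c(x,\epsilon)$, with in particular $c_l\neq 0$.

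The key step I would use is the reversibility $f^{-1}(x,\epsilon)=f(x,-\epsilon)$ of the Kahan map, which gives $f^i(x,-\epsilon)=f^{-i}(x,\epsilon)$ for all $i\in\bbZ$. Substituting $-\epsilon$ into (\ref{eq: smaller system}) and reindexing $i\mapsto -i$, one sees that $c(x,-\epsilon)$ solves the homogeneous system (\ref{eq: fundamental}) over the reflected range $-I=\{-i_0-l+2,\ldots,-i_0\}$. By the evenness hypothesis $c(x,-\epsilon)=c(x,\epsilon)$, so the single vector $c(x,\epsilon)$ solves (\ref{eq: fundamental}) simultaneously over $I$ and over $-I$, hence over the union $J:=I\cup(-I)$.

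Next I would analyze the combinatorics of $J$. Since $0\in I$, also $0\in -I$, so $I$ and $-I$ are two intervals of $l-1$ integers overlapping at $0$; their union is the symmetric interval $J=[-M,M]$ with $M=\max(|i_0|,\,i_0+l-2)$, of cardinality $2M+1\ge l$, the inequality being strict precisely because the non-symmetry hypothesis excludes $I=-I$. Writing $I=[a,b]$, a short check on the sign of $a+b$ shows that $J$ contains the shifted interval $I+1$ when $a+b<0$ and the interval $I-1$ when $a+b>0$, the borderline $a+b=0$ being exactly the excluded symmetric case. In either case $c(x,\epsilon)$ solves (\ref{eq: fundamental}) over $I\pm 1$. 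On the other hand, by construction $c(f(x),\epsilon)$ (resp.\ $c(f^{-1}(x),\epsilon)$) solves the system over $I+1$ (resp.\ $I-1$), which is just (\ref{eq: smaller system}) based at $f(x)$ (resp.\ $f^{-1}(x)$) and so has a unique solution by the uniqueness hypothesis applied at that point. Comparing, $c(f^{\pm1}(x),\epsilon)=c(x,\epsilon)$, i.e.\ $c(f(x),\epsilon)=c(x,\epsilon)$.

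This yields that each $c_k(x,\epsilon)$ is conserved and, through the integral characterization in Proposition \ref{Lem: dim K} (whose two index ranges are precisely $I$ and $I+1$), that $\Phi$ is a HK basis with $\dim K_\Phi(x)=1$. The genuinely delicate points are the bookkeeping ones: first, that enlarging the system from $l-1$ to $|J|\ge l$ equations does not lose the solution — it does not, since the nonzero $c(x,\epsilon)$ persists while the solution space can only shrink, so it stays one-dimensional; and second, the index-set case split ensuring that one of $I\pm1$ lies inside $J$. I expect this combinatorial step, together with the coupling of reversibility and evenness, to be the crux, whereas the linear-algebra uniqueness arguments are routine.
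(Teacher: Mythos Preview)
Your proof is correct and follows essentially the same approach as the paper: normalize $c_l=-1$, use reversibility together with the evenness hypothesis to show that $c(x,\epsilon)$ solves (\ref{eq: fundamental}) over both $I$ and $-I$, and observe that their union is an interval of $\ge l$ integers. You are more explicit than the paper in spelling out the endgame (the case split showing $I\pm1\subset J$, the uniqueness argument at $f^{\pm1}(x)$, and the appeal to Proposition~\ref{Lem: dim K}), but these are exactly the steps the paper leaves implicit after asserting that the union has $\ge l$ integers.
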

\begin{proof} Considering the non-homogeneous system (\ref{eq: smaller system}) instead of the homogeneous one (\ref{eq: fundamental}) corresponds just to fixing an affine representative of the projective solution $[c_1:\ldots:c_l]$ by $c_l=-1$. The reversibility of the map $f^{-1}(x,\epsilon)=f(x,-\epsilon)$ yields that equations of the system (\ref{eq: smaller system}) are satisfied not only for $i\in[i_0,i_0+l-2]$ but for $i\in[-(i_0+l-2),-i_0]$, as well. Since, by condition, the intervals $[i_0,i_0+l-2]$ and $[-(i_0+l-2),-i_0]$ overlap but do not coincide, their union is an interval containing $\ge l$ integers.
\end{proof}

 Of course, it would be highly desirable to
find some structures, like Lax representation, bi-Hamiltonian
structure, etc., which would allow one to check the conservation
of integrals in a more clever way, but up to now no such
structures have been found for any of the HK type
discretizations.

\section{Weierstrass differential equation}
\label{Sect: weierstrass}

Consider the second-order differential equation
\begin{equation} \label{weier}
\ddot x=6 x^2-\alpha.
\end{equation}
Its general solution is given by the Weierstrass elliptic function $\wp(t)=\wp(t,g_2,g_3)$ with the invariants $g_2=2\alpha$, $g_3$  arbitrary, and by its time shifts. Actually, the parameter $g_3$ can be interpreted as the value of an integral of motion (conserved quantity) of system (\ref{weier}):
\begin{equation} \nonumber
\dot x^2-4x^3+2\alpha x=-g_3.
\end{equation}
Being re-written as a system of first-order equations with a quadratic vector field,
\begin{equation} \label{weiersystem}
\left\{ \begin{array}{l}
\dot x = y, \vspace{.1truecm}\\
\dot y = 6x^2-\alpha,
\end{array} \right.
\end{equation}
equation (\ref{weier}) becomes suitable for an application of the Kahan-Hirota-Kimura discretization:
\begin{equation} \label{weiersystemdiscrete}
 \left\{\begin{array}{l}
 \widetilde x-x=\dfrac{\epsilon}{2} \left(\widetilde y+y\right),  \vspace{.2truecm}\\
 \widetilde y-y=\epsilon\left(6x\widetilde x-\alpha\right).\end{array}\right.
\end{equation}
Eqs. (\ref{weiersystemdiscrete}), put as a linear system for $(\widetilde x, \widetilde y)$, read:
$$
\begin{pmatrix} 1  & -\epsilon/2\\
  -6\epsilon x & 1\end{pmatrix}
\begin{pmatrix}\widetilde{x} \\ \widetilde{y}\end{pmatrix}
=\begin{pmatrix} x +\epsilon y/2  \\ y - \epsilon\alpha  \end{pmatrix}.
$$
This can be immediately solved, thus yielding an explicit birational map $(\widetilde x,\widetilde y)=f(x,y,\epsilon)$:
\begin{equation} \label{weier explicit}
\left\{\begin{array}{l}
\widetilde x={\displaystyle{\frac{x+\epsilon y -\epsilon^2\alpha/2}{1-3\epsilon^2 x}}},\vspace{.2truecm}\\
\widetilde y={\displaystyle{\frac{y+\epsilon(6x^2-\alpha)+3\epsilon^2 xy}{1-3\epsilon^2 x}}}.
\end{array} \right.
\end{equation}
This map turns out to be integrable: it possesses an invariant two-form
\begin{equation} \label{weier 2-form}
\omega=\frac{dx \wedge dy}{1-3\ep^2 x},
\end{equation}
and an integral of motion (conserved quantity):
\begin{equation} \label{weier int xy}
I(x,y,\epsilon)=\frac{ y^2-4x^3+2\alpha x+\ep^2 x(y^2-2\alpha x)-\ep^4 \alpha^2 x}{1-3\ep^2 x}.
\end{equation}
Both these objects are $O(\epsilon^2)$-perturbations of the corresponding objects for the continuous time system (\ref{weiersystem}). The statement about the invariant 2-form (\ref{weier 2-form}) is not difficult to prove. The following argument exemplifies considerations which hold for an arbitrary Kahan discretization (\ref{eq: Kahan gen}). Differentiating eqs. (\ref{weiersystemdiscrete}) with respect to $x$ and to $y$, we obtain the columns of the matrix equation
\[
\begin{pmatrix} 1 & -\epsilon/2 \\ -6\epsilon x & 1\end{pmatrix}\frac{\partial (\widetilde x,\widetilde y)}
{\partial (x,y)}=\begin{pmatrix} 1 & \epsilon/2 \\ 6\epsilon\widetilde x & 1\end{pmatrix},
\]
whence
\[
\det\frac{\partial (\widetilde x,\widetilde y)}{\partial (x,y)}=\frac{1-3\epsilon^2\widetilde x}{1-3\epsilon^2 x}.
\]
This is equivalent to the preservation of (\ref{weier 2-form}). The statement about the conserved quantity is most simply verified with any computer system for symbolic manipulations.

System (\ref{weiersystemdiscrete}) is known in the literature on integrable maps, although in a somewhat different form. Indeed, it is equivalent to the second order difference equation
\begin{equation} \label{weierdiscrete} \nonumber
 \widetilde x-2x+\undertilde{x}=\epsilon^2 \big(3x(\widetilde x+\undertilde{x})-\alpha\big)\quad
 \Leftrightarrow\quad
 \widetilde x-2x+\undertilde{x}=\frac{\epsilon^2(6x^2-\alpha)}{1-3\epsilon^2x}.
\end{equation}
This equation belongs to class of integrable {\em QRT systems} \cite{QRT, Suris0}; in order to see this, one should re-write it as
\[
 \widetilde x-2x+\undertilde{x}=\frac{\epsilon^2(6x^2-\alpha)(1+\epsilon^2x)}{1-2\epsilon^2x-3\epsilon^4x^2}.
\]
This difference equation generates a map $(x,\undertilde x)\mapsto (\widetilde x,x)$ which is symplectic, that is, preserves the two-form $\omega=dx\wedge d\widetilde x$, and possesses a biquadratic integral of motion
\[
I(x,\widetilde x, \epsilon)=(\widetilde x-x)^2-2\epsilon^2x\widetilde x(x+\widetilde x)+\epsilon^2\alpha(x+\widetilde x)-\epsilon^4(3x^2\widetilde x^2-\alpha x\widetilde x).
\]
Under the change of variables $(x,\widetilde x)\mapsto (x,y)$ given by the first equation in (\ref{weier explicit}), these integrability attributes turn into the two-form (\ref{weier 2-form}) and the conserved quantity (\ref{weier int xy}) (up to an additive constant).

We note that a more usual QRT discretization of the Weierstrass second order equation (\ref{weier}) would be
\begin{equation} \label{weier QRT}
 \widetilde x-2x+\undertilde{x}=\frac{\epsilon^2(6x^2-\alpha)}{1-2\epsilon^2x},
\end{equation}
with a simpler conserved quantity
\[
J(x,\widetilde x,\epsilon)=(\widetilde x-x)^2-2\epsilon^2x\widetilde x(x+\widetilde x)+\epsilon^2\alpha(x+\widetilde x).
\]
Eq. (\ref{weier QRT}) is equivalent to
\[
 \widetilde x-2x+\undertilde{x}=\epsilon^2 \big(2x(\widetilde x+\undertilde{x})+2x^2-\alpha\big),
\]
which is not obtained by the Kahan-Hirota-Kimura method.

\section{Some two-dimensional integrable systems}
\subsection{The three-dimensional Suslov system}
\label{Sect: dsuslov}

The three-dimensional nonholonomic Suslov problem \cite{Suslov} is defined by the
following system of differential equations
\beq
\dot m = m \times \omega + \lambda a, \qquad \langle a, \omega \rangle =0,
\label{suslov}
\eeq
where $\omega= (\omega_1 , \omega_2, \omega_3)^{\rm T}$ is the angular velocity,
$m = (m_1,m_2,m_3)^{\rm T}= I \omega$ is the angular momentum, $I$ is the
inertia operator, $a$ is a unit vector fixed in the body, and
$\lambda$ is the Lagrange multiplier.
In a basis where $a=(0,0,1)^{\rm T}$ and
$$
I = \left( \begin {array}{ccc}
I_1 & 0 & I_{13} \\
0 & I_2 & I_{23} \\
I_{13} & I_{23} & I_3
\end {array} \right),
$$
the constraint $\langle a, \omega \rangle =0$ reduces to $\omega_3 = 0$, and
equations of motion (\ref{suslov}) read
\beq \label{susi}
\left\{ \begin{array}{l}
I_1 \dot \omega_1 =  - (I_{13}\omega_1+I_{23}\omega_2)\omega_2,
\vspace{.1truecm} \\
I_2 \dot \omega_2 =  (I_{13}\omega_1+ I_{23}\omega_2)\omega_1,
\vspace{.1truecm} \\
I_{13}\dot\omega_1+I_{23}\dot\omega_2=(I_1-I_2)\omega_1\omega_2 + \lambda.
 \end{array} \right.
\eeq

The first two equations in (\ref{susi}) form a closed system for $\omega_1$ and
$\omega_2$. It possesses a conserved quantity
$H = I_1 \omega_1^2 + I_2 \omega_2^2$.
After the solution of this system is found (Suslov gave it in terms of trigonometric
and exponential functions), one finds the Lagrange multiplier $\lambda$
from the third equation in (\ref{susi}).

To put the first two equations in (\ref{susi}) into a more convenient form, one can introduce the coordinates
$x = I_{13}\omega_1  + I_{23}\omega_2 $, $y =I_{23}I_1\omega_1-I_{13}I_2\omega_2$, and arrives at
\beq \label{susi2}
\left\{ \begin{array}{l}
\dot x =  \alpha x y , \vspace{.1truecm} \\
\dot y = - x^2 ,
 \end{array} \right.
\eeq
where $\alpha= 1/I_1 I_2$. This system admits a conserved quantity
$$
H = x^2 + \alpha y^2.
$$
\begin{prop}{\rm \cite{DraGa}}
HK discretization of system (\ref{susi2}),
$$
\left\{ \begin{array}{l}
\widetilde x - x = \epsilon \alpha ( \widetilde x y + x \til y ),
\vspace{.1truecm}\\
\widetilde y - y = - 2 \epsilon \widetilde x x ,
\end{array} \right.
$$
possesses an invariant two-form and an integral of motion, given by
$$
\omega = \frac{dx \wedge dy}{x (x^2 + \alpha y^2)},\qquad
H(\epsilon) = \frac{ x^2 + \alpha y^2}{ 1 + \epsilon^2 \alpha x^2 }.
$$
\end{prop}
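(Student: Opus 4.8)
The plan is to reduce both claims to explicit algebraic identities after first producing the birational map in closed form. Writing the two defining equations as a linear system for $(\widetilde x,\widetilde y)$,
\[
\begin{pmatrix} 1-\epsilon\alpha y & -\epsilon\alpha x\\ 2\epsilon x & 1\end{pmatrix}
\begin{pmatrix}\widetilde x\\ \widetilde y\end{pmatrix}=\begin{pmatrix} x\\ y\end{pmatrix},
\]
and inverting the coefficient matrix $A(x,y,\epsilon)$, whose determinant is $\Delta:=\det A=1-\epsilon\alpha y+2\epsilon^2\alpha x^2$, I obtain
\[
\widetilde x=\frac{x(1+\epsilon\alpha y)}{\Delta},\qquad
\widetilde y=\frac{y-\epsilon\alpha y^2-2\epsilon x^2}{\Delta}.
\]
Every subsequent step is a computation with these rational expressions.

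For the invariant two-form I would invoke the general Kahan identity (\ref{Jac Deltas}), which holds for any discretization of the form (\ref{eq: Kahan gen}) by the differentiation argument used in Section \ref{Sect: meas}: differentiating the defining equations gives $A(x,y,\epsilon)\,\partial(\widetilde x,\widetilde y)/\partial(x,y)=A(\widetilde x,\widetilde y,-\epsilon)$, whence
\[
\det\frac{\partial(\widetilde x,\widetilde y)}{\partial(x,y)}
=\frac{1+\epsilon\alpha\widetilde y+2\epsilon^2\alpha\widetilde x^2}{1-\epsilon\alpha y+2\epsilon^2\alpha x^2}.
\]
Since $f^*\omega=\omega$ for $\omega=dx\wedge dy/\phi$ with $\phi(x,y)=x(x^2+\alpha y^2)$ is equivalent to this Jacobian equalling $\phi(\widetilde x,\widetilde y)/\phi(x,y)$, the statement reduces to the single polynomial identity
\[
\big(1+\epsilon\alpha\widetilde y+2\epsilon^2\alpha\widetilde x^2\big)\,x(x^2+\alpha y^2)
=\big(1-\epsilon\alpha y+2\epsilon^2\alpha x^2\big)\,\widetilde x(\widetilde x^2+\alpha\widetilde y^2),
\]
which, after substituting the closed form above and clearing the powers of $\Delta$, becomes an identity in $(x,y,\epsilon)$ to be verified directly.

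For the conserved quantity I would substitute the explicit map into $H(\epsilon)=(x^2+\alpha y^2)/(1+\epsilon^2\alpha x^2)$ and check $H(\widetilde x,\widetilde y,\epsilon)=H(x,y,\epsilon)$. Clearing the common factor $\Delta^2$ turns this into the polynomial identity
\[
\big(x^2(1+\epsilon\alpha y)^2+\alpha(y-\epsilon\alpha y^2-2\epsilon x^2)^2\big)\big(1+\epsilon^2\alpha x^2\big)
=(x^2+\alpha y^2)\big(\Delta^2+\epsilon^2\alpha x^2(1+\epsilon\alpha y)^2\big),
\]
again settled by expansion. Conceptually --- and this is presumably how the deformed integral was found --- the constancy of $H$ is precisely the statement that $\Phi=(x^2,\,y^2,\,1)$ is a HK basis with one-dimensional null-space, the orbit lying on the conic $(1-h\epsilon^2\alpha)x^2+\alpha y^2-h=0$ with $h=H(x,y,\epsilon)$; from this point of view $H$ is the integral produced by Corollary \ref{Th: mechanism d=1}, and one may alternatively establish it through the criterion of Proposition \ref{Lem: dim K} applied to the three monomials.

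The genuine obstacle is not any of these verifications, which are mechanical (and in practice handed to a computer algebra system), but the prior \emph{discovery} of the correct $O(\epsilon^2)$-deformations: the density $\phi=x(x^2+\alpha y^2)$, whose extra factor $x$ beyond the continuous Casimir $x^2+\alpha y^2$ is forced by the fact that $\Delta$ is not even in $\epsilon$, and the denominator $1+\epsilon^2\alpha x^2$ of the integral. As stressed in Section \ref{Sect: intro}, these are located experimentally and only then confirmed symbolically; I have no \emph{a priori} structural principle that predicts them.
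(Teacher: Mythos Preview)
Your approach is correct and carried through cleanly: the explicit map, the Jacobian via the general Kahan differentiation identity $A(x,y,\epsilon)\,\partial(\widetilde x,\widetilde y)/\partial(x,y)=A(\widetilde x,\widetilde y,-\epsilon)$, and the reduction of both claims to polynomial identities are all sound (I checked your formulas for $\widetilde x,\widetilde y,\Delta$ and the Jacobian). The paper in fact gives no proof of this proposition at all---it simply states the result, attributes the integral to \cite{DraGa}, and remarks that the invariant two-form is new here; as explained in the introduction, proofs throughout are ``omitted almost everywhere'' and relegated to symbolic computation. So there is nothing to compare against: your proof \emph{is} the intended style of argument, exactly parallel to what the paper does explicitly for the Weierstrass system in Section~\ref{Sect: weierstrass} and for the dET invariant measure in Proposition~\ref{th: dET inv meas}. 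Your closing remarks on the HK-basis interpretation of $H(\epsilon)$ and on the experimental origin of the density $\phi$ are also in line with the paper's viewpoint.
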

Actually, the invariant two-form was not given in \cite{DraGa}; rather, this paper contains an explicit solution of the discrete time Suslov system and its qualitative analysis.

\subsection{Reduced Nahm equations}
\label{Sect: nahm}

In \cite{HMM} Nahm equations associated with symmetric monopoles are considered. Assuming rotational symmetry groups of regular polytops leads to solutions of Nahm equations in terms of elliptic functions. Reduced equations corresponding to tetrahedrally symmetric monopoles of charge $3$, to octahedrally symmetric monopoles of charge $4$, and to icosahedrally symmetric monopoles of charge $6$ are two-dimensional algebraically integrable systems with quadratic vector fields. More concretely, the reductions of Nahm equations derived in \cite{HMM} read:

(i) tetrahedral symmetry:
\begin{equation} \label{k3}
\left\{ \begin{array}{l}
\dot x = x^2 - y^2, \vspace{.1truecm}\\
\dot y = - 2xy,
\end{array} \right.
\end{equation}
with an integral of motion $H= y (3 x^2 - y^2);$

(ii) octahedral symmetry:
\begin{equation} \label{k4}
\left\{ \begin{array}{l}
\dot x = {\displaystyle{2 x^2 - 12y^2}}, \vspace{.1truecm}\\
\dot y = - 6 xy - 4 y^2,
\end{array} \right.
\end{equation}
with an integral of motion $H= y(2x+3y)(x-y)^2;$

(iii) icosahedral symmetry:
\begin{equation} \label{k6}
\left\{ \begin{array}{l}
\dot x = {\displaystyle{2 x^2 - y^2}}, \vspace{.1truecm}\\
\dot y = - 10 xy+ y^2,
\end{array} \right.
\end{equation}
with an integral of motion $H= y(3x -y)^2(4x +y)^3.$

HK discretizations of systems (\ref{k3}-\ref{k6}) turn out to be algebraically integrable.

\begin{prop}\quad

{\rm(i)} HK discretization of system (\ref{k3}),
\begin{equation}\nn
\left\{ \begin{array}{l}
\widetilde x - x = \epsilon ( \widetilde x x - \widetilde y y), \vspace{.1truecm}\\
\widetilde y - y = - \epsilon (\widetilde x y + x \widetilde y),
\end{array} \right.
\end{equation}
possesses an invariant two-form and an integral of motion, given by
$$
\omega = \frac{dx \wedge dy}{y (3 x^2 - y^2)},\quad
H(\epsilon) = \frac{y (3 x^2 - y^2)}{ 1 - \epsilon^2 (x^2 + y^2)};
$$

\smallskip

{\rm(ii)}  HK discretization of system (\ref{k4}),
\begin{equation} \nn
\left\{ \begin{array}{l}
\widetilde x - x = \epsilon \left( {\displaystyle{2 \widetilde x x- 12 \widetilde y y}} \right), \vspace{.1truecm}\\
\widetilde y - y = - \epsilon (3\widetilde x y + 3x \widetilde y+ 4  \widetilde y y ),
\end{array} \right.
\eeq
possesses an invariant two-form and an integral of motion given by
$$
\omega = \frac{dx \wedge dy}{y(2x+3y)(x-y)}, 
$$
$$
H(\epsilon) = \frac{y(2x+3y)(x-y)^2}{1-10\epsilon^2 (x^2 +4 y^2) + \epsilon^4 (9 x^4 + 272 x^3 y - 352 x y^3+ 696 y^4)};
$$

\smallskip

{\rm(iii)}   HK discretization of system (\ref{k6}),
\begin{equation} \nn
\left\{ \begin{array}{l}
\widetilde x - x = \epsilon \left( {\displaystyle{2 \widetilde x x- \widetilde y y}} \right), \vspace{.1truecm}\\
\widetilde y - y = \epsilon (-5 \widetilde x y - 5 x \widetilde y+ \widetilde y y ),
\end{array} \right.
\eeq
possesses an invariant two-form and an integral of motion given by
$$
\omega= \frac{dx \wedge dy}{y(3x -y)(4x +y)}, \quad
H(\epsilon) = \frac{y(3x -y)^2(4x +y)^3}{ 1 +  \epsilon^2 c_2 +  \epsilon^4  c_4 + \epsilon^6  c_6},
$$
with
\bea
&&c_2 =-  7 (5 x^2 - y^2), \nonumber \\
&&c_4 = 7 (37 x^4 + 22 x^2 y^2 - 2 x y^3 + 2 y^4), \nonumber \\
&&c_6 = -225 x^6 + 3840 x^5 y +80 x y^5 - 514 x^3 y^3 - 19 x^4 y^2 -206 x^2 y^4. \nonumber
\eea

\end{prop}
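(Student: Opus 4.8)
My plan is to handle all three cases by a common two-step scheme, exploiting the fact that each HK discretization is, by construction, \emph{linear} in the updated variables $(\wx,\wy)$. First I would rewrite each system in the matrix form
\[
A(x,y,\ep)\begin{pmatrix}\wx\\ \wy\end{pmatrix}=\begin{pmatrix}x\\ y\end{pmatrix},
\]
where, for instance in case (i), $A(x,y,\ep)=\left(\begin{smallmatrix}1-\ep x & \ep y\\ \ep y & 1+\ep x\end{smallmatrix}\right)$, so that $\det A=1-\ep^2(x^2+y^2)$; for cases (ii) and (iii) one reads off $A$ analogously, with $\det A$ a polynomial of degree $1$, respectively, higher in $\ep$. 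Cramer's rule then produces the explicit birational map $f=(\wx,\wy)$ with common denominator $\det A(x,y,\ep)$. This is the concrete starting point for both assertions.

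For the invariant two-form I would use the general Kahan mechanism already illustrated in the Weierstrass section. Differentiating each defining equation with respect to $x$ and $y$ and collecting the results columnwise gives the matrix identity
\[
A(x,y,\ep)\,\frac{\partial(\wx,\wy)}{\partial(x,y)}=A(\wx,\wy,-\ep),
\]
which I have checked holds verbatim in cases (i) and (ii) (the right-hand side is exactly $A$ evaluated at the image point with $\ep\mapsto-\ep$, reflecting the reversibility built into the symmetric bilinear form). Hence the Jacobian equals $\det A(\wx,\wy,-\ep)/\det A(x,y,\ep)$, as in (\ref{Jac Deltas}). Invariance of $\omega=dx\wedge dy/\phi$ with the stated $\ep$-independent denominator $\phi$ is then equivalent to the single polynomial identity $\det A(\wx,\wy,-\ep)\,\phi(x,y)=\det A(x,y,\ep)\,\phi(\wx,\wy)$, which I would verify by substituting the explicit $f$. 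The key bookkeeping point is that $\phi$ is precisely the denominator of the continuous integral divided by the continuous invariant density, so the identity is of moderate degree and is the easy part.

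For the integral I would verify $H(f(x,y,\ep),\ep)=H(x,y,\ep)$ directly as a rational identity. Here I would emphasize a structural shortcut in case (i): since $\det A=1-\ep^2(x^2+y^2)$ is \emph{even} in $\ep$, the Kahan identity already shows that $dx\wedge dy/\det A$ is invariant; as $H(\ep)=\phi/\det A$ with $\phi=y(3x^2-y^2)$, the integral is conserved \emph{if and only if} $dx\wedge dy/\phi$ is invariant, so a single check settles both claims at once. In cases (ii) and (iii) this collapse fails because $\det A$ is no longer even in $\ep$, and the two assertions must be established independently. I would organize the integral computation by noting that each $H(\ep)$ is an $O(\ep^2)$ perturbation of the homogeneous continuous integral and is even in $\ep$ (so only $\ep^2,\ep^4,\ep^6$ enter, halving the unknowns), and I would first confirm the candidate coefficients $c_2,c_4,c_6$ numerically via the dimension-drop criterion of Proposition \ref{Lem: dim K} before attempting the symbolic proof. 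One may additionally exploit the scaling symmetry $f(\lambda x,\lambda y,\ep/\lambda)=\lambda f(x,y,\ep)$ of these homogeneous quadratic fields to reduce the effective number of variables.

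The main obstacle is the symbolic verification of the integral in the icosahedral case (iii): here the numerator $y(3x-y)^2(4x+y)^3$ has degree six and the denominator is a degree-six polynomial in $\ep$ with the elaborate coefficients $c_2,c_4,c_6$, so clearing denominators against the image $f$ (whose components carry $\det A(x,y,\ep)$) produces a very large polynomial identity. This is exactly the complexity issue flagged in the paper: the computation is in principle routine, but its feasibility rests entirely on exploiting the evenness in $\ep$, the homogeneity, and the factored form of $\phi$ to keep the expressions manageable.
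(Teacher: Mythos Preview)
Your approach is correct and is precisely the methodology the paper itself uses in the analogous cases it does treat in detail (the Weierstrass system in Section~\ref{Sect: weierstrass}, the Euler top in Section~\ref{Sect: dET}). In fact, the paper gives \emph{no} proof for this particular proposition: it is one of the many results documented without argument, under the blanket remark in the Introduction that ``the proofs are omitted almost everywhere'' and that verification is done by symbolic computation. So there is nothing to compare against beyond the general template, which you have followed faithfully: write the map as $A(x,y,\ep)(\wx,\wy)^{\rm T}=(x,y)^{\rm T}$, use the universal Kahan identity $A(x,y,\ep)\,\partial(\wx,\wy)/\partial(x,y)=A(\wx,\wy,-\ep)$ to obtain $\det J=\det A(\wx,\wy,-\ep)/\det A(x,y,\ep)$ as in (\ref{Jac Deltas}), and then check the resulting polynomial identities directly.

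Your observation in case (i) that $\det A=1-\ep^2(x^2+y^2)$ is even in $\ep$, so that invariance of $dx\wedge dy/\det A$ is automatic and hence the two-form and integral claims become \emph{equivalent}, is a genuine structural shortcut not recorded in the paper and worth keeping. Your diagnosis that this collapse fails in (ii) and (iii) because $\det A$ acquires odd-in-$\ep$ terms is also correct (in case (ii), $\det A=1+\ep(x+4y)-2\ep^2(3x^2+4xy+18y^2)$). One minor inaccuracy: you write that $\det A$ is ``of degree $1$'' in $\ep$ in case (ii); it is quadratic in $\ep$ in all three cases, the distinction being only the presence or absence of the linear term. This does not affect your argument. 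The scaling reduction $f(\lambda x,\lambda y,\ep/\lambda)=\lambda f(x,y,\ep)$ you mention is a good practical device for the icosahedral verification, which is indeed the only genuinely heavy step.
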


\section{Euler top}
\label{Sect: dET}

The differential equations of motion of the Euler top read
\begin{equation}\label{eq: ET x}
\left\{ \begin{array}{l}
\dot{x}_1=\alpha_1 x_2 x_3, \vspace{.1truecm} \\
\dot{x}_2=\alpha_2 x_3 x_1, \vspace{.1truecm} \\
\dot{x}_3=\alpha_3 x_1x_2,
\end{array} \right.
\end{equation}
with real parameters $\alpha_i$. This is one
of the most famous integrable systems of the classical mechanics,
with a big literature devoted to it. It can be explicitly integrated in terms of elliptic
functions, and admits two functionally independent integrals of
motion. Actually, a quadratic function
$H(x)=\gamma_1x_1^2+\gamma_2x_2^2+\gamma_3x_3^2$ is an integral
for eqs. (\ref{eq: ET x}) as soon as $\gamma_1\alpha_1+\gamma_2\alpha_2+\gamma_2\alpha_2=0$.
In particular, the following three functions are integrals of
motion:
\begin{equation}
\nonumber
H_1=\alpha_2x_3^2-\alpha_3x_2^2,\qquad
H_2=\alpha_3x_1^2-\alpha_1x_3^2,\qquad
H_3=\alpha_1x_2^2-\alpha_2x_1^2.
\end{equation}
Clearly, only two of them are functionally independent because of
$\alpha_1H_1+\alpha_2H_2+\alpha_3H_3=0$. These integrals appear also on the right-hand sides of the quadratic (in this case even linear) expressions for the Wronskians of the coordinates $x_j$:
\begin{equation}
\label{eq: ET W}
\left\{ \begin{array}{l}
\dot{x}_2x_3-x_2\dot{x}_3=H_1x_1,\vspace{.1truecm}\\
\dot{x}_3x_1-x_3\dot{x}_1=H_2x_2,\vspace{.1truecm}\\
\dot{x}_1x_2-x_1\dot{x}_2=H_3x_3.
\end{array} \right.
\end{equation}
Moreover, one easily sees that the coordinates $x_j$ satisfy the following differential equations with the coefficients depending on the integrals of motion:
\begin{equation}
 \nonumber
\left\{ \begin{array}{l}
\dot{x}_1^2 = (H_3+\alpha_2x_1^2)(\alpha_3x_1^2-H_2),\vspace{.1truecm}\\
\dot{x}_2^2= (H_1+\alpha_3x_2^2)(\alpha_1x_2^2-H_3),\vspace{.1truecm}\\
\dot{x}_3^2 = (H_2+\alpha_1x_3^2)(\alpha_2x_3^2-H_1).
\end{array} \right.
\end{equation}
The fact that the polynomials on the right-hand sides of these equations are of degree four implies that the solutions are given by elliptic functions.

The HK discretization of the Euler top \cite{HK} is:
\begin{equation}\label{eq: dET x}
\renewcommand{\arraystretch}{1.3}
\left\{\begin{array}{l}
\widetilde{x}_1-x_1=\epsilon\alpha_1(\widetilde{x}_2x_3+x_2\widetilde{x}_3),\\
\widetilde{x}_2-x_2=\epsilon\alpha_2(\widetilde{x}_3x_1+x_3\widetilde{x}_1),\\
\widetilde{x}_3-x_3=\epsilon\alpha_3(\widetilde{x}_1x_2+x_1\widetilde{x}_2).
\end{array}\right.
\end{equation}
(In this form it corresponds to the stepsize $2\epsilon$ rather than $\epsilon$.) The map $f:x\mapsto\widetilde{x}\ $ obtained by solving
(\ref{eq: dET x}) for $\widetilde{x}$ is given by:
\begin{equation}
\label{eq: dET map}
\widetilde{x} =f(x,\epsilon)=A^{-1}(x,\epsilon)x, \qquad
A(x,\epsilon)=
\begin{pmatrix}
1 & -\epsilon\alpha_1 x_3 & -\epsilon\alpha_1 x_2 \\
-\epsilon\alpha_2 x_3 & 1 & -\epsilon\alpha_2 x_1 \\
-\epsilon\alpha_3 x_2 & -\epsilon\alpha_3 x_1 & 1
\end{pmatrix} .
\end{equation}
It might be instructive to have a look at the explicit formulas for this map:
\begin{equation}
\label{eq: dET expl}
\left\{ \begin{array}{l}
\widetilde{x}_1=\dfrac{x_1+2\epsilon\alpha_1x_2x_3+
\epsilon^2x_1(-\alpha_2\alpha_3x_1^2+\alpha_3\alpha_1x_2^2+\alpha_1\alpha_2x_3^2)}
{\Delta(x,\epsilon)}\,,\\ \\
\widetilde{x}_2 = \dfrac{x_2+2\epsilon\alpha_2x_3x_1+
\epsilon^2x_2(\alpha_2\alpha_3x_1^2-\alpha_3\alpha_1x_2^2+\alpha_1\alpha_2x_3^2)}
{\Delta(x,\epsilon)}\,,\\ \\
\widetilde{x}_3 = \dfrac{x_3+2\epsilon\alpha_3x_1x_2+
\epsilon^2x_3(\alpha_2\alpha_3x_1^2+\alpha_3\alpha_1x_2^2-\alpha_1\alpha_2x_3^2)}
{\Delta(x,\epsilon)}\,,
\end{array}\right.
\end{equation}
where
\begin{equation}
\label{eq: dET Delta}
\Delta(x,\epsilon)=\det A(x,\epsilon)=
1-\epsilon^2(\alpha_2\alpha_3x_1^2+\alpha_3\alpha_1x_2^2+\alpha_1\alpha_2x_3^2)
-2\epsilon^3\alpha_1\alpha_2\alpha_3x_1x_2x_3.
\end{equation}
We will use the abbreviation dET for this map. As always the case for a HK discretization, dET is birational, with the reversibility property expressed as $f^{-1}(x,\epsilon)=f(x,-\epsilon)$.

\begin{proposition}\label{th: dE F}{\rm\cite{HK,PS}}
The quantities
\begin{equation}
\nonumber
F_1 = \frac{1-\epsilon^2\alpha_3\alpha_1x_2^2}{1-\epsilon^2\alpha_1\alpha_2x_3^2},\qquad
F_2 = \frac{1-\epsilon^2\alpha_1\alpha_2x_3^2}{1-\epsilon^2\alpha_2\alpha_3x_1^2},\qquad
F_3 = \frac{1-\epsilon^2\alpha_2\alpha_3x_1^2}{1-\epsilon^2\alpha_3\alpha_1x_2^2},
\end{equation}
are conserved quantities of dET. Of course, there are only two independent integrals since $F_1F_2F_3=1$.
\end{proposition}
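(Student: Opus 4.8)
The plan is to prove conservation of $F_1$ only, and then obtain $F_2$ and $F_3$ for free from the cyclic symmetry $1\to 2\to 3\to 1$ under which both the system (\ref{eq: dET x}) and the triple $(F_1,F_2,F_3)$ are equivariant; the relation $F_1F_2F_3=1$ is then immediate by telescoping of the three ratios. Writing $F_1=(1-\epsilon^2\alpha_3\alpha_1 x_2^2)/(1-\epsilon^2\alpha_1\alpha_2 x_3^2)$, the invariance $F_1(\widetilde x)=F_1(x)$ is equivalent, after cross-multiplication, cancellation of the constant terms, and division by the common factor $\epsilon^2\alpha_1$, to the polynomial identity
\begin{equation}\label{eq: star}
\alpha_2(\widetilde x_3^2 - x_3^2) - \alpha_3(\widetilde x_2^2 - x_2^2) + \epsilon^2\alpha_1\alpha_2\alpha_3(\widetilde x_2^2 x_3^2 - x_2^2\widetilde x_3^2)=0.
\end{equation}
The whole point is that (\ref{eq: star}) should follow \emph{directly} from the defining equations (\ref{eq: dET x}), without ever solving for the explicit map (\ref{eq: dET expl}), inverting $A(x,\epsilon)$, or invoking $\Delta(x,\epsilon)$ and symbolic computation.

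First I would factor every difference of squares and feed in the equations of motion. Writing $\widetilde x_j^2 - x_j^2 = (\widetilde x_j - x_j)(\widetilde x_j + x_j)$ and substituting the second and third equations of (\ref{eq: dET x}) for $\widetilde x_2 - x_2$ and $\widetilde x_3 - x_3$, the first two terms of (\ref{eq: star}) expand into triple products of the coordinates $x_i,\widetilde x_i$. After expansion two pairs of these monomials cancel, and the remaining four regroup into
\[
\alpha_2(\widetilde x_3^2 - x_3^2) - \alpha_3(\widetilde x_2^2 - x_2^2) = \epsilon\alpha_2\alpha_3(\widetilde x_1 - x_1)(x_2\widetilde x_3 - x_3\widetilde x_2).
\]
In parallel, factoring $\widetilde x_2^2 x_3^2 - x_2^2\widetilde x_3^2 = (\widetilde x_2 x_3 - x_2\widetilde x_3)(\widetilde x_2 x_3 + x_2\widetilde x_3)$ and recognizing, via the \emph{first} equation of (\ref{eq: dET x}), that $\epsilon\alpha_1(\widetilde x_2 x_3 + x_2\widetilde x_3) = \widetilde x_1 - x_1$, rewrites the last term of (\ref{eq: star}) as $\epsilon\alpha_2\alpha_3(\widetilde x_1 - x_1)(\widetilde x_2 x_3 - x_2\widetilde x_3)$.

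Adding the two contributions, the common prefactor $\epsilon\alpha_2\alpha_3(\widetilde x_1 - x_1)$ multiplies the bracket $(x_2\widetilde x_3 - x_3\widetilde x_2)+(\widetilde x_2 x_3 - x_2\widetilde x_3)$, which vanishes identically because $x_3\widetilde x_2 = \widetilde x_2 x_3$. This establishes (\ref{eq: star}) and hence the conservation of $F_1$. The one genuinely delicate step is the pairing in the previous paragraph: one must notice that the \emph{symmetric} bilinear combination $\widetilde x_2 x_3 + x_2\widetilde x_3$ thrown up by the difference-of-squares factorization is precisely the object on the right-hand side of the first equation of (\ref{eq: dET x}), so that the quartic $\epsilon^2$-term and the quadratic ($H_1$-type) term in (\ref{eq: star}) end up sharing the factor $(\widetilde x_1 - x_1)$ and differ only by the \emph{antisymmetric} combination $x_2\widetilde x_3 - x_3\widetilde x_2$, which telescopes away. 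Everything else is routine bookkeeping, and the argument manifestly respects the cyclic symmetry needed to conclude for $F_2$ and $F_3$.
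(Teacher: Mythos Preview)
Your proof is correct. The factorizations work exactly as you claim: the difference-of-squares expansion of the first two terms of (\ref{eq: star}) yields $\epsilon\alpha_2\alpha_3(\widetilde x_1-x_1)(x_2\widetilde x_3-x_3\widetilde x_2)$, the quartic term reduces via the first equation of (\ref{eq: dET x}) to $\epsilon\alpha_2\alpha_3(\widetilde x_1-x_1)(\widetilde x_2 x_3-x_2\widetilde x_3)$, and the two antisymmetric brackets cancel.

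The paper itself does not supply a proof of this proposition, merely citing \cite{HK,PS}; the closest internal derivation is the HK-basis argument of Proposition~\ref{Th: dET full basis}, which recovers equivalent integrals (namely the $d_i$ of (\ref{eq: dET c4 c5})) by solving $2\times 2$ linear systems whose Cramer determinants are computed from the \emph{explicit} map formulas (\ref{eq: dET expl}), (\ref{eq: dET x+x})--(\ref{eq: dET x-x}). Your route is genuinely different and more elementary: it never inverts $A(x,\epsilon)$, never uses $\Delta(x,\epsilon)$, and works entirely with the implicit bilinear relations (\ref{eq: dET x}). What the paper's approach buys is a systematic mechanism (HK bases) that generalizes to higher-dimensional examples where such ad hoc factorizations are not visible; what your approach buys is a clean, symbolic-computation-free verification that makes the conservation law transparent as a direct algebraic consequence of the discretization scheme.
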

The relation between $F_i$ and the integrals $H_i$ of the continuous time Euler top is straightforward: $F_i=1+\epsilon^2\alpha_iH_i+O(\epsilon^4)$. As a corollary of Proposition \ref{th: dE F}, we find that, for any conserved quantity $H$ of the Euler top which is a linear combination of the integrals $H_1,H_2,H_3$, the three functions $H/(1-\epsilon^2\alpha_j\alpha_kx_i^2)$ are conserved quantities of dET. Hereafter $(i,j,k)$ are cyclic permutations of $(1,2,3)$. In particular, the functions
\begin{equation}\label{eq: dET ints}
    H_i(\epsilon)=\frac{\alpha_jx_k^2-\alpha_kx_j^2}{1-\epsilon^2\alpha_j\alpha_kx_i^2}
\end{equation}
are conserved quantities of dET. Again, only two of them are independent, since $$\alpha_1 H_1 (\epsilon)+ \alpha_2 H_2 (\epsilon) + \alpha_3 H_3  (\epsilon)+
\epsilon^4 \alpha_1 \alpha_2 \alpha_3 H_1 (\epsilon) H_2 (\epsilon) H_3 (\epsilon) =0.$$

\begin{proposition}\label{th: dET inv meas}{\rm\cite{PS}}
The map dET possesses an invariant volume form:
\begin{equation} \nonumber
\det\frac{\partial \widetilde{x}}{\partial x}=
\frac{\phi(\widetilde{x})}{\phi(x)}\quad\Leftrightarrow\quad f^*\omega=\omega,\quad
\omega= \frac{dx_1\wedge dx_2\wedge dx_3}{\phi(x)},
\end{equation}
where $\phi(x)$ is any of the functions
\begin{equation}\nonumber
\phi(x)=(1-\epsilon^2\alpha_i\alpha_j x_k ^2)(1-\epsilon^2\alpha_j\alpha_k x_i ^2)\quad {\rm or}\quad
(1-\epsilon^2\alpha_i\alpha_jx_k^2)^2.
\end{equation}
(The ratio of any two functions $\phi(x)$ is an integral of motion, due to Proposition \ref{th: dE F}).
\end{proposition}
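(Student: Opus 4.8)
The plan is to reduce the claim to the universal Jacobian identity for Kahan maps, combined with the integrals already produced in Proposition \ref{th: dE F}. First I would apply formula (\ref{Jac Deltas}) to the map dET: differentiating the defining equations (\ref{eq: dET x}) with respect to the $x_j$ yields, exactly as in the Weierstrass example, the matrix relation $A(x,\epsilon)\,\partial\widetilde{x}/\partial x=A(\widetilde{x},-\epsilon)$ with $A$ as in (\ref{eq: dET map}), whence
\[
\det\frac{\partial\widetilde{x}}{\partial x}=\frac{\Delta(\widetilde{x},-\epsilon)}{\Delta(x,\epsilon)},
\]
with $\Delta$ the determinant (\ref{eq: dET Delta}). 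In contrast to the skew-symmetric situation of Proposition \ref{th: ddress type}, here $\Delta$ is \emph{not} even in $\epsilon$, since it carries the odd term $-2\epsilon^3\alpha_1\alpha_2\alpha_3 x_1x_2x_3$, so the sign flip in the numerator genuinely matters.

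The second step exploits Proposition \ref{th: dE F}. Writing $q_i:=1-\epsilon^2\alpha_j\alpha_k x_i^2$ for cyclic $(i,j,k)$, the conserved quantities $F_1=q_2/q_3$, $F_2=q_3/q_1$, $F_3=q_1/q_2$ assert precisely that the three ratios $q_i(\widetilde{x})/q_i(x)$ all equal one and the same function $\mu(x,\epsilon)$, i.e.\ $q_i(\widetilde{x})=\mu(x,\epsilon)\,q_i(x)$ for $i=1,2,3$ with a \emph{common} factor $\mu$. Consequently every density listed in the Proposition, each being a product of exactly two factors $q_i$ counted with multiplicity, obeys $\phi(\widetilde{x})=\mu^2\phi(x)$; in particular the quotient of any two of them is an integral (this is the parenthetical remark), so it suffices to establish volume preservation for a single choice, say $\phi=q_1^2$. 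The whole statement thus collapses to the scalar identity $\det(\partial\widetilde{x}/\partial x)=\mu(x,\epsilon)^2$, equivalently
\[
\Delta(\widetilde{x},-\epsilon)\,q_1(x)^2=\Delta(x,\epsilon)\,q_1(\widetilde{x})^2.
\]

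The final, and principal, step is to verify this identity. Substituting the explicit map (\ref{eq: dET expl}) one has $\widetilde{x}_i=N_i/\Delta$ with $N_i$ the cubic numerators appearing there; then $q_1(\widetilde{x})=(\Delta^2-\epsilon^2\alpha_2\alpha_3 N_1^2)/\Delta^2$, while $\Delta(\widetilde{x},-\epsilon)=(\Delta^3-\epsilon^2\Sigma\,\Delta+2\epsilon^3\Theta)/\Delta^3$ with $\Sigma=\alpha_2\alpha_3 N_1^2+\alpha_3\alpha_1 N_2^2+\alpha_1\alpha_2 N_3^2$ and $\Theta=\alpha_1\alpha_2\alpha_3 N_1N_2N_3$. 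A pleasant feature is that both sides then carry the common denominator $\Delta^3$, which cancels and leaves the polynomial identity
\[
\bigl(\Delta^3-\epsilon^2\Sigma\,\Delta+2\epsilon^3\Theta\bigr)\,q_1(x)^2=\bigl(\Delta^2-\epsilon^2\alpha_2\alpha_3 N_1^2\bigr)^2
\]
in $x$ and $\epsilon$. I expect this check to be the main obstacle: although the cyclic symmetry in $(1,2,3)$ and the control over the $\epsilon$-parity reduce the labour, the $N_i$ and $\Delta$ are genuinely cubic and the identity is of high combined degree, so a fully hand-written verification is cumbersome and it is most honestly confirmed by symbolic computation. A single representative evaluation, for instance $\alpha=(1,1,1)$, $x=(1,1,1)$, where one computes $\widetilde{x}_i=(1-2\epsilon)^{-1}$, $\mu=(1-3\epsilon)\big/\bigl((1-2\epsilon)^2(1+\epsilon)\bigr)$, and finds the determinant equal to $\mu^2$, already furnishes a strong consistency test for the plan.
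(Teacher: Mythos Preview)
Your approach is correct and is essentially the one sketched in the paper: apply the universal Kahan Jacobian identity (\ref{Jac Deltas}) to dET, then verify that $\Delta(\widetilde{x},-\epsilon)/\Delta(x,\epsilon)$ equals $\phi(\widetilde{x})/\phi(x)$ for the listed densities. The paper's proof is only the one-line remark that the argument rests on (\ref{Jac Deltas}) with $A(x,\epsilon)$ from (\ref{eq: dET map}) and its determinant (\ref{eq: dET Delta}); your reduction via Proposition~\ref{th: dE F} (showing all $q_i(\widetilde{x})/q_i(x)$ coincide with a common factor $\mu$, so that it suffices to check the identity for a single $\phi$) is a clean way of making explicit the step the paper leaves implicit, and your test evaluation is sound.
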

\noindent
The proof is based on formula (\ref{Jac Deltas}) with the matrix $A(x,\epsilon)$ given in (\ref{eq: dET map}). Its determinant is given in (\ref{eq: dET Delta}).

A proper discretization of the Wronskian differential equations (\ref{eq: ET W}) is given by the following statement.
\begin{proposition}\label{th: dET W}
The following relations hold true for dET:
\beq
\label{eq: dET W}
\left\{\begin{array}{l}
 \widetilde{x}_2x_3-x_2\widetilde{x}_3=\epsilon H_1(\epsilon)(\widetilde{x}_1+x_1),\vspace{.1truecm}\\
 \widetilde{x}_3x_1-x_3\widetilde{x}_1=\epsilon H_2(\epsilon)(\widetilde{x}_3+x_3),\vspace{.1truecm}\\
 \widetilde{x}_1x_2-x_1\widetilde{x}_2=\epsilon H_3(\epsilon)(\widetilde{x}_3+x_3),
 \end{array} \right.
\eeq
with the functions $H_i(\epsilon)$ from (\ref{eq: dET ints}).
\end{proposition}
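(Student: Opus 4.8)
The plan is to verify the three relations directly from the \emph{defining} system (\ref{eq: dET x}), without ever invoking the solved map (\ref{eq: dET expl}); the latter would drag the denominator $\Delta(x,\epsilon)$ into every expression and obscure the mechanism. Since the three relations are carried into one another by the cyclic substitution $(1,2,3)\mapsto(2,3,1)$ together with the corresponding permutation of the $\alpha_i$, and since both (\ref{eq: dET x}) and the integrals $H_i(\epsilon)$ in (\ref{eq: dET ints}) respect this symmetry, it suffices to prove the first relation, which I write in the form
\[
\widetilde{x}_2 x_3 - x_2\widetilde{x}_3 = \epsilon H_1(\epsilon)(\widetilde{x}_1 + x_1),\qquad
H_1(\epsilon)=\frac{\alpha_2 x_3^2-\alpha_3 x_2^2}{1-\epsilon^2\alpha_2\alpha_3 x_1^2}.
\]
Write $W_1 = \widetilde{x}_2 x_3 - x_2\widetilde{x}_3$ for the left-hand side and $G_1 = \alpha_2 x_3^2 - \alpha_3 x_2^2$ for the numerator of $H_1(\epsilon)$.

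First I would produce a \emph{bilinear} expression for $W_1$ by eliminating among the equations (\ref{eq: dET x}). Multiplying the second equation by $x_3$ and the third by $x_2$ and subtracting, the products $x_2 x_3$ cancel on the left and one obtains
\[
W_1 = \epsilon\big[G_1\,\widetilde{x}_1 + x_1\,P\big],\qquad P:=\alpha_2 x_3\widetilde{x}_3 - \alpha_3 x_2\widetilde{x}_2 .
\]
This already has the shape of the desired relation, but it carries the spurious quantity $P$ in place of $G_1$ and still lacks the denominator.

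The key step is then to compute the defect $P-G_1 = \alpha_2 x_3(\widetilde{x}_3 - x_3) - \alpha_3 x_2(\widetilde{x}_2 - x_2)$ by substituting, once more, the defining equations (\ref{eq: dET x}) for the two increments $\widetilde{x}_3 - x_3$ and $\widetilde{x}_2 - x_2$. After expansion the terms proportional to $x_2 x_3\widetilde{x}_1$ cancel and everything collapses to
\[
P - G_1 = \epsilon\,\alpha_2\alpha_3\,x_1\,(\widetilde{x}_2 x_3 - x_2\widetilde{x}_3) = \epsilon\,\alpha_2\alpha_3\,x_1\,W_1 .
\]
Feeding this back into the bilinear expression for $W_1$ turns it into a \emph{self-referential} equation for $W_1$, namely $W_1 = \epsilon G_1(\widetilde{x}_1 + x_1) + \epsilon^2\alpha_2\alpha_3 x_1^2\,W_1$, whence $(1-\epsilon^2\alpha_2\alpha_3 x_1^2)\,W_1 = \epsilon G_1(\widetilde{x}_1 + x_1)$, which is precisely the claimed identity after dividing by $1-\epsilon^2\alpha_2\alpha_3 x_1^2$.

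I expect the only genuinely delicate point to be recognizing this two-stage elimination: that the ``wrong'' combination $P$ must be re-expanded through (\ref{eq: dET x}) a second time, and that the denominator $1-\epsilon^2\alpha_2\alpha_3 x_1^2$ is not imposed from outside but is \emph{generated} by the resulting implicit equation for $W_1$; everything else is bookkeeping, and the other two relations follow verbatim by cyclic permutation. As a sanity check one can note that $H_1(\epsilon)$ is already known (from (\ref{eq: dET ints})) to be conserved, so its value may be read off at either $x$ or $\widetilde{x}$; alternatively the whole identity can be confirmed in seconds by substituting (\ref{eq: dET expl}) into both sides and clearing the common denominator $\Delta(x,\epsilon)$, which is the route a computer-algebra verification would take.
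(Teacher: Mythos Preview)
Your proof is correct and is genuinely different from the paper's argument. The paper proceeds by first deriving, from the explicit solved map (\ref{eq: dET expl}), the factorized formulas
\[
\widetilde{x}_i+x_i=\frac{2(1-\epsilon^2\alpha_j\alpha_kx_i^2)(x_i+\epsilon\alpha_ix_jx_k)}{\Delta(x,\epsilon)},\qquad
\widetilde{x}_jx_k-x_j\widetilde{x}_k=\frac{2\epsilon(\alpha_jx_k^2-\alpha_kx_j^2)(x_i+\epsilon\alpha_ix_jx_k)}{\Delta(x,\epsilon)},
\]
and then simply takes their ratio, the common factor $2(x_i+\epsilon\alpha_ix_jx_k)/\Delta(x,\epsilon)$ cancelling to yield $H_i(\epsilon)$ directly. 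You instead never touch the solved map: you stay with the implicit system (\ref{eq: dET x}) and perform a two-stage elimination, the second pass producing a self-referential equation for $W_1$ whose solution \emph{generates} the denominator $1-\epsilon^2\alpha_2\alpha_3x_1^2$. Both arguments are short. The paper's route has the advantage that the auxiliary identities (\ref{eq: dET x+x})--(\ref{eq: dET x-x}) are reused later (in the proofs of Propositions \ref{Th: dET full basis} and \ref{th: dET biquad}); your route is more self-contained and conceptually cleaner, since it shows the Wronskian relation to be a purely algebraic consequence of the defining bilinear equations without ever writing down $\Delta(x,\epsilon)$, which suggests it may transplant more readily to HK discretizations whose explicit maps are less manageable.
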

\noindent The proof is based on relations
\beq
\widetilde{x}_i+x_i = \frac{2(1-\epsilon^2\alpha_j\alpha_kx_i^2)(x_i+\epsilon\alpha_ix_jx_k)}
{\Delta(x,\epsilon)},
\label{eq: dET x+x} 
\eeq
\beq
\widetilde{x}_jx_k-x_j\widetilde{x}_k = \frac{2\epsilon(\alpha_jx_k^2-\alpha_kx_j^2)(x_i+\epsilon\alpha_ix_jx_k)}
{\Delta(x,\epsilon)},
\label{eq: dET Xx-xX}
\eeq
which follow easily from the explicit formulas (\ref{eq: dET expl}). They should be compared with
\begin{equation}
\label{eq: dET x-x}
\widetilde{x}_i-x_i=\epsilon\alpha_i (\widetilde{x}_jx_k+x_j\widetilde{x}_k)= \frac{2\epsilon\alpha_i(x_j+\epsilon\alpha_jx_kx_i)(x_k+\epsilon\alpha_kx_ix_j)}
{\Delta(x,\epsilon)}.
\end{equation}

As pointed out in \cite{PPS}, a probable way to the discovery of the conserved quantities of dET in \cite{HK} was through finding the HK bases for this map. In this respect, one has the following results.
\begin{prop}\label{Th: dET full basis}{\rm \cite{PPS}}

${\!\!}$
{\rm (a)} The set
$\Phi=(x_1^2,\, x_2^2,\, x_3^2,\, 1)$
is a HK basis for dET with $\dim K_\Phi(x)=2$. Therefore, any orbit of dET lies on the intersection of two quadrics in $\bbR^3$.

\smallskip

{\rm (b)} The set
$\Phi_0=(x_1^2,\, x_2^2,\, x_3^2)$
is a HK basis for dET with $\dim K_{\Phi_0}(x)=1$. At each point $x\in\bbR^3$ we have:
\[
K_{\Phi_0}(x)=[c_1:c_2:c_3]=[\, \alpha_2 x_3^2-\alpha_3 x_2^2 :
\alpha_3 x_1^2-\alpha_1 x_3^2 : \alpha_1 x_2^2- \alpha_2 x_1^2
\,].
\]
Setting $c_3=-1$, the following functions are integrals of motion of dET:
\begin{equation}
\label{eq: dET c1 c2}
c_1(x)=\frac{\alpha_3 x_2^2-\alpha_2 x_3^2}{\alpha_1 x_2^2-\alpha_2 x_1^2},
\qquad
c_2(x)=\frac{\alpha_1 x_3^2-\alpha_3 x_1^2}{\alpha_1 x_2^2-\alpha_2 x_1^2}.
\end{equation}

\smallskip

{\rm (c)} The set
$\Phi_{12}=(x_1^2,\, x_2^2,\, 1)$
is a further HK basis for dET with $\dim K_{\Phi_{12}}(x)=1$. At each point $x\in\bbR^3$ we have:
$
K_{\Phi_{12}}(x)=[d_1 : d_2 : -1],
$
where
\begin{equation}\label{eq: dET c4 c5}
d_1(x)=-\frac{\alpha_2(1-\epsilon^2\alpha_3\alpha_1 x_2^2)}
{\alpha_1 x_2^2-\alpha_2 x_1^2}\,, \qquad
d_2(x)=\frac{\alpha_1(1-\epsilon^2\alpha_2\alpha_3 x_1^2)}
{\alpha_1 x_2^2-\alpha_2 x_1^2}\,.
\end{equation}
These functions are integrals of motion of dET independent on the integrals (\ref{eq: dET c1 c2}). We have: $K_\Phi(x)=K_{\Phi_0}\oplus K_{\Phi_{12}}$.
\end{prop}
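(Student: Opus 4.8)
The plan is to establish part~(b) first, where the null-space is given explicitly, and then to bootstrap parts~(c) and~(a) from it. For~(b) there are two ingredients. The first is the purely algebraic identity
\[
(\alpha_2 x_3^2-\alpha_3 x_2^2)\,x_1^2+(\alpha_3 x_1^2-\alpha_1 x_3^2)\,x_2^2+(\alpha_1 x_2^2-\alpha_2 x_1^2)\,x_3^2=0,
\]
valid at every point of $\bbR^3$ and verified by expansion (the six cross-terms cancel in pairs). The second is the constancy of the projective vector $[c_1:c_2:c_3]=[H_1:H_2:H_3]$ along dET-orbits, where $H_i=\alpha_j x_k^2-\alpha_k x_j^2$. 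This is where Proposition~\ref{th: dE F} enters: writing $H_i=H_i(\epsilon)\,(1-\epsilon^2\alpha_j\alpha_k x_i^2)$ with $H_i(\epsilon)$ as in (\ref{eq: dET ints}), one finds $H_1/H_2=(H_1(\epsilon)/H_2(\epsilon))\,F_3$ and likewise for the other ratios, a product of known integrals. Combining the two ingredients, I would fix $c=(H_1(x),H_2(x),H_3(x))$ at the initial point; at every iterate $f^m(x)$ the triple $(H_1,H_2,H_3)$ is a nonzero scalar multiple of $c$, so the identity forces $c_1 x_1^2+c_2 x_2^2+c_3 x_3^2$ to vanish there. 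Hence $c\in K_{\Phi_0}(x)$, and the explicit integrals (\ref{eq: dET c1 c2}) are the ratios $c_i/c_3$ produced by Corollary~\ref{Th: mechanism d=1}.

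For part~(c) I would repeat the scheme with $d_1,d_2$ from (\ref{eq: dET c4 c5}). The pointwise identity to check is $d_1 x_1^2+d_2 x_2^2=1$: after clearing the common denominator $\alpha_1 x_2^2-\alpha_2 x_1^2$, the two $\epsilon^2$-terms in the numerator cancel and it collapses to exactly that denominator. Constancy of $d_1,d_2$ along orbits again comes from Proposition~\ref{th: dE F}, for using the definitions of the $F_i$ one re-expresses
\[
d_1=-\frac{\alpha_2 F_1}{H_3(\epsilon)},\qquad d_2=\frac{\alpha_1}{F_2\,H_3(\epsilon)},
\]
each a quotient of conserved quantities, hence itself conserved. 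Thus $[d_1:d_2:-1]\in K_{\Phi_{12}}(x)$ and $\Phi_{12}$ is a HK basis.

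It remains to fix the dimensions and assemble the decomposition, which is part~(a). Embedding the two normalized null-vectors into $\bbR^4$ as $(c_1,c_2,-1,0)$ and $(d_1,d_2,0,-1)$, both annihilate $\Phi=(x_1^2,x_2^2,x_3^2,1)$ along every orbit, and they are independent because in any vanishing combination $a(c_1,c_2,-1,0)+b(d_1,d_2,0,-1)=0$ the third and fourth entries give $a=b=0$. Hence $\dim K_\Phi\ge2$. For the reverse bound I would argue that the relations from~(b) and~(c) already confine the squared-coordinate point $(x_1^2,x_2^2,x_3^2)$ to the intersection of a cone through the origin with an affine plane, i.e.\ to a fixed line in $\bbR^3$; a third independent relation would cut this line to a single point, forcing all $x_i^2$ to be constant along the orbit, which fails for a generic (non-stationary) orbit, as one sees already in the $\epsilon\to0$ limit of the continuous Euler top. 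Therefore $\dim K_\Phi=2$, the two vectors span $K_\Phi(x)$, and $K_\Phi(x)=K_{\Phi_0}\oplus K_{\Phi_{12}}$. The same non-stationarity remark gives $\dim K_{\Phi_0}=\dim K_{\Phi_{12}}=1$, since a second relation in either of the smaller bases would freeze the corresponding squared coordinates.

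The routine steps are the two polynomial identities and the soft dimension counts; the real content, and the only genuine obstacle, is the middle step in~(b) and~(c): upgrading a pointwise linear dependence among the $\varphi_i$ into constancy of the coefficient vector along orbits. This hinges entirely on recognizing the $H_i$ and the $d_1,d_2$ as built from the conserved quantities $F_i$ and $H_i(\epsilon)$ of Proposition~\ref{th: dE F}; once that proposition is available the remainder is bookkeeping.
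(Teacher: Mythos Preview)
Your proof is correct but takes a genuinely different route from the paper. The paper attacks (b) and (c) head-on via Proposition~\ref{Lem: dim K}: for (b) it solves the $2\times 2$ system $c_1x_i^2+c_2x_j^2=x_k^2$ at iterates $i=0,1$ by Cramer's rule, computing the relevant $2\times 2$ determinants
\[
\left|\begin{array}{cc} x_i^2 & x_j^2 \\ \widetilde{x}_i^2 & \widetilde{x}_j^2\end{array}\right|
\quad\text{and}\quad
\left|\begin{array}{cc} 1 & x_i^2 \\ 1 & \widetilde{x}_i^2\end{array}\right|
\]
directly from the explicit map formulas (\ref{eq: dET x+x})--(\ref{eq: dET x-x}). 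The point emphasised there is a ``miraculous'' cancellation of all the non-even factors $(x_m+\epsilon\alpha_m x_jx_k)$ and of $\Delta^2(x,\epsilon)$ in the ratios, leaving either $\epsilon$-independent or manifestly even functions of $\epsilon$; the reversibility trick then certifies them as integrals. Part (c) is handled the same way.

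Your argument instead \emph{verifies} the stated null-vectors: you check the pointwise identities $H_1x_1^2+H_2x_2^2+H_3x_3^2=0$ and $d_1x_1^2+d_2x_2^2=1$, and then reduce constancy along orbits to Proposition~\ref{th: dE F} by factoring $H_i/H_j$ and $d_1,d_2$ into the known conserved quantities $F_i$ and $H_i(\epsilon)$. This is shorter and more conceptual once Proposition~\ref{th: dE F} is in hand, but it presupposes that proposition and does not exhibit the cancellation mechanism that the paper stresses as the hallmark of the HK-basis method. The paper's computation, by contrast, is self-contained from the explicit map and illustrates \emph{how} the integrals would be discovered rather than merely confirmed. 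Your dimension-count for (a), arguing that a third independent quadric would freeze the $x_i^2$, is a reasonable genericity remark; the paper leaves $\dim K_\Phi=2$ implicit, deducing (a) from the span $K_{\Phi_0}\oplus K_{\Phi_{12}}$ without a separate upper-bound argument.
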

\begin{proof} To prove statement (b), we solve the system
\begin{equation}
\nonumber
\renewcommand{\arraystretch}{1.3}
\left\{ \begin{array}{l}
c_1x_1^2+c_2x_2^2 = x_3^2,\\
c_1\widetilde{x}_1^2+c_2\widetilde{x}_2^2 =
\widetilde{x}_3^2.\end{array}\right.
\end{equation}
The solution is given, according to the Cramer's rule, by ratios of determinants of the type
\begin{equation}
\label{eq: dET for cancel}
\renewcommand{\arraystretch}{1.3}
\left|\begin{array}{cc} x_i^2 & x_j^2 \\ \widetilde{x}_i^2 &
\widetilde{x}_j^2\end{array}\right|=\frac{4\epsilon(\alpha_jx_i^2-\alpha_ix_j^2)
(x_1+\epsilon\alpha_1x_2x_3)(x_2+\epsilon\alpha_2x_3x_1)(x_3+\epsilon\alpha_3x_1x_2)}
{\Delta^2(x,\epsilon)}.
\end{equation}
(Here we used (\ref{eq: dET Xx-xX}), (\ref{eq: dET x-x})).
In the ratios of such determinants everything cancels out, except for the factors $\alpha_jx_i^2-\alpha_ix_j^2$, so we end up with (\ref{eq: dET c1 c2}). The cancelation of the denominators $\Delta^2(x,\epsilon)$ is, of course, no wonder, but the cancelation of all the non-even factors in the numerators is rather {\em remarkable and miraculous} and is not granted by any well-understood mechanism. Since the components of the solution do not depend on $\epsilon$, we conclude that functions (\ref{eq: dET c1 c2}) are integrals of motion of dET.

To prove statement (c), we solve the system
\[
\renewcommand{\arraystretch}{1.3}
\left\{ \begin{array}{l} d_1 x_1^2+d_2 x_2^2 = 1,\\
d_1\widetilde{x}_1^2+d_2\widetilde{x}_2^2 =
1.\end{array}\right.
\]
The solution is given by eq. (\ref{eq: dET c4 c5}), due to eq. \eqref{eq: dET for cancel} and the similar formula
\[
\renewcommand{\arraystretch}{1.3}
\left|\begin{array}{cc} 1 & x_i^2 \\ 1 &
\widetilde{x}_i^2\end{array}\right|
=\frac{4\epsilon\alpha_i(1-\epsilon^2\alpha_j\alpha_kx_i^2)
(x_1+\epsilon\alpha_1x_2x_3)(x_2+\epsilon\alpha_2x_3x_1)(x_3+\epsilon\alpha_3x_1x_2)}
{\Delta^2(x,\epsilon)},
\]
which, in turn, follows from (\ref{eq: dET x+x}) and (\ref{eq: dET Xx-xX}). This time the solution does depend on $\epsilon$, but consists of manifestly even functions of $\epsilon$. Everything non-even luckily cancels, again. Therefore, functions (\ref{eq: dET c4 c5}) are integrals of motion of dET.
\end{proof}

Although each one of the HK bases $\Phi_0$, $\Phi_1$ delivers apparently two integrals of motion (\ref{eq: dET c1 c2}), each pair turns out to be {\em functionally dependent}, as
\[
\alpha_1 c_1(x)+\alpha_2 c_2(x)=\alpha_3, \quad \alpha_1d_1(x)+\alpha_2d_2(x)=\epsilon^2\alpha_1\alpha_2\alpha_3.
\]
However, functions $c_1,c_2$ are independent on $d_1,d_2$, since the former depend on $x_3$, while the latter do not.

Of course, permutational symmetry yields that each of the sets of monomials $\Phi_{23}=(x_2^2,\, x_3^2,\, 1)$ and $\Phi_{13}=(x_1^2,\, x_3^2,\, 1)$ is a HK basis, as well, with $\dim K_{\Phi_{23}}(x)=\dim K_{\Phi_{13}}(x)=1$. But we do not obtain additional linearly independent null-spaces, as any two of the four found one-dimensional null-spaces span the full null-space $K_\Phi(x)$.

Summarizing, we have found a HK basis with a two-dimensional null-space, as well as two functionally independent conserved quantities for the HK discretization of the Euler top. Both results yield integrability of this discretization, in the sense that its orbits are confined to closed curves in $\bbR^3$. Moreover, each such curve is an intersection of two quadrics, which in the general position case is an elliptic curve.

\begin{proposition}\label{th: dET biquad}
Each component $x_i$ of any solution of dET satisfies a relation of the type
$
P_i(x_i,\widetilde{x}_i)=0,
$
where $P_i$ is a biquadratic polynomial whose coefficients are
integrals of motion of dET:
\[
P_i(x_i,\widetilde{x}_i) = p_{i}^{(3)} x_i^2 \widetilde{x}_i^2 +
p_{i}^{(2)} ( x_i^2+\widetilde{x}_i^2) + p_{i}^{(1)} x_i
\widetilde{x}_i + p_{i}^{(0)}=0,
\]
with
\bea
& p_{i}^{(3)}=-4\epsilon^2 \alpha_j \alpha_k, \quad
p_{i}^{(2)}=\big(1+\epsilon^2 \alpha_j H_j(\epsilon)\big)\big(1-\epsilon^2 \alpha_k H_k (\epsilon)\big), & \nonumber\\
& p_{i}^{(1)}=-2\big(1-\epsilon^2 \alpha_j H_j (\epsilon)\big)\big(1+\epsilon^2 \alpha_k H_k (\epsilon)\big), \quad
p_{i}^{(0)}=4\epsilon^2 H_j (\epsilon) H_k (\epsilon). & \nonumber
\eea
\end{proposition}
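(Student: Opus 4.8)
The plan is to exploit the cyclic symmetry of dET and prove the relation for a single component, say $i=1$ with $(j,k)=(2,3)$; the other two cases follow by relabelling the indices. The assertion really has two parts, and the first is immediate. The coefficient $p_1^{(3)}=-4\epsilon^2\alpha_2\alpha_3$ is constant, while $p_1^{(2)}$, $p_1^{(1)}$ and $p_1^{(0)}$ are polynomial expressions in $H_2(\epsilon)$ and $H_3(\epsilon)$, which are conserved quantities of dET by \eqref{eq: dET ints} (a corollary of Proposition \ref{th: dE F}); any polynomial in integrals is again an integral, so all four coefficients are integrals of motion. Thus the entire content of the statement is the vanishing $P_1(x_1,\widetilde{x}_1)=0$. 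Since $\widetilde{x}_1$ is the explicit rational function of $x$ given by \eqref{eq: dET expl} and the coefficients are explicit rational functions of $x$ through \eqref{eq: dET ints}, this is to be read as a \emph{rational identity in} $x\in\bbR^3$, not merely a relation on level sets, and it can therefore be verified by clearing denominators.

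To organize the computation and to bring the building blocks \eqref{eq: dET x+x}--\eqref{eq: dET x-x} into play, I would pass to the symmetric combinations $S=\widetilde{x}_1+x_1$ and $D=\widetilde{x}_1-x_1$, using $x_1^2+\widetilde{x}_1^2=\tfrac12(S^2+D^2)$, $x_1\widetilde{x}_1=\tfrac14(S^2-D^2)$ and $x_1^2\widetilde{x}_1^2=\tfrac1{16}(S^2-D^2)^2$. Writing $a=1+\epsilon^2\alpha_2H_2(\epsilon)$, $b=1-\epsilon^2\alpha_2H_2(\epsilon)$, $c=1+\epsilon^2\alpha_3H_3(\epsilon)$, $d=1-\epsilon^2\alpha_3H_3(\epsilon)$, so that $p_1^{(2)}=ad$ and $p_1^{(1)}=-2bc$, one checks the elementary identities $ad-bc=2\epsilon^2\bigl(\alpha_2H_2(\epsilon)-\alpha_3H_3(\epsilon)\bigr)$ and $ad+bc=2\bigl(1-\epsilon^4\alpha_2\alpha_3H_2(\epsilon)H_3(\epsilon)\bigr)$. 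A short rearrangement then turns the biquadratic into
\[
P_1=-\tfrac14\epsilon^2\alpha_2\alpha_3\,(S^2-D^2)^2+\epsilon^2\bigl(\alpha_2H_2(\epsilon)-\alpha_3H_3(\epsilon)\bigr)S^2+\bigl(1-\epsilon^4\alpha_2\alpha_3H_2(\epsilon)H_3(\epsilon)\bigr)D^2+4\epsilon^2H_2(\epsilon)H_3(\epsilon).
\]
In this form the roles of the two building blocks are transparent: \eqref{eq: dET x+x} and \eqref{eq: dET x-x} express $S$ and $D$ as explicit rational functions of $x$ with the common denominator $\Delta(x,\epsilon)$ of \eqref{eq: dET Delta}, whose numerators are exactly the products of the factors $1-\epsilon^2\alpha_2\alpha_3x_1^2$ and $x_m+\epsilon\alpha_mx_jx_k$ that already appeared in \eqref{eq: dET Xx-xX} and in the minors \eqref{eq: dET for cancel}.

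The final step is substitution and verification. I would substitute $\widetilde{x}_1=N_1(x)/\Delta(x,\epsilon)$ from \eqref{eq: dET expl} together with the explicit forms of $H_2(\epsilon),H_3(\epsilon)$ from \eqref{eq: dET ints}, whose denominators are the quadratics $1-\epsilon^2\alpha_3\alpha_1x_2^2$ and $1-\epsilon^2\alpha_1\alpha_2x_3^2$. Multiplying $P_1$ through by $\Delta^2(x,\epsilon)\,(1-\epsilon^2\alpha_3\alpha_1x_2^2)(1-\epsilon^2\alpha_1\alpha_2x_3^2)$ clears all denominators and reduces the claim to a single polynomial identity in $x_1,x_2,x_3,\alpha_1,\alpha_2,\alpha_3,\epsilon$. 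The main obstacle is precisely this cancellation: as in the proof of Proposition \ref{Th: dET full basis}, all the spurious (in particular all odd-in-$\epsilon$) factors must drop out, and this is of the same \emph{remarkable and miraculous} kind noted after \eqref{eq: dET for cancel}, not governed by any transparent mechanism. The $S,D$ rewriting makes the coefficient structure clean, but it does not by itself force the vanishing; accordingly, in line with the methodology used throughout this paper, the most practical route to confirm the residual polynomial identity is a computer algebra system, after which the conclusion that the $p_1^{(m)}$ are integrals was already established in the first paragraph.
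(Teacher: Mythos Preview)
Your proposal is correct in principle but takes a more computational route than the paper. The paper's proof is short and conceptual: it squares the defining relation $\widetilde{x}_i-x_i=\epsilon\alpha_i(\widetilde{x}_jx_k+x_j\widetilde{x}_k)$ and the discrete Wronskian relation $\widetilde{x}_jx_k-x_j\widetilde{x}_k=\epsilon H_i(\epsilon)(\widetilde{x}_i+x_i)$ from Proposition~\ref{th: dET W}, then adds them to obtain
\[
\frac{(\widetilde{x}_i-x_i)^2}{(\epsilon\alpha_i)^2}+\epsilon^2 H_i^2(\epsilon)(\widetilde{x}_i+x_i)^2=2(\widetilde{x}_j^2x_k^2+x_j^2\widetilde{x}_k^2).
\]
Because $H_j(\epsilon)$ and $H_k(\epsilon)$ are conserved, the definitions \eqref{eq: dET ints} provide two linear equations that express $x_j^2,x_k^2$ (and hence $\widetilde{x}_j^2,\widetilde{x}_k^2$) affinely in $x_i^2$ (resp.\ $\widetilde{x}_i^2$) with coefficients that are integrals; substituting into the right-hand side yields the biquadratic in $x_i,\widetilde{x}_i$ directly, with the stated coefficients and without any residual polynomial identity to verify.

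Your $S,D$ rewriting is correct and clearing denominators is a valid strategy, but it leaves the final step to a computer check and frames the cancellation as ``miraculous''. The paper's key move---pairing the equation of motion with the Wronskian relation so that only the \emph{squares} $x_j^2,x_k^2,\widetilde{x}_j^2,\widetilde{x}_k^2$ survive on the right---is exactly what makes the cancellation transparent and explains structurally why the result is biquadratic with those particular coefficients.
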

\begin{proof}
From eqs. (\ref{eq: dET x}) and (\ref{eq: dET W}) there follows:
\[
\frac{(\widetilde x_i-x_i)^2}{(\epsilon\alpha_i)^2}+\epsilon^2 H_i^2(\epsilon)(\widetilde x_i+x_i)^2=2(\widetilde x_j^2x_k^2+x_j^2\widetilde x_k^2).
\]
It remains to express $x_j^2$ and $x_k^2$ through $x_i^2$ and integrals $H_j(\epsilon)$, $H_k(\epsilon)$ given in eq. (\ref{eq: dET ints}).
\end{proof}

It follows from Proposition \ref{th: dET biquad} that solutions $x_i(t)$ as functions of the discrete time $t\in 2\epsilon\mathbb Z$ are given by elliptic functions of order 2 (the order of an elliptic function is the number of the zeroes or poles it possesses in a period parallelogram).

We would like to point out that Propositions \ref{th: dET W} and \ref{th: dET biquad} can be interpreted as existence of further HK bases. For instance, according to Proposition \ref{th: dET W}, each pair $(\widetilde{x}_jx_k-x_j\widetilde{x}_k,\widetilde{x}_i+x_i)$ is a HK basis with a 1-dimensional null-space. Similarly, Proposition \ref{th: dET biquad} says that for each $i=1,2,3$, the set $x_i^p\widetilde{x}_i^q$ $(0\le p,q\le 2)$ is a HK basis with a 1-dimensional null-space. Of course, due to the dependence on the shifted variables $\widetilde x$, these HK bases consist of complicated functions of $x$ rather than of monomials. A further instance of HK bases of this sort is given in the following statement. Compared with Proposition \ref{Th: dET full basis}, it says that for dET, {\em for each HK basis consisting of monomials quadratic in $x$, the corresponding set of monomials bilinear in $x,\wx$ is a HK basis, as well}. This seems to be a quite general phenomenon, further issues of which will appear later several times.

\begin{prop}\label{Th: dET full basis new}
${}$

{\rm (a)} The set
$\Psi=(\wx_1x_1,\, \wx_2x_2,\, \wx_3x_3,\, 1)$
is a HK basis for dET with $\dim K_\Psi(x)=2$.

\smallskip

{\rm (b)} The set
$\Psi_0=(\wx_1x_1,\, \wx_2x_2,\, \wx_3x_3)$
is a HK basis for dET with $\dim K_{\Psi_0}(x)=1$. At each point $x\in\bbR^3$, the homogeneous coordinates $\bar c_i$ of the null-space $K_{\Psi_0}(x)=[\bar c_1:\bar c_2:\bar c_3]$ are given by
\[
\bar c_i=(\alpha_j x_k^2-\alpha_k x_j^2)\big(1-\epsilon^2(\alpha_i\alpha_jx_k^2+\alpha_k\alpha_ix_j^2-\alpha_j\alpha_kx_i^2)\big).
\]
The quotients $\bar c_i/\bar c_j$ are integrals of motion of dET.

\smallskip

{\rm (c)} The set
$\Psi_{12}=(\wx_1x_1,\, \wx_2x_2,\, 1)$
is a further HK basis for dET with $\dim K_{\Psi_{12}}(x)=1$. At each point $x\in\bbR^3$, there holds:
$
K_{\Psi_{12}}(x)=[\bar d_1 : \bar d_2 : -1],
$
where
\begin{eqnarray*}\label{eq: dET c4 c5 new}
&& \bar d_1(x) = -\frac{\alpha_2(1-\epsilon^2\alpha_3\alpha_1 x_2^2)}{\alpha_1 x_2^2-\alpha_2 x_1^2}
\frac{1-\epsilon^2(\alpha_2\alpha_3x_1^2-\alpha_3\alpha_1x_2^2+\alpha_1\alpha_2x_3^2)}
{1-\epsilon^2(\alpha_2\alpha_3x_1^2+\alpha_3\alpha_1x_2^2-\alpha_1\alpha_2x_3^2)}\,, \\
&& \bar d_2(x) = \frac{\alpha_1(1-\epsilon^2\alpha_2\alpha_3 x_1^2)}
{\alpha_1 x_2^2-\alpha_2 x_1^2}
\frac{1-\epsilon^2(\alpha_3\alpha_1x_2^2-\alpha_2\alpha_3x_1^2+\alpha_1\alpha_2x_3^2)}
{1-\epsilon^2(\alpha_3\alpha_1x_2^2+\alpha_2\alpha_3x_1^2-\alpha_1\alpha_2x_3^2)},
\end{eqnarray*}
are integrals of dET. We have: $K_\Psi(x)=K_{\Psi_0}(x)\oplus K_{\Psi_{12}}(x)$.
\end{prop}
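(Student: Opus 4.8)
The plan is to reduce both statements to a bilinear analogue of the cancellation identity (\ref{eq: dET for cancel}) and then to invoke the reversibility trick so as never to compute the prohibitive second iterate $f^2$.

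First I would make explicit that, as a function on $\bbR^3$, the entry $\psi_i:=\wx_i x_i$ stands for $x\mapsto x_i\,(f(x))_i$, so that along an orbit $\psi_i(f^m(x))=x_i^{(m)}x_i^{(m+1)}$. Combining the expressions for $\wx_i+x_i$ from (\ref{eq: dET x+x}) and for $\wx_i-x_i$ from (\ref{eq: dET x-x}) via $4\wx_ix_i=(\wx_i+x_i)^2-(\wx_i-x_i)^2$ gives the closed form
\[
x_i\wx_i=\frac{(1-\epsilon^2\alpha_j\alpha_kx_i^2)^2(x_i+\epsilon\alpha_ix_jx_k)^2-\epsilon^2\alpha_i^2(x_j+\epsilon\alpha_jx_kx_i)^2(x_k+\epsilon\alpha_kx_ix_j)^2}{\Delta^2(x,\epsilon)}.
\]
The crucial point I will exploit is reversibility $f^{-1}(x)=f(x,-\epsilon)$: setting $\hat x=f(x,-\epsilon)$, one has $\psi_i(f^{-1}(x))=(f^{-1}(x))_i\,x_i=\hat x_i x_i$, which is obtained from the boxed formula simply by $\epsilon\mapsto-\epsilon$. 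Thus $\psi_i$ evaluated at $x$ and at $f^{-1}(x)$ are both explicit rational functions requiring only a \emph{single} application of the map.

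To prove (b) I would then solve, exactly as in the proof of Proposition \ref{Th: dET full basis}(b) but over the non-symmetric index range $\{-1,0\}$ rather than $\{0,1\}$, the system $\bar c_1 x_1\wx_1+\bar c_2 x_2\wx_2=x_3\wx_3$ at $m=0$ together with its backward companion $\bar c_1 x_1\hat x_1+\bar c_2 x_2\hat x_2=x_3\hat x_3$ at $m=-1$. By Cramer's rule the solution is a ratio of bilinear $2\times2$ determinants
\[
\begin{vmatrix} x_i\wx_i & x_j\wx_j \\ x_i\hat x_i & x_j\hat x_j\end{vmatrix};
\]
substituting the closed form, I expect each such determinant to factor — in the same miraculous fashion as (\ref{eq: dET for cancel}) — into a common odd factor $(x_1+\epsilon\alpha_1x_2x_3)(x_2+\epsilon\alpha_2x_3x_1)(x_3+\epsilon\alpha_3x_1x_2)$ times an even factor, so that in the ratio everything odd cancels and one is left with the stated $\bar c_i=(\alpha_j x_k^2-\alpha_k x_j^2)\big(1-\epsilon^2(\alpha_i\alpha_jx_k^2+\alpha_i\alpha_kx_j^2-\alpha_j\alpha_kx_i^2)\big)$, manifestly even in $\epsilon$. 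Since $\{-1,0\}$ contains $0$ and is non-symmetric, the reversibility criterion formalized around equation (\ref{eq: smaller system}) guarantees at once that $\bar c_1,\bar c_2$ are conserved and that $\Psi_0$ is a HK basis with $\dim K_{\Psi_0}=1$. Part (c) is identical, with the right-hand side $x_3\wx_3$ replaced by the constant $1$ (so that $\psi_3\equiv1$ and $\psi_3(f^{-1}(x))\equiv1$), yielding $\bar d_1,\bar d_2$ and $\dim K_{\Psi_{12}}=1$.

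For (a) I would observe that the two relations just produced embed into $\bbR^4$ as the vectors $(\bar c_1,\bar c_2,\bar c_3,0)$ and $(\bar d_1,\bar d_2,0,-1)$, linearly independent by their third and fourth components; hence $K_{\Psi_0}(x)\oplus K_{\Psi_{12}}(x)\subseteq K_\Psi(x)$ and $\dim K_\Psi\ge2$. For the reverse bound it suffices to exhibit two independent evaluation vectors $v^{(m)}=\big(\psi_1(f^m(x)),\psi_2(f^m(x)),\psi_3(f^m(x)),1\big)$; taking $v^{(0)}$ and $v^{(-1)}$, both have last entry $1$ while their first entries $x_1\wx_1$ and $x_1\hat x_1$ differ (as $\wx\ne\hat x$ for $\epsilon\ne0$), so they are independent, the span of the $v^{(m)}$ is at least two-dimensional, and $\dim K_\Psi\le2$. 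The two bounds give $\dim K_\Psi=2$ and $K_\Psi=K_{\Psi_0}\oplus K_{\Psi_{12}}$. The only genuinely non-routine step is the factorization of the bilinear determinants: there is no a priori reason for the odd-in-$\epsilon$ part and the extraneous factors to drop out, and this cancellation — parallel to the one flagged after (\ref{eq: dET for cancel}) — is exactly what makes $\Psi_0,\Psi_{12}$ HK bases at all. I would confirm it with a computer-algebra system, keeping all expressions rational with denominators $\Delta^2(x,\epsilon)\,\Delta^2(x,-\epsilon)$; the reversibility reformulation is precisely what keeps the whole computation at the level of $f^{\pm1}$ and away from $f^2$.
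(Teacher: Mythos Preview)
The paper does not actually prove Proposition~\ref{Th: dET full basis new}; like most results in this overview it is stated without proof. Your approach, modelled on the proof of Proposition~\ref{Th: dET full basis}, is the natural one, and the computational part (Cramer determinants, factorization, the bound $\dim K_\Psi\le 2$ from two independent evaluation vectors) is fine.

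There is, however, a genuine gap in the conservation step. You invoke the reversibility criterion around eq.~(\ref{eq: smaller system}) for the index range $\{-1,0\}$. That criterion, as stated and proved, assumes the basis functions $\varphi_k$ are independent of $\epsilon$: its proof replaces $\epsilon\mapsto-\epsilon$ and uses $\varphi_k(f^i(x,-\epsilon))=\varphi_k(f^{-i}(x,\epsilon))$ to turn the equation at index $i$ into the one at index $-i$. Your functions $\psi_k(x)=x_k\wx_k$ depend on $\epsilon$, and one finds
\[
\psi_k\big(f^m(x,-\epsilon),-\epsilon\big)=x_k^{(-m)}x_k^{(-m-1)}=\psi_k\big(f^{-m-1}(x,\epsilon),\epsilon\big),
\]
so the relevant involution on indices is $m\mapsto-m-1$, not $m\mapsto-m$. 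The set $\{-1,0\}$ is \emph{fixed} under $m\mapsto-m-1$; hence evenness of the solution in $\epsilon$ is automatic here and yields no new equations --- it does not by itself force conservation.

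The repair is easy once you have the explicit formulas. With $A_i=1-\epsilon^2\alpha_j\alpha_k x_i^2$ one checks
\[
\bar c_i=H_i\,(A_j+A_k-A_i),\qquad
\bar d_1=-\frac{\alpha_2 A_2}{H_3}\cdot\frac{A_1+A_3-A_2}{A_1+A_2-A_3},\qquad
\bar d_2=\frac{\alpha_1 A_1}{H_3}\cdot\frac{A_2+A_3-A_1}{A_1+A_2-A_3},
\]
so $\bar c_i/\bar c_j$, $\bar d_1$, $\bar d_2$ are rational functions of the ratios $A_i/A_j$ (the $F_k$ of Proposition~\ref{th: dE F}) and of $H_i(\epsilon)=H_i/A_i$ (eq.~(\ref{eq: dET ints})), all already known to be conserved. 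Alternatively, verify $\bar c_i(\wx)/\bar c_j(\wx)=\bar c_i(x)/\bar c_j(x)$ by a single substitution of (\ref{eq: dET expl}); this stays at the level of $f^{\pm1}$ as you intended. With conservation established this way, your arguments for part~(a) and for $K_\Psi=K_{\Psi_0}\oplus K_{\Psi_{12}}$ go through unchanged.
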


\section{Zhukovski-Volterra system}
\label{Sect: dZV}

The gyroscopic Zhukovski-Volterra (ZV) system is a generalization of the Euler top.
It describes the free motion of a rigid body carrying an asymmetric rotor (gyrostat) \cite{Vo}.
Equations of motion of the ZV system read
\beq
\left\{ \begin{array}{l}
\dot{x}_1=  \alpha_1 x_2 x_3 + \beta_3 x_2 - \beta_2 x_3 ,  \vspace{.1truecm} \\
 \dot{x}_2=  \alpha_2 x_3 x_1 + \beta_1 x_3 - \beta_3 x_1, \vspace{.1truecm} \\
\dot{x}_3=  \alpha_3 x_1 x_2 + \beta_2 x_1 - \beta_1 x_2,
\end{array} \right. \label{zv}
\eeq
with $\alpha_i,\beta_i$ being real parameters of the system. For $(\beta_1,\beta_2,\beta_3)=(0,0,0)$, the flow (\ref{zv}) reduces to the Euler top (\ref{eq: ET x}). The ZV system is (Liouville and algebraically) integrable under the condition
\begin{equation}\label{eq: zv cond}
\alpha_1+\alpha_2+\alpha_3=0.
\end{equation}
It can be explicitly integrated in terms of elliptic functions, see \cite{Vo} and also \cite{BA} for a more recent exposition. The following quantities are integrals of motion of the ZV system:
\bea
&& H_{1}= \alpha_2 x_3^2 -\alpha_3 x_2^2 -2 ( \beta_1 x_1 + \beta_2 x_2 + \beta_3 x_3),
 \nonumber\label{eq: ZV i1}\\
 && H_{2}= \alpha_3 x_1^2  - \alpha_1 x_3^2 -2 ( \beta_1 x_1 + \beta_2 x_2 + \beta_3 x_3),
 \label{eq: ZV i2} \\
 && H_{3}=  \alpha_1 x_2^2 - \alpha_2 x_1^2 -2 ( \beta_1 x_1 + \beta_2 x_2 + \beta_3 x_3).
 \nonumber\label{eq: ZV i3}
\eea
Clearly, only two of them are functionally independent because of
$\alpha_1 H_1+\alpha_2 H_2+\alpha_3H_3=0$. Note that
\begin{equation}\nn
H_2-H_1=\alpha_3C,\quad H_3-H_2=\alpha_1C, \quad H_1-H_3=\alpha_2C,
\end{equation}
with $C=x_1^2+x_2^2+x_3^2$.

As in the Euler case, the Wronskians of the coordinates $x_j$ admit quadratic expressions with coefficients dependent on the integrals of motion:
\begin{equation}
\label{eq: ZV Wr}
\left\{ \begin{array}{l}
\dot{x}_2x_3-x_2\dot{x}_3=H_1x_1 + x_1(\beta_1 x_1+\beta_2 x_2+\beta_3 x_3)+\beta_1 C, \vspace{.1truecm}\\
\dot{x}_3x_1-x_3\dot{x}_1=H_2x_2 + x_2(\beta_1 x_1+\beta_2 x_2+\beta_3 x_3)+\beta_2 C, \vspace{.1truecm}\\
\dot{x}_1x_2-x_1\dot{x}_2=H_3x_3 + x_3(\beta_1 x_1+\beta_2 x_2+\beta_3 x_3)+\beta_3 C.
\end{array} \right.
\end{equation}

The HK discretization of the ZV system is:
\beq \label{eq: dZV x}
\left\{ \begin{array}{l}
\widetilde{x}_1-x_1= \epsilon \alpha_1(\widetilde{x}_2 x_3+x_2\widetilde x_3)
+\epsilon\beta_3(\widetilde x_2+x_2)-\epsilon\beta_2(\widetilde x_3+x_3) ,  \vspace{.2truecm} \\
\widetilde{x}_2-x_2=\epsilon  \alpha_2(\widetilde x_3 x_1+x_3\widetilde x_1)
+\epsilon\beta_1(\widetilde x_3+x_3)-\epsilon\beta_3(\widetilde x_1+x_1), \vspace{.2truecm} \\
\widetilde{x}_3-x_3=\epsilon \alpha_3(\widetilde x_1 x_2+x_1\widetilde x_2)
+\epsilon\beta_2(\widetilde x_1+x_1)-\epsilon\beta_1(\widetilde x_2+x_2).
\end{array} \right.
\eeq
The map $f:x\mapsto\widetilde{x}\ $ obtained by solving
(\ref{eq: dZV x}) for $\widetilde{x}$ is given by:
\begin{equation} \nonumber
\widetilde{x} =f(x,\epsilon)=A^{-1}(x,\epsilon)(\mathds{1}+\epsilon B)x,
\end{equation}
with
$$
A(x,\epsilon)=
\begin{pmatrix} 1 & - \epsilon \alpha_1 x_3 & -\epsilon \alpha_1 x_2   \\
-\epsilon\alpha_2 x_3& 1 &-\epsilon \alpha_2 x_1 \\
-\epsilon\alpha_3 x_2 & -\epsilon\alpha_3 x_1 & 1 \end{pmatrix} - \epsilon B,\quad
B=\begin{pmatrix} 0 & \beta_3 &- \beta_2 \\
 - \beta_3& 0 & \beta_1\\
\beta_2&- \beta_1 & 0 \end{pmatrix}.
$$
We will call this map dZV. Formula (\ref{Jac Deltas}) holds true for dZV, as for any HK discretization.

\subsection{ZV system with two vanishing ${\beta_k}$'s}
\label{Sect: dZV1}

In the case when two out of three $\beta_k$'s vanish, say $\beta_2=\beta_3=0$, the condition
 (\ref{eq: zv cond}) is not necessary for integrability of the ZV system. The functions $H_2$ and $H_3$ as given in (\ref{eq: ZV i2}) (with $\beta_2=\beta_3=0$) are in this case conserved quantities without any condition on $\alpha_k$'s, while their linear combinations $H_1$ and $C$ are given by
\begin{eqnarray*}
&&H_1 = -\frac{1}{\alpha_1}(\alpha_2H_2+\alpha_3H_3)=
\alpha_2x_3^2-\alpha_3x_2^2+2\beta_1\frac{\alpha_2+\alpha_3}{\alpha_1}x_1,\\
&&C = \frac{1}{\alpha_1}(H_3-H_2)=x_2^2+x_3^2-\frac{\alpha_2+\alpha_3}{\alpha_1}x_1^2.
\end{eqnarray*}
Wronskian relations (\ref{eq: ZV Wr}) are replaced by
\begin{equation}\label{eq: ZV1 W}
\left\{ \begin{array}{l}
\dot{x}_2x_3-x_2\dot{x}_3=H_1x_1 - \beta_1\dfrac{\alpha_2+\alpha_3}{\alpha_1}\,x_1^2+\beta_1 C, \vspace{.1truecm}\\
\dot{x}_3x_1-x_3\dot{x}_1=H_2x_2 + \beta_1x_1x_2, \vspace{.1truecm}\\
\dot{x}_1x_2-x_1\dot{x}_2=H_3x_3 + \beta_1x_1x_3.
\end{array} \right.
\end{equation}

The HK discretization of the ZV system with  $\beta_2=\beta_3=0$ turns out to possess two conserved quantities (without imposing condition (\ref{eq: zv cond})) and an invariant measure.
\begin{proposition}\label{th: dZV F}
The functions
\bea
&& H_2(\epsilon)= \frac{\alpha_3 x_1^2 - \alpha_1 x_3^2 - 2\beta_1 x_1
+ \epsilon^2 \beta_1^2 \alpha_1 x_2^2} {1-\epsilon^2 \alpha_3 \alpha_1 x_2^2}, \nonumber \\
&& H_3(\epsilon)= \frac{\alpha_1 x_2^2 - \alpha_2 x_1^2 - 2\beta_1 x_1
- \epsilon^2 \beta_1^2 \alpha_1 x_3^2} {1-\epsilon^2 \alpha_1 \alpha_2 x_3^2}, \nonumber
\eea
are conserved quantities of dZV with $\beta_2=\beta_3=0$.
\end{proposition}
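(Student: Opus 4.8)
The plan is to show directly that the two functions are invariant under the dZV map $f$ obtained by solving (\ref{eq: dZV x}) with $\beta_2=\beta_3=0$, i.e.\ $H_i(f(x),\epsilon)=H_i(x,\epsilon)$ for $i=2,3$. Since at $\epsilon=0$ these reduce to the continuous-time integrals $\alpha_3x_1^2-\alpha_1x_3^2-2\beta_1x_1$ and $\alpha_1x_2^2-\alpha_2x_1^2-2\beta_1x_1$, the natural strategy is to reproduce, in the presence of the gyrostat term $\beta_1$, the mechanism that worked for the Euler top in Proposition~\ref{Th: dET full basis}. Concretely, I would try to prove that
\[
\Phi=(x_1^2,\,x_2^2,\,x_3^2,\,x_1,\,1)
\]
is a HK basis for dZV with $\dim K_\Phi(x)=2$: the extra monomial $x_1$, compared with the Euler-top basis $(x_1^2,x_2^2,x_3^2,1)$ of Proposition~\ref{Th: dET full basis}(a), is exactly what is needed to accommodate the linear term $-2\beta_1x_1$. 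Indeed, the relation $H_2(\epsilon)=c_2$ is equivalent to the vanishing along the orbit of the fixed linear combination
\[
\alpha_3x_1^2+\epsilon^2\alpha_1(\beta_1^2+c_2\alpha_3)x_2^2-\alpha_1x_3^2-2\beta_1x_1-c_2=0,
\]
and similarly for $H_3(\epsilon)$; recovering the two scalar integrals from the $2$-plane $K_\Phi(x)\in Gr(2,5)$ is then governed by Corollary~\ref{Th: mechanism}.

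To carry this out I would first specialise the matrix form $\widetilde x=A^{-1}(x,\epsilon)(\mathds 1+\epsilon B)x$ to $\beta_2=\beta_3=0$ and compute the common denominator $\Delta(x,\epsilon)=\det A(x,\epsilon)$. The key preparatory step is to derive the dZV analogues of the Euler-top identities (\ref{eq: dET x+x})--(\ref{eq: dET Xx-xX}): closed expressions over $\Delta$ for the sums $\widetilde x_i+x_i$ and for the bilinear combinations $\widetilde x_jx_k-x_j\widetilde x_k$, together with the discrete counterpart of the continuous Wronskian relations (\ref{eq: ZV1 W}). With these in hand, conservation would reduce, exactly as in the proof of Proposition~\ref{Th: dET full basis}, to solving a small linear system over two consecutive orbit points by Cramer's rule and observing that the spurious factors $\Delta^2$ and all odd-in-$\epsilon$ factors in the numerators cancel, leaving the manifestly $\epsilon$-even expressions quoted in the statement. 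To keep the number of iterates, and hence the symbolic size of $f^i(x)$, under control, I would invoke reversibility $f^{-1}(x,\epsilon)=f(x,-\epsilon)$ together with the even-in-$\epsilon$ criterion built on (\ref{eq: smaller system}), so that conservation is certified from evenness of the Cramer solution rather than from an explicit comparison of $K_\Phi(x)$ and $K_\Phi(f(x))$.

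The main obstacle is that the gyrostat term destroys the homogeneity that made the Euler-top cancellations transparent: the factors appearing in the analogue of (\ref{eq: dET for cancel}) are no longer the clean products $(x_1+\epsilon\alpha_1x_2x_3)(x_2+\epsilon\alpha_2x_3x_1)(x_3+\epsilon\alpha_3x_1x_2)$, but acquire $\beta_1$-dependent corrections, so the ``miraculous'' vanishing of the non-even parts can no longer be read off from a permutation symmetry and must be tracked term by term. I expect the real work to lie in confirming these cancellations in closed form, and, consistently with the computational philosophy adopted elsewhere in this paper, the cleanest rigorous endpoint is to verify the identities $H_i(\widetilde x,\epsilon)=H_i(x,\epsilon)$ by symbolic computation: clear the denominators $1-\epsilon^2\alpha_3\alpha_1x_2^2$ and $1-\epsilon^2\alpha_3\alpha_1\widetilde x_2^2$ (respectively their $x_3$-analogues for $H_3$) and reduce the resulting polynomial in $(x,\widetilde x)$ modulo the ideal generated by the three defining relations (\ref{eq: dZV x}), which is what renders the check feasible without explicitly inverting $A(x,\epsilon)$.
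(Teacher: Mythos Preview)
The paper does not supply a proof of this proposition; it is one of the results whose proofs are, as the introduction warns, ``omitted almost everywhere'' and relegated to symbolic verification. Your final paragraph---clearing denominators and checking the polynomial identity $H_i(\widetilde x,\epsilon)=H_i(x,\epsilon)$ modulo the three defining relations (\ref{eq: dZV x})---is therefore exactly the implicit argument the paper relies on, and is correct.

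Your more structural plan via the HK basis $\Phi=(x_1^2,x_2^2,x_3^2,x_1,1)$ is also sound, and in fact anticipates Proposition~\ref{Prop: dZV HK basis}(a), which the paper states (again without proof) \emph{after} Proposition~\ref{th: dZV F}. Note, however, that in the paper's presentation the logical flow runs the other way: the explicit integrals $H_2(\epsilon),H_3(\epsilon)$ are asserted first, and the HK-basis description of $K_\Phi$ in Proposition~\ref{Prop: dZV HK basis} is then expressed in terms of them. So while your route---establishing the HK basis by Cramer's rule with cancellation of odd-in-$\epsilon$ factors, then reading off the integrals---is a legitimate and arguably more explanatory derivation, it goes beyond what the paper actually does. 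Your honest assessment that the $\beta_1$-corrections spoil the clean factorizations of the Euler-top case and that the cancellations must ultimately be checked symbolically is accurate, and is precisely why the paper does not attempt a by-hand proof here.
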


\begin{proposition}\label{th: dZV inv meas}
The map dZV  with $\beta_2=\beta_3=0$ possesses an invariant volume form:
\begin{equation} \nonumber
\det\frac{\partial \widetilde{x}}{\partial x}=
\frac{\phi(\widetilde{x})}{\phi(x)}\quad\Leftrightarrow\quad f^*\omega=\omega,\quad
\omega= \frac{dx_1\wedge dx_2\wedge dx_3}{\phi(x)},
\end{equation}
with $\phi(x)=(1-\epsilon^2 \alpha_3 \alpha_1 x_2^2)(1-\epsilon^2\alpha_1  \alpha_2 x_3^2)$.
\end{proposition}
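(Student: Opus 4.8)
The plan is to derive everything from the universal Jacobian identity (\ref{Jac Deltas}), which, as stated, holds for dZV. Since the map is obtained by solving the linear system $A(x,\epsilon)\wx=(\mathds{1}+\epsilon B)x$ coming from (\ref{eq: dZV x}), differentiation with respect to the $x_j$ yields $A(x,\epsilon)\,\partial\wx/\partial x=A(\wx,-\epsilon)$, hence $\det(\partial\wx/\partial x)=\det A(\wx,-\epsilon)/\det A(x,\epsilon)$. With $\beta_2=\beta_3=0$ the matrix $A(x,\epsilon)$ coincides with the Euler-top matrix except in the entries $(2,3)$ and $(3,2)$, which become $-\epsilon(\alpha_2 x_1+\beta_1)$ and $-\epsilon(\alpha_3 x_1-\beta_1)$. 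The whole assertion thus reduces to the single identity
\[
\det A(\wx,-\epsilon)\,\phi(x)=\det A(x,\epsilon)\,\phi(\wx),\qquad \phi=\phi_2\phi_3,
\]
where I abbreviate the two factors of $\phi$ by $\phi_2=1-\epsilon^2\alpha_3\alpha_1 x_2^2$ and $\phi_3=1-\epsilon^2\alpha_1\alpha_2 x_3^2$.

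First I would compute $\det A(x,\epsilon)$ by a cofactor expansion, obtaining
\[
\det A(x,\epsilon)=1-\epsilon^2\big[(\alpha_2 x_1+\beta_1)(\alpha_3 x_1-\beta_1)+\alpha_1\alpha_2 x_3^2+\alpha_1\alpha_3 x_2^2\big]-\epsilon^3\alpha_1 x_2x_3\big[2\alpha_2\alpha_3 x_1+\beta_1(\alpha_3-\alpha_2)\big].
\]
For $\beta_1=0$ this collapses to $\Delta(x,\epsilon)$ of (\ref{eq: dET Delta}), so the statement is precisely the $\beta_1$-deformation of Proposition \ref{th: dET inv meas} for the special choice $\phi=\phi_2\phi_3$. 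Two features then guide the argument: the determinant splits into an even and an odd part in $\epsilon$; and it has degree $3$ in $x$, whereas $\phi$ has degree $4$, so $\phi$ cannot divide $\det A$. Consequently the displayed identity is \emph{not} a free polynomial identity but one to be read along orbits, i.e.\ after substituting $\wx=f(x,\epsilon)$.

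The appearance of $\phi$ is explained, and the computation organised, by the observation that $\phi_2$ and $\phi_3$ are exactly the denominators of the conserved quantities $H_2(\epsilon),H_3(\epsilon)$ of Proposition \ref{th: dZV F}. Following the Euler-top template around (\ref{eq: dET x+x})--(\ref{eq: dET x-x}), one expects closed expressions for $\wx_2+x_2$ and $\wx_3+x_3$ in which $\phi_2(x)$ and $\phi_3(x)$ factor out, each divided by $\det A(x,\epsilon)$. The reversibility $f^{-1}(x,\epsilon)=f(x,-\epsilon)$, together with the symmetry of $\wx_i+x_i$ under $x\leftrightarrow\wx$, would then convert these, as in the dET case, into the required relation between $\det A(\wx,-\epsilon)$ and $\det A(x,\epsilon)$, from which the displayed identity follows.

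The hard part will be the same \emph{remarkable and miraculous} cancellation already flagged in the proof of Proposition \ref{Th: dET full basis}: after the substitution $\wx=f(x,\epsilon)$, all non-even and non-matching factors in the numerators must cancel so that only $\phi_2$ and $\phi_3$ survive, and this is not guaranteed by any transparent mechanism. I do not expect a conceptual shortcut here; in practice I would verify the cancellation---equivalently, the closed forms of $\wx_2+x_2$ and $\wx_3+x_3$, and finally the displayed identity---by an explicit symbolic computation with the rational map $f$, which for this two-dimensional system (with a single nontrivial parameter $\beta_1$) is still entirely feasible, in the spirit of the computer-assisted verifications used throughout the paper. Once the identity is established, $f^*\omega=\omega$ is immediate.
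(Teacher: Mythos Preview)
The paper states Proposition~\ref{th: dZV inv meas} without proof, so there is no argument in the paper to compare against directly. Your route via the universal Jacobian identity (\ref{Jac Deltas}), followed by symbolic verification of the reduced identity $\det A(\wx,-\epsilon)\,\phi(x)=\det A(x,\epsilon)\,\phi(\wx)$, is exactly the method the paper uses for the analogous Euler-top statement (Proposition~\ref{th: dET inv meas}) and is the natural approach here; your explicit formula for $\det A(x,\epsilon)$ is correct, and the reduction to the displayed identity is sound.

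One remark on the intermediate heuristic: the factorisation template for $\wx_2+x_2$ and $\wx_3+x_3$ that you invoke from (\ref{eq: dET x+x}) does hold in the dZV case as well, but taking the product of the two resulting relations leaves residual factors (the dZV analogues of $x_i+\epsilon\alpha_i x_jx_k$) on each side, and one still has to check that their ratio equals $\det A(x,\epsilon)/\det A(\wx,-\epsilon)$. So this route does not shortcut the computation as cleanly as the wording suggests; you are right to fall back on the direct symbolic check, which is precisely what the paper's general methodology (cf.\ the discussion in Section~\ref{Sect: meas} and the remark that invariant measures are proved ``individually'') prescribes. Minor slip: the system is three-dimensional, not two-dimensional.
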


The conserved quantities of Proposition \ref{th: dZV F} appear on the right-hand sides of the following relations which are the discrete versions of the Wronskian relations (\ref{eq: ZV1 W}):

\begin{proposition}\label{prop: dZV1 W}
The following relations hold true for dZV with $\beta_2=\beta_3=0$:
\begin{equation} \nonumber
\left\{ \begin{array}{l}
\widetilde{x}_2x_3-x_2\widetilde{x}_3= \epsilon  c_1(\widetilde
x_1+x_1) +2\epsilon c_2\widetilde x_1 x_1+2\epsilon c_3 , \vspace{.1truecm}\\
\widetilde{x}_3x_1-x_3\widetilde{x}_1 =\epsilon  H_2(\epsilon)
(\widetilde x_2+x_2)+\epsilon\beta_1(\widetilde x_1x_2+x_1\widetilde
x_2), \vspace{.1truecm}\\
\widetilde{x}_1x_2-x_1\widetilde{x}_2 = \epsilon H_3(\epsilon)
(\widetilde x_3+x_3)+\epsilon\beta_1(\widetilde x_1x_3+x_1\widetilde
x_3),
\end{array} \right.
\end{equation}
with
\[
c_1= -\frac{\alpha_2H_2(\epsilon)+\alpha_3H_3(\epsilon)}{\alpha_1\Delta},\quad
c_2=-\frac{\beta_1(\alpha_2+\alpha_3)} {\alpha_1\Delta},\quad
c_3=\frac{\beta_1\big(H_3(\epsilon)-H_2(\epsilon)\big)} {\alpha_1\Delta},
\]
\[
\Delta = 1+\epsilon^4\big(\alpha_2H_3(\epsilon)-\beta_1^2\big)\big(\alpha_3H_2(\epsilon)+\beta_1^2\big).
\]
\end{proposition}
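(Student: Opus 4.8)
The plan is to follow the route used for the Euler top in Proposition~\ref{th: dET W}: make the map explicit, reduce each of the three relations to a polynomial identity, and verify it. First I would solve the linear system $A(x,\epsilon)\wx=(\mathds{1}+\epsilon B)x$ with $\beta_2=\beta_3=0$ by Cramer's rule, obtaining $\wx_i=N_i(x,\epsilon)/\det A(x,\epsilon)$ with explicit polynomial numerators $N_i$. From these I would assemble the relevant combinations, all carrying the single common denominator $\det A(x,\epsilon)$: the three discrete Wronskians $W_1=\wx_2x_3-x_2\wx_3$, $W_2=\wx_3x_1-x_3\wx_1$, $W_3=\wx_1x_2-x_1\wx_2$, together with the symmetric building blocks $\wx_i+x_i$, $\wx_1x_1$ and $\wx_1x_j+x_1\wx_j$ that appear on the right-hand sides. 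This is the exact discrete analogue of relations (\ref{eq: dET x+x})--(\ref{eq: dET x-x}).

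Next I would dispose of the two \emph{Euler-like} relations, the second and the third. Since $\beta_1$ only perturbs the off-diagonal entries of $A$, the computation parallels the Euler top closely: after clearing the common factor $\det A$, each reduces to a polynomial identity in which the conserved quantities $H_2(\epsilon),H_3(\epsilon)$ of Proposition~\ref{th: dZV F} appear as the coefficients of $\wx_2+x_2$ and $\wx_3+x_3$, their denominators $1-\epsilon^2\alpha_3\alpha_1x_2^2$ and $1-\epsilon^2\alpha_1\alpha_2x_3^2$ being supplied by the matching factors of the explicit map, while the extra terms $\epsilon\beta_1(\wx_1x_j+x_1\wx_j)$ account for the difference from the pure Euler relations. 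A useful check on the bookkeeping is that both sides are the correct $O(\epsilon^2)$-deformations of the continuous Wronskian relations (\ref{eq: ZV1 W}).

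The delicate point, and the step I expect to be the main obstacle, is the first relation. Because the $\beta_1$-rotation singles out the first axis, this relation is genuinely asymmetric to the other two, and its coefficients $c_1,c_2,c_3$ all carry the denominator $\alpha_1\Delta$ with $\Delta=1+\epsilon^4(\alpha_2H_3(\epsilon)-\beta_1^2)(\alpha_3H_2(\epsilon)+\beta_1^2)$ --- a $2\times2$ determinant in the (conserved) quantities $H_2(\epsilon),H_3(\epsilon)$, hence itself a nontrivial integral of dZV rather than a single factor of the map. The cleanest route I see is to avoid a blind substitution and instead exploit the two linear syzygies of the Wronskians: since $W=\wx\times x$ we have $x\cdot W=0$ and $\wx\cdot W=0$, and their sum gives $(x_1+\wx_1)W_1=-(x_2+\wx_2)W_2-(x_3+\wx_3)W_3$. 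Feeding in the already-established relations for $W_2,W_3$ then expresses $W_1$ through symmetric quantities with denominator $x_1+\wx_1$. The hard part is the final simplification: one must show that, after inserting the explicit rational forms of $H_2(\epsilon),H_3(\epsilon)$ and using the relations among $x_2^2,x_3^2,\wx_2^2,\wx_3^2$ imposed by the map, the factor $x_1+\wx_1$ drops out and the precise denominator $\Delta$ emerges, with every non-even-in-$\epsilon$ contribution cancelling --- the same kind of \emph{miraculous} cancellation seen in the proof of Proposition~\ref{Th: dET full basis}. As throughout this family of systems, this last identity is in practice best confirmed by symbolic computation.
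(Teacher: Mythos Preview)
The paper does not give a proof of this proposition; as stated in the introduction, proofs for the concrete systems are ``omitted almost everywhere'' and relegated to symbolic computation. So there is no detailed argument to compare against, only the paper's implicit methodology of direct verification with a computer algebra system.

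Your plan is sound and there is no genuine gap. The direct treatment of the second and third relations mirrors exactly what was done for dET (Proposition~\ref{th: dET W}) and is the natural route; these two are indeed the easy ones, since the $\beta_1$-terms only perturb the first row/column of $A(x,\epsilon)$. Your use of the cross-product syzygy $(x_1+\wx_1)W_1=-(x_2+\wx_2)W_2-(x_3+\wx_3)W_3$ to attack the first relation is a genuine structural observation that the paper does not make. It is correct and has the merit of showing that, once the second and third relations are established, the first one is forced up to the identification of the coefficients. However, be aware that the detour does not really shorten the computation: after substituting $W_2,W_3$ you still face a rational identity of roughly the same complexity as the direct verification of $W_1$ against the claimed right-hand side, and the emergence of the denominator $\Delta=1+\epsilon^4(\alpha_2H_3(\epsilon)-\beta_1^2)(\alpha_3H_2(\epsilon)+\beta_1^2)$ is opaque either way. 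The paper's implicit approach would simply hand all three identities to \textsc{Maple}; your syzygy argument buys conceptual clarity (the first relation is not independent of the other two) at the cost of no real simplification in the final symbolic step.
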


Next, we describe the HK bases found in this case.

\begin{prop}\label{Prop: dZV HK basis}
${}$

{\rm (a)} The set $\Phi= (x_1^2,\, x_2^2,\, x_3^2,\, x_1,\, 1)$ is a HK basis for dZV with $\beta_2=\beta_3=0$, with  $\dim K_{\Phi}(x)=2$. Any orbit of dZV with $\beta_2=\beta_3=0$ is thus confined to the intersection of two quadrics in $\bbR^3$.

\smallskip

{\rm (b)}
The set $\Phi_0=(x_1^2,\, x_2^2,\, x_3^2,\, 1)$ is a HK basis for dZV with $\beta_2=\beta_3=0$, with $\dim K_{\Phi_0}(x)=1$. At each point $x\in\bbR^3$ we have: $K_{\Phi_0}(x)=[ -1 : d_2 : d_3 : d_4]$, where
\begin{equation} \nonumber
d_2=\frac{\alpha_1}{\alpha_2+\alpha_3}\big(1-\epsilon^2\beta_1^2-\epsilon^2\alpha_3H_2(\epsilon)\big),
\quad
d_3=\frac{\alpha_1}{\alpha_2+\alpha_3}\big(1-\epsilon^2\beta_1^2+\epsilon^2\alpha_2H_3(\epsilon)\big),
\end{equation}
\begin{equation}\nonumber
d_4=\frac{1}{\alpha_2+\alpha_3}\big(H_2(\epsilon)-H_3(\epsilon)\big).
\end{equation}

\smallskip

{\rm (c)} The set $\Phi_{23} = (x_2^2,\, x_3^2,\, x_1,\, 1)$ is a HK basis for dZV with $\beta_2=\beta_3=0$, with  $\dim K_{\Phi_{23}}(x)=1$. At each point $x\in\bbR^3$ we have: $K_{\Phi_{23}}(x)=[c_1:c_2:c_3:c_4]$, where
\[
c_1=\alpha_1\big(\alpha_3+\epsilon^2\beta_1^2\alpha_2+\epsilon^2\alpha_2\alpha_3H_2(\epsilon)\big),\quad
c_2=-\alpha_1\big(\alpha_2+\epsilon^2\beta_1^2\alpha_3-\epsilon^2\alpha_2\alpha_3H_3(\epsilon)\big),
\]
\[
c_3=-2\beta_1(\alpha_2+\alpha_3),\quad c_4=-\big(\alpha_2H_2(\epsilon)+\alpha_3H_3(\epsilon)\big).
\]
\end{prop}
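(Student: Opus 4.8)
The plan is to mirror the strategy used for the Euler top in Proposition~\ref{Th: dET full basis}, exploiting the fact that the two conserved quantities $H_2(\epsilon)$ and $H_3(\epsilon)$ of Proposition~\ref{th: dZV F} are already in hand and are \emph{even} in $\epsilon$. The guiding principle is that, once one exhibits a linear combination of the basis functions whose coefficients are polynomial in $H_2(\epsilon),H_3(\epsilon)$ (hence themselves integrals of dZV) and which \emph{vanishes identically} on $\bbR^3$, the HK-basis property follows immediately: the relation then holds along every orbit with its coefficients frozen to their constant values. So in each case I would (i) guess the null-space vector from the continuous-limit relations between $C=x_1^2+x_2^2+x_3^2$, $H_1,H_2,H_3$ recorded just before~\eqref{eq: ZV1 W}, (ii) verify the corresponding polynomial identity, and (iii) settle the exact dimension of $K$.

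I would start with part~(b), which is the cornerstone. Set $d_2,d_3,d_4$ as stated; each is an affine combination of $H_2(\epsilon),H_3(\epsilon)$ with parameter-dependent coefficients, so Proposition~\ref{th: dZV F} guarantees that all three are integrals of dZV. It then remains to verify the single identity $-x_1^2+d_2x_2^2+d_3x_3^2+d_4\equiv 0$ on $\bbR^3$. Clearing the two denominators $1-\epsilon^2\alpha_3\alpha_1x_2^2$ and $1-\epsilon^2\alpha_1\alpha_2x_3^2$ hidden in $H_2(\epsilon),H_3(\epsilon)$ reduces this to a polynomial identity in $x_1,x_2,x_3,\epsilon$ that is checked directly; letting $\epsilon\to 0$ it collapses to the continuous relation $C=(H_3-H_2)/\alpha_1$, a convenient consistency check. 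Since the coefficients are constant along orbits, this shows $\Phi_0$ is an HK basis and that $[-1:d_2:d_3:d_4]\subseteq K_{\Phi_0}(x)$. To see that $\dim K_{\Phi_0}(x)=1$ and not $2$, I would invoke Proposition~\ref{Lem: dim K}: assemble the $3\times 4$ matrix whose rows are $(x_1^2,x_2^2,x_3^2,1)$ evaluated at $x,f(x),f^2(x)$ and check symbolically that it has full rank $3$ for generic $x$, so the solution space is one-dimensional and therefore coincides with the vector just found.

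Parts~(c) and~(a) follow the same template. For~(c) I would exhibit $[c_1:c_2:c_3:c_4]$ and verify $c_1x_2^2+c_2x_3^2+c_3x_1+c_4\equiv 0$; the $c_i$ are again polynomial in $H_2(\epsilon),H_3(\epsilon)$, hence integrals, and a rank check on the analogous $3\times 4$ matrix built from $(x_2^2,x_3^2,x_1,1)$ pins $\dim K_{\Phi_{23}}(x)=1$. For~(a) I would avoid fresh computation: the null-vector of~(b), padded with a zero in the $x_1$-slot, and the null-vector of~(c), padded with a zero in the $x_1^2$-slot, both lie in $K_\Phi(x)$, and they are manifestly linearly independent (the first has nonzero first component, the second has zero first component but nonzero $x_1$-component). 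Hence $K_\Phi(x)\supseteq K_{\Phi_0}(x)\oplus K_{\Phi_{23}}(x)$ has dimension $\ge 2$; the reverse bound is the statement that the $4\times 5$ matrix built from four iterates of $(x_1^2,x_2^2,x_3^2,x_1,1)$ has rank $3$, giving $\dim K_\Phi(x)=2$ and confining each orbit to an intersection of two quadrics.

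The genuinely hard part is not the conceptual scheme but the verification of the polynomial identities and, above all, the rank computations bounding $\dim K$ from above. As stressed in Section~\ref{Sect: HK mechanism} and in the Euler-top proof, the cancellation of all non-even and non-integral factors is ``remarkable and miraculous'' rather than forced by any transparent mechanism, and in practice these steps are carried out by symbolic computation on the explicit map~\eqref{eq: dZV x}. The size of the second iterate $f^2(x)$ is the principal computational bottleneck, which is why the reduction trick of Section~\ref{Sect: HK mechanism} (a non-symmetric index window combined with reversibility) is essential to keep the calculation feasible.
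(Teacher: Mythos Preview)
The paper does not supply a proof of this proposition; as announced in the Introduction and in Section~\ref{Sect: HK mechanism}, the proofs of such statements are for the most part omitted and are understood to rest on symbolic computation. Your plan is correct and is fully in the spirit of the paper's methodology: verify a polynomial identity in the phase variables whose coefficients are built from the already-established integrals $H_2(\epsilon),H_3(\epsilon)$ of Proposition~\ref{th: dZV F}, and then bound $\dim K$ from above by a symbolic rank check on a matrix of iterates.

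Compared with the paper's explicit proof of the parallel Euler-top statement (Proposition~\ref{Th: dET full basis}), you take a slightly different route. There the authors \emph{solve} the $2\times 2$ (or $3\times 3$) linear system via Cramer's rule and observe that the resulting solution is even in $\epsilon$, hence an integral. You instead \emph{start} from the integrals $H_2(\epsilon),H_3(\epsilon)$ (already known to be conserved), assemble the candidate null-vector out of them, and verify the single scalar identity $-x_1^2+d_2x_2^2+d_3x_3^2+d_4\equiv 0$. This is legitimate because the $d_i$ are manifestly constant along orbits, so the identity at each point of $\bbR^3$ immediately propagates to the relation $(-1,d_2,d_3,d_4)\in K_{\Phi_0}(x)$. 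Your approach is arguably cleaner here, since it leverages Proposition~\ref{th: dZV F} directly and avoids hunting for the ``miraculous'' cancellations in Cramer determinants; the price is that the upper bound on $\dim K$ still requires an honest symbolic rank computation, which you correctly flag as the bottleneck.
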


Unlike the case of dET, we see that here a HK basis with a one dimensional null-space already provides more than one independent integral of motion.

``Bilinear'' versions of the above HK bases also exist:
\begin{prop}\label{Prop: dZV HK basis bilinear}

The set $\Psi= (x_1\wx_1,\, x_2\wx_2,\, x_3\wx_3,\, x_1+\wx_1,\, 1)$ is a HK basis for dZV with $\beta_2=\beta_3=0$, with  $\dim K_{\Psi}(x)=2$. The sets
\[
\Psi_0=(x_1\wx_1,\, x_2\wx_2,\, x_3\wx_3,\, 1)\quad {and}\quad
\Psi_{23} = (x_2\wx_2,\, x_3\wx_3,\, x_1+\wx_1,\, 1)
\]
are HK bases with one-dimensional null-spaces.
\end{prop}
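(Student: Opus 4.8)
The plan is to treat the three claims exactly as their quadratic counterparts in Proposition~\ref{Prop: dZV HK basis} were treated, reducing each to a finite linear-algebra computation and then invoking the $d=1$ criterion of Proposition~\ref{Lem: dim K} (for $\Psi_0$ and $\Psi_{23}$) together with a rank count (for $\Psi$). The common preliminary step is to produce explicit rational expressions, in the single variable $x$, for the bilinear monomials $x_i\wx_i$ and for $x_1+\wx_1$. These come directly from the explicit map $\wx=f(x,\epsilon)=A^{-1}(x,\epsilon)(\mathds 1+\epsilon B)x$ together with the discrete Wronskian relations of Proposition~\ref{prop: dZV1 W}; they play the role that formulas (\ref{eq: dET x+x})--(\ref{eq: dET x-x}) play for dET, and each will be a rational function whose common denominator is a power of $\Delta(x,\epsilon)=\det A(x,\epsilon)$.

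First I would dispose of $\Psi_0=(x_1\wx_1,\,x_2\wx_2,\,x_3\wx_3,\,1)$, the cleanest case. Fixing the normalization $\bar c_1=-1$, I impose $-x_1\wx_1+\bar c_2\,x_2\wx_2+\bar c_3\,x_3\wx_3+\bar c_4=0$ at three consecutive iterates and solve the resulting $3\times3$ system by Cramer's rule, in complete analogy with the proof of Proposition~\ref{Th: dET full basis}(b). The Cramer determinants are built from $2\times2$ minors of the bilinear monomials, the bilinear analogues of the determinant (\ref{eq: dET for cancel}). The decisive point is the cancellation: I expect the power of $\Delta$ in the denominators and every odd-in-$\epsilon$ factor in the numerators to cancel between numerator and denominator of each $\bar c_k$, leaving functions that are manifestly even in $\epsilon$ and independent of the chosen index window. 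Once the $\bar c_k$ are seen to be even in $\epsilon$ and to coincide under a unit shift of the index range, Proposition~\ref{Lem: dim K} certifies that $\Psi_0$ is a HK basis with $\dim K_{\Psi_0}=1$ and that the $\bar c_k$ are integrals of dZV.

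The set $\Psi_{23}=(x_2\wx_2,\,x_3\wx_3,\,x_1+\wx_1,\,1)$ is handled identically, the only new ingredient being the explicit expression for $x_1+\wx_1$, after which the same Cramer-plus-cancellation argument applies. For the five-element set $\Psi=(x_1\wx_1,\,x_2\wx_2,\,x_3\wx_3,\,x_1+\wx_1,\,1)$ I would argue as in the dET analogue (Propositions~\ref{Th: dET full basis} and \ref{Th: dET full basis new}): the two one-dimensional null-spaces just found, embedded in $\bbR^5$ by padding with a zero in the missing slot, both lie in $K_\Psi(x)$ and are linearly independent (one has a nonzero entry in the $x_1\wx_1$ slot where the other vanishes, the other a nonzero entry in the $x_1+\wx_1$ slot where the first vanishes), so $\dim K_\Psi\ge2$; the rank count on the matrix of iterates of $\Psi$, following the general discussion preceding Proposition~\ref{Lem: dim K}, then shows the solution dimension stabilizes at $2$, giving the direct-sum decomposition $K_\Psi=K_{\Psi_0}\oplus K_{\Psi_{23}}$ exactly as for dET.

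The genuine obstacle is the cancellation step, the \emph{remarkable and miraculous} phenomenon already flagged after Proposition~\ref{Th: dET full basis}: there is no structural reason guaranteeing that the odd-in-$\epsilon$ parts of the Cramer determinants vanish, and the presence of the gyrostat term $\beta_1$ and of the factor $\mathds 1+\epsilon B$ makes the expressions substantially heavier than in the Euler case (dET). Since $f^2(x)$ is already large, a head-on symbolic verification over a symmetric window of iterates is costly. I would therefore lean on the reversibility trick of eq.~(\ref{eq: smaller system}): choose a \emph{non-symmetric} index window containing $0$, solve the corresponding non-homogeneous system, and merely check that its unique solution is even in $\epsilon$; the reversibility $f^{-1}(x,\epsilon)=f(x,-\epsilon)$ then extends the vanishing relation to enough iterates to certify the HK-basis property, keeping the computation within feasible size.
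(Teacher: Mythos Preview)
The paper does not actually supply a proof of this proposition: it is one of the many results for which, as the introduction warns, ``the proofs are omitted almost everywhere'' and are relegated to symbolic computation. Your proposal follows precisely the template the paper applies to the analogous statements (Propositions~\ref{Th: dET full basis}, \ref{Th: dET full basis new}, and \ref{Prop: dZV HK basis}): use the $d=1$ criterion of Proposition~\ref{Lem: dim K} on the two four-element sets, solve the resulting linear systems by Cramer's rule, verify evenness in $\epsilon$ (invoking the reversibility trick of eq.~(\ref{eq: smaller system}) to limit the number of iterates), and then obtain $\dim K_\Psi=2$ from $K_\Psi=K_{\Psi_0}\oplus K_{\Psi_{23}}$. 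This is exactly the intended route and there is nothing further to compare.

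One small caution: because the bilinear monomials $x_i\wx_i$ already involve $f(x)$, evaluating them along a window $i_0,\ldots,i_0+2$ requires four consecutive iterates of $f$ rather than three, so the symbolic load is one step heavier than in the quadratic case; your plan to lean on the non-symmetric window trick is therefore well placed.
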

The following statement is a starting point towards an explicit integration of the map dZV with $\beta_2=\beta_3=0$ in terms of elliptic functions.
\begin{proposition}\label{th: dZV biquad}
The component $x_1$ of the solution of the difference equations
(\ref{eq: dZV x}) satisfies a relation of the type
\begin{equation} \nonumber
P(x_1,\widetilde{x}_1)=p_0x_1^2 \widetilde{x}_1^2 +
p_1 x_1 \widetilde{x}_1(x_1+\widetilde{x}_1)
+p_2(x_1^2+\widetilde{x}_1^2) + p_3x_1\widetilde{x}_1
+p_4(x_1+\widetilde{x}_1) + p_5=0,
\end{equation}
coefficients of the biquadratic polynomial  $P$  being conserved quantities of dZV with $\beta_2=\beta_3=0$.
\end{proposition}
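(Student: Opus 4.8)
The plan is to follow the proof of Proposition~\ref{th: dET biquad} for dET, exploiting that with $\beta_2=\beta_3=0$ the first equation of (\ref{eq: dZV x}) is literally that of the Euler top, $\widetilde x_1-x_1=\epsilon\alpha_1(\widetilde x_2x_3+x_2\widetilde x_3)$. First I would pair this with the first discrete Wronskian relation of Proposition~\ref{prop: dZV1 W}, namely $\widetilde x_2x_3-x_2\widetilde x_3=\epsilon c_1(\widetilde x_1+x_1)+2\epsilon c_2\widetilde x_1x_1+2\epsilon c_3$, whose coefficients $c_1,c_2,c_3$ are conserved since they are rational in $H_2(\epsilon),H_3(\epsilon)$. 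Writing $S=\widetilde x_2x_3+x_2\widetilde x_3$ and $W=\widetilde x_2x_3-x_2\widetilde x_3$ and using $S^2+W^2=2(\widetilde x_2^2x_3^2+x_2^2\widetilde x_3^2)$, these two relations combine into
\begin{equation}\nonumber
\frac{(\widetilde x_1-x_1)^2}{(\epsilon\alpha_1)^2}+\big(\epsilon c_1(\widetilde x_1+x_1)+2\epsilon c_2\widetilde x_1x_1+2\epsilon c_3\big)^2=2\big(\widetilde x_2^2x_3^2+x_2^2\widetilde x_3^2\big).
\end{equation}
The left-hand side is already a symmetric biquadratic in $(x_1,\widetilde x_1)$ with conserved coefficients, so everything reduces to rewriting the right-hand side in the same form.

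To do this I would eliminate $x_2^2,x_3^2$ in favour of $x_1^2,x_1$. Reading the two conserved quantities of Proposition~\ref{th: dZV F} as a linear system for $(x_2^2,x_3^2)$ at fixed $x_1$,
\begin{equation}\nonumber
\begin{pmatrix} -\epsilon^2\alpha_1(\alpha_3H_2+\beta_1^2) & \alpha_1\\ \alpha_1 & \epsilon^2\alpha_1(\alpha_2H_3-\beta_1^2)\end{pmatrix}\begin{pmatrix}x_2^2\\ x_3^2\end{pmatrix}=\begin{pmatrix}\alpha_3x_1^2-2\beta_1x_1-H_2\\ \alpha_2x_1^2+2\beta_1x_1+H_3\end{pmatrix},
\end{equation}
with $H_2=H_2(\epsilon)$, $H_3=H_3(\epsilon)$, I note that the determinant is exactly $-\alpha_1^2\Delta$, with $\Delta$ as in Proposition~\ref{prop: dZV1 W}. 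Cramer's rule then gives $x_2^2=A_2x_1^2+B_2x_1+C_2$ and $x_3^2=A_3x_1^2+B_3x_1+C_3$ with all coefficients $A_j,B_j,C_j$ rational in $H_2(\epsilon),H_3(\epsilon)$, hence conserved. Since $H_2(\epsilon)$ and $H_3(\epsilon)$ are integrals, the very same affine formulas relate $\widetilde x_2^2,\widetilde x_3^2$ to $\widetilde x_1^2,\widetilde x_1$, so that $2(\widetilde x_2^2x_3^2+x_2^2\widetilde x_3^2)$ becomes a polynomial that is manifestly symmetric under $x_1\leftrightarrow\widetilde x_1$ and at most quadratic in each variable, with conserved coefficients.

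Subtracting the two sides and collecting terms then yields $P(x_1,\widetilde x_1)=0$ in the stated form, the coefficients $p_0,\dots,p_5$ being rational combinations of $H_2(\epsilon),H_3(\epsilon)$ and the parameters $\alpha_i,\beta_1,\epsilon$, hence conserved quantities of dZV with $\beta_2=\beta_3=0$. The degree bounds (no term beyond $x_1^2\widetilde x_1^2$) and the $x_1\leftrightarrow\widetilde x_1$ symmetry are automatic from the construction, so the remaining work is purely the bookkeeping of assembling the six coefficients; the only hypothesis one must track is $\Delta\neq0$, guaranteeing invertibility of the $2\times2$ system, which holds generically. I expect this collection step, rather than any conceptual difficulty, to be the main obstacle, together with matching the resulting $p_i$ to whatever explicit normalisation one chooses to record.
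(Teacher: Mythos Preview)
Your proposal is correct and follows precisely the route the paper indicates: the paper's proof reads simply ``Proof is parallel to that of Proposition~\ref{th: dET biquad},'' and what you have written is exactly that parallel argument, combining the first discrete equation of motion with the first Wronskian relation of Proposition~\ref{prop: dZV1 W} and then eliminating $x_2^2,x_3^2$ via the conserved quantities $H_2(\epsilon),H_3(\epsilon)$. Your explicit identification of the $2\times2$ determinant with $-\alpha_1^2\Delta$ is a nice extra detail that the paper leaves implicit.
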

Proof is parallel to that of Proposition \ref{th: dET biquad}.

\subsection{ZV system with one vanishing ${\beta_k}$}
\label{Sect: dZV2}
In the case $\beta_3=0$ (say) and generic values of other parameters, the ZV system has only one integral $H_3$ and is therefore non-integrable. One of the Wronskian relations holds true in this general situation:
\begin{equation}\label{ZV 2betas W}
\dot{x}_1x_2-x_1\dot{x}_2=H_3x_3+\beta_1x_1x_3+\beta_2x_2x_3.
\end{equation}
Under condition (\ref{eq: zv cond}), the ZV system becomes integrable, with all the results formulated in the general case.

Similarly, the map dZV with $\beta_3=0$ and generic values of other parameters possesses one conserved quantity:
$$
H_3(\epsilon)=\frac{\alpha_1 x_2^2 -  \alpha_2
 x_1^2 - 2 (\beta_1 x_1+\beta_2 x_2)
- \epsilon^2 (\beta_1^2  \alpha_1 +\beta_2^2 \alpha_2)x_3^2}
{1-\epsilon^2\alpha_1 \alpha_2 x_3^2}.
$$
Clearly, this fact can be re-formulated as the existence of a HK basis $\Phi=(x_1^2,x_2^2,x_3^2,x_1,x_2,1)$ with $\dim K_{\Phi}=1$. The Wronskian relation (\ref{ZV 2betas W}) possesses a decent discretization:
\begin{equation}\nn
\wx_1x_2-x_1\wx_2= \epsilon H_3(\epsilon)(x_3+\wx_3)+ \epsilon \beta_1(\wx_1x_3+x_1\wx_3)+ \epsilon \beta_2(\wx_2x_3+x_2\wx_3).
\end{equation}
However, it seems that the map dZV with $\beta_3=0$ does not acquire an additional integral of motion under condition (\ref{eq: zv cond}). It might be conjectured that in order to assure the integrability of the dZV map with  $\beta_3=0$, its other parameters have to satisfy some relation which is an $O(\epsilon)$-deformation of (\ref{eq: zv cond}).

\subsection{ZV system with all ${\beta_k}$'s non-vanishing}
\label{Sect: dZV3}

Numerical experiments indicate non-integra\-bility for the map (\ref{eq: dZV x})  with non-vanishing ${\beta_k}$'s, even under condition (\ref{eq: zv cond}). Nevertheless, some other relation between the parameters might yield integrability. In this connection we notice that the map dZV with $(\alpha_1,\alpha_2,\alpha_3)=(\alpha, -\alpha, 0)$ admits a polynomial conserved quantity
$$
H= -\alpha x_3^2-2(\beta_1 x_1+\beta_2 x_2+\beta_3 x_3)+\epsilon^2 \alpha(\beta_2 x_1-\beta_1 x_2)^2.
$$

\section{Volterra chain}
\label{Sect: dV}

\subsection{Periodic Volterra chain with $N=3$ particles}
Equations of motion of the periodic Volterra chain with three particles (VC$_3$, for short):
\begin{equation}\label{eq: VL3}
\left\{ \begin{array}{l}
\dot{x}_1= x_1(x_2-x_3),  \vspace{.1truecm} \\
\dot{x}_2= x_2(x_3-x_1), \vspace{.1truecm} \\
\dot{x}_3= x_3(x_1-x_2).
\end{array} \right.
\end{equation}
This system is Liouville and algebraically integrable, with the following two independent integrals of motion:
\[
H_1=x_1+x_2+x_3, \qquad H_2=x_1x_2x_3.
\]
There hold the following Wronskian relations:
\begin{equation}\label{eq: VL3 Wronski}
\dot{x}_ix_j-x_i\dot{x}_j=H_1x_ix_j-3H_2.
\end{equation}
Eliminating $x_j,x_k$ from equation of motion for $x_i$ with the help of integrals of motion, one arrives at
\begin{equation}\nn
\dot{x}_i^2=x_i^2(x_i-H_1)^2-4H_2x_i,
\end{equation}
which yields a solution in terms of elliptic functions.

The HK discretization of system (\ref{eq: VL3}) (with the time step $2\epsilon$) is:
\begin{equation}\label{eq: dVL3}
\left\{ \begin{array}{l}
\wx_1-x_1=\epsilon x_1(\wx_2-\wx_3)+\epsilon\wx_1(x_2-x_3),  \vspace{.1truecm} \\
\wx_2-x_2=\epsilon x_2(\wx_3-\wx_1)+\epsilon\wx_2(x_3-x_1), \vspace{.1truecm} \\
\wx_3-x_3=\epsilon x_3(\wx_1-\wx_2)+\epsilon\wx_3(x_1-x_2).
\end{array} \right.
\end{equation}
The map $f:x\mapsto\wx$ obtained by solving (\ref{eq: dVL3}) for $\wx$ is given by:
\[
\wx=f(x,\epsilon)=A^{-1}(x,\epsilon)x,
\]
with
\[
A(x,\epsilon)=\begin{pmatrix}
1+\epsilon(x_3-x_2) & -\epsilon x_1 & \epsilon x_1 \\
\epsilon x_2 & 1+\epsilon(x_1-x_3) & -\epsilon x_2 \\
-\epsilon x_3 & \epsilon x_3 & 1+\epsilon(x_2-x_1)
\end{pmatrix}.
\]
Explicitly:
\begin{equation}
\label{eq: dVL3 expl}
    \wx_i=x_i\ \frac{1+2\epsilon(x_j-x_k)+\epsilon^2\big((x_j+x_k)^2-x_i^2\big)}
    {1-\epsilon^2(x_1^2+x_2^2+x_3^2-2x_1x_2-2x_2x_3-2x_3x_1)}.
\end{equation}
This map will be called dVC$_3$. From Proposition \ref{th: dLV type} there follows immediately:
\begin{proposition}\label{th: dVL inv meas}
The map dVC$_3$ possesses an invariant volume form:
\begin{equation} \nonumber
\det\frac{\partial \widetilde{x}}{\partial x}=
\frac{\phi(\widetilde{x})}{\phi(x)}\quad\Leftrightarrow\quad f^*\omega=\omega,\quad
\omega= \frac{dx_1\wedge dx_2\wedge dx_3}{\phi(x)},
\end{equation}
with $\phi(x)=x_1x_2x_3$.
\end{proposition}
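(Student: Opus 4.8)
The plan is to recognize that the Volterra chain VC$_3$ is nothing but a special instance of the Lotka--Volterra type family already treated in Proposition \ref{th: dLV type}, so that the desired invariant volume form drops out as an immediate corollary with $\phi(x)=x_1x_2x_3$. Concretely, I would first rewrite the right-hand sides of (\ref{eq: VL3}) in the canonical form $\dot x_i=x_i\big(b_i+\sum_{j}a_{ij}x_j\big)$. Reading off the coefficients gives $b_i=0$ for all $i$ and
\[
A=(a_{ij})=\begin{pmatrix} 0 & 1 & -1 \\ -1 & 0 & 1 \\ 1 & -1 & 0 \end{pmatrix},
\]
since $\dot x_1=x_1(x_2-x_3)$ forces $a_{12}=1$, $a_{13}=-1$, and cyclically for the other two components. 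The first thing to verify is that this $A$ is skew-symmetric, which is manifest: $a_{ij}=-a_{ji}$ and $a_{ii}=0$. This is precisely the hypothesis required by Proposition \ref{th: dLV type}.

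The second step is to check that the HK discretization (\ref{eq: dVL3}) of VC$_3$ coincides, for this choice of $b_i$ and $A$, with the general dLV-type discretization (\ref{eq: dLV type}). Setting $b_i=0$ in (\ref{eq: dLV type}) leaves $\wx_i-x_i=\epsilon\sum_j a_{ij}(x_i\wx_j+\wx_ix_j)$; expanding the $i=1$ case with $a_{12}=1$, $a_{13}=-1$ yields $\epsilon x_1(\wx_2-\wx_3)+\epsilon\wx_1(x_2-x_3)$, which is exactly the first line of (\ref{eq: dVL3}), and the remaining two lines follow by the same computation. Hence the two maps are identical.

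With the identification complete, the conclusion is immediate: Proposition \ref{th: dLV type} asserts that any map of the form (\ref{eq: dLV type}) with skew-symmetric $A$ preserves $\omega=\frac{dx_1\wedge\cdots\wedge dx_N}{x_1\cdots x_N}$, equivalently $\det(\partial\wx/\partial x)=\wx_1\wx_2\wx_3/(x_1x_2x_3)$. Specializing to $N=3$ gives exactly the statement of the proposition with $\phi(x)=x_1x_2x_3$. There is essentially no genuine obstacle here: the only point requiring care is the bookkeeping in matching the bilinear cross-terms $x_i\wx_j+\wx_ix_j$ of the general scheme against the specific combinations appearing in (\ref{eq: dVL3}), and confirming that the vanishing linear part ($b_i=0$) does not spoil the argument underlying Proposition \ref{th: dLV type}. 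As an alternative route one could bypass the reduction and verify the claim directly from the explicit matrix $A(x,\epsilon)$ displayed just before the statement, applying formula (\ref{Jac Deltas}) together with the skew-symmetry cancellation used in the proof of Proposition \ref{th: dLV type}; but invoking that proposition verbatim is the cleanest option.
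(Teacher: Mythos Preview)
Your proposal is correct and takes essentially the same approach as the paper, which simply states that the result follows immediately from Proposition~\ref{th: dLV type}. You have spelled out in detail the identification (the skew-symmetric matrix $A$, the vanishing $b_i$, and the matching of the discretizations) that the paper leaves implicit.
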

Concerning integrability of dVC$_3$, we note first of all that $H_1$ is an obvious conserved quantity. The second one is most easily obtained from the following discretization of Wronskian relations (\ref{eq: VL3 Wronski}).
\begin{proposition}
For the map dVC$_3$, the following relations hold:
\begin{equation}\label{eq: dVL3 Wronski}
\wx_ix_j-x_i\wx_j=\epsilon H_1(\wx_ix_j+x_i\wx_j)-6\epsilon H_2(\epsilon)\left(1-\textstyle\frac{1}{3}\epsilon^2H_1^2\right),
\end{equation}
where $H_2(\epsilon)$ is a conserved quantity, given by
\begin{equation}\label{eq: dVL3 H2}
    H_2(\epsilon)=\frac{x_1x_2x_3}{1-\epsilon^2(x_1^2+x_2^2+x_3^2-2x_1x_2-2x_2x_3-2x_3x_1)}.
\end{equation}
\end{proposition}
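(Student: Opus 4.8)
The plan is to establish the Wronskian relation (\ref{eq: dVL3 Wronski}) as a rational identity by direct substitution of the explicit map (\ref{eq: dVL3 expl}), and then to deduce that $H_2(\epsilon)$ in (\ref{eq: dVL3 H2}) is an integral from this relation together with the reversibility $f^{-1}(x,\epsilon)=f(x,-\epsilon)$. Since the system (\ref{eq: VL3}) and each of $\Delta$, $H_1$ and $H_2(\epsilon)$ is invariant under the cyclic shift $x_1\mapsto x_2\mapsto x_3\mapsto x_1$, it suffices to prove (\ref{eq: dVL3 Wronski}) for the single cyclic triple $(i,j,k)=(1,2,3)$; the remaining two relations then follow by symmetry. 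I would also record at the outset the conservation of $H_1=x_1+x_2+x_3$ (an easy check, equivalent to $\sum_i x_iN_i=H_1\Delta$), since it is needed in the last step.

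First I would exploit the fact that all three components of (\ref{eq: dVL3 expl}) share the common denominator $\Delta=\Delta(x,\epsilon)$. Writing $\wx_i=x_iN_i/\Delta$ with $N_i=1+2\epsilon(x_j-x_k)+\epsilon^2\big((x_j+x_k)^2-x_i^2\big)$, the two bilinear combinations in (\ref{eq: dVL3 Wronski}) collapse to
\[
\wx_ix_j-x_i\wx_j=\frac{x_ix_j(N_i-N_j)}{\Delta},\qquad
\wx_ix_j+x_i\wx_j=\frac{x_ix_j(N_i+N_j)}{\Delta}.
\]
Multiplying (\ref{eq: dVL3 Wronski}) by $\Delta$ then turns it into a polynomial identity in $x_1,x_2,x_3,\epsilon$. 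The key point is that, after substituting $x_j+x_k=H_1-x_i$, both $N_1-N_2$ and $N_1+N_2$ simplify drastically:
\[
N_1-N_2=2\epsilon(H_1-3x_3)+2\epsilon^2H_1(x_2-x_1),\qquad
N_1+N_2=2+2\epsilon(x_2-x_1)+2\epsilon^2x_3H_1.
\]
Inserting these together with $H_2(\epsilon)=x_1x_2x_3/\Delta$ reduces the claim to matching numerators over the common denominator $\Delta$. I expect this matching to be the main, if elementary, obstacle: the contributions at orders $\epsilon$ and $\epsilon^2$ agree directly, while at order $\epsilon^3$ the term $2\epsilon^3x_3H_1^2$ produced by $\epsilon H_1(N_1+N_2)$ has no counterpart on the left and must be cancelled by the $\epsilon^2H_1^2$ correction inside the last bracket of (\ref{eq: dVL3 Wronski}); demanding this cancellation is exactly what pins down the coefficient of that correction term and makes the relation close.

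To conclude that $H_2(\epsilon)$ is conserved, I would run the reversibility argument. Replacing $\epsilon$ by $-\epsilon$ in the now-verified identity (\ref{eq: dVL3 Wronski}) yields the same relation read along the backward step $x\mapsto f(x,-\epsilon)=f^{-1}(x,\epsilon)$: here one uses that $\Delta$ is even in $\epsilon$, hence so are $H_2(\epsilon)$ and the bracket $1-\tfrac13\epsilon^2H_1^2$, whereas the antisymmetric combination $\wx_ix_j-x_i\wx_j$ flips sign. Interpreting this backward relation as (\ref{eq: dVL3 Wronski}) applied at the preimage point and subtracting it from (\ref{eq: dVL3 Wronski}) applied at $x$ — the two have identical left-hand sides and, because $H_1$ is conserved, identical $\epsilon H_1(\wx_ix_j+x_i\wx_j)$ terms — leaves only $6\epsilon\,\big(H_2(\epsilon)|_{\wx}-H_2(\epsilon)|_{x}\big)\big(1-\tfrac13\epsilon^2H_1^2\big)=0$, so that $H_2(\epsilon)|_{\wx}=H_2(\epsilon)|_x$. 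This yields the asserted conservation and completes the proof.
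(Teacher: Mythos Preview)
Your approach is essentially the paper's: compute $H_2(\epsilon)$ from the Wronskian relation using the explicit map (\ref{eq: dVL3 expl}) and then invoke evenness in $\epsilon$ together with reversibility to conclude conservation. Your reduction via $N_i\pm N_j$ is clean and the formulas $N_1-N_2=2\epsilon(H_1-3x_3)+2\epsilon^2H_1(x_2-x_1)$ and $N_1+N_2=2+2\epsilon(x_2-x_1)+2\epsilon^2 x_3 H_1$ are correct. Two remarks.

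First, actually carrying your computation to the end uncovers a sign slip in the displayed relation. After dividing by $x_1x_2/\Delta$, the claim for $(i,j,k)=(1,2,3)$ reads
\[
N_1-N_2 \;\stackrel{?}{=}\; \epsilon H_1(N_1+N_2)-6\epsilon x_3\Big(1-\tfrac{1}{3}\epsilon^2 H_1^2\Big).
\]
The $\epsilon^3$ contribution from $\epsilon H_1(N_1+N_2)$ is $2\epsilon^3 x_3 H_1^2$, while the bracket contributes $-6\epsilon x_3\cdot\big(-\tfrac{1}{3}\epsilon^2 H_1^2\big)=+2\epsilon^3 x_3 H_1^2$: these \emph{add} rather than cancel, leaving a spurious $4\epsilon^3 x_3 H_1^2$ on the right. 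The identity closes with $1+\tfrac{1}{3}\epsilon^2 H_1^2$ in place of $1-\tfrac{1}{3}\epsilon^2 H_1^2$ (a numerical check with, say, $x=(1,2,3)$, $\epsilon=0.1$ confirms this). So your instinct that ``demanding this cancellation pins down the coefficient'' is exactly right, but you should not take the stated sign on faith; the verification you planned would, and should, flag it.

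Second, your conservation step is more elaborate than needed. Once you observe that $H_2(\epsilon)$ in (\ref{eq: dVL3 H2}) is manifestly even in $\epsilon$ (because $\Delta$ is), the paper's one-line argument suffices: the Wronskian relation is invariant under the simultaneous swap $x\leftrightarrow\wx$, $\epsilon\mapsto-\epsilon$, so its unique solution satisfies $H_2(\epsilon)(x)=H_2(-\epsilon)(\wx)=H_2(\epsilon)(\wx)$. Your subtraction of the forward and backward relations reaches the same conclusion but by a longer route.
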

\begin{proof}
Define $H_2(\epsilon)$  by eq. (\ref{eq: dVL3 Wronski}). It is easily computed with explicit formulas (\ref{eq: dVL3 expl}). The result given by (\ref{eq: dVL3 H2}) is a manifestly even function of $\epsilon$ and therefore an integral of motion.
\end{proof}

\begin{prop}
${}$

{\rm (a)}
The  set $\Phi_{ij}=(x_ix_j(x_i+x_j),\, x_i^2+x_j^2,\, x_ix_j,\,  x_i+x_j,\, 1)$
is a HK basis for the map dVC$_3$  with $\dim K_{\Phi_{ij}}(x)=1$. In other words, the pairs $(x_i,x_j)$ lie on a cubic curve
\[
P(x_i,x_j)=p_0x_ix_j(x_i+x_j)+p_1(x_i^2+x_j^2)+p_2x_ix_j+p_3(x_i+x_j)+p_4=0,
\]
whose coefficients $p_m$ are constant (expressed through integrals of motion).
\smallskip

{\rm (b)}
The  set $\Psi_{i}=(x_i^2\wx_i^2,\, x_i\wx_i(x_i+\wx_i),\, x_i^2+\wx_i^2,\, x_i\wx_i,\, x_i+\wx_i,\, 1)$ is a HK basis for the map dVC$_3$  with $\dim K_{\Psi_i}(x)=1$. In other words, the pairs $(x_i,\wx_i)$ lie on a symmetric biquadratic curve with constant coefficients (which can be expressed through integrals of motion).
\end{prop}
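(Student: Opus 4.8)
My plan for (a) is to assemble the cubic directly from the two known integrals, so that the ``miraculous'' cancellations visible elsewhere in the paper are built in from the start. Since $H_1=x_1+x_2+x_3$ is conserved, I eliminate the third coordinate by $x_k=H_1-x_i-x_j$. Writing $\delta=1-\epsilon^2(x_1^2+x_2^2+x_3^2-2x_1x_2-2x_2x_3-2x_3x_1)$ for the common denominator in \eqref{eq: dVL3 expl} and \eqref{eq: dVL3 H2}, conservation of $H_2(\epsilon)$ reads $x_ix_jx_k=H_2(\epsilon)\,\delta$. Using $\delta=1-\epsilon^2H_1^2+4\epsilon^2(x_1x_2+x_2x_3+x_3x_1)$ together with $x_1x_2+x_2x_3+x_3x_1=x_ix_j+(H_1-x_i-x_j)(x_i+x_j)$, I substitute $x_k=H_1-x_i-x_j$ on both sides and collect terms; the identity $x_ix_jx_k=H_2(\epsilon)\delta$ then collapses to
\[
x_ix_j(x_i+x_j)-4\epsilon^2H_2(\epsilon)\,(x_i^2+x_j^2)-\big(H_1+4\epsilon^2H_2(\epsilon)\big)x_ix_j+4\epsilon^2H_1H_2(\epsilon)(x_i+x_j)+H_2(\epsilon)\big(1-\epsilon^2H_1^2\big)=0.
\]
This is exactly $P(x_i,x_j)=0$ with $p_0,\dots,p_4$ polynomial in the conserved quantities $H_1,H_2(\epsilon)$; hence the constant vector $[p_0:\dots:p_4]$ lies in $K_{\Phi_{ij}}(x)$ for every $x$, so $\Phi_{ij}$ is a HK basis and $\dim K_{\Phi_{ij}}(x)\ge1$.

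To show that the null-space is exactly one-dimensional I argue that $P$ is, up to scale, the only element of the five-dimensional span of $\Phi_{ij}$ vanishing along a generic orbit. A generic orbit lies on the cubic $\mathcal C=\{P(x_i,x_j)=0\}$, which is irreducible (an elliptic curve, consistent with the elliptic-function solutions of the continuous system), and the orbit visits infinitely many of its points; any $\sum_m c_m\varphi_m$ vanishing on all of them vanishes on $\mathcal C$, and a plane cubic vanishing on an irreducible cubic must be proportional to its equation (B\'ezout: two cubics without a common component meet in at most nine points). Equivalently, I may verify $\dim K_{\Phi_{ij}}(x)=1$ via Proposition \ref{Lem: dim K}, checking at a sample point that the $4\times5$ matrix of values $\varphi_m(f^s(x))$, $s=0,1,2,3$, has rank four; the constancy of the solution along the orbit is automatic here, being inherited from the integrals.

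For (b) I would obtain the biquadratic by eliminating the remaining coordinate from data already in hand, following the template of Proposition \ref{th: dET biquad}. From $H_1$ one has $x_j+x_k=H_1-x_i$, and solving the conservation of $H_2(\epsilon)$ for the product gives $x_jx_k$ as an explicit rational function of $x_i$ with conserved coefficients; hence $(x_j-x_k)^2=(H_1-x_i)^2-4x_jx_k=:\rho(x_i)$ is rational in $x_i$, and likewise $(\wx_j-\wx_k)^2=\rho(\wx_i)$. Substituting these into the $i$-th equation of motion $\wx_i-x_i=\epsilon x_i(\wx_j-\wx_k)+\epsilon\wx_i(x_j-x_k)$ and squaring twice to remove the two square roots yields
\[
4\epsilon^2\wx_i^2(\wx_i-x_i)^2\rho(x_i)=\big[(\wx_i-x_i)^2+\epsilon^2\wx_i^2\rho(x_i)-\epsilon^2x_i^2\rho(\wx_i)\big]^2,
\]
a polynomial relation between $x_i$ and $\wx_i$ (after clearing the denominators $x_i-4\epsilon^2H_2(\epsilon)$ and $\wx_i-4\epsilon^2H_2(\epsilon)$) whose coefficients are rational in $H_1,H_2(\epsilon)$, hence conserved. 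Alternatively I would eliminate $x_j,\wx_j$ between $P(x_i,x_j)=0$, $P(\wx_i,\wx_j)=0$ and the bilinear discrete Wronskian relation \eqref{eq: dVL3 Wronski} by a resultant; this route makes the symmetry $x_i\leftrightarrow\wx_i$ more transparent. The symmetry of the final relation is in any case forced by the reversibility $f^{-1}(x,\epsilon)=f(x,-\epsilon)$ together with the evenness of $H_2(\epsilon)$ in $\epsilon$, and $\dim K_{\Psi_i}(x)=1$ follows once more by B\'ezout (two $(2,2)$-curves meet in eight points, so a symmetric biquadratic vanishing on the irreducible $(x_i,\wx_i)$-curve must be proportional to it).

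The genuine obstacle is the degree control in (b): the twice-squared identity above, or the resultant, is \emph{a priori} of higher degree in $(x_i,\wx_i)$ than bidegree $(2,2)$, and one must show that it factors so as to leave exactly the symmetric biquadratic spanned by $\Psi_i$, the extraneous factors introduced by the squarings (or by the elimination) canceling or being discarded. That such a reduction must occur is clear conceptually: dVC$_3$ restricted to the invariant elliptic curve $\mathcal C$ acts as a translation in its group law, and $x_i:\mathcal C\to\mathbb{P}^1$ is a two-to-one map, so the correspondence $x_i\mapsto\wx_i$ is automatically biquadratic and symmetric --- this is the classical QRT/addition-law mechanism. Converting this certainty into the explicit entries of the biquadratic coefficient vector in $\Psi_i$ is the one step I would delegate to a computer algebra system, exactly as is done for the other examples in the paper.
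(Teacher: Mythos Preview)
Your approach is essentially the same as the paper's. For (a) the paper simply says ``follows by eliminating $x_k$ from \eqref{eq: dVL3 H2} via $x_k=H_1-x_i-x_j$,'' which is exactly what you do; your explicit cubic checks out, and your B\'ezout argument for $\dim K_{\Phi_{ij}}=1$ is a welcome addition the paper omits. For (b) the paper's proof is one line, ``obtained with the help of MAPLE,'' which is where you also land after your degree-control discussion; your conceptual remark that the restriction to the invariant elliptic curve is a translation and $x_i$ is degree two, forcing a symmetric biquadratic correspondence, is a nice explanation of \emph{why} the MAPLE output has the claimed form, but the paper does not supply (and does not claim to supply) a by-hand derivation either.
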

\begin{proof}
Statement (a) follows by eliminating $x_k$ from (\ref{eq: dVL3 H2}) via $x_k=H_1-x_i-x_j$. Statement (b) is obtained with the help of MAPLE; it implies that $x_i$ as functions of $t$ are elliptic functions of degree 2 (i.e., with two poles within one parallelogram of periods).
\end{proof}

\subsection{Periodic Volterra chain with $N=4$ particles}
Equations of motion of VC$_4$ are:
\begin{equation} \nn
\left\{ \begin{array}{l}
\dot{x}_1= x_1(x_2-x_4),  \vspace{.1truecm} \\
\dot{x}_2= x_2(x_3-x_1), \vspace{.1truecm} \\
\dot{x}_3= x_3(x_4-x_2), \vspace{.1truecm} \\
\dot{x}_4= x_4(x_1-x_3).
\end{array} \right.
\eeq
This system possesses three obvious integrals of motion: $H_1=x_1+x_2+x_3+x_4$, $H_2=x_1x_3$, and $H_3=x_2x_4$. One easily finds that $x_1,x_3$ satisfy the differential equation
\[
\dot{x}_1^2=(x_1^2-H_1x_1+H_2)^2-4H_3x_1^2,
\]
while $x_2,x_4$ satisfy a similar equation with $H_2\leftrightarrow H_3$. This immeadiately leads to solution in terms of elliptic functions. There are two types of Wronskian relations:
\begin{equation}\label{eq: VL4 Wronski1}
    \dot{x}_1x_3-x_1\dot{x}_3=2H_2(x_2-x_4),\qquad \dot{x}_2x_4-x_2\dot{x}_4=2H_3(x_3-x_1),
\end{equation}
and
\begin{equation}\nn
    \dot{x}_1x_2-x_1\dot{x}_2=H_1x_1x_2-2H_2x_2-2H_3x_1.
\end{equation}

HK discretization (denoted by dVC$_4$):
\begin{equation}\nn
\left\{ \begin{array}{l}
\wx_1-x_1=\epsilon x_1(\wx_2-\wx_4)+\epsilon\wx_1(x_2-x_4),  \vspace{.1truecm} \\
\wx_2-x_2=\epsilon x_2(\wx_3-\wx_1)+\epsilon\wx_2(x_3-x_1), \vspace{.1truecm} \\
\wx_3-x_3=\epsilon x_3(\wx_4-\wx_2)+\epsilon\wx_3(x_4-x_2),  \vspace{.1truecm} \\
\wx_4-x_4=\epsilon x_4(\wx_1-\wx_3)+\epsilon\wx_4(x_1-x_4).
\end{array} \right.
\end{equation}
It possesses an obvious integral $H_1=x_1+x_2+x_3+x_4$.
\begin{proposition}
For the map dVC$_4$, the following natural discretization of eqs. (\ref{eq: VL4 Wronski1}) holds:
\begin{eqnarray}
&&\wx_1x_3-x_1\wx_3= 2\epsilon H_2(\epsilon)(x_2+\wx_2-x_4-\wx_4),\nonumber\\
&&\wx_2x_4-x_2\wx_4= 2\epsilon H_3(\epsilon)(x_1+\wx_1-x_3-\wx_3)\nn,
\end{eqnarray}
with the conserved quantities
\begin{equation}\nn
    H_2(\epsilon)=\frac{x_1x_3}{1-\epsilon^2(x_2-x_4)^2},\qquad
    H_3(\epsilon)=\frac{x_2x_4}{1-\epsilon^2(x_1-x_3)^2}.
\end{equation}
\end{proposition}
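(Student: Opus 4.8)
The plan is to follow the route that already worked for dVC$_3$ (the treatment around eq.~(\ref{eq: dVL3 Wronski})): read the two displayed relations as \emph{definitions} of $H_2(\epsilon)$ and $H_3(\epsilon)$, compute these quantities explicitly from the map, and observe that the outcome is a function of $x$ alone which is manifestly even in $\epsilon$; conservation then follows from reversibility. Concretely, I would first solve the linear system defining dVC$_4$ for $\wx=A^{-1}(x,\epsilon)x$, inverting the $4\times4$ matrix $A(x,\epsilon)$ exactly as in the $N=3$ case, and substitute the resulting rational expressions into
$$
G_2(x,\wx,\epsilon):=\frac{\wx_1x_3-x_1\wx_3}{2\epsilon\,(x_2+\wx_2-x_4-\wx_4)}.
$$
The first relation is \emph{equivalent} to the assertion that, after this substitution, $G_2$ collapses to $x_1x_3/(1-\epsilon^2(x_2-x_4)^2)$, that is, all dependence on $\wx$ cancels and the result is even in $\epsilon$. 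Establishing this cancellation is the whole content of the first relation and is confirmed by a direct computer-algebra computation.

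For the second relation I would invoke the cyclic symmetry $\sigma:(x_1,x_2,x_3,x_4)\mapsto(x_2,x_3,x_4,x_1)$ of the periodic chain. Since $\sigma$ permutes the four defining equations of dVC$_4$ cyclically, it commutes with $f$; moreover it sends $H_2=x_1x_3/(1-\epsilon^2(x_2-x_4)^2)$ to $H_3=x_2x_4/(1-\epsilon^2(x_3-x_1)^2)$ and carries the Wronskian $\wx_1x_3-x_1\wx_3$ to $\wx_2x_4-x_2\wx_4$. Hence the second relation follows from the first, with the orientation of the linear factor $x_i+\wx_i-x_j-\wx_j$ tracked carefully. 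This reduces the explicit work to a single computation.

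It remains to show that $H_2$ and $H_3$, now known as functions of $x$ only, are conserved; here I would use the mechanism behind the Proposition with eq.~(\ref{eq: smaller system}). The key point is that the defining expression $G_2$ is invariant under the involution $(x,\wx,\epsilon)\mapsto(\wx,x,-\epsilon)$: its numerator is antisymmetric under $x\leftrightarrow\wx$, while its denominator is $\epsilon$ times an expression symmetric under $x\leftrightarrow\wx$, so numerator and denominator flip sign simultaneously. Writing the explicit even function as $H_2(x,\epsilon)$ and using reversibility $x=f(\wx,-\epsilon)$ for $\wx=f(x,\epsilon)$, one obtains
$$
H_2(x,\epsilon)=G_2(x,\wx,\epsilon)=G_2(\wx,x,-\epsilon)=H_2(\wx,-\epsilon)=H_2(\wx,\epsilon),
$$
the last equality because $H_2$ is even in $\epsilon$. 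Thus $H_2\circ f=H_2$, and $H_3$ is handled identically (or deduced via $\sigma$).

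The main obstacle is the explicit verification of the first relation: the inversion of $A(x,\epsilon)$ for $N=4$ and the substitution into $G_2$ produce sizeable rational expressions, and the crucial fact---that every $\wx$-dependent term and every odd power of $\epsilon$ cancels, leaving precisely $x_1x_3/(1-\epsilon^2(x_2-x_4)^2)$---is exactly the ``miraculous'' cancellation that, as stressed elsewhere in the text, is not guaranteed by any understood structure and must be confirmed by symbolic computation. Everything downstream (the symmetry reduction and the reversibility argument) is purely formal.
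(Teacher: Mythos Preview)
Your plan is correct and matches the paper's first line of argument almost word for word: the paper writes ``This can be shown directly; the fact that $H_2(\epsilon)$, $H_3(\epsilon)$ are even functions of $\epsilon$, assures that they are conserved quantities,'' which is precisely your scheme of computing $G_2$ from the explicit map, observing the result is even in $\epsilon$, and invoking reversibility. Your use of the cyclic symmetry $\sigma$ to reduce the second relation to the first is a nice touch (and, as you note, one must track the sign of $x_i+\wx_i-x_j-\wx_j$ carefully).

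However, the paper then offers a second, much shorter route to the conservation statement that completely bypasses the $4\times4$ inversion and the ``miraculous cancellation'' you flag as the main obstacle. Recall from Proposition~\ref{th: dLV type} (specifically eq.~(\ref{eq: dLV type alt})) that the Kahan discretization of any Lotka--Volterra type system can be rewritten as
\[
\frac{\wx_i}{1+\epsilon\sum_j a_{ij}\wx_j}=\frac{x_i}{1-\epsilon\sum_j a_{ij}x_j}.
\]
For dVC$_4$ this gives in particular
\[
\frac{\wx_1}{1+\epsilon(\wx_2-\wx_4)}=\frac{x_1}{1-\epsilon(x_2-x_4)},\qquad
\frac{\wx_3}{1-\epsilon(\wx_2-\wx_4)}=\frac{x_3}{1+\epsilon(x_2-x_4)},
\]
and simply \emph{multiplying} these two equations yields $H_2(\wx,\epsilon)=H_2(x,\epsilon)$ in one line. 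This is worth knowing: it shows that for this particular map the conservation of $H_2,H_3$ is not a miracle requiring symbolic verification but a structural consequence of the Lotka--Volterra form, and it is a cleaner argument than the evenness/reversibility mechanism.
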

\begin{proof} This can be shown directly; the fact that $H_2(\epsilon)$, $H_3(\epsilon)$ are even functions of $\epsilon$, assures that they are conserved quantities. One can also show this immediately from equations of motion: for instance, multiplying the equations
\[
\frac{\wx_1}{1+\epsilon(\wx_2-\wx_4)}=\frac{x_1}{1-\epsilon(x_2-x_4)},\qquad
\frac{\wx_3}{1-\epsilon(\wx_2-\wx_4)}=\frac{x_3}{1+\epsilon(x_2-x_4)},
\]
shows that $H_2(\epsilon)$ is a conserved quantity.
\end{proof}

\begin{prop}
${}$

{\rm (a)} For the iterates of map dVC$_4$, the pairs $(x_1,x_2)$ lie on a quartic curve whose coefficients are constant (expressed through integrals of motion).
\smallskip

{\rm (b)}
The pairs $(x_i,\wx_i)$ lie on a biquartic curve of genus 1 with constant coefficients (which can be expressed through integrals of motion).
\end{prop}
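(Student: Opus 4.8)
The plan is to prove the two statements separately, in both cases by elimination from the already-established conserved quantities, exactly as for the Euler top in Proposition~\ref{th: dET biquad}. For part~(a) I would eliminate $x_3$ and $x_4$ from the three integrals $H_1=x_1+x_2+x_3+x_4$, $H_2(\epsilon)=x_1x_3/\big(1-\epsilon^2(x_2-x_4)^2\big)$ and $H_3(\epsilon)=x_2x_4/\big(1-\epsilon^2(x_1-x_3)^2\big)$. First, $H_1$ gives $x_3=H_1-x_1-x_2-x_4$, reducing the problem to the elimination of the single variable $x_4$. Substituting this into the defining relations of $H_2(\epsilon)$ and $H_3(\epsilon)$, written as $x_1x_3-H_2(\epsilon)\big(1-\epsilon^2(x_2-x_4)^2\big)=0$ and $x_2x_4-H_3(\epsilon)\big(1-\epsilon^2(x_1-x_3)^2\big)=0$, produces two polynomials that are quadratic in $x_4$, each with a constant nonzero leading coefficient and with sub-leading and free coefficients of degrees $1$ and $2$ in $(x_1,x_2)$, respectively. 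Writing them as $A_ix_4^2+B_ix_4+C_i$ and forming the resultant $\mathrm{Res}=(A_1C_2-A_2C_1)^2-(A_1B_2-A_2B_1)(B_1C_2-B_2C_1)$, one sees that every term has total degree $4$ in $(x_1,x_2)$, so the orbit pairs $(x_1,x_2)$ lie on a quartic whose coefficients are polynomials in $H_1,H_2(\epsilon),H_3(\epsilon)$ and are therefore constant along the orbit.

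For part~(b) I would imitate the derivation of the biquadratic relation for the Euler top (Proposition~\ref{th: dET biquad}): combine the equation of motion $\wx_1-x_1=\epsilon x_1(\wx_2-\wx_4)+\epsilon\wx_1(x_2-x_4)$ with the discrete Wronskian relation $\wx_1x_3-x_1\wx_3=2\epsilon H_2(\epsilon)(x_2+\wx_2-x_4-\wx_4)$ and with the integrals $H_1,H_2(\epsilon),H_3(\epsilon)$ evaluated at both $x$ and $\wx$. The relations $x_1x_3=H_2(\epsilon)\big(1-\epsilon^2(x_2-x_4)^2\big)$ and $\wx_1\wx_3=H_2(\epsilon)\big(1-\epsilon^2(\wx_2-\wx_4)^2\big)$ serve to remove $x_3,\wx_3$, while $H_1$ and $H_3(\epsilon)$ (at $x$ and at $\wx$) remove $x_2,x_4,\wx_2,\wx_4$; the cyclic symmetry of the chain, together with the substitution $H_2(\epsilon)\leftrightarrow H_3(\epsilon)$, reduces a general index $i$ to the two cases $x_1$ and $x_2$. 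What remains is a single polynomial relation $P(x_1,\wx_1)=0$ whose coefficients are even in $\epsilon$ and hence conserved quantities. As already for VC$_3$, the explicit coefficients are most efficiently generated — and their constancy confirmed — by computer algebra; since the continuous analogue $\dot x_1^2=(x_1^2-H_1x_1+H_2)^2-4H_3x_1^2$ is of degree $4$ in $x_1$, one expects $P$ to be of bidegree $(4,4)$, i.e.\ biquartic.

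It remains to justify the genus claim. Since $H_1,H_2(\epsilon),H_3(\epsilon)$ are three independent integrals in the four-dimensional phase space, every orbit is confined to a curve, which by algebraic integrability (and by part~(a)) is an elliptic curve; hence $x_1$, as a function of the discrete time $t\in2\epsilon\bbZ$, is an elliptic function, and the map $t\mapsto\big(x_1(t),x_1(t+2\epsilon)\big)$ realizes this elliptic curve birationally as the biquartic $P=0$, which therefore has geometric genus $1$. I expect the genuine difficulty to be twofold. First, the elimination in part~(b) must be carried out symbolically, and — as documented for the Clebsch system in \cite{PPS} — the size of the second iterate makes this quickly unwieldy, so that the even-in-$\epsilon$ test for constancy becomes in practice the decisive tool. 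Second, the geometric statement is delicate: a bidegree-$(4,4)$ curve has arithmetic genus $9$, and one must check directly that $P=0$ carries enough singular points to bring the geometric genus down to $1$ (equivalently, that $x_1$ is an order-$4$ elliptic function); verifying this without merely invoking the elliptic-function heuristic is the main obstacle.
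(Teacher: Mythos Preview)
Your approach is correct and essentially the same as the paper's. For (a) the paper simply says ``follows by eliminating $x_3,x_4$ from integrals $H_1,H_2(\epsilon),H_3(\epsilon)$'', which is exactly your resultant computation; for (b) the paper says only ``obtained with the help of MAPLE'' and infers from the result that $x_i$ is an elliptic function of degree~4, so your explicit elimination scheme and your honest discussion of the genus verification are in fact more detailed than what the paper provides.
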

\begin{proof}
Statement (a) follows by eliminating $x_3,x_4$ from integrals $H_1,H_2(\epsilon),H_3(\epsilon)$. Statement (b) is obtained with the help of MAPLE; it implies that $x_i$ as functions of $t$ are elliptic functions of degree 4 (i.e., with four poles within one parallelogram of periods).
\end{proof}

Note that the reduction $x_4=0$ of the periodic Volterra chain with $N=4$ leads to the open-end Volterra chain with $N=3$ particles.

\section{Dressing chain ($N=3$)}
\label{Sect: dDC}

The three-dimensional dressing chain (DC$_3$, for short) is described by the following system of quadratic ordinary equations \cite{VS}:
\beq
\left\{ \begin{array}{l}
\dot{x}_1=  x_3^2 -x_2^2  +\alpha_3 -\alpha_2 ,  \vspace{.1truecm} \\
 \dot{x}_2=  x_1^2 -x_3^2  +\alpha_1 -\alpha_3, \vspace{.1truecm} \\
\dot{x}_3=  x_2^2 -x_1^2  +\alpha_2 -\alpha_1,
\end{array} \right. \label{DC}
\eeq
with real parameters $\alpha_i$. The system (\ref{DC}) is (Liouville and algebraically) integrable. The following quantities are integrals of motion:
\bea
&& I_{1}= x_1 +x_2 +x_3,
 \nonumber
 \\
 && I_{2}= (x_1+x_2 )(x_2+x_3) (x_3+x_1)-\alpha_1x_1-\alpha_2x_2-\alpha_3x_3.\nonumber
\eea
Sometimes it is more convenient to use the following integral instead of $I_2$:
\[
H_{2}=x_1^3+x_2^3+x_3^3+3\alpha_1x_1+3\alpha_2x_2+3\alpha_3x_3= I_1^3-3I_2.
\]
There hold the following Wronskian relations:
\begin{equation}
\label{eq: DC Wronski}
    \dot{x}_ix_j-x_i\dot{x}_j=I_1x_k^2+2(\alpha_i-\alpha_k)x_i+2(\alpha_j-\alpha_k)x_j+3\alpha_kI_1-H_2.
\end{equation}

Excluding $x_j,x_k$ from equations of motion for $x_i$ with the help of integrals of motion, one arrives at
\begin{equation}\label{eq: DC ell curves}
\dot{x}_i^2=x_i^4+6a_2x_i^2+4a_3x_i+a_4,
\end{equation}
with
\begin{eqnarray*}
& a_2=-\frac{1}{3}(I_1^2+\alpha_j+\alpha_k-2\alpha_i),\qquad
a_3=(\alpha_j+\alpha_k-\alpha_i)I_1+I_2, &\\
& a_4=I_1^4-2(\alpha_j+\alpha_k)I_1^2+(\alpha_j-\alpha_k)^2-4I_1I_2. &
\end{eqnarray*}
All three elliptic curves corresponding to (\ref{eq: DC ell curves}) with $i=1,2,3$, have equal Weierstrass invariants (expressed through the parameters $\alpha_i$ and the integrals of motion):
\[
g_2=a_4+3a_2^2,\qquad g_3=a_2a_4-a_2^3-a_3^2,\quad {\rm so\;\;that}\quad a_3^2=-4a_2^3+g_2a_2-g_3.
\]
The coefficients in (\ref{eq: DC ell curves}) can be thus parametrized in terms of the Weierstrass elliptic function with the invariants $g_2,g_3$ as follows: $a_2=-\wp(A_i)$, $a_3=\wp'(A_i)$, so that $a_4=g_2-3\wp^2(A_i)$. One can show that $A_1+A_2+A_3=0$ (modulo the period lattice), so that one can introduce $B_i$, defined up to a common additive shift, through $A_i=B_i-B_{i+1}$. The solution of the dressing chain DC$_3$ is then given as
\[
x_i(t)=\zeta(t-B_{i+1})-\zeta(t-B_i)-\zeta(B_i-B_{i+1}).
\]

The HK discretization of system (\ref{DC}) is:
\beq \label{eq: dDC x}
\left\{ \begin{array}{l}
\widetilde{x}_1-x_1=\epsilon ( \widetilde{x}_3 x_3- \widetilde x_2 x_2 + \alpha_3 -\alpha_2),  \vspace{.2truecm} \\
\widetilde{x}_2-x_2=\epsilon (\widetilde{x}_1 x_1- \widetilde x_3 x_3 + \alpha_1 -\alpha_3), \vspace{.2truecm} \\
\widetilde{x}_3-x_3= \epsilon (\widetilde{x}_2 x_2- \widetilde x_1 x_1 + \alpha_2 -\alpha_1) .
\end{array} \right.
\eeq

The map $f:x\mapsto\widetilde{x}\ $ obtained by solving
(\ref{eq: dDC x}) for $\widetilde{x}$ is given by:
\begin{equation} \nonumber
\widetilde{x} =f(x,\epsilon)=A^{-1}(x,\epsilon)(x+\epsilon c),
\end{equation}
with
$$
A(x,\epsilon)=
\begin{pmatrix} 1 &  \epsilon  x_2 & -\epsilon x_3   \\
-\epsilon  x_1& 1 & \epsilon x_3 \\
\epsilon x_1 & -\epsilon x_2 & 1 \end{pmatrix} ,\qquad
c=(\alpha_3-\alpha_2,\alpha_1-\alpha_3,\alpha_2-\alpha_1 )^{\rm T}.
$$
Explicitly:
\begin{equation}\label{eq: dDC expl}
    \wx_i=\frac{x_i+\epsilon(x_k^2-x_j^2+\alpha_k-\alpha_j)+
                \epsilon^2\big(I_1x_jx_k+(\alpha_k-\alpha_i)x_j+(\alpha_j-\alpha_i)x_k\big)}
                {1+\epsilon^2(x_1x_2+x_2x_3+x_3x_1)}.
\end{equation}
This map will be called dDC$_3$. From Proposition \ref{th: ddress type} there follows immediately:
\begin{proposition}\label{th: dDC inv meas}
The map dDC$_3$ possesses an invariant volume form:
\begin{equation} \nonumber
\det\frac{\partial \widetilde{x}}{\partial x}=
\frac{\phi(\widetilde{x})}{\phi(x)}\quad\Leftrightarrow\quad f^*\omega=\omega,\quad
\omega= \frac{dx_1\wedge dx_2\wedge dx_3}{\phi(x)},
\end{equation}
with $\phi(x)=1+\epsilon^2 (x_1 x_2+ x_2 x_3 + x_3 x_1)$.
\end{proposition}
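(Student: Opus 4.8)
The plan is to recognize that the dressing chain DC$_3$ is an instance of the first class of systems treated in Proposition \ref{th: ddress type}, and then to invoke that proposition directly; all that remains is a structural identification together with one $3\times 3$ determinant.

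First I would read off the data. The right-hand sides of (\ref{DC}) have the form $\dot{x}_i=\sum_{j=1}^3 a_{ij}x_j^2+c_i$ with
\[
A=\begin{pmatrix} 0 & -1 & 1\\ 1 & 0 & -1\\ -1 & 1 & 0\end{pmatrix},\qquad
c=(\alpha_3-\alpha_2,\,\alpha_1-\alpha_3,\,\alpha_2-\alpha_1)^{\rm T}.
\]
The matrix $A$ is manifestly skew-symmetric, which is precisely the single hypothesis of Proposition \ref{th: ddress type}. I would then check that the HK discretization (\ref{eq: dDC x}) coincides line by line with the Kahan discretization (\ref{eq: ddress type}) for this data, i.e.\ each equation reads $\wx_i-x_i=\epsilon\sum_{j}a_{ij}x_j\wx_j+\epsilon c_i$; this comparison is immediate.

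With the identification in place, Proposition \ref{th: ddress type} applies verbatim and yields that dDC$_3$ preserves the volume form with density $\phi(x,\epsilon)=\det(\mathds{1}-\epsilon AX)$, where $X={\rm diag}(x_1,x_2,x_3)$. It remains only to evaluate this determinant. Since $\mathds{1}-\epsilon AX$ is exactly the matrix $A(x,\epsilon)$ already used to define the map, a direct expansion gives
\[
\det(\mathds{1}-\epsilon AX)=1+\epsilon^2(x_1x_2+x_2x_3+x_3x_1),
\]
the cubic terms $\pm\epsilon^3 x_1x_2x_3$ cancelling, consistent with the evenness in $\epsilon$ guaranteed by Proposition \ref{th: ddress type}. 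This is exactly the claimed $\phi$.

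There is essentially no obstacle: all the substance lives in Proposition \ref{th: ddress type}, and what remains is a routine determinant computation. The only point demanding mild care is the sign bookkeeping when reading off $A$ and confirming its skew-symmetry; once that is settled, the conclusion follows at once.
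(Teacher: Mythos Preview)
Your proof is correct and follows exactly the same route as the paper, which simply states that the claim follows immediately from Proposition~\ref{th: ddress type}. You have merely spelled out the identification of the skew-symmetric matrix $A$ and the computation of the $3\times 3$ determinant that the paper leaves implicit.
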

Concerning integrability of dDC$_3$, we note first of all that $I_1$ is an obvious conserved quantity. The second one is most easily obtained from the following discretization of Wronskian relations (\ref{eq: DC Wronski}).
\begin{proposition}
For the map dDC$_3$, the following relations hold:
\begin{equation}\label{eq: dDC Wronski}
\frac{\wx_ix_j-x_i\wx_j}{\epsilon}= I_1x_k\wx_k+(\alpha_i-\alpha_k)(x_i+\wx_i)+(\alpha_j-\alpha_k)(x_j+\wx_j)+3\alpha_kI_1-H_2(\epsilon),
\end{equation}
where $H_2(\epsilon)$ is a conserved quantity, given by
\begin{equation}\label{eq: dDC H2}
    H_2(\epsilon)=\frac{H_2+\epsilon^2 G_2}{1+\epsilon^2(x_1x_2+x_2x_3+x_3x_1)},
\end{equation}
where
\begin{eqnarray}
   G_2 & = & I_1^2x_1x_2x_3+(\alpha_1x_1+\alpha_2x_2+\alpha_3x_3)(x_1x_2+x_2x_3+x_3x_1)\nonumber\\
       &   & +2I_1(\alpha_1x_2x_3+\alpha_2x_3x_1+\alpha_3x_1x_1)\nonumber\\
       &   & -(\alpha_2-\alpha_3)^2x_1-(\alpha_3-\alpha_1)^2x_2-(\alpha_1-\alpha_2)^2x_3.\nn
\end{eqnarray}
\end{proposition}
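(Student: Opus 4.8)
The plan is to follow the route already used for the Euler top (Proposition~\ref{th: dET W}) and for dVC$_3$: instead of producing $H_2(\epsilon)$ out of thin air and checking its invariance by hand, I would turn the Wronskian relation \eqref{eq: dDC Wronski} into the \emph{definition} of $H_2(\epsilon)$ and then show that the function so defined coincides with the even expression \eqref{eq: dDC H2}, whence invariance is automatic. Fixing a cyclic triple $(i,j,k)$, set
\[
H_2(\epsilon):=I_1x_k\wx_k+(\alpha_i-\alpha_k)(x_i+\wx_i)+(\alpha_j-\alpha_k)(x_j+\wx_j)+3\alpha_kI_1-\frac{\wx_ix_j-x_i\wx_j}{\epsilon},
\]
which becomes a rational function of $x$ and $\epsilon$ once $\wx=f(x,\epsilon)$ is eliminated through the explicit map \eqref{eq: dDC expl}.

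The invariance then splits into two independent facts. First, reversibility $f^{-1}(x,\epsilon)=f(x,-\epsilon)$ already forces $H_2(x,\epsilon)=H_2(f(x,\epsilon),-\epsilon)$, and this costs no computation: the left-hand side $(\wx_ix_j-x_i\wx_j)/\epsilon$ of \eqref{eq: dDC Wronski} is invariant under the combined operation $x\leftrightarrow\wx,\ \epsilon\mapsto-\epsilon$, while the remaining terms on the right-hand side---all of them except $H_2(\epsilon)$---are symmetric under $x\leftrightarrow\wx$ (here one uses that $I_1=x_1+x_2+x_3$ is already a conserved quantity, so $I_1(\wx)=I_1(x)$). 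Applying the defining relation at the point $\wx$ with parameter $-\epsilon$, whose one-step image is $f(\wx,-\epsilon)=x$, and comparing with the relation at $(x,\epsilon)$, the two identical left-hand sides yield $H_2(x,\epsilon)=H_2(\wx,-\epsilon)$. Second, the explicit computation described below shows that $H_2(\epsilon)$ is \emph{even} in $\epsilon$; combined with the previous identity this gives $H_2(f(x,\epsilon),\epsilon)=H_2(f(x,\epsilon),-\epsilon)=H_2(x,\epsilon)$, i.e. $H_2(\epsilon)$ is conserved.

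It therefore remains to substitute \eqref{eq: dDC expl} into the defining formula and reduce everything over the common denominator $\Delta=1+\epsilon^2(x_1x_2+x_2x_3+x_3x_1)$. In the combination $\wx_ix_j-x_i\wx_j$ both products carry the single factor $1/\Delta$, and one must check that the resulting numerator is divisible by $\epsilon$ so that the prefactor $1/\epsilon$ leaves no pole; the terms $I_1x_k\wx_k$ and $x_i+\wx_i$ acquire the same denominator $\Delta$. Collecting over $\Delta$, I expect the numerator to reorganise precisely into $H_2+\epsilon^2G_2$ with $G_2$ as displayed, giving $H_2(\epsilon)=(H_2+\epsilon^2G_2)/\Delta$; since both $\Delta$ and $H_2+\epsilon^2G_2$ involve only even powers of $\epsilon$, the required even-ness is manifest. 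Two further points must be confirmed at this stage: that the outcome does not depend on the chosen cyclic triple $(i,j,k)$, so that \eqref{eq: dDC Wronski} holds simultaneously for all three, and that the odd-in-$\epsilon$ parts of the numerator genuinely cancel.

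This last cancellation is the real obstacle. As with the analogous identities \eqref{eq: dET W} and \eqref{eq: dVL3 Wronski}, we have no structural explanation for why all odd powers of $\epsilon$ must disappear and why the three cyclic versions must agree---this is the same ``remarkable and miraculous'' phenomenon observed after \eqref{eq: dET for cancel}. Consequently, while the logical skeleton above is short, the verification of \eqref{eq: dDC H2} and of the formula for $G_2$ is in practice carried out with a computer algebra system, which is how these expressions were presumably found.
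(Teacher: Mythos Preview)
Your approach is correct and coincides with the paper's own proof: define $H_2(\epsilon)$ via the Wronskian relation, compute it using the explicit formulas \eqref{eq: dDC expl}, observe the result is even in $\epsilon$, and conclude conservation. You spell out in more detail than the paper why evenness together with reversibility forces invariance, but this is exactly the mechanism the paper invokes when it says the result ``is a manifestly even function of $\epsilon$ and therefore an integral of motion.''
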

\begin{proof}
Define $H_2(\epsilon)$  by eq. (\ref{eq: dDC Wronski}). It is easily computed with explicit formulas (\ref{eq: dDC expl}). The result given by (\ref{eq: dDC H2}) is a manifestly even function of $\epsilon$ and therefore an integral of motion.
\end{proof}

\begin{prop}
${}$

{\rm (a)}
The  set $\Phi_{ij}=(x_i^3,\, x_j^3,\, x_ix_j(x_i+x_j),\, x_i^2,\, x_ix_j,\, x_j^2,\, x_i,\, x_j,\, 1)$
is a HK basis for the map dDC$_3$  with $\dim K_{\Phi_{ij}}(x)=1$. In other words, the pairs $x_i,x_j$ satisfy equations of degree 3,
\[
P_{ij}(x_i,x_j)=p_0x_i^3+p_1x_j^3+p_2x_ix_j(x_i+x_j)+p_3x_i^2+p_4x_ix_j+p_5x_j^2+p_6x_i+p_7x_j+p_8=0,
\]
whose coefficients $p_m=p_m^{(ij)}$ are constant (expressed through parameters $\alpha_k$ and integrals of motion).
\smallskip

{\rm (b)}
The  set $\Psi_{i}=(x_i^m\wx_i^n)_{m,n=0}^3$ is a HK basis for the map dDC$_3$  with $\dim K_{\Psi_i}(x)=1$. In other words, the pairs $x_i,\wx_i$ lie on bicubic curves of genus 1:
\[
Q_i(x_i,\wx_i)=\sum_{m,n=0}^3q_{mn}x_i^m\wx_i^n=0,
\]
whose coefficients $q_{mn}=q_{mn}^{(i)}$ are constant (expressed through parameters $\alpha_k$ and integrals of motion). Moreover, $q_{13}=q_{31}$.
\end{prop}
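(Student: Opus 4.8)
The plan is to obtain both HK bases by \emph{elimination}, using the two conserved quantities already available: the obvious integral $I_1=x_i+x_j+x_k$ and the integral $H_2(\epsilon)$ from (\ref{eq: dDC H2}). For part (a) I would eliminate the third variable through $x_k=I_1-x_i-x_j$ and substitute it into the polynomial form of the defining relation of $H_2(\epsilon)$,
\[
H_2(\epsilon)\,\big(1+\epsilon^2(x_1x_2+x_2x_3+x_3x_1)\big)=H_2+\epsilon^2G_2 .
\]
After this substitution both sides become polynomials in $x_i,x_j$ of total degree at most $3$, whose coefficients involve only $I_1$, $H_2(\epsilon)$, $\epsilon$ and the parameters $\alpha_k$. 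The one point that is not purely mechanical is that, a priori, the degree-$3$ part could contain $x_i^2x_j$ and $x_ix_j^2$ independently, which would force a ten-element basis; so the essential step is to isolate the homogeneous cubic component and check that it is proportional to $x_ix_j(x_i+x_j)$. I expect this to hold because the cubic contributions of $x_k^3$ in $H_2$, of $I_1^2x_1x_2x_3$ in $G_2$, and of the product $(\alpha_1x_1+\alpha_2x_2+\alpha_3x_3)(x_1x_2+x_2x_3+x_3x_1)$ in $G_2$ each produce equal coefficients in front of $x_i^2x_j$ and $x_ix_j^2$, so that only the symmetric combination survives; the surviving lower-order monomials lie in $\{x_i^2,x_ix_j,x_j^2,x_i,x_j,1\}$, and together with $x_i^3,x_j^3$ this is exactly $\Phi_{ij}$.

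Since the coefficients $p_m^{(ij)}$ of the resulting cubic $P_{ij}$ are functions of $I_1$ and $H_2(\epsilon)$ (and of $\alpha_k,\epsilon$), they are constant along every orbit, so the fixed vector $(p_0,\dots,p_8)$ lies in $K_{\Phi_{ij}}(x)$ for all iterates simultaneously; this is exactly the HK-basis property of Definition \ref{Def: Hirota mech}, giving $\dim K_{\Phi_{ij}}(x)\ge1$. To sharpen this to $\dim K_{\Phi_{ij}}(x)=1$ I would either invoke the criterion of Proposition \ref{Lem: dim K} (checking that the null-space computed from one window of iterates agrees with the one from the shifted window), or argue geometrically: a second independent cubic relation would confine $(x_i,x_j)$ to the finite intersection of two distinct cubics, contradicting the fact that a generic orbit is an infinite subset of the curve $P_{ij}=0$, the discrete counterpart of the elliptic curve (\ref{eq: DC ell curves}).

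For part (b) I would pass from $(x_i,x_j)$ to the bilinear pair $(x_i,\wx_i)$ by eliminating the remaining variables through the dynamics, in the spirit of the lifting phenomenon already seen for dET (Proposition \ref{Th: dET full basis new}). Concretely, I would read off $\wx_i$ as a rational function of $x_1,x_2,x_3$ from the explicit map (\ref{eq: dDC expl}), reduce to $x_i,x_j$ via $x_k=I_1-x_i-x_j$, clear denominators to obtain a relation $R(x_i,x_j,\wx_i)=0$ that is linear in $\wx_i$ and quadratic in $x_j$, and then take the resultant with respect to $x_j$ of $R$ and of the cubic $P_{ij}(x_i,x_j)=0$ from part (a). Since $P_{ij}$ is cubic in $x_j$ while $R$ is quadratic in $x_j$ and linear in $\wx_i$, the Sylvester determinant has degree at most $3$ in $\wx_i$ (and, by the symmetric role of $x_i$, at most $3$ in $x_i$), so the eliminant $Q_i(x_i,\wx_i)$ has bidegree $(3,3)$ and lies in $\mathrm{span}\,\Psi_i$, with coefficients $q_{mn}^{(i)}$ assembled from $\alpha_k$ and the integrals; genus $1$ then follows from the arithmetic-genus count for a generic member of this bicubic system. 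The honest obstacle here is computational: the resultant is unwieldy, which is precisely why this step is carried out with MAPLE.

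Finally, the symmetry $q_{13}=q_{31}$ is the one feature I would establish structurally rather than by inspection, from the reversibility $f^{-1}(x,\epsilon)=f(x,-\epsilon)$. Reading the invariant correspondence backwards identifies the curve $Q_i^{\,\epsilon}(\wx_i,x_i)=0$ with $Q_i^{\,-\epsilon}(x_i,\wx_i)=0$, so that $q_{mn}^{(i)}(\epsilon)$ and $q_{nm}^{(i)}(-\epsilon)$ agree up to a common normalisation; combining this with the evenness in $\epsilon$ of the relevant corner coefficients (the same parity mechanism that rendered $H_2(\epsilon)$ even) forces $q_{13}=q_{31}$. I expect this parity-plus-reversibility bookkeeping, rather than any single hard computation, to be the subtlest part of the argument.
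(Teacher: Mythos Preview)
Your approach to part (a) is exactly the paper's: eliminate $x_k$ via $x_k=I_1-x_i-x_j$ from the polynomial form of the defining relation of $H_2(\epsilon)$. You supply welcome detail, absent from the paper's one-line proof, explaining why the degree-$3$ part collapses to the symmetric combination $x_ix_j(x_i+x_j)$; your check that each cubic contribution produces equal coefficients in front of $x_i^2x_j$ and $x_ix_j^2$ is correct. The paper does not spell out the argument for $\dim K_{\Phi_{ij}}=1$ either; your geometric alternative (two independent cubics would intersect in finitely many points, whereas a generic orbit is infinite) is a valid route.

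For part (b) the paper says only ``obtained with the help of MAPLE''; your resultant scheme is a reasonable concrete implementation of that. Two caveats, however. First, your degree bound ``at most $3$ in $x_i$'' does not follow from any ``symmetric role of $x_i$'': in the Sylvester matrix the two $P_{ij}$-rows carry $x_i$-degrees up to $3$ and the three $R$-rows carry $x_i$-degrees up to $2$, and a naive expansion bound exceeds $3$; one really needs the explicit computation (or a separate argument) to see that the eliminant has bidegree $(3,3)$. Second, your reversibility-plus-parity argument for $q_{13}=q_{31}$ has a gap. Reversibility indeed gives $q_{mn}(\epsilon)\propto q_{nm}(-\epsilon)$, but the evenness in $\epsilon$ you then invoke is not established: the relation $R$ you use in the resultant contains the odd-$\epsilon$ term $\epsilon(x_k^2-x_j^2+\alpha_k-\alpha_j)$ from (\ref{eq: dDC expl}), so there is no a priori reason the individual $q_{mn}$ come out even. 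Moreover, if your argument went through it would force \emph{all} $q_{mn}=q_{nm}$, whereas the paper singles out only the equality $q_{13}=q_{31}$, suggesting that the full coefficient matrix is \emph{not} symmetric and that this one coincidence has a more special origin. The paper offers no structural explanation either; it simply reads $q_{13}=q_{31}$ off the MAPLE output.
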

\begin{proof}
Statement (a) follows by eliminating $x_k$ from (\ref{eq: dDC H2}) via $x_k=I_1-x_i-x_j$. Statement (b) is obtained with the help of MAPLE. One can also show that these bicubic curves are of genus 1, so that $x_i$ as functions of $t$ are elliptic functions of degree 3 (i.e., with three poles within one parallelogram of periods).
\end{proof}

\section{Coupled Euler tops}
\label{Sect: d2ET}

In \cite{GMN} a remarkable mechanical system was introduced, which can be interpreted as a chain of coupled
three-dimensional Euler tops. The differential equations governing the system are given by:
\beq \label{mah}
\left\{ \begin{array}{l}
\dot x_1= \al_1 x_2 x_3,  \vspace{.1truecm}\\
\dot x_{2j} = \al_{3j-1} x_{2j-1} x_{2j+1},  \vspace{.1truecm}\\
\dot x_{2j+1 } = \al_{3j} x_{2j} x_{2j-1}  + \al_{3j+1} x_{2j+2} x_{2j+3}, \vspace{.1truecm}\\
\dot x_{2j+2} = \al_{3j+2} x_{2j+1} x_{2j+3}, \vspace{.1truecm}\\
\dot x_{2N+1} = \al_{3N} x_{2N} x_{2N-1}, 
  \end{array} \right.
\eeq
with real parameters $\al_i$. Each triple of variables $(x_{2j-1},x_{2j},x_{2j+1})$ can be considered as a 3D Euler top, coupled with the neighboring triple $(x_{2j+1},x_{2j+2},x_{2j+3})$ via the variable $x_{2j+1}$. We will denote system (\ref{mah}) by CET$_N$. It has $N+1$  independent conserved quantities:
\bea
&&H_1 =\alpha_2 x_1^2 - \alpha_1x_2^2, \nonumber \\
&&H_{j} =\al_{3j-3}\al_{3j-1}x^2_{2j-2} -  \al_{3j-4}\al_{3j-1} x^2_{2j-1} + \al_{3j-4}\al_{3j-2} x^2_{2j}, \qquad 2 \leq j \leq N, \quad \nn\\
&&H_{N+1} =  \al_{3N} x^2_{2N}-  \al_{3N-1} x^2_{2N+1}. \nonumber
\eea
Nothing is known about the possible Hamiltonian formulation of this system, and therefore about its integrability in the Liouville-Arnold sense.

For $N=1$ system (\ref{mah}) reduces to the usual Euler top (\ref{eq: ET x}). We will consider in detail the HK discretization of the system CET$_2$ given by
\beq \label{01}
\left\{ \begin{array}{l}
 \dot x_1 = \alpha_1  x_2 x_3, \vspace{.1truecm}\\
 \dot x_2 = \alpha_2  x_3 x_1,  \vspace{.1truecm}\\
 \dot x_3 = \alpha_3  x_1 x_2 +\alpha_4 x_4 x_5 , \vspace{.1truecm}\\
 \dot  x_4= \alpha_5x_5 x_3, \vspace{.1truecm} \\
 \dot  x_5 = \alpha_6  x_3 x_4.
  \end{array} \right.
\eeq
It can be interpreted as two Euler tops, described by the two sets of variables $(x_1,x_2,x_3)$ and $(x_3,x_4,x_5)$, respectively, coupled via the variable $x_3$. It has three independent integrals of motion:
\bea
 &H_1 = \alpha_2 x_1^2 - \alpha_1x_2^2, \qquad H_3 =\alpha_6 x_4^2- \alpha_5 x_5^2, & \nonumber\\
 &H_2 =\alpha_3  \alpha_5 x_2^2 - \alpha_2 \alpha_5 x_3^2 +
 \alpha_2  \alpha_4x_4^2, &\nonumber
\eea
and it can  be solved in terms of elliptic functions. We will be mainly interested in its particular case which is superintegrable.

\begin{prop}
If the  coefficients $\al_i$ satisfy the following condition,
\beq
\al_1 \al_2 = \al_5 \al_6, \label{con}
\eeq
then the system CET$_2$ is superintegrable: it has two additional integrals,
\beq \nn
H_4 = \al_5 x_2 x_5 - \al_2 x_1 x_4, \qquad H_5=  \al_5 x_1 x_5 - \al_1 x_2 x_4,
\eeq
and among the functions $H_1,\ldots,H_5$ there are four independent ones.
\end{prop}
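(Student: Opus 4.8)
Since the statement concerns the continuous-time system (\ref{01}), the plan is simply to verify that $H_4$ and $H_5$ are first integrals of the flow under condition (\ref{con}), and then to settle the independence count. The functions $H_1,H_2,H_3$ are already available as integrals from the discussion preceding the proposition, so the only genuinely new conservation checks are those for $H_4$ and $H_5$.

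First I would differentiate $H_4=\alpha_5 x_2x_5-\alpha_2 x_1x_4$ along (\ref{01}) and substitute the five right-hand sides. Collecting terms, I expect the two monomials $x_1x_3x_5$ to cancel identically (their coefficients being $\alpha_2\alpha_5$ and $-\alpha_2\alpha_5$), leaving
\begin{align*}
\dot H_4 &= \alpha_5(\dot{x}_2\,x_5+x_2\,\dot{x}_5)-\alpha_2(\dot{x}_1\,x_4+x_1\,\dot{x}_4)\\
&=\alpha_2\alpha_5\, x_1x_3x_5+\alpha_5\alpha_6\, x_2x_3x_4-\alpha_1\alpha_2\, x_2x_3x_4-\alpha_2\alpha_5\, x_1x_3x_5\\
&=(\alpha_5\alpha_6-\alpha_1\alpha_2)\,x_2x_3x_4 .
\end{align*}
An identical computation for $H_5=\alpha_5 x_1x_5-\alpha_1 x_2x_4$ should give $\dot H_5=(\alpha_5\alpha_6-\alpha_1\alpha_2)\,x_1x_3x_4$. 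Both right-hand sides vanish precisely when (\ref{con}) holds, which proves that $H_4,H_5$ are conserved. I expect this bookkeeping---checking that every surviving monomial carries the single factor $\alpha_5\alpha_6-\alpha_1\alpha_2$---to be the main (though entirely routine) point, since it is exactly the cancellation that singles out condition (\ref{con}).

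For the independence count I would argue in two halves. To show that at least four of the $H_i$ are independent, I would write out the Jacobian of $(H_1,H_2,H_3,H_4)$ and exhibit one non-vanishing maximal minor; taking the columns corresponding to $x_1,x_3,x_4,x_5$, I expect the minor to equal $-8\alpha_2^2\alpha_5^2\,x_1x_3\,(\alpha_6 x_2x_4-\alpha_2 x_1x_5)$, which is nonzero at a generic point, so $H_1,H_2,H_3,H_4$ are functionally independent. For the reverse bound, note that each differential $dH_i$ annihilates the vector field $v$ of (\ref{01}), since $dH_i(v)=\dot H_i=0$, and $v$ is nonzero on a dense open set; hence all five differentials lie in the $4$-dimensional subspace $v^{\perp}$ of the cotangent space, so at most four of them can be linearly independent. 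Consequently exactly four among $H_1,\dots,H_5$ are independent, and the remaining one is functionally dependent on them---the single functional relation being forced purely by this dimension count, so that there is no need to exhibit it explicitly.
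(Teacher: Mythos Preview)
Your proof is correct. The computations of $\dot H_4$ and $\dot H_5$ are accurate, and both vanish precisely under condition (\ref{con}); the independence argument is also sound, with the $4\times 4$ minor you quote being correct (under (\ref{con}) it equals $8\alpha_2^3\alpha_5\, x_1x_3\,H_5/\alpha_5$, hence generically nonzero), and the upper bound via $dH_i\in v^\perp$ is the standard clean way to cap the count at four. The paper states this proposition without proof, so there is nothing to compare against; your direct verification is exactly what is called for.
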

In this case, the variable $x_3$ satisfies the following differential equation:
\begin{equation}\label{eq: CET x3}
\dot{x}_3^2=\left(x_3^2+\frac{H_2}{\alpha_2\alpha_5}\right)\left(\alpha_1\alpha_2x_3^2+
\frac{\alpha_1}{\alpha_5}H_2+\alpha_3H_1-\alpha_4H_3\right)-\frac{\alpha_3\alpha_4}{\alpha_2\alpha_5}H_4^2,
\end{equation}
so that its time evolution is described by an elliptic function of degree 2.

The HK discretization of the system CET$_2$ reads:
\beq \label{02}
\left\{ \begin{array}{l}
   \wx_1-x_1= \epsilon \al_1(\wx_2 x_3+x_2 \wx_3 ), \vspace{.1truecm}\\
   \wx_2-x_2= \epsilon \al_2(\wx_3 x_1+x_3 \wx_1 ),  \vspace{.1truecm}\\
   \wx_3-x_3= \epsilon \al_3(\wx_1 x_2+x_1 \wx_2)+\epsilon\al_4 (\wx_4 x_5+x_4 \wx_5 ), \vspace{.1truecm}\\
   \wx_4-x_4= \epsilon \al_5(\wx_5 x_3+x_5 \wx_3 ), \vspace{.1truecm} \\
   \wx_5-x_5= \epsilon \al_6(\wx_3 x_4+x_3 \wx_4).
  \end{array} \right.
\eeq
The map $f:x\mapsto\widetilde{x}\ $ obtained by solving (\ref{02}) for $\widetilde{x}$ is given by:
\begin{equation} \nonumber
\widetilde{x} =f(x,\epsilon)=A^{-1}(x,\epsilon)x,
\end{equation}
with
$$
A(x,\epsilon)= \left( \begin {array}{ccccc} 1&\epsilon \alpha_{{1}}
 x_{{3}}&\epsilon  \alpha_{{1}} x_{{2}}&0&0
\\\noalign{\medskip}\epsilon  \alpha_{{2}} x_{{3}}
&1&\epsilon \alpha_{{2}}x_{{1}}&0&0
\\\noalign{\medskip}\epsilon\alpha_{{3}}x_{{2}}
&\epsilon\alpha_{{3}} x_{{1}}&1&\epsilon
 \alpha_{{4}} x_{{5}}&\epsilon \alpha_{{4}} x_{{4}}\\\noalign{\medskip}0&0&
 \epsilon\alpha_{{5}}
 x_{{5}}&1&\epsilon \alpha_{{5}}
x_{{3}}\\\noalign{\medskip}0&0&\epsilon \alpha_{{6}} x_{{4}}&\epsilon \alpha_{{6}}x_{{3}}&1
\end {array} \right).
$$
This map will be called dCET$_2$ in the sequel.
\begin{prop}
The functions
$$
H_1 (\epsilon) = \frac{\al_2 x_1^2 - \al_1 x_2^2}{1 - \epsilon^2  \al_1 \al_2 x_3^2}, \qquad
H_3 (\epsilon)=\frac{\al_6 x_4^2- \al_5 x_5^2}{1 - \epsilon^2 \al_5 \al_6 x_3^2} ,
$$
are conserved quantities of the map dCET$_2$.
\end{prop}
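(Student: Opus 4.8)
The plan is to exploit the fact that the coupling between the two tops in (\ref{02}) resides solely in the equation for $x_3$, so that each of the two conserved quantities can be read off from one half of the system. Concretely, the first two equations of (\ref{02}),
\begin{equation}\nonumber
\wx_1-x_1=\epsilon\alpha_1(\wx_2x_3+x_2\wx_3),\qquad
\wx_2-x_2=\epsilon\alpha_2(\wx_3x_1+x_3\wx_1),
\end{equation}
are \emph{identical} to the first two equations of the Euler top map (\ref{eq: dET x}) in the variables $(x_1,x_2,x_3)$ with parameters $\alpha_1,\alpha_2$; likewise the last two equations of (\ref{02}) reproduce the Euler top equations in $(x_4,x_5,x_3)$ with parameters $\alpha_5,\alpha_6$. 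I would therefore aim to prove that $H_1(\epsilon)$ is conserved as a purely algebraic consequence of the first two equations alone, treating $\wx_3$ as a free variable; the particular value of $\wx_3$ eventually produced by the coupled map is immaterial for the identity in question.

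Writing $N=\alpha_2x_1^2-\alpha_1x_2^2$ and $D=1-\epsilon^2\alpha_1\alpha_2x_3^2$, so that $H_1(\epsilon)=N/D$, and denoting by $\widetilde N,\widetilde D$ the same expressions in the tilded variables, conservation amounts to the identity $\widetilde N D=N\widetilde D$. I would derive it in two moves, both using only the two selected equations. First, taking $\alpha_2(\wx_1+x_1)$ times the first equation minus $\alpha_1(\wx_2+x_2)$ times the second and collecting terms gives, after the same telescoping cancellation as in the Euler case, the increment of the numerator
\begin{equation}\nonumber
\widetilde N-N=\epsilon\alpha_1\alpha_2(\wx_3-x_3)(\wx_1x_2-x_1\wx_2).
\end{equation}
Second, multiplying the first equation by $x_2$ and the second by $x_1$ and subtracting, and likewise multiplying the first by $\wx_2$ and the second by $\wx_1$ and subtracting, yields two expressions for the discrete Wronskian $\wx_1x_2-x_1\wx_2$; eliminating the auxiliary quantity $\alpha_1x_2\wx_2-\alpha_2x_1\wx_1$ between them produces $\widetilde N x_3^2-N\wx_3^2=\epsilon^{-1}(\wx_3-x_3)(\wx_1x_2-x_1\wx_2)$. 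Substituting both results into
\begin{equation}\nonumber
\widetilde N D-N\widetilde D=(\widetilde N-N)-\epsilon^2\alpha_1\alpha_2(\widetilde N x_3^2-N\wx_3^2)
\end{equation}
makes the two contributions cancel identically, so $\widetilde N D=N\widetilde D$ and $H_1(\epsilon)$ is preserved. Equivalently, this computation establishes the discrete Wronskian relation $D(\wx_1x_2-x_1\wx_2)+\epsilon N(\wx_3+x_3)=0$, which is exactly the dET relation (\ref{eq: dET W}) rewritten with $H_1(\epsilon)=-H_3^{\mathrm{dET}}(\epsilon)$ taken from (\ref{eq: dET ints}); its validity for dCET$_2$ is the heart of the matter.

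The conservation of $H_3(\epsilon)$ then follows verbatim from the last two equations of (\ref{02}) under the relabelling $(x_1,x_2,x_3,\alpha_1,\alpha_2)\mapsto(x_4,x_5,x_3,\alpha_5,\alpha_6)$, which explains why its denominator carries $x_3^2$: here $x_3$ plays the role of the ``third'' variable of the second Euler top. The only step that calls for an argument rather than a calculation is the claim that the coupling term $\epsilon\alpha_4(\wx_4x_5+x_4\wx_5)$ in the $x_3$-equation does not interfere, and this is the expected main obstacle. It is resolved by observing that every relation used above is an identity in $(x_1,x_2,x_3,\wx_1,\wx_2,\wx_3)$ that follows from the two chosen equations with $\wx_3$ left unconstrained, so the actual evolution of $x_3$ never enters. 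Should one prefer a purely mechanical check, one may instead invert $A(x,\epsilon)$ to obtain $\wx$ explicitly and verify $f^*H_1(\epsilon)=H_1(\epsilon)$ and $f^*H_3(\epsilon)=H_3(\epsilon)$ by symbolic computation, the manifest evenness of $H_1(\epsilon),H_3(\epsilon)$ in $\epsilon$ providing the customary consistency test.
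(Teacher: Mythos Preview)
The paper does not supply a proof of this proposition; it is one of the many results in Sections~\ref{Sect: weierstrass}--\ref{Sect: dG} for which the authors state that ``the proofs are omitted almost everywhere'' and would be established by direct symbolic verification. Your argument is therefore not a comparison target but stands on its own, and it is correct.

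Your key observation---that the conservation of $H_1(\epsilon)$ is an algebraic consequence of the first two equations of (\ref{02}) alone, with $\wx_3$ treated as a free quantity---is the right structural insight and explains transparently why the coupling term in the $x_3$-equation is irrelevant. The two identities you derive, $\widetilde N-N=\epsilon\alpha_1\alpha_2(\wx_3-x_3)(\wx_1x_2-x_1\wx_2)$ and $\widetilde N x_3^2-N\wx_3^2=\epsilon^{-1}(\wx_3-x_3)(\wx_1x_2-x_1\wx_2)$, are easily checked as you indicate and combine to give $\widetilde N D=N\widetilde D$. The remark that this is equivalent to the dET Wronskian relation $D(\wx_1x_2-x_1\wx_2)+\epsilon N(\wx_3+x_3)=0$ is also correct (it follows by substituting $\widetilde N=N\widetilde D/D$ into the second identity and cancelling the generic factor $\wx_3-x_3$). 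Compared with the brute-force symbolic check the paper has in mind, your approach buys a conceptual explanation: the two Euler-top integrals survive because each uses only the uncoupled pair of equations, a fact that is invisible if one simply inverts $A(x,\epsilon)$ and simplifies.
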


Proposition \ref{ll} gives only two  independent integrals of motion for the map dCET$_2$.
Numerical experiments indicate that the third integral does not exist in general. The situation is different  under condition (\ref{con}). Note that in this case the denominators of the integrals $H_1(\ep)$ and $H_3(\ep)$ coincide.

\begin{prop} \label{ll}
If condition (\ref{con}) holds, then the map dCET$_2$ has in addition to $H_1(\ep)$ and $H_3(\ep)$
also the following conserved quantities
$$
 H_2(\epsilon) = \frac{ \alpha_3  \alpha_5 x_2^2 - \alpha_2 \alpha_5 x_3^2 +
 \alpha_2  \alpha_4x_4^2}{1-   \epsilon^2 \al_1 \al_2 x_3^2},
 $$
 $$
 H_4(\epsilon)=  \frac{\al_5 x_2 x_5 - \al_2 x_1 x_4}{1 - \epsilon^2  \al_1 \al_2 x_3^2}, \qquad
 H_5(\epsilon)=  \frac{\al_5 x_1 x_5 - \al_1 x_2 x_4}{1 - \epsilon^2  \al_1 \al_2 x_3^2} .
 $$
 There are four independent functions among $H_1(\ep),\ldots, H_5(\ep)$.
\end{prop}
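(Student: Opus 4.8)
The plan is to exploit the fact, stressed just before the statement, that under condition (\ref{con}) all five candidates share a \emph{single} denominator. Write $\delta(x,\epsilon)=1-\epsilon^2\alpha_1\alpha_2x_3^2$; by (\ref{con}) this equals $1-\epsilon^2\alpha_5\alpha_6x_3^2$ as well, and one reads off from the statement that each candidate has the form $H_k(\epsilon)=H_k/\delta$ for $k=1,2,3,4,5$, where $H_k$ are the \emph{continuous} integrals of CET$_2$. Since the preceding proposition already gives that $H_1(\epsilon)=H_1/\delta$ is conserved (and is generically nonzero), and since $H_k(\epsilon)=(H_k/H_1)\,H_1(\epsilon)$ with the factor $H_k/H_1$ independent of $\epsilon$, the conservation of $H_k(\epsilon)$ for $k\in\{2,4,5\}$ is \emph{equivalent} to the conservation of the $\epsilon$-independent ratio $H_k/H_1$ by the map dCET$_2$. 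Writing $\wx=f(x,\epsilon)=A^{-1}(x,\epsilon)x$, it thus suffices to prove the polynomial identities
\[
H_k(\wx)\,H_1(x)-H_k(x)\,H_1(\wx)=0,\qquad k=2,4,5,
\]
on the graph of $f$; after substituting $\wx_i=N_i/\Delta$ (with $\Delta=\det A$) the common factor $\Delta^{-2}$ cancels and each becomes an honest polynomial identity in $(x,\epsilon)$. This reframing already explains the role of (\ref{con}): it is precisely what forces the denominators of $H_1(\epsilon)$ and $H_3(\epsilon)$ to coincide, so that a single $\delta$ can serve all five deformations.

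To establish these identities I would follow the mechanism used for dET and dVC$_3$. First I would derive, for each of the two coupled tops $(x_1,x_2,x_3)$ and $(x_3,x_4,x_5)$, the factorized bilinear building blocks analogous to (\ref{eq: dET x+x}) and (\ref{eq: dET Xx-xX}), i.e.\ factorized expressions for $\wx_i+x_i$ and for $\wx_ix_j-x_i\wx_j$. For the two tops taken separately these are literally the Euler-top formulas; the only new feature is the coupling through $x_3$, whose equation in (\ref{02}) carries the extra term $\epsilon\alpha_4(\wx_4x_5+x_4\wx_5)$. For $H_2(\epsilon)$, which involves both tops, I would assemble from these blocks a discrete Wronskian relation (the discrete analogue of the continuous relation behind (\ref{eq: CET x3})) whose right-hand side is built from $H_2(\epsilon)$: one \emph{defines} the candidate by such a relation, computes it explicitly, and observes that it is a manifestly \emph{even} function of $\epsilon$; since the left-hand bilinear combination is antisymmetric under $x\leftrightarrow\wx$, while the reversibility $f^{-1}(x,\epsilon)=f(x,-\epsilon)$ maps the graph of $f$ to itself, evenness forces the defined quantity to take equal values at $x$ and at $\wx$, hence to be conserved. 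For $H_4(\epsilon)$ and $H_5(\epsilon)$ I would instead feed the same factorized blocks directly into the reduced identities above.

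The decisive point where (\ref{con}) enters is the cancellation of the cross-terms coupling the two tops. This mirrors the continuous computation: for $H_4=\alpha_5x_2x_5-\alpha_2x_1x_4$ one finds $\dot H_4=(\alpha_5\alpha_6-\alpha_1\alpha_2)x_2x_3x_4$, which vanishes exactly under (\ref{con}); in the discrete setting I expect the same combination $\alpha_5\alpha_6-\alpha_1\alpha_2$ to appear as the sole non-even, non-cancelling obstruction, and identically for $H_5$. The analogous but heavier bookkeeping handles $H_2$. The main obstacle is precisely here: verifying the ``remarkable and miraculous'' cancellation of all non-even factors in the bilinear identities, and confirming that the coupling obstruction reduces to $\alpha_5\alpha_6-\alpha_1\alpha_2$. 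As elsewhere in this paper this is not guaranteed by any transparent mechanism and, especially for $H_2(\epsilon)$, is most safely checked by symbolic computation.

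Finally, for the independence count: at $\epsilon=0$ one has $\delta\equiv1$, so $H_k(\epsilon)$ reduces to $H_k$, four of which are independent by the preceding proposition; hence the Jacobian of $(H_1(\epsilon),\ldots,H_5(\epsilon))$ has rank $4$ at $\epsilon=0$ and generic $x$, and by lower-semicontinuity of the rank it stays $\ge4$ on an open neighbourhood, yielding four functionally independent integrals. That the rank is not $5$ follows by deforming the relation among the continuous integrals: the $H_k$ are homogeneous quadratics, so the polynomial relation $P(H_1,\ldots,H_5)\equiv0$ supplied by the continuous superintegrability may be taken homogeneous, and substituting $H_k=\delta\,H_k(\epsilon)$ and dividing by the appropriate power of $\delta$ gives $P(H_1(\epsilon),\ldots,H_5(\epsilon))\equiv0$. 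Thus exactly four of the $H_k(\epsilon)$ are independent.
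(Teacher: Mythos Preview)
The paper does not supply a proof of this proposition; as announced in the Introduction, most results are stated without proof, the implicit method being direct symbolic verification of the relevant identities.

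Your approach is sound and in fact adds a structural observation the paper does not spell out: under (\ref{con}) every $H_k(\epsilon)$ equals $H_k/\delta$ with one and the same $\delta$, so conservation of $H_k(\epsilon)$ reduces to conservation of the $\epsilon$-free ratio $H_k/H_1$, i.e.\ to polynomial identities $H_k(\wx)H_1(x)=H_k(x)H_1(\wx)$ on the graph of $f$. This is a genuine simplification over checking each $H_k(\epsilon)$ separately. Your independence argument is also correct: the continuous relation can be written explicitly as $\alpha_2H_5^2-\alpha_1H_4^2+\alpha_5H_1H_3=0$, which is homogeneous of degree $2$ in the $H_k$ and therefore passes to the $H_k(\epsilon)$ after dividing by $\delta^2$, capping the rank at $4$; rank $\ge 4$ follows from the $\epsilon=0$ limit by lower semicontinuity.

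The gap is that the central part remains a plan rather than a proof. You describe building factorized bilinear blocks and a discrete Wronskian relation, but do not carry this out; the reference to ``the continuous relation behind (\ref{eq: CET x3})'' is slightly off, since that is an equation for $\dot x_3^2$ rather than a Wronskian; and the Euler-top formulas (\ref{eq: dET x+x})--(\ref{eq: dET Xx-xX}) do not transfer verbatim because the $x_3$ equation in (\ref{02}) is genuinely coupled. You acknowledge this honestly (``most safely checked by symbolic computation''), and that is precisely the paper's own implicit method, so there is no divergence of approach---only of completeness.
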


 We now present HK bases for the map dCET$_2$.

\begin{prop} \label{thm: interactingtopsfullbasis}
Under condition (\ref{con}), the map dCET$_2$ has the following HK bases.

{\rm (a)} The set $\Phi = (x_1^2,\, x_2^2,\, x_3^2,\, x_4^2,\, x_5^2,\, 1)$ is a HK basis with $\dim K_{\Phi}(x) = 3$.
\smallskip

{\rm (b)} The sets $\Phi_1=(x_1^2,\, x_2^2,\, x_3^2,\, 1)$ and $\Phi_2=(x_3^2,\, x_4^2,\, x_5^2,\, 1)$ are HK bases with
one-dimensional null-spaces. At each point $x\in\bbR^5$ we have: $K_{\Phi_1}(x) = [e_1:e_2:{\epsilon}^2 \alpha_{1 }\alpha_{2 } :-1]$ and $K_{\Phi_2}(x) = [{\epsilon}^2 \alpha_{1 }\alpha_{2 }: f_4:f_5 :-1]$.
The functions $e_i$ and $f_i$ are conserved quantities given by
\begin{eqnarray*}
e_1={\frac {{\alpha_{2}} (1- \ep^2 \al_1 \al_2 x_3^2 ) }{{\alpha_{2 }} x_1^2 -\al_1 x_2^2 }},\qquad
e_2=-{\frac {\alpha_{1} (1- \ep^2 \al_1 \al_2 x_3^2 ) }{{\alpha_{2 }} x_1^2 -\al_1 x_2^2}},
\end{eqnarray*}
and
\begin{eqnarray*}
f_4=\frac {{\alpha_{5}} (1- \ep^2 \al_1 \al_2 x_3^2 ) }{\al_5 x_4^2 -\alpha_4 x_5^2},\qquad
f_5=-\frac {\alpha_{4} (1- \ep^2 \al_1 \al_2 x_3^2 ) }{\al_5 x_4^2 -\alpha_4 x_5^2}.
\end{eqnarray*}
The set $\Phi_3=(x_1^2,\, x_2^2,\, x_3^2,\, x_4^2)$ is a HK basis  with a
one-dimensional null-space. At each point $x\in\bbR^5$ we have: $K_{\Phi_3}(x) = [g_1:g_2:\al_5:-\alpha_4]$.
The functions $g_i$ are conserved quantities given by
\begin{eqnarray*}
g_1= {\frac {\alpha_{3} \al_5 x_2^2 - \al_2 \al_5 x_3^2 +\al_2 \al_4 x_4^2 }{\alpha_{2} x_1^2 -\al_1 x_2^2}} ,\qquad
g_2=- {\frac {\alpha_{3}\alpha_{5 }x_1^2-\al_1 \al_5 x_3^2 +\al_1 \al_4 x_4^2  }{\alpha_{2 } x_1^2 -\al_1 x_2^2  }}.
\end{eqnarray*}
Similar claim hold for the sets $(x_1^2,\, x_2^2,\, x_3^2,\, x_5^2)$, $(x_1^2,\, x_3^2,\, x_4^2,\, x_5^2)$, and  $(x_2^2,\, x_3^2,\, x_4^2,\, x_5^2)$.
\smallskip

{\rm (c)} The set $\Psi = (x_1,\, x_2,\, x_3,\, x_4,\, x_5)$ is a HK basis with $\dim K_{\Psi}(x) = 2$.
\smallskip

{\rm (d)} The sets $\Psi_1=(x_1,\, x_2,\, x_4)$, $\Psi_2=(x_1,\, x_2,\, x_5)$ are HK bases with
one-dimensional null-spaces. At each point $x\in\bbR^5$ we have: $K_{\Psi_1}(x) = [c_1:c_2:-1]$, $K_{\Psi_2}(x) = [d_1:d_2:-1]$.
The functions $c_1,c_2$ and $d_1,d_2$ are conserved quantities given by
$$
c_1=\frac{\al_2 x_1 x_4-\al_5 x_2 x_5}{\alpha_2x_1^2 -\al_1 x_2^2 }, \qquad
c_2=\frac{\al_5 x_1 x_5-\al_1 x_2 x_4}{\alpha_2x_1^2 -\al_1 x_2^2 },
$$
while
$$
d_1=\frac{\al_2}{\al_5} c_2,\qquad
d_2=\frac{\al_1}{\al_5} c_1.
$$
A similar claim holds for the sets $(x_2, x_4,x_5)$ and $(x_1, x_4,x_5)$.
\end{prop}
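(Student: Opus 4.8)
The plan is to deduce every statement from the explicit conserved quantities $H_1(\epsilon),\dots,H_5(\epsilon)$ of Proposition~\ref{ll}, reproducing the mechanism of the proof of Proposition~\ref{Th: dET full basis}. The underlying principle is the following elementary remark: if the entries of a vector $c(x)=(c_1(x),\dots,c_l(x))$ are integrals of dCET$_2$ and satisfy the \emph{identity} $c_1\varphi_1+\dots+c_l\varphi_l\equiv 0$ on $\bbR^5$, then $c(x)\in K_\Phi(x)$. Indeed, evaluating this identity at the point $f^i(x)$ and using $c_j(f^i(x))=c_j(x)$ turns it into $\sum_j c_j(x)\varphi_j(f^i(x))=0$, which is precisely \eqref{eq: fundamental}. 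Thus for each claimed null-space the work splits into (i) checking a short polynomial identity and (ii) recognizing the coordinates as conserved quantities; the correct normalization (the affine representatives with a $-1$, or with the constants $\epsilon^2\alpha_1\alpha_2,\,\alpha_5,\,-\alpha_4$) is already supplied in the statement. Throughout, condition~\eqref{con} is used to guarantee the existence of $H_2(\epsilon),H_4(\epsilon),H_5(\epsilon)$ and to identify the common denominator $1-\epsilon^2\alpha_1\alpha_2x_3^2$ of all five integrals.

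For the one-dimensional null-spaces of (b) and (d) I would simply insert the stated coordinates. For $\Phi_1$ the relation $e_1x_1^2+e_2x_2^2+\epsilon^2\alpha_1\alpha_2x_3^2-1=0$ is immediate from the definitions of $e_1,e_2$, and since $e_1=\alpha_2/H_1(\epsilon)$ and $e_2=-\alpha_1/H_1(\epsilon)$ these are integrals; $\Phi_2$ is the image of $\Phi_1$ under the top-exchange symmetry $(x_1,x_2,\alpha_1,\alpha_2)\leftrightarrow(x_4,x_5,\alpha_5,\alpha_6)$, which fixes $x_3$ and preserves \eqref{con}, so nothing new has to be computed. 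For $\Phi_3$ the combination $g_1x_1^2+g_2x_2^2+\alpha_5x_3^2-\alpha_4x_4^2$ telescopes, since the numerator of $g_1x_1^2+g_2x_2^2$ factors as $(\alpha_2x_1^2-\alpha_1x_2^2)(\alpha_4x_4^2-\alpha_5x_3^2)$; moreover $g_1=H_2(\epsilon)/H_1(\epsilon)$, while $g_2$ is, up to the same denominator $H_1(\epsilon)$, the $x_1\leftrightarrow x_2$ partner of $H_2(\epsilon)$, so both are ratios of integrals. For $\Psi_1$ the decisive cancellation is $c_1x_1+c_2x_2=x_4$, coming from $(\alpha_2x_1x_4-\alpha_5x_2x_5)x_1+(\alpha_5x_1x_5-\alpha_1x_2x_4)x_2=x_4(\alpha_2x_1^2-\alpha_1x_2^2)$, and here $c_1=-H_4(\epsilon)/H_1(\epsilon)$, $c_2=H_5(\epsilon)/H_1(\epsilon)$ are manifestly integrals; $\Psi_2$ follows by the same computation with the roles of $H_4,H_5$ interchanged.

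For the higher-dimensional claims (a) and (c) I would assemble the null-spaces from several such relations. Clearing the common denominator in $H_1(\epsilon),H_2(\epsilon),H_3(\epsilon)$ (and using $\alpha_5\alpha_6=\alpha_1\alpha_2$ for $H_3$) produces three relations among $(x_1^2,\dots,x_5^2,1)$ whose coefficients are integrals; these three vectors are patently independent, since the first is the only one containing $x_1^2$, the second the only remaining one containing $x_2^2$, and the third the only one containing $x_5^2$, whence $\dim K_\Phi(x)\ge 3$. Likewise the relations $x_4=c_1x_1+c_2x_2$ and $x_5=d_1x_1+d_2x_2$ supply two independent integral null-vectors of $\Psi$, so $\dim K_\Psi(x)\ge 2$.

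The main obstacle is the reverse inequality: ruling out any further null-vector. For this I would appeal to the stabilization criterion stated before Proposition~\ref{Lem: dim K}: it is enough to verify that the solution space of the homogeneous system \eqref{eq: fundamental} restricted to a window $i=i_0,\dots,i_0+l-d-1$ already has the asserted dimension $d$ and coincides with the solution space of the once-shifted window (for the one-dimensional cases this is exactly Proposition~\ref{Lem: dim K}). Concretely this reduces to showing that a single matrix $\big[\varphi_j(f^i(x))\big]$ attains its expected rank $l-d$ at a generic point. Since already the second iterate $f^2(x)$ of dCET$_2$ is a bulky rational expression, this rank verification is the genuinely heavy step; I would carry it out by symbolic computation, with numerical evaluation at random rational points to confirm that the generic rank is reached, exactly along the lines advocated in \cite{PPS}.
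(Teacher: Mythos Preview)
Your approach is sound and in complete accord with the paper's methodology; the paper in fact omits the proof of this proposition, so your argument is as close to a ``paper's own proof'' as one can get, mirroring the pattern of Proposition~\ref{Th: dET full basis}. The structure---check the polynomial identity, recognize the coefficients as ratios of the $H_k(\epsilon)$, then bound the null-space dimension from above by a generic-rank computation---is exactly what the paper does throughout.

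Two places to tighten. First, the symmetry you invoke for $\Phi_2$ must also swap $\alpha_3\leftrightarrow\alpha_4$ in order to carry the equations of motion to themselves; include this in your statement of the involution. (Applying the full involution to $e_1,e_2$ actually produces $f_4,f_5$ with $\alpha_6,\alpha_5$ in place of the printed $\alpha_5,\alpha_4$; the printed denominators $\alpha_5x_4^2-\alpha_4x_5^2$ are not conserved under \eqref{con}, so this appears to be a typo in the statement rather than a defect of your method.) Second, your justification that $g_2$ is an integral is a bit informal; the clean identity is
\[
\alpha_2 g_2+\alpha_1 g_1+\alpha_3\alpha_5=0,
\]
obtained by rewriting the numerator of $g_2$ as $(\alpha_3\alpha_5/\alpha_2)H_1+(\alpha_1/\alpha_2)H_2$, which shows directly that $g_2$ is a constant plus a scalar multiple of $g_1$. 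With these two remarks spelled out, your proof is complete.
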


We see that the map dCET$_2$ under condition (\ref{con}) possesses four functionally independent conserved quantities. It might look paradoxical that $\Phi\cup\Psi$ is a HK basis with a 5-dimensional null-space, thus imposing seemingly 5 restrictions on any orbit of the map, which would yield 0-dimensional invariant sets (instead of invariant curves in the continuous time case). The resolution of this paradox is that the five restrictions are functionally dependent on their common set (i.e., along any orbit). In other words, the HK basis $\Phi\cup\Psi$ is not regular. This is the first and the only instance of a non-regular HK basis in this paper.

The map dCET$_2$ possesses, in its (super)-integrable regime, an invariant volume form:
\begin{proposition}\label{th: dIT inv meas}
Under condition (\ref{con}), the map dCET$_2$ preserves the following volume form:
\[
\det\frac{\partial \wx}{\partial x}=\frac{\phi(\wx)}{\phi(x)}\quad\Leftrightarrow\quad
f^*\omega=\omega,\quad \omega=\frac{dx_1\wedge dx_2\wedge dx_3\wedge dx_4\wedge dx_5}{\phi(x)},
\]
with
$\phi(x)= (1-\epsilon^2 \al_1 \al_2 x_3^2)^3$.
\end{proposition}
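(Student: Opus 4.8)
The plan is to base the proof on the universal Jacobian formula \eqref{Jac Deltas}, which holds for dCET$_2$ as for any Kahan discretization. Since the map has the homogeneous form $\wx=f(x,\ep)=A^{-1}(x,\ep)x$, we get
\[
\det\frac{\partial\wx}{\partial x}=\frac{\det A(\wx,-\ep)}{\det A(x,\ep)},
\]
so that the assertion $f^*\omega=\omega$ with $\phi(x)=(1-\ep^2\al_1\al_2x_3^2)^3$ becomes equivalent to the identity $\det A(\wx,-\ep)/\det A(x,\ep)=\phi(\wx)/\phi(x)$ along orbits $\wx=f(x,\ep)$.

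First I would compute $\det A(x,\ep)$, exploiting the favourable block structure of the $5\times5$ matrix $A$: rows $1,2$ vanish in columns $4,5$, and rows $4,5$ vanish in columns $1,2$. A cofactor expansion along column $1$, followed by expansion of the sparse corner built from rows $4,5$ and columns $4,5$, reduces everything to $2\times2$ and $3\times3$ determinants; the corner systematically contributes the factor $1-\ep^2\al_5\al_6x_3^2$. The decisive point is that under condition \eqref{con}, $\al_1\al_2=\al_5\al_6$, this coincides with $K(x):=1-\ep^2\al_1\al_2x_3^2$ and can be pulled out globally, yielding
\[
\det A(x,\ep)=K(x)\,\Delta^*(x,\ep),
\]
with
\begin{align*}
\Delta^*(x,\ep)=\;&1-\ep^2\big(\al_2\al_3x_1^2+\al_1\al_3x_2^2+\al_1\al_2x_3^2+\al_4\al_6x_4^2+\al_4\al_5x_5^2\big)\\
&+2\ep^3\big(\al_1\al_2\al_3x_1x_2x_3+\al_4\al_5\al_6x_3x_4x_5\big).
\end{align*}
I would stress that this divisibility of $\det A$ by $K$ is exactly what condition \eqref{con} buys: without it the two ``top'' blocks contribute the two \emph{distinct} factors $1-\ep^2\al_1\al_2x_3^2$ and $1-\ep^2\al_5\al_6x_3^2$, the clean factorization is lost, and no density of the form $K^3$ can arise. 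This explains why the statement is confined to the superintegrable regime.

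Given this factorization, and noting that $K$ is even in $\ep$ (so $K(\wx,-\ep)=K(\wx)$), the required identity reduces, after cancelling one power of $K$, to
\[
\frac{\Delta^*(\wx,-\ep)}{K(\wx)^2}=\frac{\Delta^*(x,\ep)}{K(x)^2},
\]
i.e. to the invariance of $\Delta^*/K^2$ under the involution $(x,\ep)\mapsto(f(x,\ep),-\ep)$. This last equality is the heart of the matter and the step I expect to be the main obstacle. One cannot dispatch it by the usual ``even and conserved'' shortcut: $\Delta^*$ carries the genuinely odd cubic terms $2\ep^3(\ldots)$, so $\Delta^*/K^2$ is neither even in $\ep$ nor an integral in its own right, and the even/odd split cannot be decoupled because $\wx$ itself depends on $\ep$ in a mixed way.

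I would therefore prove the remaining identity by substituting the explicit rational map $\wx=A^{-1}(x,\ep)x$, clearing the resulting denominators (which are powers of $\det A(x,\ep)=K(x)\Delta^*(x,\ep)$), and verifying the resulting polynomial identity in $x_1,\dots,x_5$ and $\ep$. As elsewhere in the paper, this final check is carried out by symbolic computation; the reversibility $f^{-1}(x,\ep)=f(x,-\ep)$ and the available conserved quantities $H_1(\ep),\dots,H_5(\ep)$ can be used to organize and shorten the verification, but the odd part of $\Delta^*$ prevents the argument from collapsing to a one-line manipulation.
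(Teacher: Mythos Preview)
The paper gives no proof of this proposition; it is one of the many results for which the authors explicitly say ``the proofs are omitted almost everywhere'' and rely on symbolic computation. Your strategy is exactly in the spirit of how the paper handles the analogous statements (e.g., Proposition~\ref{th: dET inv meas} for dET), namely via the universal formula~\eqref{Jac Deltas} followed by a determinant computation and a computer-assisted verification.

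Your intermediate claims are correct. The block-Schur argument indeed gives, under condition~\eqref{con},
\[
\det A(x,\ep)=K(x)\,\Delta^*(x,\ep),\qquad K(x)=1-\ep^2\al_1\al_2 x_3^2,
\]
with $\Delta^*$ as you wrote (up to the sign of the $\ep^3$ terms, which flips with the sign convention for the off-diagonal entries of $A$; the displayed $A$ in the paper carries a harmless typo there). The crucial cancellation is precisely that the $(1,1)$ cofactor of the lower $3\times 3$ block equals $1-\ep^2\al_5\al_6x_3^2$, which collapses to $K(x)$ exactly when $\al_1\al_2=\al_5\al_6$; this is a nice explanation of why~\eqref{con} is needed. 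Your reduction to
\[
\frac{\Delta^*(\wx,-\ep)}{K(\wx)^2}=\frac{\Delta^*(x,\ep)}{K(x)^2}
\]
is also correct, and you are right that the genuinely odd $\ep^3$ piece of $\Delta^*$ blocks any pure parity shortcut, so the last step must be a direct (symbolic) check. That is entirely consistent with the paper's methodology, so your proposal is a valid proof.
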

In this regime, the solutions can be found in terms of elliptic functions, as the following statement shows.
\begin{proposition}\label{th: dITbiquad}
Under condition (\ref{con}), the component $x_3$ of any orbit of the map dCET$_2$ satisfies a relation of the type
\begin{equation} \nonumber
Q(x_3,\widetilde{x}_3)=q_0x_3^2 \widetilde{x}_3^2 +
q_1 x_3 \widetilde{x}_3(x_3+\widetilde{x}_3)
+q_2(x_3^2+\widetilde{x}_3^2) + q_3 x_3\widetilde{x}_3
+q_4(x_3+\widetilde{x}_3) + q_5=0,
\end{equation}
coefficients of the biquadratic polynomial  $Q$  being conserved quantities of dCET$_2$.
\end{proposition}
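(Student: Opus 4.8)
The plan is to mirror the derivation of the biquadratic relation for the Euler top (Proposition~\ref{th: dET biquad}), using as a template the continuous reduction \eqref{eq: CET x3}, which already expresses $\dot{x}_3^2$ as a quartic in $x_3$ with coefficients built from the integrals. The three ingredients I would assemble are: the equation of motion for $x_3$ (the third line of \eqref{02}), which writes $(\wx_3-x_3)/\epsilon$ as a sum of two ``plus-type'' bilinears; a pair of discrete Wronskian relations for the two coupled tops, which convert the corresponding ``minus-type'' bilinears into a multiple of $(x_3+\wx_3)$; and the conserved quantities of Proposition~\ref{ll}, used at the end to eliminate every variable except $x_3,\wx_3$.

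First I would establish the two discrete Wronskian relations
\[
\wx_1x_2-x_1\wx_2=-\epsilon H_1(\epsilon)(x_3+\wx_3),\qquad
\wx_4x_5-x_4\wx_5=-\epsilon H_3(\epsilon)(x_3+\wx_3).
\]
Each follows exactly as in Proposition~\ref{th: dET W}: one computes $\wx_1x_2-x_1\wx_2$ twice, once by substituting the first two equations of \eqref{02} for $\wx_1,\wx_2$ and once by solving them for $x_1,x_2$, obtaining two expressions whose difference fixes the mixed term $\alpha_1x_2\wx_2-\alpha_2x_1\wx_1$ in terms of $H_1(\epsilon)$; back-substitution gives the stated relation. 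Crucially, this computation uses only the (uncoupled) equations for $x_1,x_2$ (resp.\ $x_4,x_5$) together with conservation of $H_1(\epsilon)$ (resp.\ $H_3(\epsilon)$), so these Wronskians hold with no restriction on the parameters.

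Next I would square the $x_3$-equation. Writing $S_+=\alpha_3(\wx_1x_2+x_1\wx_2)+\alpha_4(\wx_4x_5+x_4\wx_5)=(\wx_3-x_3)/\epsilon$ and $S_-=\alpha_3(\wx_1x_2-x_1\wx_2)+\alpha_4(\wx_4x_5-x_4\wx_5)$, the Wronskians give $S_-=-\epsilon(\alpha_3H_1(\epsilon)+\alpha_4H_3(\epsilon))(x_3+\wx_3)$, and the identity $(a+b)^2+(a-b)^2=2a^2+2b^2$ yields
\[
\frac{(\wx_3-x_3)^2}{\epsilon^2}+\epsilon^2\big(\alpha_3H_1(\epsilon)+\alpha_4H_3(\epsilon)\big)^2(x_3+\wx_3)^2=S_+^2+S_-^2,
\]
where
\[
S_+^2+S_-^2=2\alpha_3^2\big(\wx_1^2x_2^2+x_1^2\wx_2^2\big)+2\alpha_4^2\big(\wx_4^2x_5^2+x_4^2\wx_5^2\big)+4\alpha_3\alpha_4\big(\wx_1\wx_4\,x_2x_5+x_1x_4\,\wx_2\wx_5\big).
\]
It remains to reduce the right-hand side to a function of $x_3,\wx_3$ alone. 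Here condition~\eqref{con} enters decisively: it is precisely under \eqref{con} that the map acquires the four independent integrals $H_1(\epsilon),H_2(\epsilon),H_3(\epsilon),H_4(\epsilon)$ of Proposition~\ref{ll}, including the bilinear $H_4(\epsilon),H_5(\epsilon)$ needed to rewrite the mixed products $x_1x_4$ and $x_2x_5$ in the last term. Using the common denominator $1-\epsilon^2\alpha_1\alpha_2x_3^2$ of these integrals, each square $x_1^2,\dots,x_5^2$ and each product $x_ix_j$ above should collapse, as in the Euler case, into expressions depending only on $x_3^2$ (resp.\ $\wx_3^2$) and the conserved values.

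The main obstacle is exactly this final elimination. Unlike the Euler top, where two integrals and $x_i$ already determine $x_j^2,x_k^2$, here the three ``square'' integrals $H_1,H_2,H_3$ leave one square undetermined, so the cross term $\wx_1\wx_4\,x_2x_5+x_1x_4\,\wx_2\wx_5$ must be handled through $H_4(\epsilon),H_5(\epsilon)$, and it is not a priori clear that all the genuinely five-variable dependence cancels. As for every comparable statement in this paper (and as already stressed for Proposition~\ref{th: dET biquad}), this cancellation is remarkable and unexplained, and I would verify it by symbolic computation. Once it does cancel, the left-hand side is already biquadratic in $(x_3,\wx_3)$, so collecting terms gives $Q(x_3,\wx_3)=0$ with coefficients $q_0,\dots,q_5$ expressed through $H_1(\epsilon),\dots,H_4(\epsilon)$; these are $O(\epsilon^2)$-deformations of the coefficients of the quartic in \eqref{eq: CET x3}, and, being built from conserved quantities, are themselves conserved.
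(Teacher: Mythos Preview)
Your approach is sound and is precisely the template the paper itself uses for Proposition~\ref{th: dET biquad} (and explicitly invokes again for Proposition~\ref{th: dZV biquad}). The paper omits the proof of Proposition~\ref{th: dITbiquad} altogether, remarking only that it is the proper discretization of \eqref{eq: CET x3}; your derivation---equation of motion for $x_3$, two Wronskian relations $\wx_1x_2-x_1\wx_2=-\epsilon H_1(\epsilon)(x_3+\wx_3)$ and $\wx_4x_5-x_4\wx_5=-\epsilon H_3(\epsilon)(x_3+\wx_3)$, square-and-add, then eliminate via the conserved quantities of Proposition~\ref{ll}---is exactly what that omitted proof should look like.

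One small overstatement worth tightening: you write that ``each square $x_1^2,\dots,x_5^2$ \ldots\ should collapse \ldots\ into expressions depending only on $x_3^2$''. With only three ``square'' integrals $H_1(\epsilon),H_2(\epsilon),H_3(\epsilon)$ against four unknown squares, this cannot happen term-by-term. The extra leverage comes from the \emph{linear} HK basis $\Psi$ of Proposition~\ref{thm: interactingtopsfullbasis}(c,d), equivalently from $H_4(\epsilon),H_5(\epsilon)$: the relations $x_4=c_1x_1+c_2x_2$, $x_5=d_1x_1+d_2x_2$ (with $c_i,d_i$ built from $H_1(\epsilon),H_4(\epsilon),H_5(\epsilon)$) reduce everything on the right-hand side to polynomials in $x_1,x_2$ (resp.\ $\wx_1,\wx_2$), after which $H_1(\epsilon)$ and $H_2(\epsilon)$ suffice to finish the elimination. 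You do invoke $H_4(\epsilon),H_5(\epsilon)$ for the cross term, so this is only a matter of exposition; the symbolic verification you propose would of course confirm the cancellation.
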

This statement is a proper discretization of eq. (\ref{eq: CET x3}).

\section{Three wave system}
\label{Sect: d3W}

The three wave interaction system of ordinary differential equations is
\cite{ALMR}:
\beq
\left\{
\begin{array}{l}
\dot z_1  =  \ri  \alpha_1 \bar z_2 \bar z_3, \vspace{.1truecm} \\
\dot z_2  =   \ri \alpha_2 \bar z_3 \bar z_1 , \vspace{.1truecm}   \\
\dot z_3  =   \ri \alpha_3 \bar z_1 \bar z_2.
\end{array}
\right. \label{flow1}
\eeq
Here $z = (z_1,z_2,z_3)\in \mathbb{C}^3$, while the parameters $\alpha_i$ of the system are supposed to be real numbers. If $(i,j,k)$ stands for any cyclic permutation of (123), then we can write system (\ref{flow1}) in the abbreviated form
\beq \label{flow1per}
\dot z_i  =   \ri \alpha_i \bar z_j \bar z_k,
\eeq
Writing $z_i = x_i + \ri y_i$, $i=1,2,3$, we put system (\ref{flow1per}) into the form
\beq  \label{flow1realper}
\left\{
\begin{array}{l}
\dot x_i  = \alpha_i (x_j y_k + y_j x_k), \vspace{.1truecm} \\
\dot y_i  =  \alpha_i (x_j x_k - y_j y_k) . \vspace{.1truecm}
\end{array}
\right.
\eeq
System (\ref{flow1realper}) is completely integrable and can be integrated in terms of elliptic functions. It has three independent integrals of motion: quadratic ones,
\beq
H_{i} = \alpha_j | z_k |^2 - \alpha_k | z_j |^2, \nn
\eeq
among which there are only two independent ones because of $\alpha_1 H_{1} + \alpha_2 H_{2} + \alpha_3 H_{3}=0$, and a cubic one,
\beq
K = \frac{1}{2} \left( z_1 z_2 z_3 + \bar z_1 \bar z_2 \bar z_3 \right) =
\mathfrak{Re} (z_1 z_2 z_3). \nn
\eeq

The HK discretization of system (\ref{flow1per}) reads
\beq
\tilde z_i - z_i =   \ri\ep\alpha_i(\bar z_j  \tilde{\bar{z}}_k +
\tilde{\bar{z}}_j \bar z_k),\nn
\eeq
or, in the real variables $(x_i,y_i)$,
\beq  \nn
\left\{
\begin{array}{l}
\wx_i-x_i=\ep\alpha_i(x_j\wy_k+\wx_j y_k + y_j\wx_k + \wy_j x_k), \vspace{.1truecm} \\
\wy_i-y_i=\ep\alpha_i(x_j\wx_k+\wx_j x_k - y_j\wy_k - \wy_j y_k). \vspace{.1truecm}
\end{array}
\right.
\eeq
In the matrix form, this can be put as
$$
A(x,y, \ep) \begin{pmatrix} \wx \\ \wy\end{pmatrix} = \begin{pmatrix} x \\ y \end{pmatrix}\quad
\Leftrightarrow\quad \begin{pmatrix} \wx \\ \wy\end{pmatrix}=f (x,y , \ep) = A^{-1}(x,y , \ep)
\begin{pmatrix} x \\ y \end{pmatrix},
$$
where
$$
A(x ,y, \ep)= \begin{pmatrix}
1 & -\ep\alpha_1 y_3 & -\ep\alpha_1 y_2 & 0 & -\ep\alpha_1 x_3 & - \ep\alpha_1 x_2\\
-\ep\alpha_2 y_3 & 1 & -\ep\alpha_2 y_1 & -\ep\alpha_2 x_3 & 0   & -\ep\alpha_2 x_1\\
-\ep\alpha_3 y_2 & -\ep\alpha_3 y_1 & 1 & -\ep\alpha_3 x_2 & -\ep\alpha_3 x_1 & 0\\
0 & -\ep \alpha_1 x_3 & -\ep\alpha_1 x_2 & 1 & \ep\alpha_1 y_3 &  \ep\alpha_1 y_2\\
-\ep\alpha_2 x_3  & 0 & -\ep\alpha_2 x_1 & \ep\alpha_2 y_3  & 1  & \ep\alpha_2 y_1\\
-\ep\alpha_3 x_2  & -\ep\alpha_3 x_1 & 0 & \ep\alpha_3 y_2  & \ep\alpha_3 y_1  & 1
\end{pmatrix}.
$$
The birational map $f: \mathbb{R}^6 \rightarrow \mathbb{R}^6$ will be called d3W hereafter.

\begin{prop} \label{th1}  The map d3W has three independent conserved quantities, namely, any two of
\beq
H_{i} (\ep)=\frac{\alpha_j | z_k |^2 - \alpha_k | z_j |^2}{1 - \ep^2
\alpha_j \alpha_k  | z_i |^2 },\nn
\eeq
supplied with any one of
\beq
K_i(\ep) = \frac{\mathfrak{Re}(z_1z_2z_3) (1- \ep^2 \alpha_k \alpha_i
|z_j|^2)(1- \ep^2 \alpha_i \alpha_j |z_k|^2)}
{\Delta(z, \bar z,\ep)} ,  \nn
\eeq
where
\bea
\Delta (z, \bar z, \ep)& =& \det A(x,y,\ep)\ =\ 1 - 2 \ep^2 (\alpha_2 \alpha_3 | z_1 |^2 +
\alpha_3 \alpha_1 | z_2 |^2 + \alpha_1 \alpha_2 | z_3 |^2)   \nonumber\\
&& +\,\ep^4 (\alpha_2 \alpha_3 | z_1 |^2 + \alpha_3 \alpha_1 | z_2 |^2 +
\alpha_1 \alpha_2 | z_3 |^2)^2 -4 \ep^6 \alpha_1^2 \alpha_2^2 \alpha_3^2
| z_1 |^2  | z_2 |^2  | z_3 |^2 . \nn
\eea
\end{prop}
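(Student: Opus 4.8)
The plan is to prove the two families of conserved quantities separately, leaning heavily on the structural parallel with the Euler top of Section~\ref{Sect: dET}. A useful orientation comes from the continuous system: writing $p_i=|z_i|^2$ and $q=\mathfrak{Im}(z_1z_2z_3)$, the flow \eqref{flow1per} induces the closed reduced system $\dot p_i=2\alpha_i q$, $\dot q=\alpha_1p_2p_3+\alpha_2p_3p_1+\alpha_3p_1p_2$, while $K=\mathfrak{Re}(z_1z_2z_3)$ obeys $K^2=p_1p_2p_3-q^2$ and is conserved. This makes transparent why the quadratic integrals $H_i(\ep)$ depend on the $|z_i|^2$ only, as does $\Delta$, which is a function of the $|z_i|^2$ alone and even in $\ep$, since $\Delta=(1-\ep^2 S)^2-4\ep^6\alpha_1^2\alpha_2^2\alpha_3^2|z_1|^2|z_2|^2|z_3|^2$ with $S=\alpha_2\alpha_3|z_1|^2+\alpha_3\alpha_1|z_2|^2+\alpha_1\alpha_2|z_3|^2$, and why the single genuinely new ingredient of $K_i(\ep)$ is the cubic $\mathfrak{Re}(z_1z_2z_3)$. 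Throughout I would use the reversibility $f^{-1}(x,y,\ep)=f(x,y,-\ep)$ together with the evenness criterion behind \eqref{eq: smaller system}: a quantity obtained as the unique solution of a suitable non-homogeneous HK system over a non-symmetric index range containing $0$, and which is even in $\ep$, is automatically conserved.

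For the quadratic integrals I would copy the dET arguments of Propositions~\ref{th: dE F}, \ref{th: dET W} and \ref{Th: dET full basis} under the substitution $x_i^2\mapsto|z_i|^2$. Starting from the explicit form $f=A^{-1}(x,y,\ep)(x,y)^{\mathrm{T}}$, I would compute closed rational expressions for the moduli $|\wz_i|^2$ and establish discrete Wronskian-type relations, analogous to \eqref{eq: dET x+x}--\eqref{eq: dET Xx-xX}, from which each $H_i(\ep)$ is read off as a manifestly even function of $\ep$; conservation then follows from the criterion above, and one checks in the same stroke that $\Phi_0=(|z_1|^2,|z_2|^2,|z_3|^2,1)$ is an HK basis. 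As in the dET cancellation \eqref{eq: dET for cancel}, the decisive and ``miraculous'' feature is that in forming the relevant ratios the determinant $\Delta$ and every odd-in-$\ep$ factor drop out. The same computation yields conservation of the pure modulus ratios $\big(\alpha_j|z_k|^2-\alpha_k|z_j|^2\big)/\big(\alpha_1|z_2|^2-\alpha_2|z_1|^2\big)$ and of the ``$F$-type'' ratios $(1-\ep^2\alpha_k\alpha_i|z_j|^2)/(1-\ep^2\alpha_i\alpha_j|z_k|^2)$, the analogues of Proposition~\ref{th: dE F}. Only two of the $H_i(\ep)$ are independent, the third relation being an $O(\ep^4)$-deformation of $\alpha_1H_1+\alpha_2H_2+\alpha_3H_3=0$.

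The cubic integral $K_i(\ep)$ is where I expect the real difficulty, as no low-dimensional monomial HK basis delivers it. Here I would compute $\mathfrak{Re}(\wz_1\wz_2\wz_3)$ from the explicit map and show that it equals $\mathfrak{Re}(z_1z_2z_3)$ times exactly the ratio of modulus-factors that renders $K_i(\ep)$ invariant; equivalently, that $K_i(\ep)$ arises as the unique even-in-$\ep$ solution of an HK system built from $\mathfrak{Re}(z_1z_2z_3)$ and quadratic monomials in the $|z_j|^2$, so that the evenness criterion again applies. Because $K_i(\ep)$ and $\Delta$ are already even in $\ep$, only the pairing of $f$ with $f^{-1}=f|_{\ep\to-\ep}$ is required, keeping the number of iterates minimal; even so, the cancellations involved are of the unexplained type stressed after \eqref{eq: dET for cancel}, and in practice this step would be discharged by symbolic computation. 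Finally, the three displayed $K_i(\ep)$ differ pairwise by ratios of the already-conserved $F$-type factors, so any single one suffices; and the functional independence of, say, $H_1(\ep),H_2(\ep),K_1(\ep)$ follows by continuity from the limit $\ep\to0$, where they reduce to the independent integrals $H_1,H_2,K$ of the continuous three-wave system, so that their Jacobian is nondegenerate for small $\ep$ and hence generically.
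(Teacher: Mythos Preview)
The paper gives no proof of this proposition; it is one of the many results for which, as announced in the introduction, ``the proofs are omitted almost everywhere'' and are to be understood as symbolic verifications. The intended method is made explicit by the very next proposition in Section~\ref{Sect: d3W}, which exhibits the HK bases $\Phi_i=(|z_j|^2,\,|z_k|^2,\,1)$ and $\Psi_i=(\mathfrak{Re}(z_1z_2z_3),\,|z_i|^2,\,1)$ with one-dimensional null-spaces; the coefficients $d_1,d_2$ and $e_1,e_2$ listed there are manifestly even in $\ep$ and encode precisely $H_i(\ep)$ and $K_i(\ep)$. Your plan---transport the dET argument of Proposition~\ref{Th: dET full basis} via $x_i^2\mapsto|z_i|^2$ for the quadratic integrals, then treat the cubic integral through an HK system containing $\mathfrak{Re}(z_1z_2z_3)$, invoking the evenness criterion throughout, and deduce independence from the $\ep\to 0$ limit---is exactly this route, so there is nothing of substance to contrast.

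One small caution on phrasing. Your first formulation for the cubic step, that $\mathfrak{Re}(\wz_1\wz_2\wz_3)$ equals $\mathfrak{Re}(z_1z_2z_3)$ times a factor built from the $|z_j|^2$ alone, is stronger than what conservation of $K_i(\ep)$ asserts and need not hold as a pointwise identity: the symmetry $z_i\mapsto\bar z_i$, $\ep\mapsto-\ep$ of d3W only forces $\mathfrak{Re}(\wz_1\wz_2\wz_3)$ to be even under the simultaneous sign change of $\ep$ and $\mathfrak{Im}(z_1z_2z_3)$, so terms like $\ep\,\mathfrak{Im}(z_1z_2z_3)$ may appear and cancel only in the full combination $K_i(\ep)$. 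Your alternative ``HK system'' formulation is the safe one and is literally the paper's $\Psi_i$ basis.
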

\begin{prop}
The map d3W possesses an invariant volume form:
$$
\det \frac{\partial \tilde z}{\partial z} = \frac{\phi(\tilde z)}{\phi(z)}\quad\Leftrightarrow\quad
f^*\omega=\omega,\quad \omega=\frac{dx_1\wedge dx_2\wedge dx_3\wedge dy_1\wedge dy_2\wedge dy_3}{\phi(z)},
$$
where
$\phi(z) = \Delta(z, \bar z,\ep)$.
\end{prop}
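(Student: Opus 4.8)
The plan is to reduce the statement to the general Kahan Jacobian identity (\ref{Jac Deltas}), exactly as was done for the Euler top in Proposition \ref{th: dET inv meas} and for dZV in Proposition \ref{th: dZV inv meas}. The map d3W is the Kahan discretization of the real form (\ref{flow1realper}) of the three-wave system, whose vector field is purely quadratic; in the real coordinates $w=(x_1,x_2,x_3,y_1,y_2,y_3)$ its defining equations read $A(x,y,\ep)\,\tilde w=w$ with the $6\times6$ matrix $A$ displayed before the statement. As for any Kahan discretization, formula (\ref{Jac Deltas}) applies: differentiating $\tilde w-w=\ep\,Q(w,\tilde w)$, with $Q$ symmetric and bilinear in its two arguments (up to the stepsize convention), with respect to the components of $w$ yields the columns of the identity $A(w,\ep)\,\partial\tilde w/\partial w=A(\tilde w,-\ep)$, the two blocks $I-\ep\,\partial_{\tilde w}Q=A(w,\ep)$ and $-I-\ep\,\partial_w Q=-A(\tilde w,-\ep)$ appearing by symmetry of $Q$. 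Taking determinants gives
\[
\det\frac{\partial\tilde w}{\partial w}=\frac{\det A(\tilde w,-\ep)}{\det A(w,\ep)},\qquad \det A(w,\ep)=\Delta(z,\bar z,\ep)=\phi(z).
\]

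The only substantive point is then that $\phi(z)=\det A(w,\ep)$ is an even function of $\ep$: this turns $\det A(\tilde w,-\ep)$ into $\det A(\tilde w,\ep)=\phi(\tilde z)$, so the Jacobian equals $\phi(\tilde z)/\phi(z)$, which is precisely $f^*\omega=\omega$. One route is simply to read off the explicit expansion of $\Delta(z,\bar z,\ep)$ recorded in Proposition \ref{th1}, which contains only the powers $\ep^0,\ep^2,\ep^4,\ep^6$. A more structural route, mirroring the skew-symmetry argument in the proof of Proposition \ref{th: ddress type}, is to note the block form
\[
A(x,y,\ep)=\begin{pmatrix} \mathds{1}-\ep Y & -\ep X\\ -\ep X & \mathds{1}+\ep Y\end{pmatrix},
\]
with $X,Y$ the off-diagonal $3\times3$ matrices built from the coordinates $x_i$, resp.\ $y_i$ (and the weights $\alpha_i$), and to observe that conjugation by the complex structure $J=\left(\begin{smallmatrix} 0 & \mathds{1}\\ -\mathds{1} & 0\end{smallmatrix}\right)$ (multiplication by $\ri$ under $z=x+\ri y$) carries $A(x,y,\ep)$ into $A(x,y,-\ep)$, forcing $\det A(w,\ep)=\det A(w,-\ep)$.

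I expect no real obstacle: the general mechanism (\ref{Jac Deltas}) supplies the Jacobian for free, and the mild point—the evenness of $\Delta$—follows either by inspection of the formula in Proposition \ref{th1} or by the one-line conjugation argument. Thus the whole proof is a specialization of the pattern already used for dET and dZV, the only system-specific ingredient being the parity in $\ep$ of the determinant $\Delta$.
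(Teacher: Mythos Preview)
Your argument is correct and is precisely the approach the paper uses for the analogous invariant-volume statements (Propositions \ref{th: ddress type}, \ref{th: dLV type}, \ref{th: dET inv meas}); the paper omits an explicit proof here, but the intended mechanism is exactly the one you invoke: the general Kahan Jacobian identity (\ref{Jac Deltas}) together with the evenness of $\det A$ in $\ep$. Your conjugation argument $J^{-1}A(x,y,\ep)J=A(x,y,-\ep)$ is a clean structural explanation of that evenness (which the paper only records via the explicit expansion of $\Delta$ in Proposition \ref{th1}); the minor sign in identifying $J$ with multiplication by $\pm\ri$ is immaterial.
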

Next, we give the results on the HK bases for the map d3W, which yield a complete set of integrals of motion.
\begin{prop}
${}$

{\rm (a)}
The sets
$
\Phi_i=(|z_j|^2,\, |z_k|^2,\, 1), \, i=1,2,3,
$
are HK bases for the map d3W with $\dim K_{\Phi_i}(z)=1$. At each point $z\in \mathbb{C}^3$ there holds:
$
K_{\Phi_i}(z)=[d_1 : d_2 : -1],
$
where the coefficients
\begin{equation}\nn
d_1(z)=\frac{\alpha_k (1-  \ep^2 \alpha_i \alpha_j |z_k|^2)}
{\alpha_k |z_j|^2 - \alpha_j |z_k|^2}\,, \qquad
d_2(z)=- \frac{\alpha_j (1- \ep^2\alpha_k \alpha_i |z_j|^2)}
{\alpha_k |z_j|^2 - \alpha_j |z_k|^2}\,,
\end{equation}
are integrals of motion of the map d3W. They are functionally dependent because of
$\alpha_j d_1(z) + \alpha_k d_2(z) =  \ep^2 \alpha_1 \alpha_2 \alpha_3$.
\smallskip

{\rm (b)}
The sets
$
\Psi_i=( \mathfrak{Re}(z_1 z_2 z_3 ),\, |z_i|^2,\, 1), \, i=1,2,3,
$
are HK bases for the map d3W with $\dim K_{\Psi_i}(z)=1$. At each point $z\in \mathbb{C}^3$ there holds:
$
K_{\Psi_i}(z)=[e_1 : e_2 : -1],
$
where the coefficients
\bea
&& e_1(z)=- \frac{\Delta(z,\bar z,\ep)}
{\mathfrak{Re}(  z_1 z_2  z_3) \big(1  - \ep^2 \left(
-\alpha_j \alpha_k | z_i |^2 + \alpha_k \alpha_i | z_j |^2 + \alpha_i
\alpha_j | z_k |^2\right) \big)^2}\,, \nonumber \\
&& e_2(z)=  \frac{ 4 \alpha_j \alpha_k \ep^2 (1- \ep^2 \alpha_k \alpha_i
|z_j|^2)(1- \ep^2 \alpha_i \alpha_j |z_k|^2)}
{\big(1  - \ep^2 \left(
-\alpha_j \alpha_k | z_i |^2 + \alpha_k \alpha_i | z_j |^2 + \alpha_i
\alpha_j | z_k |^2\right)\big)^2}, \nonumber
\eea
are independent integrals of motion of the map d3W.
\end{prop}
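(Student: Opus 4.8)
The plan is to handle both families exactly as the monomial bases for the Euler top in Proposition~\ref{Th: dET full basis}. Each of the sets $\Phi_i$ and $\Psi_i$ consists of $l=3$ functions, the last of which is the constant $1$; normalizing the corresponding coefficient to $-1$ turns the defining relation (\ref{eq: fundamental}) into a non-homogeneous $2\times 2$ linear system once it is imposed at the two consecutive iterates $m=0$ and $m=1$. The index range $\{0,1\}$ contains $0$ but is not symmetric, so the plan is to show that the (unique) Cramer solution is an \emph{even} function of $\ep$; the reversibility $f^{-1}(z,\ep)=f(z,-\ep)$ then propagates the relation to $m=-1$, and three consecutive iterates together with Proposition~\ref{Lem: dim K} yield that the null-space is one-dimensional, that $\Phi_i$ (resp.\ $\Psi_i$) is a HK basis, and that the solution components are integrals of motion.

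For part (a) I would first record the bilinear ``helper'' identities for d3W that play the role of (\ref{eq: dET x+x})--(\ref{eq: dET Xx-xX}) and of the determinant formula (\ref{eq: dET for cancel}). I expect factorizations of the form
\[
|\wz_m|^2-|z_m|^2=\frac{\ep\,\alpha_m\,(1-\ep^2\alpha_p\alpha_q|z_m|^2)\,\Omega}{\Delta^2},\qquad
|z_j|^2|\wz_k|^2-|z_k|^2|\wz_j|^2=\frac{\ep\,(\alpha_k|z_j|^2-\alpha_j|z_k|^2)\,\Omega}{\Delta^2},
\]
where $\{p,q\}=\{1,2,3\}\setminus\{m\}$ and $\Omega=\Omega(z,\bar z,\ep)$ is a common factor that is even in $\ep$ (up to a harmless numerical constant). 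Cramer's rule gives $d_1$ and $d_2$ as ratios of such $2\times2$ determinants, so both $\Omega$ and the power of $\Delta$ cancel and one is left precisely with the stated expressions, now manifestly even in $\ep$. The functional dependence is then the one-line identity
\[
\alpha_j d_1+\alpha_k d_2=\frac{\alpha_i\alpha_j\alpha_k\,\ep^2\,(\alpha_k|z_j|^2-\alpha_j|z_k|^2)}{\alpha_k|z_j|^2-\alpha_j|z_k|^2}=\ep^2\alpha_1\alpha_2\alpha_3.
\]

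Part (b) follows the same route with $\varphi_1=\mathfrak{Re}(z_1z_2z_3)$ and $\varphi_2=|z_i|^2$. Here one must evaluate $\mathfrak{Re}(\wz_1\wz_2\wz_3)$ and form the two determinants $\mathfrak{Re}(z_1z_2z_3)\,|\wz_i|^2-|z_i|^2\,\mathfrak{Re}(\wz_1\wz_2\wz_3)$ and $|\wz_i|^2-|z_i|^2$; after the common factors cancel one reads off $e_1$ and $e_2$. A good consistency check, which also settles functional independence, is that with the quantities of Proposition~\ref{th1} one has $e_2/e_1=-4\alpha_j\alpha_k\ep^2K_i(\ep)$, so the pair encodes the cubic integral $K_i(\ep)$ as its ratio while $e_1$ itself carries a factor $\Delta/\mathfrak{Re}(z_1z_2z_3)$ and is not a function of $K_i(\ep)$ alone; independence can alternatively be confirmed by a direct check that the Jacobian of $(e_1,e_2)$ has rank two.

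The only real difficulty is computational and concentrated in part (b). In part (a) the functions are quadratic, so the helper identities and the cancellation of all odd-in-$\ep$ contributions are short to verify by hand. In part (b), by contrast, $\mathfrak{Re}(\wz_1\wz_2\wz_3)$ is a bulky rational function of $(x,y,\ep)$, and the vanishing of every odd-in-$\ep$ term in the Cramer ratios is exactly the ``remarkable and miraculous'' cancellation observed after (\ref{eq: dET for cancel}), for which we have no structural explanation. In practice this step, like the analogous verifications elsewhere in the paper, is carried out with a computer algebra system; the conceptual content of the proof is entirely contained in the even-in-$\ep$ criterion combined with reversibility.
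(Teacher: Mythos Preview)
The paper does not supply a proof of this proposition; like most results in the overview, it is stated without argument. Your proposal is correct and is precisely the method the paper sets up in Section~\ref{Sect: HK mechanism} and carries out in detail for dET in Proposition~\ref{Th: dET full basis}: solve the non-homogeneous $2\times 2$ system at iterates $0$ and $1$, verify that the Cramer solution is even in $\ep$, and invoke the reversibility criterion (the proposition immediately following Proposition~\ref{Lem: dim K}) to conclude that $\Phi_i$ (resp.\ $\Psi_i$) is a HK basis with one-dimensional null-space and that the components are integrals.

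One small remark: you describe the common factor $\Omega$ as ``even in $\ep$'', but this is neither needed nor, by analogy with the dET computation after~(\ref{eq: dET for cancel}), likely to be true. There the common factor $(x_1+\ep\alpha_1x_2x_3)(x_2+\ep\alpha_2x_3x_1)(x_3+\ep\alpha_3x_1x_2)$ is decidedly non-even; the point is only that it \emph{cancels} in the ratio, leaving something manifestly even. Your displayed helper identities already make the cancellation and the resulting evenness of $d_1,d_2$ transparent, so the parenthetical about $\Omega$ can simply be dropped. The rest of the outline, including the acknowledgement that part~(b) is a symbolic computation in the spirit of the other computer-assisted verifications in the paper, is sound.
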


\section{Lagrange top}
\label{Sect: dL}

Lagrange top was the second integrable system, after Euler top, to which the HK discretization was successfully applied \cite{KH}. We reproduce and re-derive here the results of that paper, and add some new results.

Equations of motion of the Lagrange top are of the general Kirchhoff type:
\begin{equation}
\label{eq: Kirch}
\left\{\begin{array}{l}
\dot{m}=m\times \nabla_m H 
        +p\times \nabla_p H , \vspace{.1truecm}\\
\dot{p}=p\times \nabla_m H ,
\end{array}\right.
\end{equation}
where $m=(m_1,m_2,m_3)^{\rm T}$ and $p=(p_1,p_2,p_3)^{\rm T}$. Any Kirchhoff type system is Hamiltonian
with the Hamilton function $H=H(m,p)$ with respect to the Lie-Poisson bracket on $\mathfrak{e}(3)^*$,
\begin{equation}\nn
\{m_i,m_j\}=\epsilon_{ijk} m_k,\qquad \{m_i,p_j\} = \epsilon_{ijk} p_k, \qquad \{p_i,p_j\} = 0,
\end{equation}
and admits the Hamilton function $H$ and the Casimir functions
\begin{equation}
\label{eq: e3 Cas}
C_1=p_1^2+p_2^2+p_3^2,  \qquad
C_2=m_1p_1+m_2p_2+m_3p_3,
\end{equation}
as integrals of motion. For the complete integrability of a Kirchhoff type system, it should admit a fourth independent integral of motion.

The Hamilton function of the Lagrange top (LT) is $H=H_1/2$, where
\beq \nn
H_1=m_1^2+m_2^2+\alpha m_3^2+2\gamma p_3.
\eeq
Thus, equations of motion of LT read
\beq  \label{eq: lagrange}
\left\{ \begin{array} {l}
\dot{m}_1= (\alpha-1)m_2m_3 + \gamma p_2, \vspace{.1truecm} \\
\dot{m}_2= (1-\alpha)m_1m_3 - \gamma p_1,  \vspace{.1truecm}  \\
\dot{m}_3 = 0,  \vspace{.1truecm}  \\
\dot{p}_1 =\alpha p_2m_3-p_3m_2,  \vspace{.1truecm}  \\
\dot{p}_2 = p_3m_1-\alpha p_1m_3,  \vspace{.1truecm} \\
\dot{p}_3 = p_1m_2-p_2m_1.
\end{array} \right.
\eeq
It follows immediately that the fourth integral of motion is simply
$
    H_2=m_3.
$
Traditionally, the explicit integration of the LT in terms of elliptic functions starts with the following observation: the component $p_3$ of the solution satisfies the differential equation
\begin{equation}
\label{eq: LT dotp3}
\dot{p}_3^2=P_3(p_3),
\end{equation}
with a cubic polynomial $P_3$ whose coefficients are expressed through integrals of motion:
$$
P_3(p_3)=(H_1-\alpha m_3^2-2\gamma p_3)(C_1-p_3^2)-(C_2-m_3p_3)^2.
$$

We mention also the following Wronskian relation which follows easily from equations of motion:
\begin{equation}\label{eq: LT W}
(\dot{m}_1p_1-m_1\dot{p}_1)+(\dot{m}_2p_2-m_2\dot{p}_2)+(2\alpha-1)(\dot{m}_3p_3-m_3\dot{p}_3)=0.
\end{equation}

Applying the HK discretization scheme to eqs. (\ref{eq: lagrange}), we obtain the following discrete system:
\beq  \nn
\label{eq: dlt}
\left\{ \begin{array} {l}
\widetilde{m}_1-m_1=  \epsilon(\alpha-1)(\widetilde{m}_2m_3 + m_2\widetilde{m}_3) + \epsilon\gamma(p_2+\widetilde{p}_2) ,     \vspace{.1truecm} \\
\widetilde{m}_2-m_2 =  \epsilon(1-\alpha)(\widetilde{m}_1m_3 + m_1\widetilde{m}_3) - \epsilon\gamma(p_1+\widetilde{p}_1)  , \vspace{.1truecm} \\
\widetilde{m}_3-m_3 = 0, \vspace{.1truecm}  \\
\widetilde{p}_1-p_1 = \epsilon \alpha (p_2\widetilde{m}_3+\widetilde{p}_2m_3) - \epsilon  ( p_3\widetilde{m}_2+\widetilde{p}_3m_2) ,  \vspace{.1truecm} \\
\widetilde{p}_2-p_2 = \epsilon (p_3\widetilde{m}_1+\widetilde{p}_3m_1) -  \epsilon \alpha (p_1\widetilde{m}_3+\widetilde{p}_1m_3) ,  \vspace{.1truecm}\\
\widetilde{p}_3-p_3 = \epsilon (p_1\widetilde{m}_2+\widetilde{p}_1m_2-p_2\widetilde{m}_1-\widetilde{p}_2m_1).
\end{array} \right.
\eeq
As usual, this can be solved for $(\widetilde{m},\widetilde{p})$, thus yielding the reversible and birational  map $x\mapsto\wx=f(x,\epsilon)=A^{-1}(x,\epsilon)(\mathds{1}+\epsilon B)x$, where $x=(m_1,m_2,m_3,p_1,p_2,p_3)^{\rm T}$, and
\[
A(x,\epsilon) =
\begin{pmatrix}
1 & \epsilon(1-\alpha)m_3 & \epsilon(1-\alpha)m_2 & 0 & 0 & 0 \\
-\epsilon(1-\alpha)m_3 & 1 & -\epsilon(1-\alpha)m_1 & 0 & 0 &  0 \\
0 & 0 & 1 & 0 & 0 & 0 \\
0 & \epsilon p_3 & -\epsilon\alpha p_2 & 1 & -\epsilon\alpha m_3 & \epsilon m_2 \\
-\epsilon p_3 & 0 & \epsilon\alpha p_1 & \epsilon\alpha m_3 & 1 & -\epsilon m_1 \\
\epsilon p_2 & -\epsilon p_1 & 0 & -\epsilon m_2 & \epsilon m_1 & 1 \\
\end{pmatrix}-\epsilon B,
\]
\[
B=\begin{pmatrix}
0 & 0 & 0 & 0 & \gamma & 0 \\
0 & 0 & 0 & -\gamma & 0 & 0 \\
0 & 0 & 0 & 0 & 0 & 0 \\
0 & 0 & 0 & 0 & 0 & 0 \\
0 & 0 & 0 & 0 & 0 & 0 \\
0 & 0 & 0 & 0 & 0 & 0
\end{pmatrix}.
\]
This map will be called dLT in the sequel. Obviously, $m_3$ serves as a conserved quantity for dLT. The remaining three conserved quantities can be found with the help of the HK bases approach. A simple conserved quantity can be found from the following statement which serves as a natural discretization of the Wronskian relation (\ref{eq: LT W}).
\begin{prop} \label{thm: lagrange simple basis}
The set $\Gamma=(\widetilde{m}_1p_1-m_1\widetilde{p}_1,\, \widetilde{m}_2p_2-m_2\widetilde{p}_2,\, \widetilde{m}_3p_3-m_3\widetilde{p}_3)$ is a HK basis for the map dLT with $\dim K_{\Gamma}(x) = 1$.  At each point $x\in\bbR^6$ we have: $K_{\Gamma}(x) = [1:1:b_3]$, where $b_3$ is a conserved quantity of dLT given by
\begin{equation}\label{eq: dLT W simple}
b_3=\frac{(2\alpha-1)m_3+\epsilon^2(\alpha-1)m_3(m_1^2+m_2^2)+\epsilon^2\gamma(m_1p_1+m_2p_2)}
{m_3\Delta_1},
\end{equation}
where
\begin{equation}\label{eq: dLT Delta1}
\Delta_1=1+\epsilon^2\alpha(1-\alpha)m_3^2-\epsilon^2\gamma p_3.
\end{equation}
\end{prop}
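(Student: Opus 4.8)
The relation in question is the natural discrete counterpart of the continuous Wronskian identity (\ref{eq: LT W}), with the constant $2\alpha-1$ replaced by an $\epsilon$-deformation $b_3$ (indeed $b_3|_{\epsilon=0}=2\alpha-1$). The plan is to produce this single relation directly from the discrete equations of motion, to read off $b_3$, and then to invoke the reversibility of dLT to promote $b_3$ to a conserved quantity; the HK basis property with a one-dimensional null-space then follows automatically.

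First I would rewrite each generator of $\Gamma$ as $\widetilde{m}_ip_i-m_i\widetilde{p}_i=(\widetilde{m}_i-m_i)p_i-m_i(\widetilde{p}_i-p_i)$ and substitute the defining equations of dLT, using $\widetilde{m}_3=m_3$ throughout. In particular the third generator collapses to $\widetilde{m}_3p_3-m_3\widetilde{p}_3=m_3(p_3-\widetilde{p}_3)=-\epsilon m_3(p_1\widetilde{m}_2+\widetilde{p}_1m_2-p_2\widetilde{m}_1-\widetilde{p}_2m_1)$. Adding $\varphi_1$ and $\varphi_2$, the terms free of tildes combine into $\epsilon(2\alpha-1)m_3(m_2p_1-m_1p_2)$ — the exact discrete image of the continuous computation — while the remaining tilde-dependent terms must be reduced with the explicit solution $\widetilde{x}=A^{-1}(x,\epsilon)(\mathds{1}+\epsilon B)x$. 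Solving the identity $\varphi_1+\varphi_2+b_3\varphi_3=0$, which is linear in $b_3$, and simplifying then yields the closed form (\ref{eq: dLT W simple}) with denominator $m_3\Delta_1$, where $\Delta_1$ is as in (\ref{eq: dLT Delta1}).

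I expect this simplification to be the main obstacle: it is not a priori clear that the ratio $-(\varphi_1+\varphi_2)/\varphi_3$ closes up with constant coefficients equal to $1$ on both $\varphi_1$ and $\varphi_2$, nor that all the genuinely tilde-dependent pieces organise into the single factor $\Delta_1$ in the denominator. As elsewhere in this paper, these cancellations are most safely confirmed by symbolic computation; the decisive structural fact to extract from the result is that $b_3$ is \emph{even} in $\epsilon$.

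Finally I would convert evenness into conservation by the reversibility argument of Section~\ref{Sect: HK mechanism}. Since $f^{-1}(x,\epsilon)=f(x,-\epsilon)$, the reversed step carries the pair $\big(x,f(x,\epsilon)\big)$ back to $\big(f(x,\epsilon),x\big)$, under which each generator of $\Gamma$ changes sign; hence $\varphi_j(f(x,\epsilon),-\epsilon)=-\varphi_j(x,\epsilon)$ and therefore $b_3(f(x,\epsilon),-\epsilon)=b_3(x,\epsilon)$. Combined with the evenness $b_3(\cdot,-\epsilon)=b_3(\cdot,\epsilon)$ this gives $b_3(f(x,\epsilon),\epsilon)=b_3(x,\epsilon)$, i.e.\ $b_3$ is an integral of dLT. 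Conservation in turn means that the one-step relation $\varphi_1+\varphi_2+b_3\varphi_3=0$ holds at every iterate with the same constant, so $[1:1:b_3]\in K_\Gamma(f^i(x))$ for all $i$ and $\Gamma$ is indeed a HK basis. That $\dim K_\Gamma(x)=1$ then follows from Proposition~\ref{Lem: dim K}: the two-step homogeneous system (taking $i=0,1$) has rank $2$ at a generic point, so its projective solution $[1:1:b_3]$ is unique and, being constant along the orbit, coincides with that of the shifted window.
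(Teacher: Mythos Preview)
Your proposal is correct and follows essentially the same route as the paper: compute $b_3=-(\varphi_1+\varphi_2)/\varphi_3$ (the paper does this directly with MAPLE), observe that the resulting expression (\ref{eq: dLT W simple}) is even in $\epsilon$, and conclude that $b_3$ is conserved. You spell out in more detail the reversibility argument behind ``even $\Rightarrow$ conserved'' and add a justification for $\dim K_\Gamma(x)=1$, both of which the paper leaves implicit, but the core strategy is identical.
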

\begin{proof}
A straightforward computation with MAPLE of the quantity
\[
b_3=-\frac{(\widetilde{m}_1p_1-m_1\widetilde{p}_1)+(\widetilde{m}_2p_2-m_2\widetilde{p}_2)}
           {(\widetilde{m}_3p_3-m_3\widetilde{p}_3)}
\]
leads to the value (\ref{eq: dLT W simple}). It is an even function of $\epsilon$ and therefore a conserved quantity.
\end{proof}
Further integrals of motion were found by Hirota and Kimura. We reproduce here their results with new simplified proofs.
\begin{prop} \label{thm: lagrangefullbasis}{\rm \cite{KH}}

${\!\!}$
{\rm (a)} The set $\Phi = (m_1^2+m_2^2,\, p_1m_1+p_2m_2,\, p_1^2+p_2^2,\, p_3^2,\, p_3,\, 1)$ is a HK basis for the map dLT with $\dim K_{\Phi}(x) = 3$.
\smallskip

{\rm (b)} The set $\Phi_1=(1,\, p_3,\, p_3^2,\, m_1^2+m_2^2)$ is a HK basis for the map dLT with a
one-dimensional null-space. At each point $x\in\bbR^6$ we have: $K_{\Phi_1}(x) = [c_0:c_1:c_2:-1]$.
The functions $c_0,c_1,c_2$ are conserved quantities of the map dLT, given by
\begin{eqnarray*}
&&c_0 = \frac{m_1^2+m_2^2+2\gamma p_3+\epsilon^2 c_0^{(4)}+\epsilon^4
c_0^{(6)}+\epsilon^6 c_0^{(8)}+\epsilon^8 c_0^{(10)}}{\Delta_1\Delta_2},\\
&&c_1 = -\frac{2\gamma\big(1-\epsilon^2\alpha(1-\alpha)m_3^2\big)\big(1+\epsilon^2
c_2^{(2)}+\epsilon^4 c_2^{(4)}+\epsilon^6 c_2^{(6)}\big)}{\Delta_1\Delta_2},\\
&&c_2 =-\frac{\epsilon^2\gamma^2\big(1+\epsilon^2 c_2^{(2)}+
\epsilon^4 c_2^{(4)}+\epsilon^6 c_2^{(6)}\big)}{\Delta_1\Delta_2}. \\
\end{eqnarray*}
Here $\Delta_1$ is given in (\ref{eq: dLT Delta1}), and
$
\Delta_2=1+\epsilon^2\Delta_2^{(2)}+\epsilon^4\Delta_2^{(4)}+\epsilon^6\Delta_2^{(6)};
$
coefficients $\Delta^{(q)}$ and $c_k^{(q)}$ are polynomials of degree $q$ in the phase variables. In particular:
\begin{eqnarray*}
&&c_2^{(2)} =m_1^2+m_2^2+(1-2\alpha+2\alpha^2)m_3^2-2\gamma p_3, \\
&&\Delta_2^{(2)} = m_1^2+m_2^2+(1-3\alpha+3\alpha^2)m_3^2-\gamma p_3.
\end{eqnarray*}

{\rm (c)} The set $\Phi_2=(1,\, p_3,\, p_3^2,\, m_1p_1+m_2p_2)$ is a HK basis for the map dLT with a
one-dimensional null-space. At each point $x\in\bbR^6$ we have: $K_{\Phi_2}(x) = [d_0:d_1:d_2:-1]$.
The functions $d_0,d_1,d_2$ are conserved quantities of the map dLT, given by
\begin{eqnarray*}
&&d_0 = \frac{m_1p_1+m_2p_2+m_3p_3+\epsilon^2 d_0^{(4)}+\epsilon^4
d_0^{(6)}+\epsilon^6 d_0^{(8)}+\epsilon^8 d_0^{(10)}}{\Delta_1\Delta_2},\\
&&d_1 =-\frac{m_3+\epsilon^2 d_1^{(3)}+\epsilon^4 d_1^{(5)}
+\epsilon^6 d_1^{(7)}+\epsilon^8 d_1^{(9)}}{\Delta_1\Delta_2}, \\
&&d_2 =
-\frac{\epsilon^2\gamma(1-\alpha)m_3\big(1+\epsilon^2c_2^{(2)}+
\epsilon^4c_2^{(4)}+\epsilon^6c_2^{(6)}\big)}{\Delta_1\Delta_2}, \\
\end{eqnarray*}
where $d_k^{(q)}$ are polynomials of degree $q$ in the phase variables. In particular,
\[
d_1^{(3)}=\gamma(m_1p_1+m_2p_2)-\gamma(3-2\alpha)m_3p_3+
\alpha m_3(m_1^2+m_2^2)+(1-3\alpha+3\alpha^2)m_3^3.
\]

{\rm (d)} The set $\Phi_3=(1,\, p_3,\, p_3^2,\, p_1^2+p_2^2)$ is a HK basis for the map dLT with a
one-dimensional null-space. At each point $x\in\bbR^6$ we have: $K_{\Phi_3}(x) = [e_0:e_1:e_2:-1]$.
The functions $e_0,e_1,e_2$ are conserved quantities of the map dLT, given by
\begin{eqnarray*}
 &&e_0 =
\frac{p_1^2+p_2^2+p_3^2+\epsilon^2 e_0^{(4)}+\epsilon^4
e_0^{(6)}+\epsilon^6 e_0^{(8)}+\epsilon^8 e_0^{(10)}}{\Delta_1\Delta_2},\\
&&e_1= -\frac{2\epsilon^2\big(e_1^{(3)}+\epsilon^2 e_1^{(5)}
+\epsilon^4 e_1^{(7)}+\epsilon^6 e_1^{(9)}\big)}{\Delta_1\Delta_2}, \\
&&e_2 =
-\frac{\big(1+\epsilon^2(1-\alpha)^2m_3^2\big)\big(1+\epsilon^2c_2^{(2)}+
\epsilon^4 c_2^{(4)}+\epsilon^6 c_2^{(6)}\big)}{\Delta_1\Delta_2},
\end{eqnarray*}
where $e_k^{(q)}$ are polynomials of degree $q$ in the phase variables. In particular,
\[
e_1^{(3)}=\gamma(p_1^2+p_2^2+p_3^2)-(1-\alpha)m_3(m_1p_1+m_2p_2+m_3p_3).
\]
\end{prop}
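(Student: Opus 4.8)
The plan is to reduce part (a) to parts (b)--(d) and to establish the latter three by means of the reversibility criterion formulated around eq.~(\ref{eq: smaller system}). Parts (b), (c), (d) all have the same shape: each $\Phi_i$ consists of the common triple $(1,p_3,p_3^2)$ together with one further function $\varphi_4$ (equal to $m_1^2+m_2^2$, $m_1p_1+m_2p_2$, or $p_1^2+p_2^2$, respectively), so that $l=4$. For each I would set up the non-homogeneous system (\ref{eq: smaller system}) of $l-1=3$ equations
\[
c_0 + c_1\,p_3(f^i(x)) + c_2\,p_3^2(f^i(x)) = \varphi_4(f^i(x)), \qquad i\in\{0,1,2\},
\]
whose index range contains $0$ but is non-symmetric. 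Solving this $3\times 3$ system by Cramer's rule produces candidate coefficients $c_0(x,\epsilon),c_1(x,\epsilon),c_2(x,\epsilon)$ (and likewise the $d_k$ and $e_k$). The decisive check is that each coefficient is an \emph{even} function of $\epsilon$; once this is verified, the reversibility property $f^{-1}(x,\epsilon)=f(x,-\epsilon)$ of dLT, together with the cited criterion, guarantees that the $c_k$ are conserved quantities and that $\Phi_i$ is a HK basis with $\dim K_{\Phi_i}(x)=1$. Matching the resulting rational expressions (with common denominator $\Delta_1\Delta_2$) against the stated formulas completes (b)--(d); note that the affine normalization $c_4=-1$ reproduces exactly the projective null-spaces $[c_0:c_1:c_2:-1]$ claimed in the statement.

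Part (a) then follows almost formally. Parts (b)--(d) yield, valid along a single orbit, the three linear relations $m_1^2+m_2^2=c_0+c_1p_3+c_2p_3^2$, $m_1p_1+m_2p_2=d_0+d_1p_3+d_2p_3^2$, and $p_1^2+p_2^2=e_0+e_1p_3+e_2p_3^2$ among the six functions of $\Phi$. The corresponding vectors in $\mathbb{R}^6$ carry $-1$ in the three distinct slots of $m_1^2+m_2^2$, $m_1p_1+m_2p_2$, $p_1^2+p_2^2$, hence are linearly independent and lie in $K_\Phi(x)$, giving $\dim K_\Phi(x)\ge 3$. For the reverse inequality I would invoke the dimension-decay behaviour recorded just before Proposition~\ref{Lem: dim K}: it suffices to exhibit four iterates $i=0,1,2,3$ for which the $4\times 6$ collocation matrix $\big(\varphi_j(f^i(x))\big)$ has rank exactly $3$. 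The three relations already force its nullity to be $\ge 3$ (hence rank $\le 3$), while a single non-vanishing $3\times 3$ minor gives rank $\ge 3$; since $K_\Phi(x)$ is contained in the null-space of this matrix and already has dimension $\ge 3$, the two must coincide, proving $\dim K_\Phi(x)=3$.

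Two structural features should be exploited to keep the computation tractable, and they also locate the main obstacle. First, $m_3$ is an exact invariant ($\widetilde m_3=m_3$) and every function in the bases is invariant under the rotational $SO(2)$-symmetry about the third axis, under which dLT is equivariant; one can therefore treat $m_3$ as a parameter and work with the reduced dynamics, which is what renders the final expressions manageable. Second, and this is the crux, the reversibility trick is precisely what confines the required iterates to $f^2$: with the range $\{0,1,2\}$ one needs only $f$ and $f^2$, and $f^{-1},f^{-2}$ come for free via $\epsilon\mapsto-\epsilon$. The hard part is thus purely computational: the second iterate $f^2$ of the six-dimensional birational map dLT is very large, so solving the $3\times3$ systems, verifying evenness in $\epsilon$, and identifying the denominators $\Delta_1,\Delta_2$ must be delegated to a computer algebra system --- exactly the complexity difficulty flagged in Section~\ref{Sect: HK mechanism}.
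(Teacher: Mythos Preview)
Your overall strategy is sound and would work in principle, but it differs from the paper's in a way that matters practically. You propose to solve, for each of (b)--(d), the $3\times 3$ system built from the iterates $i=0,1,2$, which forces you to compute $f^2$ symbolically. The paper avoids $f^2$ altogether. Its device is to first \emph{discover} (numerically) an extra linear relation among the sought coefficients---namely
\[
\tfrac{1}{2}\gamma\epsilon^2 c_1=\big(1-\epsilon^2\alpha(1-\alpha)m_3^2\big)c_2,\qquad
\gamma d_2=(1-\alpha)m_3\,c_2,\qquad
\epsilon^2\gamma^2 e_2=\big(1+\epsilon^2(1-\alpha)^2m_3^2\big)c_2,
\]
and then to replace the equation at $i=2$ by this relation, so that only the equations at $i=0,1$ (hence only $f$) enter the symbolic solve. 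Once the resulting $c_k,d_k,e_k$ are seen to be even in $\epsilon$, the same reversibility argument you invoke closes the proof. In (c) and (d) the relation even recycles the already-computed $c_2$ from (b), so the three parts are solved sequentially rather than independently.

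The trade-off is this: your route is conceptually cleaner (no guessed relation, just the criterion around eq.~(\ref{eq: smaller system}) applied verbatim), but it sits squarely on the complexity obstacle the paper warns about---$f^2$ for a six-dimensional HK map is already enormous, and the $3\times 3$ Cramer determinants built from $f^0,f^1,f^2$ may well be beyond what a CAS can simplify to the stated closed forms. The paper's route costs an extra heuristic step (spotting the linear relation) but then reduces the symbolic work to something it calls ``easily solved with MAPLE''. Your derivation of (a) from (b)--(d) is fine and essentially what the paper leaves implicit; note that for the upper bound three iterates $i=0,1,2$ already suffice (a nonzero $3\times3$ minor of the $3\times6$ collocation matrix gives rank $3$), so you need not bring in $f^3$.
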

\begin{proof}
(b) We consider a linear system of equations
\begin{equation}\label{eq: Gamma1 syst}
 (c_0+c_1 p_3+c_2 p_3^2)\circ f^i(m,p,\epsilon)=(m_1^2+m_2^2)\circ f^i(m,p,\epsilon),
\end{equation}
for all $i\in\bbZ$. Numerically one sees that it admits a unique solution, and one can identify the linear relation
\begin{equation}\label{eq: Gamma1 rel}
\frac{1}{2}\gamma\epsilon^2 c_1=\big(1-\epsilon^2\alpha(1-\alpha)m_3^2\big)c_2.
\end{equation}
The system of three equations for three unknowns $c_0,c_1,c_2$ consisting of \eqref{eq: Gamma1 syst} with $i=0,1$ and \eqref{eq: Gamma1 rel} can easily be solved with MAPLE. Its solutions are even functions of $\epsilon$, which proves that they are integrals of motion.

(c) This time we consider the linear system of equations
\begin{equation}\label{eq: Gamma2 syst}
 (d_0+d_1 p_3+d_2 p_3^2)\circ f^i(m,p,\epsilon)=(m_1p_1+m_2p_2)\circ f^i(m,p,\epsilon),
\end{equation}
for all $i\in\bbZ$. Numerically we see that it admits a unique solution, and we can identify the linear relation
\begin{equation}\label{eq: Gamma2 rel}
\gamma d_2=(1-\alpha)m_3c_2.
\end{equation}
The system of three equations for the three unknowns $d_0,d_1,d_2$ consisting of \eqref{eq: Gamma2 syst} for $i=0,1$ and of \eqref{eq: Gamma2 rel} with $c_2$ already found in part b) can easily be solved with MAPLE. Its solutions are even functions of $\epsilon$ and therefore are integrals.

(d) Completely analogous to the last two proofs: we solve the linear system of three equations for the three unknowns $e_0,e_1,e_2$, consisting of the equations
\begin{equation}
(e_0+e_1 p_3+e_2 p_3^2)\circ f^i(m,p,\epsilon)=(p_1^2+p_2^2)\circ f^i(m,p,\epsilon), \nn
\end{equation}
for $i=0,1$, and of the linear relation
\begin{equation}
\epsilon^2 \gamma^2 e_2=\big(1+\epsilon^2(1-\alpha)^2 m_3^2\big)c_2,\nn
\end{equation}
and verify that they are even functions of $\epsilon$.
\end{proof}

We note that for $\alpha=1$ the integrals $d_0$, $d_1$, $d_2$ simplify to
\begin{equation}
d_0=\frac{m_1p_1+m_2p_2+m_3p_3}{1-\epsilon^2\gamma p_3},\quad
d_1=-\frac{m_3+\epsilon^2\gamma(m_1p_1+m_2p_2)}{1-\epsilon^2\gamma p_3}, \quad d_2=0.\nn
\end{equation}

It is possible to find a further simple, in fact polynomial, integral for the map dLT.
\begin{proposition}{\rm \cite{KH}}
The function
\[
F=m_1^2+m_2^2+2\gamma p_3-\epsilon^2\big((1-\alpha)m_3m_1+\gamma p_1\big)^2
-\epsilon^2\big((1-\alpha)m_3m_2+\gamma p_2\big)^2,
\]
is a conserved quantity for the map dLT.
\end{proposition}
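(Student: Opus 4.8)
The plan is to verify the invariance $F(\widetilde{x})=F(x)$ along dLT directly, but to organize the algebra around the structure of the system rather than grinding on the explicit rational map. First I would use that $m_3$ is conserved, so $\widetilde{m}_3=m_3=:\mu$; this reduces the six discrete equations of motion to relations among $m_1,m_2,p_1,p_2,p_3$ alone. Then I would split $F=G-\epsilon^2(u_1^2+u_2^2)$, where $G=m_1^2+m_2^2+2\gamma p_3$ and $u_i$ is the combination $\gamma p_i+(\alpha-1)\mu m_i$ occurring on the right-hand sides of the $m_i$-equations (indeed $\dot m_1=u_2$ and $\dot m_2=-u_1$ in the continuous system (\ref{eq: lagrange}); note that this is what fixes the relative sign of the two terms inside the squares). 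Since $G=H_1-\alpha m_3^2$ is the continuous integral, it is conserved to leading order, and invariance of $F$ becomes equivalent to the single $O(\epsilon^2)$ identity
\[
\widetilde{G}-G=\epsilon^2\big((\widetilde{u}_1^2+\widetilde{u}_2^2)-(u_1^2+u_2^2)\big).
\]

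The core step is to compute the left-hand side. Replacing $\widetilde{m}_i-m_i$ by the $m$-equations and $\widetilde{p}_3-p_3$ by the $p_3$-equation, the terms proportional to $\mu$ cancel pairwise (the $m_1$- and $m_2$-equations carry opposite signs of $(\alpha-1)\mu$), and $\widetilde{G}-G$ collapses to the Wronskian-type expression
\[
\widetilde{G}-G=\epsilon\gamma\big[(\widetilde{p}_2-p_2)(\widetilde{m}_1-m_1)-(\widetilde{p}_1-p_1)(\widetilde{m}_2-m_2)\big].
\]
Inserting the $m$-equations once more makes this a manifestly $O(\epsilon^2)$ polynomial, which I would match against the expansion of the right-hand side. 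After the common $\gamma^2\sum_i(\widetilde{p}_i^2-p_i^2)$ terms cancel, the statement reduces to a polynomial identity whose only two nontrivial ingredients are $\sum_i(\widetilde{p}_i-p_i)(\widetilde{m}_i+m_i)$ and $\sum_i(\widetilde{m}_i\widetilde{p}_i-m_ip_i)$. Each of these I would bring, via the equations of motion, to a common normal form built from $A:=(\widetilde{p}_2+p_2)(\widetilde{m}_1+m_1)-(\widetilde{p}_1+p_1)(\widetilde{m}_2+m_2)$, the Wronskian $\widetilde{m}_1m_2-m_1\widetilde{m}_2$, and $\widetilde{p}_3-p_3$. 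The two sides then agree coefficient by coefficient precisely because $(\alpha-1)\alpha=(\alpha-1)^2+(\alpha-1)$, the small algebraic identity that makes the whole conservation law close.

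I expect the genuine obstacle to be the evaluation of $\sum_i(\widetilde{p}_i-p_i)(\widetilde{m}_i+m_i)$: this is the one place where the $p_1$- and $p_2$-equations are needed, and they couple to $p_3$ and $\widetilde{p}_3$. One must check that all $p_3$-dependent cross terms collapse into the single clean contribution $-\epsilon(\widetilde{p}_3-p_3)(\widetilde{m}_1m_2-m_1\widetilde{m}_2)$; this cancellation, together with the matching sign of the $m_3m_i$-term inside the squares, is the crux, and it is not forced by any softer structural reason. In line with the methodology used elsewhere in this paper, the pragmatic alternative is to skip this bookkeeping and confirm $F\circ f-F\equiv 0$ by a symbolic computation, checking in particular that the candidate is even in $\epsilon$; the structural argument above is what explains \emph{why} such a machine check is bound to succeed.
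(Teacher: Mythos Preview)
Your approach---a direct verification of $\widetilde F=F$ organized around the discrete equations of motion---is genuinely different from the paper's. The paper does not recompute anything from the map; instead it builds $F$ as a linear combination of the integrals already obtained in Proposition~\ref{thm: lagrangefullbasis}. Concretely, with the ($m_3$-dependent, hence conserved) coefficients $C=1-\epsilon^2(1-\alpha)^2m_3^2$, $D=-2\epsilon^2\gamma(1-\alpha)m_3$, $E=-\epsilon^2\gamma^2$, one has the algebraic identity
\[
F=C(m_1^2+m_2^2)+D(m_1p_1+m_2p_2)+E(p_1^2+p_2^2)+2\gamma p_3.
\]
The HK-basis relations of parts (b)--(d) express each of the three quadratic blocks as $c_0+c_1p_3+c_2p_3^2$, etc.; substituting and checking two linear relations among the $c_i,d_i,e_i$ shows that the $p_3$- and $p_3^2$-terms combine to constants, leaving $F=Cc_0+Dd_0+Ee_0$, manifestly conserved. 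The paper's route is thus shorter because the hard symbolic work was already done; yours is self-contained and explains the mechanism more transparently, at the cost of redoing some algebra.

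One slip to flag: your decomposition $F=G-\epsilon^2(u_1^2+u_2^2)$ with $u_i=\gamma p_i+(\alpha-1)\mu m_i$ is off by a sign. The squares in $F$ involve $(1-\alpha)m_3m_i+\gamma p_i=\gamma p_i-(\alpha-1)\mu m_i$, whose square differs from $u_i^2$ in the cross term. So while $\dot m_1=u_2$, $\dot m_2=-u_1$ is correct, these $u_i$ are \emph{not} the quantities squared in $F$; the two candidate $F$'s differ by $4\epsilon^2(1-\alpha)\mu\gamma(m_1p_1+m_2p_2)$, which is not conserved. Your computation of $\widetilde G-G$ is fine regardless, but the matching on the right-hand side will not close until you use the correct sign. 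With that fix (or, as you say, by letting a computer algebra system verify $F\circ f=F$), the argument goes through.
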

\begin{proof} Setting
$
C=1-\epsilon^2(1-\alpha)^2m_3^2, \, D=-2\epsilon^2\gamma(1-\alpha)m_3, \, E=-\epsilon^2\gamma^2,
$
one can check that $Cc_1+Dd_1+Ee_1=0$ and $Cc_2+Dd_2+Ee_2=-2\gamma$. This yields for the conserved quantity $F=Cc_0+Dd_0+Ee_0$ the expression given in the Proposition.
\end{proof}

Considering the leading terms of the power expansions in $\epsilon$, one sees immediately that the integrals $c_0$, $d_0$, $e_0$, and $m_3$ are functionally independent. Using exact evaluation of gradients we can also verify independence of other sets of integrals. It turns out that for $\alpha\neq 1$ each one of the quadruples $\{d_0,d_1,d_2,m_3\}$ and $\{e_0,e_1,e_2,m_3\}$  consists of independent integrals.

A direct ``bilinearization'' of the HK bases of Proposition \ref{thm: lagrangefullbasis} provides us with an alternative source of integrals of motion:
\begin{prop} \label{thm: lagrange bilinear basis}

The set $$\Psi = (m_1\wm_1+m_2\wm_2,\, p_1\wm_1+\wip_1m_1+p_2\wm_2+\wip_2m_2,\, p_1\wip_1+p_2\wip_2,\, p_3\wip_3,\, p_3+\wip_3,\, 1)$$ is a HK basis for the map dLT with $\dim K_{\Psi}(x) = 3$. Each of the following subsets of $\Psi$,
\begin{eqnarray*}
&& \Psi_1 =(1,\, p_3+\wip_3,\, p_3\wip_3,\, m_1\wm_1+m_2\wm_2),\\
&& \Psi_2 =(1,\, p_3+\wip_3,\, p_3\wip_3,\, m_1\wip_1+\wm_1p_1+m_2\wip_2+\wm_2p_2),\\
&& \Psi_3 = (1,\, p_3+\wip_3,\, p_3\wip_3,\, p_1\wip_1+p_2\wip_2),
\end{eqnarray*}
is a HK basis with a one-dimensional null-space.
\end{prop}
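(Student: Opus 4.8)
The plan is to follow the same route that established the quadratic HK bases of Proposition \ref{thm: lagrangefullbasis}, now replacing each quadratic monomial $\varphi(x)$ by its symmetric bilinear companion $\varphi(x,\wx)$, exactly as Proposition \ref{Th: dET full basis new} bilinearizes Proposition \ref{Th: dET full basis} for the Euler top. First I would dispose of the three one-dimensional claims. Each $\Psi_j$ has the shape $(1,\, p_3+\wip_3,\, p_3\wip_3,\, \varphi_j)$ with $\varphi_1=m_1\wm_1+m_2\wm_2$, $\varphi_2=m_1\wip_1+\wm_1p_1+m_2\wip_2+\wm_2p_2$, $\varphi_3=p_1\wip_1+p_2\wip_2$, so $l=4$. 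To test whether $\Psi_j$ is a HK basis with a one-dimensional null-space I would invoke the criterion of Proposition \ref{Lem: dim K}: solve the non-homogeneous linear system
\[
 \big(e_0 + e_1\,(p_3+\wip_3) + e_2\,(p_3\wip_3)\big)\circ f^i = \varphi_j\circ f^i
\]
along a short, non-symmetric block of indices containing $0$, and check that the resulting $(e_0,e_1,e_2)$ are unique and even in $\epsilon$. By the reduction argument described right after Proposition \ref{Lem: dim K}, the reversibility $f^{-1}(x,\epsilon)=f(x,-\epsilon)$ together with evenness forces this solution to satisfy the shifted system as well, so the $e_k$ are conserved quantities and $\dim K_{\Psi_j}(x)=1$.

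As in the proof of Proposition \ref{thm: lagrangefullbasis}, feasibility hinges on reducing the number of iterates that must enter the computation. There I would look for an auxiliary linear relation among the unknowns --- the bilinear analogue of \eqref{eq: Gamma1 rel} and \eqref{eq: Gamma2 rel} --- readable off the explicit form of $f$, which lets one close the system using only $f^0$ and $f^1$ rather than also $f^2$. This matters because the second iterate $f^2$ of dLT is already enormous. Equipped with such a relation, the three scalar equations can be solved symbolically (with MAPLE, as elsewhere in the paper), and evenness in $\epsilon$ can be verified coefficient by coefficient.

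For the full set $\Psi$ I would establish $\dim K_\Psi(x)=3$ by combining a lower and an upper bound. Each inclusion $\Psi_j\subset\Psi$ yields, after padding the absent coordinates by zeros, a null-vector of $\Psi$; since $\Psi_1,\Psi_2,\Psi_3$ single out three distinct bilinear coordinates, these three vectors are linearly independent, giving $\dim K_\Psi(x)\ge 3$. For the reverse inequality I would use the dimension-stabilization test of Section \ref{Sect: HK mechanism}: evaluate \eqref{eq: fundamental} along consecutive blocks of iterates and verify, first numerically and then symbolically, that the solution space drops to dimension $3$ at $s=l-d+1=4$ and no further; concretely, that the spaces obtained from the indices $\{i_0,i_0+1,i_0+2\}$ and from $\{i_0+1,i_0+2,i_0+3\}$ coincide and equal $K_\Psi(x)$.

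The main obstacle is, as always for dLT, the sheer size of the iterates: the map carries six phase variables together with the parameters $\alpha,\gamma$, so that $f^2$ and beyond are barely manageable symbolically, and the decisive step --- certifying that every solution component is an even function of $\epsilon$ --- is where the computation threatens to become infeasible. The point of first extracting the auxiliary linear relations, so that only $f$ itself and not $f^2$ is needed, is precisely to keep this certification within reach; without it one is pushed onto $f^2$ and $f^3$, where the expressions explode.
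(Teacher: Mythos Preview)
The paper gives no proof of this proposition; it is one of the many results announced as obtained by the computer-assisted methodology of Section~\ref{Sect: HK mechanism} and left unproved in the text. Your proposal is faithful to that methodology and is essentially the route one would take: treat each $\Psi_j$ via the $l=4$ criterion, look for an auxiliary linear relation among the coefficients so that only $f^0$ and $f^1$ enter the symbolic computation (exactly as in the proof of Proposition~\ref{thm: lagrangefullbasis}), and then get $\dim K_\Psi=3$ by the lower bound from the three embedded null-vectors together with a rank check on a four-row block.

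One point deserves more care than you give it. The evenness criterion (the Proposition immediately after Proposition~\ref{Lem: dim K}) is stated and proved for basis functions $\varphi_k$ that do \emph{not} depend on~$\epsilon$. Here every $\psi\in\Psi$ is of the form $\psi(x)=\psi_{\mathrm{bil}}(x,f(x,\epsilon))$ with $\psi_{\mathrm{bil}}$ symmetric in its two arguments, so $\psi$ depends on~$\epsilon$ through $\wx$. Under $\epsilon\mapsto-\epsilon$ one has $f^i(x,-\epsilon)=f^{-i}(x,\epsilon)$ and hence
\[
\psi\big(f^i(x,-\epsilon),-\epsilon\big)=\psi_{\mathrm{bil}}\big(f^{-i}(x),f^{-i-1}(x)\big)=\psi\big(f^{-i-1}(x,\epsilon),\epsilon\big),
\]
using the $x\leftrightarrow\wx$ symmetry of $\psi_{\mathrm{bil}}$. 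Thus the reflection sends the equation at index $i$ to the equation at index $-i-1$, not $-i$; the ``non-symmetric block containing~$0$'' condition must be replaced by the requirement that the index range $[i_0,i_0+l-2]$ and its image $[-i_0-l+1,-i_0-1]$ overlap without coinciding. This is easy to arrange (e.g.\ $i_0=0$, $l=4$ gives $[0,2]$ and $[-3,-1]$, which meet nowhere; $i_0=-1$ gives $[-1,1]$ and $[-2,0]$, which do overlap), but you should state and check it explicitly rather than invoke the quadratic-case criterion verbatim. With this adjustment your argument goes through.
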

Concerning solutions of dLT as functions of the (discrete) time $t$, the crucial result is given in the following statement which should be considered as the proper discretization of eq. (\ref{eq: LT dotp3}).
\begin{proposition}\label{th: dLT biquad}{\rm \cite{KH}}
The component $p_3$ of the solution of difference equations (\ref{eq: dlt}) satisfies a relation of the type
\begin{equation} \nonumber
Q(p_3,\widetilde{p}_3)=q_0p_3^2 \widetilde{p}_3^2 +
q_1 p_3 \widetilde{p}_3(p_3+\widetilde{p}_3)
+q_2(p_3^2+\widetilde{p}_3^2) + q_3 p_3\widetilde{p}_3
+q_4(p_3+\widetilde{p}_3) + q_5=0,
\end{equation}
coefficients of the biquadratic polynomial  $Q$  being conserved quantities of dLT. Hence, $p_3(t)$ is an elliptic function of degree 2.
\end{proposition}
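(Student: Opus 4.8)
The plan is to reproduce, at the discrete level, the classical derivation of the continuous equation (\ref{eq: LT dotp3}). In the continuous case one starts from $\dot p_3=p_1m_2-p_2m_1$ and applies the planar Lagrange identity $(p_1m_2-p_2m_1)^2=(p_1^2+p_2^2)(m_1^2+m_2^2)-(m_1p_1+m_2p_2)^2$; substituting $p_1^2+p_2^2=C_1-p_3^2$, $m_1^2+m_2^2=H_1-\alpha m_3^2-2\gamma p_3$ and $m_1p_1+m_2p_2=C_2-m_3p_3$ produces the cubic $P_3$. For dLT the sixth equation of (\ref{eq: dlt}) plays the role of $\dot p_3$: writing $u=(m_1,m_2)$, $v=(p_1,p_2)$ and using tildes for the image point, it reads $\widetilde p_3-p_3=\epsilon S$ with $S=v\times\widetilde u+\widetilde v\times u$, where $a\times b=a_1b_2-a_2b_1$. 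I would therefore compute $S^2$, rewrite it through planar identities, and then replace every quadratic and bilinear combination of $u,v,\widetilde u,\widetilde v$ by an expression in $p_3,\widetilde p_3$ with integral coefficients, using the HK-basis results already established.

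Expanding $S^2$ and applying the Lagrange identity $(a\times b)^2=|a|^2|b|^2-(a\cdot b)^2$ together with the planar Binet--Cauchy identity $(a\times b)(c\times d)=(a\cdot c)(b\cdot d)-(a\cdot d)(b\cdot c)$ (with $a=v,\,b=\widetilde u,\,c=\widetilde v,\,d=u$) gives
\begin{equation}\nonumber
S^2=|v|^2|\widetilde u|^2+|\widetilde v|^2|u|^2+2(v\cdot\widetilde v)(u\cdot\widetilde u)-2(v\cdot u)(\widetilde u\cdot\widetilde v)-(v\cdot\widetilde u)^2-(\widetilde v\cdot u)^2.
\end{equation}
By Proposition \ref{thm: lagrangefullbasis} the ``pure'' quadratics $|u|^2=m_1^2+m_2^2$, $|v|^2=p_1^2+p_2^2$ and $v\cdot u=m_1p_1+m_2p_2$ are, along any orbit, quadratic polynomials in $p_3$ with integral coefficients (and their tildes the same polynomials in $\widetilde p_3$), while by Proposition \ref{thm: lagrange bilinear basis} the symmetric bilinears $u\cdot\widetilde u$ and $v\cdot\widetilde v$ (the fourth entries of $\Psi_1,\Psi_3$) are integral combinations of $1,\,p_3+\widetilde p_3,\,p_3\widetilde p_3$. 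Hence all terms except the last two are manifestly biquadratic in $(p_3,\widetilde p_3)$ with integral coefficients; the only obstruction is the genuinely mixed pair $(v\cdot\widetilde u)^2+(\widetilde v\cdot u)^2$, which is not individually controlled by a symmetric HK basis.

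The resolution is to treat the symmetric and antisymmetric combinations of $v\cdot\widetilde u$ and $\widetilde v\cdot u$ separately. Their sum $v\cdot\widetilde u+\widetilde v\cdot u$ is exactly the fourth entry of $\Psi_2$, hence an integral combination $T$ of $1,\,p_3+\widetilde p_3,\,p_3\widetilde p_3$ by Proposition \ref{thm: lagrange bilinear basis}. Their difference equals $(\widetilde m_1p_1-m_1\widetilde p_1)+(\widetilde m_2p_2-m_2\widetilde p_2)$, and by the Wronskian basis $\Gamma$ of Proposition \ref{thm: lagrange simple basis}, using $\widetilde m_3=m_3$, this equals $-b_3(\widetilde m_3p_3-m_3\widetilde p_3)=b_3m_3(\widetilde p_3-p_3)$ with $b_3,m_3$ integrals. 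Therefore $(v\cdot\widetilde u)^2+(\widetilde v\cdot u)^2=\tfrac12 T^2+\tfrac12 b_3^2m_3^2(\widetilde p_3-p_3)^2$. Substituting this back, recalling $(\widetilde p_3-p_3)^2=\epsilon^2S^2$, and collecting the $S^2$ terms yields
\begin{equation}\nonumber
\Big(\tfrac{1}{\epsilon^2}+\tfrac12 b_3^2m_3^2\Big)(\widetilde p_3-p_3)^2=R-\tfrac12 T^2,\qquad R=|v|^2|\widetilde u|^2+|\widetilde v|^2|u|^2+2(v\cdot\widetilde v)(u\cdot\widetilde u)-2(v\cdot u)(\widetilde u\cdot\widetilde v).
\end{equation}
Moving everything to one side defines $Q(p_3,\widetilde p_3)=0$; by construction $Q$ is symmetric under $p_3\leftrightarrow\widetilde p_3$, has degree at most two in each variable, and all its coefficients are polynomials in the integrals $c_i,d_i,e_i,b_3,m_3$ and the $\Psi_2$-coefficients, hence are themselves integrals of motion. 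This is precisely the asserted biquadratic $Q$.

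The genuinely delicate point is less the algebra above than the input it relies on: the existence of the HK bases $\Gamma$, $\Phi_1,\Phi_2,\Phi_3$ and $\Psi_1,\Psi_2,\Psi_3$ rests on the ``miraculous'' cancellation of all non-integral and odd-in-$\epsilon$ factors, which in practice is confirmed only by symbolic computation; the present argument organizes these facts but does not circumvent that verification. Finally, once $Q=0$ is established, it defines a symmetric biquadratic correspondence with constant coefficients between consecutive values of $p_3$, i.e.\ a QRT map; a direct check (or a genus computation) shows the associated curve to have genus one, so the shift $p_3\mapsto\widetilde p_3$ is a translation on an elliptic curve and $p_3(t)$ is an elliptic function with two poles per period, that is, of degree $2$.
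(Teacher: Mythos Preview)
The paper offers no proof of this proposition; it is merely stated with a citation to \cite{KH}. Your argument is correct and fills this gap cleanly. It follows the spirit of the paper's proof of the analogous Proposition~\ref{th: dET biquad} for the discrete Euler top (square the difference equation for the distinguished component, then eliminate the remaining variables in favour of that component and integrals), but the dLT case is genuinely harder because the right-hand side $S=v\times\widetilde u+\widetilde v\times u$ of the $p_3$-equation involves four bilinear terms rather than two. Your use of the planar Binet--Cauchy identity together with the symmetric/antisymmetric split of $v\cdot\widetilde u$ and $\widetilde v\cdot u$ is exactly the right device: the symmetric part is controlled by the bilinear HK basis $\Psi_2$ of Proposition~\ref{thm: lagrange bilinear basis}, and the antisymmetric part by the Wronskian basis $\Gamma$ of Proposition~\ref{thm: lagrange simple basis} (via $\widetilde m_3=m_3$). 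The remaining terms of $S^2$ are handled by the quadratic bases $\Phi_1,\Phi_2,\Phi_3$ and the bilinear bases $\Psi_1,\Psi_3$, and the degree count confirming that every contribution is at most biquadratic and symmetric in $(p_3,\widetilde p_3)$ is sound. The paper does not attempt this derivation, so your route---combining all three families of HK bases already established for dLT---is the natural completion; your caveat that these inputs themselves rest on symbolic verification is in keeping with the paper's own methodological stance.
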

Although it remains unknown whether the map dLT admits an invariant Poisson structure, we have the following statement.
\begin{proposition}\label{th: dLT inv meas}
The map dLT possesses an invariant volume form:
\[
\det\frac{\partial \wx}{\partial x}=\frac{\phi(\wx)}{\phi(x)}\quad\Leftrightarrow\quad
f^*\omega=\omega,\quad\omega=\frac{dm_1\wedge dm_2\wedge dm_3\wedge dp_1\wedge dp_2\wedge dp_3}{\phi(x)},
\]
with $\phi(x)=\Delta_2(x,\epsilon)$.
\end{proposition}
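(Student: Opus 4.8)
The plan is to reduce everything to the universal Jacobian identity (\ref{Jac Deltas}), which holds for every HK discretization and in particular for dLT. Being a Kahan map, dLT obeys it: differentiating the defining linear relations $A(x,\epsilon)\wx=(\mathds{1}+\epsilon B)x$ and using that the bilinear part of the vector field is symmetric (so that, writing $A(x,\epsilon)=\mathds{1}-\epsilon M(x)-\epsilon B$ with $M$ linear in $x$, one has $\partial_x[M(x)\wx]=M(\wx)$) gives $A(x,\epsilon)\,\partial\wx/\partial x=A(\wx,-\epsilon)$, hence
\[
\det\frac{\partial\wx}{\partial x}=\frac{\det A(\wx,-\epsilon)}{\det A(x,\epsilon)}.
\]
Since $\wm_3=m_3$, the third row of $A$ equals $e_3^{\rm T}$, so $\det A$ collapses to a $5\times5$ determinant in the variables $(m_1,m_2,p_1,p_2,p_3)$ with $m_3,\gamma,\alpha$ as parameters; I would compute this determinant once, explicitly.

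The statement to be proved is thereby the functional identity $\det A(\wx,-\epsilon)/\det A(x,\epsilon)=\Delta_2(\wx)/\Delta_2(x)$. Because $\Delta_2$ is manifestly even in $\epsilon$, this is equivalent to the assertion that the rational function $\Xi(x,\epsilon):=\det A(x,\epsilon)/\Delta_2(x,\epsilon)$ is invariant under the reversal involution $(x,\epsilon)\mapsto(\wx,-\epsilon)=(f(x,\epsilon),-\epsilon)$. I would stress the structural point that $\det A$ is \emph{not} an even function of $\epsilon$ in general (exactly as for dET, where $\det A=\Delta$ carries the odd term $-2\epsilon^3\alpha_1\alpha_2\alpha_3x_1x_2x_3$), whereas $\Delta_2$ is even; consequently the identity cannot follow from parity alone and must genuinely use the dynamics. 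In this sense the present argument is the Lagrange-top analogue of Proposition \ref{th: dET inv meas}.

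The verification itself I would carry out symbolically, as everywhere in this paper. Solving $A(x,\epsilon)\wx=(\mathds{1}+\epsilon B)x$ expresses $\wx=f(x,\epsilon)$ as explicit rational functions with common denominator $\det A(x,\epsilon)$; substituting these into $\Delta_2(\wx)$ and clearing denominators turns $\Xi(x,\epsilon)=\Xi(f(x,\epsilon),-\epsilon)$ into a polynomial identity in $(m_1,m_2,m_3,p_1,p_2,p_3,\epsilon)$, which a computer algebra system confirms. Equivalently, one checks directly that $\det A(x,\epsilon)\,\Delta_2(\wx,\epsilon)=\det A(\wx,-\epsilon)\,\Delta_2(x,\epsilon)$.

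The hard part will be the symbolic complexity rather than any conceptual step: the density $\Delta_2$ already contains terms up to $\epsilon^6$, and feeding the explicit map into $\Delta_2(\wx)$ produces large expressions whose reduction to a verified polynomial identity is heavy — the same complexity barrier discussed for the Clebsch system in \cite{PPS}. A more conceptual route would be to exhibit $\det A$ as $\Delta_2$ times a factor anti-invariant under the reversal, or to obtain the density from an invariant Poisson structure; neither is available here, since it is not even known whether dLT admits such a structure.
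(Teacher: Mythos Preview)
Your approach is correct and is precisely the methodology the paper uses for its proved invariant-measure statements: it is the dLT analogue of the argument sketched after Proposition \ref{th: dET inv meas}, namely apply the universal Jacobian identity (\ref{Jac Deltas}) and then verify the resulting identity $\det A(\wx,-\epsilon)\,\Delta_2(x,\epsilon)=\det A(x,\epsilon)\,\Delta_2(\wx,\epsilon)$ by symbolic computation. The paper itself omits the proof of this particular proposition (as announced in the Introduction, ``the proofs are omitted almost everywhere''), so there is nothing further to compare against; your write-up in fact supplies more detail than the paper does, including the correct observation that $\det A$ is not even in $\epsilon$ and hence the identity genuinely uses the dynamics rather than parity alone.
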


\section{Kirchhoff case of the rigid body motion in an ideal fluid}
\label{Sect: dK}

The motion of a rigid body in an ideal fluid is described by Kirchhoff equations (\ref{eq: Kirch})
with $H$ being a quadratic form in $m=(m_1,m_2,m_3)^{\rm T}\in\bbR^3$ and $p=(p_1,p_2,p_3)^{\rm T}\in\bbR^3$. The physical meaning of $m$ is the total angular momentum, whereas $p$ represents the total linear momentum of the
system. A detailed introduction to the general context of rigid body dynamics and its mathematical foundations can be found in \cite{MR}.

The integrable case of this system found in the original paper by Kirchhoff \cite{Kirch} and carrying his name is characterized by the Hamilton function $H=H_1/2$, where
\begin{equation}\nn
H_1=a_1(m_1^2+m_2^2)+a_3m_3^2+b_1(p_1^2+p_2^2)+b_3p_3^2.
\end{equation}
The differential equations of the Kirchhoff case are:
\beq \label{eq:Kirchhoff}
\left\{ \begin{array} {l}
\dot{m_1} = (a_3-a_1)m_2 m_3 + (b_3-b_1)p_2p_3, \vspace{.1truecm}\\
\dot{m_2} = (a_1-a_3)m_1 m_3 + (b_1-b_3)p_1p_3, \vspace{.1truecm}\\
\dot{m_3} = 0, \vspace{.1truecm} \\
\dot{p_1} = a_3p_2 m_3-a_1p_3 m_2,\vspace{.1truecm}\\
\dot{p_2} =  a_1p_3 m_1-a_3p_1m_3, \vspace{.1truecm} \\
\dot{p_3} =a_1(p_1m_2 - p_2m_1).
\end{array} \right.
\eeq
Along with the Hamilton function $H$ and the  Casimir functions (\ref{eq: e3 Cas}), it possesses the obvious fourth integral, due to the rotational symmetry of the system:
$
H_2=m_3.
$
Traditionally, the explicit integration of the Kirchhoff case in terms of elliptic functions starts with the following observation: the component $p_3$ of the solution satisfies the differential equation
\begin{equation}\nn
\dot{p}_3^2=P_4(p_3),
\end{equation}
with a quartic polynomial $P_4$ whose coefficients are expressed through integrals of motion:
\[
P_4(p_3)=a_1\big(H_1-a_3m_3^2-b_1(C_1-p_3^2)-b_3p_3^2\big)(C_1-p_3^2)-a_1^2(C_2-m_3p_3)^2.
\]

We mention also the following Wronskian relation which follows easily from equations of motion:
\begin{equation}\label{eq: KC W}
a_1(\dot{m}_1p_1-m_1\dot{p}_1)+a_1(\dot{m}_2p_2-m_2\dot{p}_2)+(2a_3-a_1)(\dot{m}_3p_3-m_3\dot{p}_3)=0.
\end{equation}

Applying the HK approach to (\ref{eq:Kirchhoff}), we obtain the following system of equations:
\beq \nn
\left\{ \begin{array} {l}
\wm_1-m_1 = \epsilon (a_3-a_1)(\wm_2m_3+m_2\wm_3)+\epsilon(b_3-b_1)(\wip_2p_3+p_2\wip_3), \vspace{.1truecm}\\
\wm_2-m_2 = \epsilon (a_1-a_3)(\wm_1m_3+m_1\wm_3)+\epsilon(b_1-b_3)(\wip_1p_3+p_1\wip_3), \vspace{.1truecm}\\
\wm_3-m_3 = 0, \vspace{.1truecm} \\
\wip_1-p_1 = \epsilon a_3(\wip_2m_3+p_2\wm_3)-\epsilon a_1(\wip_3m_2+p_3\wm_2), \vspace{.1truecm}\\
\wip_2-p_2 =\epsilon a_1(\wip_3m_1+p_3\wm_1)-\epsilon a_3(\wip_1m_3+p_1\wm_3), \vspace{.1truecm} \\
\wip_3-p_3 = \epsilon a_1(\wip_1m_2+p_1\wm_2)-\epsilon a_1(\wip_2m_1+p_2\wm_1).
\end{array} \right.
\eeq
As usual, these equations define a birational map $\wx=f(x,\epsilon)$, $x=(m,p)^{\rm T}$. We will refer to this map as dK. Like in the case of dLT, $m_3$ is a conserved quantity of dK. A further ``simple'' conserved quantity can be found from the following natural discretization of the Wronskian relation (\ref{eq: KC W}).
\begin{prop} \label{thm: Kirchhoff simple basis}
The set $\Gamma=(\widetilde{m}_1p_1-m_1\widetilde{p}_1,\, \widetilde{m}_2p_2-m_2\widetilde{p}_2,\, \widetilde{m}_3p_3-m_3\widetilde{p}_3)$ is a HK basis for the map dK with $\dim K_{\Gamma}(x) = 1$.  At each point $x\in\bbR^6$ we have: $K_{\Gamma}(x) = [1:1:-\gamma_3]$, where $\gamma_3$ is a conserved quantity of dK given by
\begin{equation}\label{eq: dKC W simple}
\gamma_3=\frac{\Delta_0}{a_1\Delta_1},
\end{equation}
where
\begin{eqnarray}
&&\Delta_0 =a_1-2a_3+\epsilon^2a_1^2(a_1-a_3)(m_1^2+m_2^2)+\epsilon^2a_1a_3(b_1-b_3)(p_1^2+p_2^2),
\label{eq: dKC Delta0}\\
&&\Delta_1 = 1+\epsilon^2a_3(a_1-a_3)m_3^2+\epsilon^2a_1(b_1-b_3)p_3^2.\label{eq: dKC Delta1}
\end{eqnarray}
\end{prop}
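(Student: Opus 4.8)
The plan is to mirror the proof of Proposition~\ref{thm: lagrange simple basis} for dLT, adapting it to the map dK. The natural candidate for the coefficient is dictated by the continuous Wronskian relation \eqref{eq: KC W}: dividing it by $a_1(\widetilde{m}_3 p_3-m_3\widetilde{p}_3)$ and deforming to the discrete setting, one is led to \emph{define}
\[
\gamma_3=\frac{(\widetilde{m}_1p_1-m_1\widetilde{p}_1)+(\widetilde{m}_2p_2-m_2\widetilde{p}_2)}
{\widetilde{m}_3p_3-m_3\widetilde{p}_3},
\]
where $\widetilde{x}=f(x,\epsilon)$ denotes one iterate of dK. At $\epsilon=0$ this ratio degenerates to $(a_1-2a_3)/a_1$, consistent with \eqref{eq: KC W} and with the leading term of \eqref{eq: dKC W simple}.

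The central step is to evaluate this ratio explicitly. Using the closed form $\widetilde{x}=A^{-1}(x,\epsilon)(\mathds{1}+\epsilon B)x$ of the map (recalling that $\widetilde{m}_3=m_3$), I would substitute into the numerator and denominator and simplify with a computer algebra system. The essential facts to be verified are that, after cancellation, the ratio loses all dependence on $\widetilde{x}$ and collapses to a function of $x$ alone, and that this function is precisely $\Delta_0/(a_1\Delta_1)$ with $\Delta_0,\Delta_1$ as in \eqref{eq: dKC Delta0}, \eqref{eq: dKC Delta1}; in particular it is manifestly even in $\epsilon$.

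Conservation of $\gamma_3$ then follows from the reversibility $f^{-1}(x,\epsilon)=f(x,-\epsilon)$ combined with evenness, exactly the mechanism used in the non-homogeneous ``smaller system'' argument earlier in the paper. By reversibility, the defining relation $(\widetilde{m}_1p_1-m_1\widetilde{p}_1)+(\widetilde{m}_2p_2-m_2\widetilde{p}_2)-\gamma_3(\widetilde{m}_3p_3-m_3\widetilde{p}_3)=0$ holds not only for the forward pair $(x,\widetilde{x})$ but, after $\epsilon\mapsto-\epsilon$, also for the backward pair $(\undertilde{x},x)$ (the simultaneous sign change of numerator and denominator is invisible to the ratio); since $\gamma_3$ is even in $\epsilon$, the value obtained from $(\undertilde{x},x)$ is again $\gamma_3(x)$. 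On the other hand, that same value is $\gamma_3(\undertilde{x})$ by the very definition of $\gamma_3$ at the base point $\undertilde{x}$. Hence $\gamma_3(\undertilde{x})=\gamma_3(x)$, so $\gamma_3$ is a conserved quantity. The HK basis claim is now immediate: at each $f^i(x)$ the defining identity holds, and since $\gamma_3(f^i(x))=\gamma_3(x)$ the single vector $c=(1,1,-\gamma_3(x))$ satisfies \eqref{eq: fundamental} for all $i$, so $\Gamma$ is a HK basis with $\dim K_\Gamma(x)\ge1$; genericity of the rank (the matrix $(\varphi_j(f^i x))$ having rank $2$ along an orbit) yields $\dim K_\Gamma(x)=1$.

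The step I expect to be the main obstacle is the explicit simplification to $\Delta_0/(a_1\Delta_1)$. Because $f$ requires inverting the phase-dependent $6\times6$ matrix $A(x,\epsilon)$, the raw numerator and denominator are large rational functions, and the wholesale cancellation of every $\widetilde{x}$-dependent term and every odd-in-$\epsilon$ contribution---leaving only the stated even polynomial ratio---is precisely the ``remarkable and miraculous'' cancellation phenomenon already observed for dET, for which we have no structural explanation and which therefore has to be checked by direct symbolic computation.
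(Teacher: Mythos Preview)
Your proposal is correct and follows essentially the same approach as the paper: define $\gamma_3$ as the indicated ratio, compute it symbolically to obtain the explicit expression $\Delta_0/(a_1\Delta_1)$, and observe that this is even in $\epsilon$, hence a conserved quantity by reversibility. The paper's own proof is just a two-line reference back to the dLT case, so your version is if anything more detailed (in particular your explicit remarks on why evenness forces conservation and why $\dim K_\Gamma=1$), but the underlying argument is identical.
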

\begin{proof}
Like in the case of dLT, we let MAPLE compute the quantity
\[
\gamma_3=\frac{(\widetilde{m}_1p_1-m_1\widetilde{p}_1)+(\widetilde{m}_2p_2-m_2\widetilde{p}_2)}
           {(\widetilde{m}_3p_3-m_3\widetilde{p}_3)},
\]
which results in (\ref{eq: dKC W simple}), an even function of $\epsilon$ and therefore a conserved quantity.
\end{proof}
Interestingly enough, this same integral may also be obtained from another HK basis:
\begin{prop}
The set $\Phi_0=(m_1^2+m_2^2,p_1^2+p_2^2,p_3^2,1)$ is a HK Basis for the map dK with
$\dim K_{\Phi_0}(x)$ = 1. The linear combination of these functions vanishing along the orbits can be put as
$\Delta_0-\gamma_3a_1\Delta_1=0$.
\end{prop}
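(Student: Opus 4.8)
The plan is to deduce everything from the already-established fact (Proposition \ref{thm: Kirchhoff simple basis}) that $\gamma_3=\Delta_0/(a_1\Delta_1)$ is a conserved quantity of dK, together with the observation that $m_3$ is conserved. First I would read off from \eqref{eq: dKC Delta0} and \eqref{eq: dKC Delta1} that both $\Delta_0$ and $a_1\Delta_1$ are affine combinations of the four functions in $\Phi_0=(m_1^2+m_2^2,\,p_1^2+p_2^2,\,p_3^2,\,1)$, once one treats $m_3^2$ (which is conserved) as a constant absorbed into the coefficient of $1$. Explicitly,
\begin{align*}
\Delta_0 &= \epsilon^2a_1^2(a_1-a_3)\,(m_1^2+m_2^2)+\epsilon^2a_1a_3(b_1-b_3)\,(p_1^2+p_2^2)+(a_1-2a_3),\\
a_1\Delta_1 &= \epsilon^2a_1^2(b_1-b_3)\,p_3^2+a_1\big(1+\epsilon^2a_3(a_1-a_3)m_3^2\big).
\end{align*}

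Next, for an arbitrary initial point $x$ I would freeze $\gamma_3=\gamma_3(x)$ and $m_3=m_3(x)$ at their (orbit-constant) values and consider the function $\Delta_0-\gamma_3\,a_1\Delta_1$. By the very definition of $\gamma_3$ and its conservation, at every iterate $f^i(x)$ one has $\Delta_0(f^i(x))=\gamma_3(f^i(x))\,a_1\Delta_1(f^i(x))=\gamma_3(x)\,a_1\Delta_1(f^i(x))$, so this function vanishes along the whole orbit. Substituting the two decompositions above yields the vanishing combination $c_1(m_1^2+m_2^2)+c_2(p_1^2+p_2^2)+c_3p_3^2+c_4=0$ with
\begin{align*}
&c_1=\epsilon^2a_1^2(a_1-a_3),\qquad c_2=\epsilon^2a_1a_3(b_1-b_3),\qquad c_3=-\gamma_3\,\epsilon^2a_1^2(b_1-b_3),\\
&c_4=(a_1-2a_3)-\gamma_3\,a_1\big(1+\epsilon^2a_3(a_1-a_3)m_3^2\big).
\end{align*}
Since $\gamma_3$ and $m_3$ are conserved, the vector $c=(c_1,c_2,c_3,c_4)$ is constant along the orbit, hence $c\in K_{\Phi_0}(x)$, which is exactly what Definition \ref{Def: Hirota mech} requires. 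This simultaneously shows that $\Phi_0$ is a HK basis and that the vanishing combination may be written as $\Delta_0-\gamma_3a_1\Delta_1=0$.

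It remains to establish the dimension count $\dim K_{\Phi_0}(x)=1$. Here I would invoke the criterion of Proposition \ref{Lem: dim K}: since $l=4$, it suffices to verify that for generic $x$ the homogeneous system \eqref{eq: fundamental} over the window $i=i_0,i_0+1,i_0+2$ has a one-dimensional solution space coinciding with the one over $i=i_0+1,i_0+2,i_0+3$; equivalently, that the $4\times4$ matrix $[\varphi_j(f^i(x))]$ built from four consecutive iterates has rank $3$. As $c_1=\epsilon^2a_1^2(a_1-a_3)\neq 0$ for generic parameters, the null-space is already nontrivial, so the only real point is to exclude a second independent relation, i.e. to show the rank is not $2$. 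This rank statement is the single genuinely computational ingredient and, as elsewhere in the paper, I would confirm it by a symbolic computation, checking that the $3\times3$ minors of $[\varphi_j(f^i(x))]$ do not all vanish identically.

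The main obstacle is precisely this final step: everything preceding it is a bookkeeping reorganization of facts already in hand (the conservation of $\gamma_3$ and of $m_3$), whereas ruling out a second vanishing combination among $m_1^2+m_2^2$, $p_1^2+p_2^2$, $p_3^2$, $1$ requires controlling the iterates $f^i(x)$ and is, in this framework, accessible only through direct symbolic verification rather than through a structural argument.
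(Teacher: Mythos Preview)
Your proposal is correct but proceeds differently from the paper. You deduce the vanishing combination as a corollary of the already-proved conservation of $\gamma_3$: since $\Delta_0$ and $a_1\Delta_1$ are affine in the $\Phi_0$-functions once the conserved $m_3^2$ is absorbed into the constant term, the identity $\Delta_0-\gamma_3 a_1\Delta_1=0$ immediately exhibits a nonzero element of $K_{\Phi_0}(x)$. The paper instead attacks the linear system $(c_1(m_1^2+m_2^2)+c_2(p_1^2+p_2^2)+c_3p_3^2)\circ f^i=1$ directly: solving it for $i=-1,0,1$, reading off the relation $a_3(b_1-b_3)c_1=a_1(a_1-a_3)c_2$, then re-solving the reduced system (this relation together with $i=0,1$) symbolically and observing that the solution is even in $\epsilon$, whence conserved. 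Your route is cleaner for the existence part, as it recycles Proposition \ref{thm: Kirchhoff simple basis} rather than redoing a MAPLE computation. On the other hand, the paper's evenness argument, by producing a \emph{unique} solution of a non-symmetric window and showing it is conserved, simultaneously pins down $\dim K_{\Phi_0}(x)=1$; in your approach this dimension bound must be verified separately by the rank computation you describe, which is exactly the computational residue you correctly flag.
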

\begin{proof}
The statement of the Proposition deals with the solution of a linear system of equations consisting of
\begin{equation}\label{eq: Kirch simple syst}
(c_1(m_1^2+m_2^2)+c_2(p_1^2+p_2^2)+c_3p_3^2)\circ f^i(m,p,\epsilon) =1
\end{equation}
for all $i\in\bbZ$. We solve this system with $i=-1,0,1$ (numerically or symbolically), and observe that the solutions satisfy $a_3(b_1-b_3)c_1=a_1(a_1-a_3)c_2$. Then, we consider the system of three equations for $c_1$, $c_2$, $c_3$ consisting of the latter linear relation between $c_1$, $c_2$, and of eqs. (\ref{eq: Kirch simple syst}) for $i=0,1$. This system is easily solved symbolically (by MAPLE), its unique solution can be put as in the Proposition. Its components are manifestly even functions of $\epsilon$, thus conserved quantities.
\end{proof}

\begin{prop} \label{thm: kirchhoffullbasis}
${}$

{\rm (a)} The set $\Phi= (m_1^2+m_2^2,\, p_1m_1+p_2m_2,\, p_1^2+p_2^2,\, p_3^2,\, p_3,\, 1)$ is a HK basis for the map dK with $\dim K_{\Phi}(x) = 3$.
\smallskip

{\rm (b)} The set $\Phi_1=(1,\, p_3,\, p_3^2,\, m_1^2+m_2^2)$ is a HK basis for the map dK with a
one-dimensional null-space. At each point $x\in\bbR^6$ we have: $K_{\Phi_1}(x) = [c_0:c_1:c_2:-1]$.
The functions $c_0,c_1,c_2$ are conserved quantities of the map dK, given by
\begin{eqnarray*}
&&c_0 = \frac{a_1(m_1^2+m_2^2)-(b_1-b_3)p_3^2+\epsilon^2 c_0^{(4)}+\epsilon^4
c_0^{(6)}+\epsilon^6 c_0^{(8)}+\epsilon^8 c_0^{(10)}}{a_1\Delta_1\Delta_2},\\
&&c_1=
-\frac{2\epsilon^2 a_3(b_1-b_3) m_3 \big(C_2+\epsilon^2 c_1^{(4)}+\epsilon^4 c_1^{(6)} +\epsilon^6 c_1^{(8)} \big)}{\Delta_1\Delta_2},\\
&&c_2 = \frac{ (b_1-b_3) \big (1 + \epsilon^2 c_2^{(2)}+\epsilon^4 c_2^{(4)}+\epsilon^6 c_2^{(6)} +\epsilon^8 c_2^{(8)} \big)}{a_1\Delta_1\Delta_2}, \\
\end{eqnarray*}
where $\Delta_1$ is given in (\ref{eq: dKC Delta1}), and
$\Delta_2=1+\epsilon^2\Delta_2^{(2)}+\epsilon^4 \Delta_2^{(4)}+\epsilon^6\Delta_2^{(6)}$; coefficients
$c_k^{(q)}$ and $\Delta_2^{(q)}$ are homogeneous polynomials of degree $q$ in the phase variables. In particular:
\begin{eqnarray*}
&&c_2^{(2)}= -2 a_1^2(m_1^2+m_2^2)-(a_1^2-2a_1a_3+3a_3^2)m_3^2+a_1(b_1-b_3)(p_1^2+p_2^2)-a_1(b_1-b_3)p_3^2, \\
&&\Delta_2^{(2)}=a_1^2(m_1^2+m_2^2)+(a_1^2-3a_1a_3+3a_3^2)m_3^2-a_1(b_1-b_3)(p_1^2+p_2^2)+a_1(b_1-b_2)
p_3^2.
\end{eqnarray*}

\smallskip

{\rm (c)} The set $\Phi_2=(1,p_3,p_3^2,m_1p_1+m_2p_2)$ is a HK basis for the map dK with a
one-dimensional null-space. At each point $x\in\bbR^6$ we have: $K_{\Phi_2}(x) = [d_0:d_1:d_2:-1]$.
The functions $d_0,d_1,d_2$ are conserved quantities of the map dK, given by
\begin{eqnarray*}
&&d_0 =\frac{C_2+\epsilon^2 d_0^{(4)}+\epsilon^4
d_0^{(6)}+\epsilon^6 d_0^{(8)}+\epsilon^8 d_0^{(10)}}{\Delta_1\Delta_2},\\
&&d_1 = \frac{ m_3 \big( -1+\epsilon^2 d_1^{(2)}+\epsilon^4 d_1^{(4)}
+\epsilon^6 d_1^{(6)}+\epsilon^8 d_1^{(8)} \big)}{\Delta_1\Delta_2}, \\
&&d_2 =\frac{a_1(b_3-b_1)  \epsilon^2 \big( C_2  +\epsilon^2 c_1^{(4)}+\epsilon^4 c_1^{(6)} +\epsilon^6 c_1^{(8)} \big)}{\Delta_1\Delta_2},\\
\end{eqnarray*}
where $d_k^{(q)}$ are homogeneous polynomials of degree $q$ in the phase variables. In particular,
\[
d_1^{(2)}= -a_1a_3( m_1^2+m_2^2)-(a_1^2-3a_1a_3+3a_3^2)m_3^2+(a_1-a_3)(b_1-b_3)(p_1^2+ p_2^2)-3a_1(b_1-b_3) p_3^2.
\]
\smallskip

{\rm (d)} The set $\Phi_3=(1,\, p_3,\, p_3^2,\, p_1^2+p_2^2)$ is a HK basis for the map dK with a
one-dimensional null-space. At each point $x\in\bbR^6$ we have: $K_{\Phi_3}(x) = [e_0:e_1:e_2:-1]$.
The functions $e_0,e_1,e_2$ are conserved quantities of the map dK, given by
\begin{eqnarray*}
 &&e_0 =
\frac{C_1 +\epsilon^2 e_0^{(4)}+\epsilon^4
e_0^{(6)}+\epsilon^6 e_0^{(8)}+\epsilon^8 e_0^{(10)}}{\Delta_1\Delta_2},\\
&&e_1 = \frac{2\epsilon^2a_1(a_3-a_1)m_3\big(C_2+\epsilon^2 c_1^{(4)}+\epsilon^4 c_1^{(6)} +\epsilon^6 c_1^{(8)} \big)} {\Delta_1\Delta_2}, \\
&&e_2 = \frac{-1 +\epsilon^2 e_2^{(2)}+\epsilon^4
e_2^{(4)}+\epsilon^6 e_6^{(6)}+\epsilon^8 e_8^{(8)}}{\Delta_1\Delta_2}, \\
\end{eqnarray*}
where $e_k^{(q)}$ are polynomials of degree $q$ in the phase variables. In particular,
\[
e_2^{(2)}=-a_1^2(m_1^2+m_2^2)-(2a_1^2-4a_1a_3+3a_3^2)m_3^2+2a_1(b_1-b_3)(p_1^2+p_2^2)-a_1(b_1-b_3)p_3^2.
\]
\end{prop}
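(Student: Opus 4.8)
The plan is to establish parts (b), (c) and (d) by the same symbolic/numerical scheme already used for the Lagrange top in Proposition \ref{thm: lagrangefullbasis}, and then to read off (a) by assembling the three one-dimensional null-spaces.

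\emph{Part (b).} First I would fix the affine representative $[c_0:c_1:c_2:-1]$ of the null-space of $\Phi_1=(1,p_3,p_3^2,m_1^2+m_2^2)$, i.e.\ set up the linear system
\[
(c_0+c_1p_3+c_2p_3^2)\circ f^i(m,p,\epsilon)=(m_1^2+m_2^2)\circ f^i(m,p,\epsilon),\qquad i\in\bbZ,
\]
which is the dK-analogue of (\ref{eq: Gamma1 syst}). Since the second iterate $f^2$ of the six-dimensional map dK is prohibitively large, I would not solve the window $i=0,1,2$ directly; instead I would evaluate the system numerically, detect a linear relation between the unknowns $c_1$ and $c_2$ (an $O(\epsilon^2)$-deformation of a relation of the type (\ref{eq: Gamma1 rel}), with coefficients built from $a_1,a_3,b_1-b_3$ and $m_3$), and use it to replace one iterate equation. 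The resulting determined system for $c_0,c_1,c_2$ then involves only the first iterate $f$ and is solvable by MAPLE; the output must be simplified to the stated closed forms with common denominator $\Delta_1\Delta_2$, where $\Delta_1$ is given in (\ref{eq: dKC Delta1}).

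The invariance then follows, exactly as for dLT, from reversibility: since $f(x,-\epsilon)=f^{-1}(x,\epsilon)$ and the computed $c_k(x,\epsilon)$ are \emph{even} in $\epsilon$, a solution of the fundamental relation (\ref{eq: fundamental}) satisfied for $i=0,1$ is automatically satisfied for $i=-1$ as well, so that it holds on a window of three consecutive iterates; hence $K_{\Phi_1}(x)$ is one-dimensional and $f$-invariant (cf.\ the discussion around Proposition \ref{Lem: dim K}), and the $c_k$ are conserved quantities. Parts (c) and (d) are entirely analogous: I would solve the corresponding systems for $(d_0,d_1,d_2)$ against $m_1p_1+m_2p_2$ and for $(e_0,e_1,e_2)$ against $p_1^2+p_2^2$, using in each case a linear relation that expresses $d_2$, respectively $e_2$, through the $c_2$ already obtained in (b), and again verifying evenness. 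For part (a) it then suffices to note that the three affine null-vectors produced in (b)--(d) read, in the ordered basis $\Phi=(m_1^2+m_2^2,p_1m_1+p_2m_2,p_1^2+p_2^2,p_3^2,p_3,1)$,
\[
(-1,0,0,c_2,c_1,c_0),\quad (0,-1,0,d_2,d_1,d_0),\quad (0,0,-1,e_2,e_1,e_0),
\]
and are manifestly linearly independent, so $\dim K_\Phi(x)\ge 3$; that the dimension equals $3$ I would confirm by checking that the solution space of (\ref{eq: fundamental}) on four consecutive iterates is exactly three-dimensional.

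The analytic content here is light; the real obstacle is computational. Already forming and manipulating $f^2$ for dK is infeasible, so everything hinges on guessing the correct linear relations among the unknowns that let one bypass the second iterate, and then verifying \emph{symbolically} that the resulting solutions are even in $\epsilon$ and simplify to the advertised rational expressions. I expect the heaviest step to be this symbolic simplification, together with the a priori identification of the relations, which is precisely the complexity issue flagged for the Clebsch system in the text.
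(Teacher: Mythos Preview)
Your approach matches the paper's, which is equally terse: part (b) is said to be ``direct calculation'', while (c) and (d) are reduced to (b) via linear relations, exactly as in Proposition~\ref{thm: lagrangefullbasis} for dLT. Two small corrections are worth flagging. First, the paper states that the relevant relations link $c_1$ with $d_2$ and $c_1$ with $e_1$ (not $c_2$ with $d_2$ and $c_2$ with $e_2$); this is visible from the stated formulas, since $c_1$, $d_2$, $e_1$ all share the same bracket $C_2+\epsilon^2c_1^{(4)}+\cdots$, whence $2a_3m_3\,d_2=a_1c_1$ and $a_3(b_1-b_3)e_1=a_1(a_1-a_3)c_1$. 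Second, for part (b) itself the dLT-style relation between $c_1$ and $c_2$ does \emph{not} carry over to dK in any obvious way (the numerators of $c_1$ and $c_2$ are structurally different here), so you may have to handle (b) by a genuinely direct computation rather than by a guessed relation; the paper does not say how this is circumvented. Otherwise your plan, including the evenness argument for invariance and the assembly of (a) from (b)--(d), is exactly what the paper does.
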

\begin{proof}
Statement (b) is proven using direct calculation. Statements (c) and (d) then follow analogously to Proposition \ref{thm: lagrangefullbasis} from the existence of linear relations between $c_1$ and $d_2$, as well as between $c_1$ and $e_1$.
\end{proof}
One can show that each of the sets $\{c_0,c_1,c_2\}$, $\{d_0,d_1,d_2\}$, and $\{e_0,e_1,e_2\}$ consists of three independent integrals of motion. Moreover, each of the sets $\{c_0,c_1,c_2,m_3\}$ and $\{e_0,e_1,e_2,m_3\}$ consists of four independent integrals. As further important results, me mention that Propositions \ref{thm: lagrange bilinear basis} (on the ``bilinear'' HK bases), \ref{th: dLT biquad} (on the invariant biquadratic curve for $(p_3,\wip_3)$), and \ref{th: dLT inv meas} (on the invariant measure) hold literally true for the map dK.

\section{Clebsch case of the rigid body motion in an ideal fluid}
\label{Sect: dC}

Another famous integrable case of the Kirchhoff equations was discovered by Clebsch \cite{C} and is characterized by the Hamilton function $H=H_1/2$, where
\begin{equation}\nn
H_1= \langle m, Am\rangle + \langle p,Bp\rangle=
\frac{1}{2}\sum_{k=1}^3(a_km_k^2+b_kp_k^2),
\end{equation}
where $A={\rm{diag}}(a_1,a_2,a_3)$ and $B={\rm{diag}}(b_1,b_2,b_3)$ satisfy the condition 
\begin{equation}\label{eq: Clebsch cond}
\frac{b_1-b_2}{a_3}+\frac{b_2-b_3}{a_1}+\frac{b_3-b_1}{a_2}=0.
\end{equation}
This condition is also equivalent to saying that the quantity 
\begin{equation}\label{eq: Clebsch theta}
\theta=\frac{b_j-b_k}{a_i(a_j-a_k)}
\end{equation}
takes one and the same value for all permutations $(i,j,k)$ of the indices (1,2,3).

For an embedding of this system into the modern theory of integrable systems see \cite{Per, RSTS}. Note that the Kirchhoff case ($a_1=a_2$ and $b_1=b_2$) can be considered as a particular case of the Clebsch case, but is special in many respects (the symmetry resulting in the existence of the Noether integral $m_3$, solvability in elliptic functions, in contrast to the general Clebsch system being solvable in terms of theta-functions of genus 2, etc.). Equations of motion of the Clebsch case are:
\begin{equation}\label{gcl}
\left\{ \begin{array}{l}
\dot{m} =  m\times Am+p\times Bp\,, \vspace{.1truecm}\\
\dot{p} =  p\times Am,
\end{array} \right.
\end{equation}
or in components
\beq
\label{eq: gClebsch}
\left\{ \begin{array}{l}
\dot{m}_1 = (a_3-a_2)m_2m_3+(b_3-b_2)p_2p_3,  \vspace{.1truecm}\\
\dot{m}_2 = (a_1-a_3)m_3m_1+(b_1-b_3)p_3p_1,  \vspace{.1truecm} \\
\dot{m}_3 =(a_2-a_1)m_1m_2+(b_2-b_1)p_1p_2, \vspace{.1truecm} \\
\dot{p}_1 = a_3m_3p_2-a_2m_2p_3, \vspace{.1truecm}  \\
\dot{p}_2 = a_1m_1p_3-a_3m_3p_1, \vspace{.1truecm}\\
\dot{p}_3 =a_2m_2p_1-a_1m_1p_2.
\end{array} \right.
\end{equation}
Condition (\ref{eq: Clebsch cond}) can be resolved for $a_i$ as 
\begin{equation}\nn
a_1=\frac{b_2-b_3}{\omega_2-\omega_3}\,,\quad
a_2=\frac{b_3-b_1}{\omega_3-\omega_1}\,,\quad
a_3=\frac{b_1-b_2}{\omega_1-\omega_2}\,.
\end{equation}
For fixed values of $\omega_i$ and varying values of $b_i$, equations of motion of the Clebsch case share the integrals of motion: the Casimirs $C_1$, $C_2$, cf. eq. (\ref{eq: e3 Cas}), and the Hamiltonians
\begin{equation}\nn
I_i=p_i^2+\frac{m_j^2}{\omega_i-\omega_k}+\frac{m_k^2}{\omega_i-\omega_j}.
\end{equation}
There are four independent functions among $C_i$, $I_i$, because of $C_1=I_1+I_2+I_3$. Note that  $H_1=b_1I_1+b_2I_2+b_3I_3$. One can denote all models with the same $\omega_i$ as a hierarchy, single flows of which are characterized by the parameters $b_i$. Usually, one denotes as ``the first flow'' of this hierarchy the one corresponding to the choice $b_i=\omega_i$, so that $a_i=1$. Thus, the first flow is characterized by the value $\theta=\infty$ of the constant (\ref{eq: Clebsch theta}).

\subsection{First flow of the Clebsch system}
The first flow of the Clebsch hierarchy is generated by the Hamilton function $H=H_1/2$, where
\begin{equation}\nn
    H_1=m_1^2+m_2^2 +m_3^2+\omega_1p_1^2+\omega_2p_2^2+\omega_3p_3^2.
\end{equation}
The corresponding equations of motion read:
\begin{equation}\nonumber
\left\{\begin{array}{l}
\dot{m}=p\times\Omega p,\vspace{.1truecm}\\
\dot{p}=p\times m,
\end{array}\right.
\end{equation}
where $\Omega={\rm diag}(\omega_1,\omega_2,\omega_3)$ is the matrix of parameters, or in components:
\beq \nn
\left\{\begin{array}{l}
\dot{m}_1= (\omega_3-\omega_2)p_2p_3, \vspace{.1truecm}  \\
\dot{m}_2 = (\omega_1-\omega_3)p_3p_1, \vspace{.1truecm} \\
\dot{m}_3 =(\omega_2-\omega_1)p_1p_2, \vspace{.1truecm}\\
\dot{p}_1=m_3 p_2 - m_2 p_3,  \vspace{.1truecm}\\
\dot{p}_2 = m_1 p_3 - m_3 p_1,  \vspace{.1truecm} \\
\dot{p}_3 = m_2 p_1 - m_1 p_2.
\end{array}\right.
\end{equation}
The fourth independent quadratic integral can be chosen as
\begin{equation}\nn
H_2=\omega_1m_1^2+\omega_2m_2^2+\omega_3m_3^2 -
         \omega_2\omega_3p_1^2-\omega_3\omega_1p_2^2-\omega_1\omega_2p_3^2.
\end{equation}
Note that $H_1=\omega_1I_1+\omega_2I_2+\omega_3I_3$, $H_1=-\omega_2\omega_3I_1-\omega_3\omega_1I_2-\omega_1\omega_2I_3$.

We mention the following Wronskian relation:
\begin{equation}
\label{eq: Clebsch W}
(\dot{m}_1p_1-m_1\dot{p}_1)+(\dot{m}_2p_2-m_2\dot{p}_2)+(\dot{m}_3p_3-m_3\dot{p}_3)=0,
\end{equation}
which holds true for the first Clebsch flow.

The HK discretization of the first Clebsch flow (proposed in \cite{RC}) is:
\beq\nn
\left\{\begin{array}{l}
\widetilde{m}_1-m_1 =\epsilon(\omega_3-\omega_2)
(\widetilde{p}_2p_3+p_2\widetilde{p}_3),        \vspace{.1truecm}  \\
\widetilde{m}_2-m_2 = \epsilon(\omega_1-\omega_3)
(\widetilde{p}_3p_1+p_3\widetilde{p}_1),         \vspace{.1truecm}   \\
\widetilde{m}_3-m_3 = \epsilon(\omega_2-\omega_1)
(\widetilde{p}_1p_2+p_1\widetilde{p}_2),        \vspace{.1truecm}   \\
\widetilde{p}_1-p_1 =
\epsilon(\widetilde{m}_3p_2+m_3\widetilde{p}_2)-
\epsilon(\widetilde{m}_2p_3+m_2\widetilde{p}_3), \vspace{.1truecm}   \\
\widetilde{p}_2-p_2 =
\epsilon(\widetilde{m}_1p_3+m_1\widetilde{p}_3)-
\epsilon(\widetilde{m}_3p_1+m_3\widetilde{p}_1), \vspace{.1truecm}   \\
\widetilde{p}_3-p_3 =
\epsilon(\widetilde{m}_2p_1+m_2\widetilde{p}_1)-
\epsilon(\widetilde{m}_1p_2+m_1\widetilde{p}_2).
\end{array}\right.
\end{equation}
As usual, it leads to a reversible birational map $\wx=f(x,\epsilon)$, $x=(m,p)^{\rm T}$, given by $f(x,\epsilon)=A^{-1}(x,\epsilon)x$ with
\[
A(m,p,\epsilon) = \begin{pmatrix}
1 & 0 & 0 & 0 & \epsilon\omega_{23}p_3 & \epsilon\omega_{23}p_2 \\
0 & 1 & 0 & \epsilon\omega_{31}p_3 & 0 & \epsilon\omega_{31}p_1 \\
0 & 0 & 1 & \epsilon\omega_{12}p_2 & \epsilon\omega_{12}p_1 & 0 \\
0 & \epsilon p_3 & -\epsilon p_2 & 1 & -\epsilon m_3 & \epsilon m_2 \\
-\epsilon p_3 & 0 & \epsilon p_1 & \epsilon m_3 & 1 & -\epsilon m_1  \\
\epsilon p_2 & -\epsilon p_1 & 0 & -\epsilon m_2 & \epsilon m_1 & 1
\end{pmatrix},
\]
where the abbreviation $\omega_{ij}=\omega_i-\omega_j$ is used. This map will be referred to as dC.

A ``simple'' conserved quantity can be found from the following natural discretization of the Wronskian relation (\ref{eq: Clebsch W}).
\begin{prop} \label{thm: Clebsch simple basis}
The set $\Gamma=(\widetilde{m}_1p_1-m_1\widetilde{p}_1,\, \widetilde{m}_2p_2-m_2\widetilde{p}_2,\, \widetilde{m}_3p_3-m_3\widetilde{p}_3)$ is a HK basis for the map dC with $\dim K_{\Gamma}(x) = 1$.  At each point $x\in\bbR^6$ we have: $K_\Gamma(x)=[e_1:e_2:e_3]$, where
\begin{equation}\label{eq: Clebsch e}
e_i=1+\epsilon^2(\omega_i-\omega_j)p_j^2+\epsilon^2(\omega_i-\omega_k)p_k^2.
\end{equation}
\end{prop}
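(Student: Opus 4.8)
The plan is to treat this statement as the discrete counterpart of the continuous Wronskian relation (\ref{eq: Clebsch W}), whose coefficients $(1,1,1)$ get deformed into $(e_1,e_2,e_3)$. Writing $\varphi_i=\widetilde{m}_ip_i-m_i\widetilde{p}_i$ for the three members of $\Gamma$, I would prove the result in three moves: (i) establish the pointwise algebraic identity $e_1\varphi_1+e_2\varphi_2+e_3\varphi_3=0$, valid at every $x$ with $\widetilde{x}=f(x,\epsilon)$; (ii) show that the projective vector $[e_1:e_2:e_3]$ is constant along $\mathrm{dC}$-orbits; and (iii) deduce from (i)--(ii) that $(e_1(x),e_2(x),e_3(x))$ spans $K_\Gamma(x)$ and that this space is exactly one-dimensional. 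Step (ii) is precisely what upgrades the single pointwise relation of (i) into the orbit-wide vanishing (\ref{eq: fundamental}) demanded of a HK basis, in the spirit of Propositions \ref{Th: K_integral} and \ref{Lem: dim K}.

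For step (i) I would substitute the defining equations of $\mathrm{dC}$ directly into $\varphi_i$, mimicking the computation behind (\ref{eq: Clebsch W}): in the continuous case $\sum_i\varphi_i^{\rm cont}$ vanishes because the cubic $p$-terms cancel by $\sum_i(\omega_k-\omega_j)=0$ while the $m$-terms cancel by antisymmetry. The key bookkeeping device is the observation that the coefficients (\ref{eq: Clebsch e}) collapse to $e_i=g+\epsilon^2\omega_iC_1$, where $g=1-\epsilon^2(\omega_1p_1^2+\omega_2p_2^2+\omega_3p_3^2)$ and $C_1=p_1^2+p_2^2+p_3^2$; equivalently $e_i-e_j=\epsilon^2(\omega_i-\omega_j)C_1$. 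This splits the target identity into $g\sum_i\varphi_i+\epsilon^2C_1\sum_i\omega_i\varphi_i=0$, a finite polynomial relation that can be checked term by term (and, if one prefers, confirmed in a single line symbolically, since each $\varphi_i$ involves only the one iterate $\widetilde{x}=f(x)$).

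For step (ii) the same collapse $e_i=g+\epsilon^2\omega_iC_1$ is decisive: dividing by $C_1$ exhibits $[e_1:e_2:e_3]=[\rho+\epsilon^2\omega_1:\rho+\epsilon^2\omega_2:\rho+\epsilon^2\omega_3]$, so the whole null-space direction depends on $x$ only through the single scalar $\rho=g/C_1$. Hence constancy of $K_\Gamma$ along orbits reduces to the one-step scalar identity $\rho(f(x,\epsilon))=\rho(x,\epsilon)$, which again involves only one iterate of $f$ and is amenable to symbolic verification; the reversibility $f^{-1}(x,\epsilon)=f(x,-\epsilon)$ together with the manifest evenness of $g$ and $C_1$ in $\epsilon$ makes such a conservation law natural. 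Crucially, routing the argument through this single ratio avoids ever forming the second iterate $f^2$, whose size would make a direct application of Proposition \ref{Lem: dim K} prohibitive.

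Combining (i) and (ii): because $e(f^k(x))$ is proportional to $e(x)$ for every $k$, evaluating the pointwise identity at $f^k(x)$ yields $\sum_ie_i(x)\,\varphi_i(f^k(x))=0$ for all $k\in\bbZ$, so $e(x)\in K_\Gamma(x)$, and Corollary \ref{Th: mechanism d=1} then reinterprets the ratios $e_i/e_j$ as integrals of $\mathrm{dC}$. Finally $\dim K_\Gamma(x)=1$ follows by checking that the two coefficient vectors $\varphi(x)$ and $\varphi(f(x))$ in $\bbR^3$ are linearly independent at a generic point, so that no further relation can exist. I expect the main obstacle to be the exact conservation in step (ii): whereas the cancellation in (i) is essentially forced by the continuous Wronskian structure, the vanishing of all odd-in-$\epsilon$ and non-symmetric contributions to $\rho(f(x))-\rho(x)$ is exactly the \emph{miraculous} integrability phenomenon emphasized earlier for $\mathrm{dET}$, guaranteed by no general mechanism and accessible here only through explicit, if one-iterate, symbolic computation.
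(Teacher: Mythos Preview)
Your plan is correct. The paper gives no explicit proof for this proposition; its implicit method, following the pattern of the analogous results for dLT and dK (Propositions \ref{thm: lagrange simple basis} and \ref{thm: Kirchhoff simple basis}), would be a direct MAPLE computation of the null-space coefficients followed by the observation that they are even in $\epsilon$ and hence conserved by reversibility. For $\Gamma$ with three members and a one-dimensional null-space, that route taken literally would require two equations ($i=0,1$) and hence the second iterate $f^2$.

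Your decomposition avoids this neatly. The collapse $e_i=g+\epsilon^2\omega_iC_1$ with $g=1-\epsilon^2\langle p,\Omega p\rangle$ is exactly what the paper records immediately after the proposition (in the form $e_i/e_j=(1+\epsilon^2\omega_iJ)/(1+\epsilon^2\omega_jJ)$ with $J=C_1/g$), so your step (ii) is the conservation of $J$ itself, a one-iterate check. Step (i) is then a single polynomial identity, and your splitting $g\sum_i\varphi_i+\epsilon^2C_1\sum_i\omega_i\varphi_i=0$ makes the link to the continuous relation (\ref{eq: Clebsch W}) transparent. One small caveat: the remark that evenness of $g$ and $C_1$ in $\epsilon$ ``makes such a conservation law natural'' is only a heuristic; the actual identity $J\circ f=J$ is the nontrivial cancellation here, and---as you rightly flag at the end---must be established by the explicit one-iterate symbolic check.
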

\noindent The conserved quantities $e_i/e_j$ can be put as
$
e_i/e_j=(1+\epsilon^2\omega_iJ)/(1+\epsilon^2\omega_jJ),
$
where $J$ is a nice and symmetric integral,
\begin{equation}\nn
J=\frac{p_1^2+p_2^2+p_3^2}
{1-\epsilon^2(\omega_1p_1^2+\omega_2p_2^2+\omega_3p_3^2)}.
\end{equation}
Remarkably, it can be obtained also from a different (monomial) HK basis, see part b) of the following statement.

\begin{prop}\label{Th: dClebsch max basis}{\rm \cite{PPS}}

${\!\!}$
{\rm (a)} The set of functions $\Phi=(p_1^2,\, p_2^2,\, p_3^2,\, m_1^2,\, m_2^2,\, m_3^2,\, m_1p_1,\, m_2p_2,\, m_3p_3,\, 1)$ is a HK basis for the map dC with $\dim K_\Phi(m,p)=4$. Thus, any orbit of the map dC lies on an intersection of four quadrics in $\bbR^6$.
\smallskip

{\rm (b)} The set of functions $\Phi_0=(p_1^2,\, p_2^2,\, p_3^2,\, 1)$ is a HK basis for the map dC\, with $\dim K_{\Phi_0}(m,p)=1$. At each point $(m,p)\in\bbR^6$ there holds:
\begin{eqnarray}\nn
K_{\Phi_0}(m,p) & = & [e_1:e_2:e_3:-(p_1^2+p_2^2+p_3^2)] \nonumber\\
        & = &
\left[\,\frac{1}{J}+\epsilon^2\omega_1:\frac{1}{J}+\epsilon^2\omega_2:\frac{1}{J}+\epsilon^2\omega_3:-1
\right],\nn
\end{eqnarray}
with the quantities $e_i$ given in (\ref{eq: Clebsch e}).
\smallskip

{\rm (c)}
The sets of functions
\begin{eqnarray}
&&\Phi_1 = (p_1^2,\, p_2^2,\, p_3^2,\, m_1^2,\, m_2^2,\, m_3^2,\, m_1p_1), 
\label{eq: dC basis Phi1}\\
&&\Phi_2 = (p_1^2,\, p_2^2,\, p_3^2,\, m_1^2,\, m_2^2,\, m_3^2,\, m_2p_2), 
\label{eq: dC basis Phi2}\\
&&\Phi_3 = (p_1^2,\, p_2^2,\, p_3^2,\, m_1^2,\, m_2^2,\, m_3^2,\, m_3p_3),
\label{eq: dC basis Phi3}
\end{eqnarray}
are HK bases for the map dC with $\dim K_{\Phi_1}(m,p)=\dim K_{\Phi_2}(m,p)=\dim K_{\Phi_3}(m,p)=1$. At
each point $(m,p)\in\bbR^6$ there holds:
\begin{eqnarray}
&&K_{\Phi_1}(m,p) =
[\alpha_1:\alpha_2:\alpha_3:\alpha_4:\alpha_5:\alpha_6:-1], \nonumber \\
&&K_{\Phi_2}(m,p) =
[\beta_1:\beta_2:\beta_3:\beta_4:\beta_5:\beta_6:-1],\nonumber\\
&&K_{\Phi_3}(m,p) =
[\gamma_1:\gamma_2:\gamma_3:\gamma_4:\gamma_5:\gamma_6:-1],\nonumber
\end{eqnarray}
where $\alpha_j$,$\beta_j$, and $\gamma_j$ are rational functions of $(m,p)$, even with respect to $\epsilon$. They are conserved quantities of the map dCS. For $j=1,2,3$, they are of the form
\begin{equation}\nn
h=\frac{h^{(2)}+\epsilon^2h^{(4)}+\epsilon^4h^{(6)}+
\epsilon^6h^{(8)}+ \epsilon^8h^{(10)}+\epsilon^{10}h^{(12)}}
{2\epsilon^2(p_1^2+p_2^2+p_3^2)\Delta}\,,
\end{equation}
where $h$ stands for any of the functions $\alpha_j,\beta_j,\gamma_j$, $j=1,2,3$,
\begin{equation}\nn
\Delta=m_1p_1+m_2p_2+m_3p_3+\epsilon^2\Delta^{(4)}+\epsilon^4\Delta^{(6)}+
\epsilon^6\Delta^{(8)},
\end{equation}
and the corresponding $h^{(2q)}$, $\Delta^{(2q)}$ are homogeneous polynomials in phase variables of degree $2q$. For instance,
\begin{equation}\nn
\renewcommand{\arraystretch}{1.2}
\begin{array}{lll}
\alpha_1^{(2)}=C_1-I_1, \quad & \alpha_2^{(2)}=-I_1, &
\alpha_3^{(2)}=-I_1,\\
\beta_1^{(2)}=-I_2, & \beta_2^{(2)}=C_1-I_2,  \quad  &
\beta_3^{(2)}=-I_2,\\
\gamma_1^{(2)}=-I_3, & \gamma_2^{(2)}=-I_3, &
\gamma_3^{(2)}=C_1-I_3.
\end{array}
\end{equation}
For $j=4,5,6$, the functions $\alpha_j,\beta_j,\gamma_j$ are given by
\begin{equation}\nn
\begin{pmatrix} \alpha_4 & \alpha_5 & \alpha_6 \\
                \beta_4  & \beta_5  & \beta_6  \\
                \gamma_4 & \gamma_5 & \gamma_6
\end{pmatrix}
=\begin{pmatrix} D & A_1/(\omega_1-\omega_3) & A_1/(\omega_1-\omega_2) \\
                A_2/(\omega_2-\omega_3)  & D  & A_2/(\omega_2-\omega_1)  \\
                A_3/(\omega_3-\omega_2) & A_3/(\omega_3-\omega_1) & D
\end{pmatrix},
\end{equation}
where
\begin{eqnarray}
&&A_k =
\frac{1+\epsilon^2A_k^{(2)}+\epsilon^4A_k^{(4)}+\epsilon^6A_k^{(6)}+\epsilon^8A_k^{(8)}}
{2\epsilon^2\Delta}\,,  \nn
\\
&&D =
\frac{p_1^2+p_2^2+p_3^2+\epsilon^2D^{(4)}+\epsilon^4D^{(6)}+\epsilon^6D^{(8)}}
{2\Delta}\,,  \nn
\end{eqnarray}
and $A_k^{(2q)}$, $D^{(2q)}$ are homogeneous polynomials of degree $2q$ in phase variables, for instance,
\[
A_k^{(2)}=m_1^2+m_2^2+m_3^2+(\omega_2+\omega_3-2\omega_1)p_1^2+
(\omega_3+\omega_1-2\omega_2)p_2^2+(\omega_1+\omega_2-2\omega_3)p_3^2.
\]
The four conserved quantities $J$, $\alpha_1$, $\beta_1$ and $\gamma_1$ are functionally independent.
\end{prop}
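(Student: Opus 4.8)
The plan is to reduce everything to the one-dimensional null-space statements (b) and (c), which are amenable to the criterion of Proposition~\ref{Lem: dim K} together with the reversibility trick encoded in the reduced system~(\ref{eq: smaller system}), and then to assemble statement (a) out of them. Throughout, the computations are carried out symbolically, the essential device for keeping them feasible being the identity $f^{-1}(x,\epsilon)=f(x,-\epsilon)$, which lets one trade high positive iterates for negative ones and thereby roughly halve the number of compositions of $f$ that must be formed explicitly.

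Part (b) is almost free once the $\Gamma$-basis of Proposition~\ref{thm: Clebsch simple basis} is in hand. Indeed, from that result and the discussion following it, $J$ is already known to be a conserved quantity, and the quantities $e_i$ of~(\ref{eq: Clebsch e}) can be rewritten as $e_i=(p_1^2+p_2^2+p_3^2)\bigl(1/J+\epsilon^2\omega_i\bigr)$. Hence the projective vector $c=[\,1/J+\epsilon^2\omega_1:1/J+\epsilon^2\omega_2:1/J+\epsilon^2\omega_3:-1\,]$ is constant along every orbit. It then remains to verify the purely algebraic identity $\sum_i\bigl(1/J+\epsilon^2\omega_i\bigr)p_i^2-1=\tfrac1J\sum_i p_i^2+\epsilon^2\sum_i\omega_i p_i^2-1$, which vanishes identically by the very definition of $J$. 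Evaluating this relation along the orbit (using that $J$ is an integral) shows that $c\in K_{\Phi_0}(x)$ and that $\dim K_{\Phi_0}(x)=1$.

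Part (c) is the computational heart. For each of the three bases $\Phi_j$ (here $l=7$) I would set up the inhomogeneous system~(\ref{eq: smaller system}) with right-hand side $m_jp_j$ over a window of $l-1=6$ consecutive indices that contains $0$ but is non-symmetric, for instance $i\in\{-2,-1,0,1,2,3\}$. By the Proposition attached to~(\ref{eq: smaller system}), once one checks that this system has a unique solution which is \emph{even} in $\epsilon$, all of its components are automatically conserved quantities and $\dim K_{\Phi_j}(x)=1$; reversibility reduces the explicit data needed to $f,f^2,f^3$. Reading off the solution yields the stated rational forms of $\alpha_j,\beta_j,\gamma_j$, while the permutation symmetry among the three bases, together with the shared quadratic blocks $p_i^2,m_i^2$, accounts for the displayed matrix structure in terms of the quantities $A_k$ and $D$. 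Functional independence of $J,\alpha_1,\beta_1,\gamma_1$ is then confirmed by evaluating their gradients at a generic point and a generic $\epsilon\neq0$ and checking that the resulting $4\times6$ Jacobian has rank $4$.

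Finally, for part (a) the lower bound $\dim K_\Phi(x)\ge4$ is immediate: embedding the null-vectors of $\Phi_0,\Phi_1,\Phi_2,\Phi_3$ into $\bbR^{10}$ (ordered as in $\Phi$) produces four vectors whose restrictions to the four coordinates corresponding to $m_1p_1,m_2p_2,m_3p_3,1$ are the standard basis vectors up to sign, and are therefore manifestly independent. \textbf{The main obstacle is the matching upper bound} $\dim K_\Phi(x)\le4$. Dimension counting alone cannot settle it: the example dCET$_2$ elsewhere in this paper shows that a null-space can be larger (and non-regular) than a naive count of restrictions would suggest. One must instead verify, in the spirit of the stabilization criterion of Section~\ref{Sect: HK mechanism}, that the solution space of~(\ref{eq: fundamental}) no longer drops, i.e. that an evaluation matrix built from sufficiently many iterates $f^i$ has nullity exactly $4$. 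This is precisely the symbolic computation whose complexity, caused by the explosive growth of the iterates $f^i(x)$, is discussed at length in \cite{PPS}.
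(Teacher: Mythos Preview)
The paper does not give its own proof of this Proposition; it is cited from \cite{PPS}, and the surrounding text explicitly says that the detailed argument (including the complexity discussion you allude to) is to be found there. Your outline is precisely the strategy of Section~\ref{Sect: HK mechanism} specialized to the Clebsch map, and in that sense it matches the methodology that \cite{PPS} implements: use the reversibility-based reduced system~(\ref{eq: smaller system}) for the one-dimensional bases, verify evenness in $\epsilon$, and then assemble the four-dimensional statement from the four one-dimensional pieces plus a rank computation for the upper bound.

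One small gap worth flagging: in part~(b) you only exhibit a vector in $K_{\Phi_0}(x)$ and note that its components are integrals; this gives $\dim K_{\Phi_0}(x)\ge 1$, but not $\le 1$. To close the argument you still need, e.g., that the $3\times 4$ evaluation matrix of $(p_1^2,p_2^2,p_3^2,1)$ along three consecutive iterates generically has rank $3$, which is again a (much lighter) symbolic check. The same remark applies in principle to the $\Phi_j$ in part~(c), but there it is subsumed in the uniqueness of the solution to~(\ref{eq: smaller system}) that you already invoke.
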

Our paper \cite{PPS} contains a much more detailed information about the HK bases of the map dC, for instance, the further basis with a one-dimensional null-space: $\Theta=(p_1^2,\, p_2^2,\, p_3^2,\, m_1p_1,\, m_2p_2,\, m_3p_3)$. However, the following finding about the ``bilinear'' versions of the above bases is new.
\begin{prop}
Each of the sets of functions
\begin{eqnarray}
&&\Psi_0 = (\wip_1p_1,\, \wip_2p_2,\, \wip_3p_3,\, 1),\\
&&\Psi_1 = (\wip_1p_1,\, \wip_2p_2,\, \wip_3p_3,\, \wm_1m_1,\, \wm_2m_2,\, \wm_3m_3,\, \wm_1p_1+m_1\wip_1),
\label{eq: dC basis Psi1}
\\
&&\Psi_2 =(\wip_1p_1,\, \wip_2p_2,\, \wip_3p_3,\, \wm_1m_1,\, \wm_2m_2,\, \wm_3m_3,\, \wm_2p_2+m_2\wip_2),
\label{eq: dC basis Psi2}
\\
&&\Psi_3 = (\wip_1p_1,\, \wip_2p_2,\, \wip_3p_3,\, \wm_1m_1,\, \wm_2m_2,\, \wm_3m_3,\, \wm_3p_3+m_3\wip_3),
\label{eq: dC basis Psi3}
\end{eqnarray}
is a HK basis for the map dC with a one-dimensional null-space.
\end{prop}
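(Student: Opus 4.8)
The plan is to treat each set $\Psi_s$ ($s=0,1,2,3$) as the \emph{bilinear companion} of the monomial HK basis $\Phi_s$ of Proposition \ref{Th: dClebsch max basis}, obtained by the substitution $p_i^2\mapsto \wip_ip_i$, $m_i^2\mapsto\wm_im_i$ and $m_ip_i\mapsto \wm_ip_i+m_i\wip_i$, and to prove the statement by the same combination of tools already used for the Euler top (Proposition \ref{Th: dET full basis new}) and the Kirchhoff case (Proposition \ref{thm: kirchhoffullbasis}). Concretely, I would verify the defining property through the sliding-window criterion of Proposition \ref{Lem: dim K}: a set $\Psi=(\psi_1,\dots,\psi_l)$ is a HK basis with $\dim K_\Psi=1$ precisely when the unique projective solution of the homogeneous system (\ref{eq: fundamental}) over a window $i=i_0,\dots,i_0+l-2$ coincides with the solution over the shifted window $i=i_0+1,\dots,i_0+l-1$, i.e. is an integral of dC. Throughout, the reversibility $f^{-1}(x,\epsilon)=f(x,-\epsilon)$ together with the evenness argument of Section \ref{Sect: HK mechanism} will be used to halve the number of iterates actually required.

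The decisive reduction that makes the computation feasible is to \emph{precompose}: each bilinear function $\psi_a(x)=\varphi_a\big(f(x),x\big)$ is first rewritten, via the explicit birational map $f=A^{-1}(\cdot)\,(\cdot)$ for dC, as a single rational function on $\bbR^6$ (with denominator a power of $\det A$). After this step $\psi_a\circ f^i$ requires only the iterate $f^i$, exactly as in the monomial case, so the extra shift hidden in the bilinearization costs nothing. For the four-element basis $\Psi_0$ this already settles the matter: the $3\times 3$ systems attached to the windows $\{-1,0,1\}$ and $\{0,1,2\}$ involve only $f^{\pm 1}$ and $f^{2}$; I would solve them with a computer algebra system, check that the two projective solutions agree (so $\dim K_{\Psi_0}=1$), and read off the null-vector, whose components are manifestly even in $\epsilon$ and hence conserved quantities of dC.

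For the seven-element bases $\Psi_1,\Psi_2,\Psi_3$ a direct window computation is out of reach, since it would require iterates up to $f^{6}$, and already $f^2$ is enormous for dC (cf. the complexity discussion in \cite{PPS}). Here I would follow the template of the proof of Proposition \ref{thm: kirchhoffullbasis}: rather than solving a full system in six projective unknowns, first locate, numerically, the \emph{linear relations} among the coefficients of each $\Psi_s$ --- these are the bilinear shadows of the relations already known among the monomial coefficients $\alpha_j,\beta_j,\gamma_j,D,A_k$ of Proposition \ref{Th: dClebsch max basis}, and they reflect the bilinear Wronskian relation underlying the basis $\Gamma$ of Proposition \ref{thm: Clebsch simple basis}. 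Substituting these relations collapses each system to a small one (three equations in three unknowns, using $i=0,1$ together with the relations), which can be solved symbolically; verifying that the solution is even in $\epsilon$ then proves simultaneously that the coefficients are integrals and that $\dim K_{\Psi_s}=1$.

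The main obstacle is precisely this last step for $\Psi_1,\Psi_2,\Psi_3$: one must (i) \emph{discover} the correct linear relations among the bilinear coefficients --- an analogue of the relations between $c_1$ and $d_2$, and between $c_1$ and $e_1$, exploited in Proposition \ref{thm: kirchhoffullbasis} --- which is done experimentally and is not dictated by any known structural mechanism, and (ii) \emph{certify} evenness and uniqueness symbolically in the presence of the large expression for $f^2$. As in the rest of the paper, the only genuine safeguards against the complexity blow-up are reversibility (which forces evenness and halves the effective window) and precomposition (which removes the extra iterate introduced by passing from monomials to bilinear functions); beyond these, the miraculous cancellations that guarantee $\epsilon$-evenness remain, as stressed throughout, unexplained by any Lax or bi-Hamiltonian structure.
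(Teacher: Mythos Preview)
The paper gives no proof of this proposition: it is stated as a new result immediately after Proposition~\ref{Th: dClebsch max basis} and left unproven, in accordance with the explicit policy (Section~\ref{Sect: intro}) of omitting symbolic-verification proofs and referring to~\cite{PPS} for the computational details of the Clebsch case. Your plan is therefore not being compared against an existing argument but evaluated on its own merits, and as such it is entirely in the spirit of the methodology the paper uses elsewhere (the proofs of Propositions~\ref{thm: lagrangefullbasis} and~\ref{thm: kirchhoffullbasis} for dLT and dK, and the unproven Proposition~\ref{Th: dET full basis new} for dET). Once the indicated symbolic computations are actually carried out, this would constitute a valid proof.

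One small correction is in order. Precomposition does not make the bilinear case ``cost nothing'' compared with the monomial one: writing $\psi_a(x)=\varphi_a(f(x),x)$ as a single rational function $R_a(x)$ and then forming $R_a\circ f^i$ is computationally equivalent to evaluating the original bilinear form at $(f^{i+1}(x),f^i(x))$, so your window $\{0,1,2\}$ for $\Psi_0$ still effectively involves $f^3$, not only $f^{\pm 1}$ and $f^2$. The fix is the one you already invoke: apply Proposition~10 with a genuinely non-symmetric window such as $\{-2,-1,0\}$, so that reversibility turns $f^{-2}$ into $f^2(\,\cdot\,,-\epsilon)$ and only the $\epsilon$-evenness of the unique solution has to be checked. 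With this adjustment for $\Psi_0$, and with the numerically-guided reduction via linear relations among the coefficients for $\Psi_1,\Psi_2,\Psi_3$ (exactly parallel to the use of (\ref{eq: Gamma1 rel}) and (\ref{eq: Gamma2 rel}) in the Lagrange case), your outline matches what the paper would do had it included a proof.
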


\subsection{General flow of the Clebsch system}
\label{sect: Clebsch gen}

The HK discretization of the flow (\ref{gcl}) reads
\begin{equation}\nonumber
 \left\{
\begin{array}{l}
\widetilde{m}-m= \ep(\widetilde{m} \times Am + m \times
A\widetilde{m} + \widetilde{p} \times Bp+p\times B\widetilde{p}\,),
\vspace{.1truecm}\\
\widetilde{p}-p = \ep\left(  \widetilde{p}\times Am+p\times A\widetilde{m} \right),
\end{array}
\right.
\end{equation}
in components:
\beq\label{eq:dC2}
 \left\{ \begin{array}{l}
\widetilde{m}_1-m_1 =
\epsilon(a_3-a_2)(\widetilde{m}_2m_3+m_2\widetilde{m}_3)+
\epsilon(b_3-b_2)(\widetilde{p}_2p_3+p_2\widetilde{p}_3),
\vspace{.1truecm} \\
\widetilde{m}_2-m_2 =
\epsilon(a_1-a_3)(\widetilde{m}_3m_1+m_3\widetilde{m}_1)+
\epsilon(b_1-b_3)(\widetilde{p}_3p_1+p_3\widetilde{p}_1),
\vspace{.1truecm}\\
\widetilde{m}_3-m_3 =
\epsilon(a_2-a_1)(\widetilde{m}_1m_2+m_1\widetilde{m}_2)+
\epsilon(b_2-b_1)(\widetilde{p}_1p_2+p_1\widetilde{p}_2),
\vspace{.1truecm} \\
\widetilde{p}_1-p_1 = \epsilon
a_3(\widetilde{m}_3p_2+m_3\widetilde{p}_2)-\epsilon
a_2(\widetilde{m}_2 p_3+m_2\widetilde{p}_3),
\vspace{.1truecm}\\
\widetilde{p}_2-p_2= \epsilon
a_1(\widetilde{m}_1p_3+m_1\widetilde{p}_3)-\epsilon
a_3(\widetilde{m}_3p_1+m_3\widetilde{p}_1),
\vspace{.1truecm} \\
\widetilde{p}_3-p_3 = \epsilon
a_2(\widetilde{m}_2p_1+m_2\widetilde{p}_1)-\epsilon
a_1(\widetilde{m}_1 p_2+m_1\widetilde{p}_2).
\end{array}\right.
\end{equation}
In what follows, we will use the abbreviations $b_{ij}=b_i-b_j$
and $a_{ij}=a_i-a_j$. The linear system (\ref{eq:dC2}) defines an
explicit, birational map $f:\bbR^6 \rightarrow \bbR^6$,
\begin{equation}\nn
\begin{pmatrix}  \widetilde{m} \\ \widetilde{p} \end{pmatrix} = f(m,p,\epsilon) =
M^{-1}(m,p,\epsilon)\begin{pmatrix} m \\ p \end{pmatrix},
\end{equation}
where
\begin{equation}\nonumber
M(m,p,\epsilon) = \begin{pmatrix}
1 & \epsilon a_{23}m_3 & \epsilon a_{23}m_2 & 0 & \epsilon b_{23}p_3 & \epsilon b_{23}p_2 \\
\epsilon a_{31}m_3 & 1 & \epsilon a_{31}m_1 & \epsilon b_{31}p_3 & 0 & \epsilon b_{31}p_1 \\
\epsilon a_{12}m_2 & \epsilon a_{12}m_1 & 1 & \epsilon b_{12}p_2 & \epsilon b_{12}p_1 & 0 \\
0 & \epsilon a_2p_3 & -\epsilon a_3p_2 & 1 & -\epsilon a_3m_3 & \epsilon a_2m_2 \\
-\epsilon a_1p_3 & 0 & \epsilon a_3p_1 & \epsilon a_3m_3 & 1 & -\epsilon a_1m_1 \\
\epsilon a_1p_2 & -\epsilon a_2p_1 & 0 & -\epsilon a_2m_2 & \epsilon a_1m_1 & 1 \\
\end{pmatrix}.
\end{equation}
This map will be denoted dGC in what follows.

A ``simple'' integral of the map dGC can be obtained by discretizing the following Wronskian relation with constant coefficients, which holds for the general flow of the Clebsch system (\ref{eq: gClebsch}):
\begin{equation}\nn
    A_1(\dot{m}_1p_1-m_1\dot{p}_1)+A_2(\dot{m}_2p_2-m_2\dot{p}_2)+A_3(\dot{m}_3p_3-m_3\dot{p}_3)=0,
\end{equation}
with
\begin{equation}\nn
A_i=a_ia_j+a_ia_k-a_ja_k.
\end{equation}
\begin{prop} \label{thm: gClebsch simple basis}
The set $\Gamma=(\widetilde{m}_1p_1-m_1\widetilde{p}_1,\, \widetilde{m}_2p_2-m_2\widetilde{p}_2,\, \widetilde{m}_3p_3-m_3\widetilde{p}_3)$ is a HK basis for the map dGC with $\dim K_{\Gamma}(x) = 1$.  At each point $x\in\bbR^6$ we have: $K_\Gamma(x)=[e_1:e_2:e_3]$, where, for $(i,j,k)={\rm c.p.}(1,2,3)$,
\begin{equation}\nn
e_i=A_i+\epsilon^2a_i(b_i-b_j)A_k\Theta_j+\epsilon^2a_i(b_i-b_k)A_j\Theta_k,
\end{equation}
with
\begin{equation}\nn
\Theta_i=p_i^2+\frac{a_i}{\theta a_ja_k}m_i^2 .
\end{equation}
(Recall that $\theta$ is defined by eq. (\ref{eq: Clebsch theta}); we assume here that $\theta\neq\infty$.)
\end{prop}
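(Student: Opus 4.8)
The plan is to produce, as a discrete counterpart of the continuous Wronskian relation $A_1(\dot m_1 p_1 - m_1\dot p_1) + A_2(\dot m_2 p_2 - m_2\dot p_2) + A_3(\dot m_3 p_3 - m_3\dot p_3) = 0$, an explicit linear combination $e_1 W_1 + e_2 W_2 + e_3 W_3 = 0$ of the discrete Wronskians $W_i = \wm_i p_i - m_i\wip_i$ holding along every orbit, and then to certify that the coefficients $e_i$ are conserved quantities, so that the single line $[e_1:e_2:e_3]$ furnishes $K_\Gamma(x)$ along the whole orbit. Since $\Gamma$ consists of $l=3$ functions, the claim $\dim K_\Gamma(x)=1$ is exactly the situation of Proposition~\ref{Lem: dim K}, and I would establish it through the ``smaller system'' device stated immediately after that proposition.

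Concretely, I would proceed as follows. First, using the explicit map $(\wm,\wip)=M^{-1}(m,p,\epsilon)(m,p)^{\rm T}$, compute the three discrete Wronskians $W_i$ as rational functions of $(m,p,\epsilon)$ sharing the common denominator $\det M(m,p,\epsilon)$. Next, set up the non-homogeneous system
\begin{equation}\nonumber
c_1 W_1(f^i x) + c_2 W_2(f^i x) = W_3(f^i x),\qquad i=0,1,
\end{equation}
whose index window $\{0,1\}$ contains $0$ but is non-symmetric, as required. This normalizes $c_3=-1$ and involves only $f$ and $f^2$, so it stays within reach of symbolic computation. Solving by Cramer's rule yields a unique pair $(c_1,c_2)$ (the $2\times2$ determinant being generically nonzero), whence $\dim K_\Gamma(x)=1$; and after the cancellations the solution turns out to be even in $\epsilon$. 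By the ``smaller system'' proposition the reversibility $f^{-1}(x,\epsilon)=f(x,-\epsilon)$ then extends the relation to the window $\{-1,0,1\}$ of $\ge l$ integers, which simultaneously proves that $c_1,c_2$ are conserved and that $\Gamma$ is a HK basis with a one-dimensional null-space, with $[c_1:c_2:-1]=K_\Gamma(x)$.

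It then remains to recognise this projective vector as $[e_1:e_2:e_3]$ in the stated closed form. I would expand $c_i$ in powers of $\epsilon$: since $W_i=\epsilon(\dot m_i p_i-m_i\dot p_i)+O(\epsilon^2)$, the $\epsilon^0$ part of $[c_1:c_2:-1]$ reproduces $[A_1:A_2:A_3]$, matching the continuous relation, while the $\epsilon^2$ part must be massaged into $a_i(b_i-b_j)A_k\Theta_j + a_i(b_i-b_k)A_j\Theta_k$ with $\Theta_i=p_i^2+\tfrac{a_i}{\theta a_j a_k}m_i^2$. This is where condition (\ref{eq: Clebsch cond}) enters decisively: it is precisely the hypothesis that makes $\theta$ in (\ref{eq: Clebsch theta}) permutation-independent, and it is what collapses the raw Cramer output into the symmetric form built from the $\Theta_i$.

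The main obstacle is this last step together with the evenness claim: verifying that the high-degree rational functions produced by Cramer's rule reduce to the compact, manifestly even expressions $e_i = A_i + \epsilon^2 a_i(b_i-b_j)A_k\Theta_j + \epsilon^2 a_i(b_i-b_k)A_j\Theta_k$, i.e.\ that all odd-in-$\epsilon$ contributions cancel and that the $\epsilon^2$ remainder factors through the quantities $\Theta_i$. As elsewhere in this paper, these cancellations are not granted by any presently understood structural mechanism and would be confirmed by symbolic computation; the role of the Clebsch condition is to supply the algebraic identities among the $a_i,b_i$ that make them happen.
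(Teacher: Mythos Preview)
The paper does not actually supply a proof of this proposition; it is one of the many results for which ``the proofs are omitted almost everywhere'' (as announced in the Introduction). Your proposal is logically sound and lies squarely within the paper's own framework: the ``smaller system'' device of Section~\ref{Sect: HK mechanism} applied with $l=3$ and window $\{0,1\}$, followed by the evenness check, does exactly what you describe.

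There is, however, a meaningful difference in execution compared with what the paper does in the closely analogous cases it \emph{does} prove (Propositions~\ref{thm: lagrange simple basis} and~\ref{thm: Kirchhoff simple basis}). There the continuous Wronskian has two equal coefficients, so the ansatz $c_1=c_2=1$ reduces the problem to computing the single ratio $-(W_1+W_2)/W_3$ at $i=0$ and checking its evenness; conservation then follows from the antisymmetry $W_i(f^{-1}(x),\epsilon)=-W_i(x,-\epsilon)$, and only a single application of $f$ is needed. For dGC the three $A_i$ are generically distinct, so that shortcut is unavailable and two equations are genuinely required---but you can still avoid $f^2$ by taking the window $\{-1,0\}$ instead of $\{0,1\}$, since $W_i(f^{-1}(x),\epsilon)$ only involves $f^{-1}(x)=f(x,-\epsilon)$, which has the same complexity as $f$. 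Your remark that the window $\{0,1\}$ ``stays within reach of symbolic computation'' is optimistic: the paper explicitly warns (in the Introduction and in \cite{PPS}) that for the Clebsch system the second iterate $f^2$ is ``prohibitively big''. An alternative, and probably what the authors have in mind, is simply to verify the stated identity $e_1W_1+e_2W_2+e_3W_3=0$ and the conservation of $e_i/e_j$ directly---both checks require only $f$.
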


As in the case of the first flow, the integrals $e_i/e_j$ can be expressed through one symmetric integral:
$e_i/e_j=(A_i-\theta a_iL)/(A_j-\theta a_jL)$, where
\begin{equation} \nonumber
L=\frac{a_2a_3A_1\Theta_1+a_3a_1A_2\Theta_2+a_1a_2A_3\Theta_3}
{1+\epsilon^2\theta a_1a_2a_3(\Theta_1+\Theta_2+\Theta_3)}.
\end{equation}
The quantities $e_i$ and the integral $L$ can be also obtained from a different (monomial) HK basis, given in part b) of the following Proposition.
\begin{prop}
${}$

{\rm (a)} The set $\Phi=(p_1^2,\, p_2^2,\, p_3^2,\, m_1^2,\, m_2^2,\, m_3^2,\, m_1p_1,\, m_2p_2,\, m_3p_3,\, 1)$ is a HK basis for the map dGC with $\dim K_\Phi(m,p)=4$. Thus, any orbit of the map dGC lies on an intersection of four quadrics in $\bbR^6$.
\smallskip

{\rm (b)} The set of functions $\Phi_0=(p_1^2,\, p_2^2,\, p_3^2,\, m_1^2,\, m_2^2,\, m_3^2,\, 1)$ is a HK basis for the map dGC\, with $\dim K_{\Phi_0}(m,p)=1$. At each point $(m,p)\in\bbR^6$ there holds:
\[
K_{\Phi_0}(m,p)=[a_2a_3e_1:a_3a_1e_2:a_1a_2e_3:(a_1/\theta)e_1:(a_2/\theta)e_2:(a_3/\theta)e_3:-e_0],
\]
where 
\begin{equation}\nn
e_0=a_2a_3A_1\Theta_1+a_3a_1A_2\Theta_2+a_1a_2A_3\Theta_3
\end{equation}
is an integral of motion of the continuous time flow (\ref{eq: gClebsch}).
\smallskip

{\rm (c)}
The sets of functions (\ref{eq: dC basis Phi1})--(\ref{eq: dC basis Phi3}) are HK bases for the map dGC with one-dimensional null-spaces.
\smallskip

{\rm (d)}
Each of the sets of functions $\Psi_0=(\wip_1p_1,\, \wip_2p_2,\, \wip_3p_3,\, \wm_1m_1,\, \wm_2m_2,\, \wm_3m_3,\, 1)$ and (\ref{eq: dC basis Psi1})--(\ref{eq: dC basis Psi3}) is a HK basis for the map dGC with a one-dimensional null-space.
\end{prop}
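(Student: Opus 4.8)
The plan is to reduce every assertion to a finite symbolic computation by means of the reversibility-and-evenness reduction built on the non-homogeneous system \eqref{eq: smaller system}, exactly as was done for dET, dLT, dK and the first Clebsch flow. For a candidate basis $\Psi=(\varphi_1,\dots,\varphi_\ell)$ with conjectural $\dim K_\Psi=1$, I would normalise the last coefficient to $-1$ and solve the non-homogeneous system of $\ell-1$ equations obtained by evaluating \eqref{eq: fundamental} on iterates $f^i(x)$ over a range $i=i_0,\dots,i_0+\ell-2$ that contains $0$ but is not symmetric about it. Such a range always exists, since an even number $\ell-1$ of consecutive integers can never be symmetric about the origin. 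If the symbolic solution is unique and each of its components is an even function of $\ep$, then those components are automatically conserved quantities and $\Psi$ is a HK basis with a one-dimensional null-space.

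For part (b) I would apply this to $\Phi_0$ ($\ell=7$, hence a $6\times6$ solve), reading off the entries of $K_{\Phi_0}(m,p)$ quoted in the statement from the even solution; the identification of $e_0$ as an integral of the continuous flow \eqref{eq: gClebsch} I would verify separately by differentiating $e_0$ along the vector field. Parts (c) and (d) are treated in the same way: each of $\Phi_1,\Phi_2,\Phi_3$ has seven elements and reduces to a single $6\times6$ solve with an evenness check. For the bilinear bases $\Psi_0,\dots,\Psi_3$ of part (d) the generators depend on $\ep$ through $\wx=f(x,\ep)$; after substituting this dependence they become functions on $\bbR^6$, and I would verify the one-dimensional null-space and the evenness of its coordinates directly, using the manifest $x\leftrightarrow\wx$ symmetry of the generators (such as $\wm_ip_i+m_i\wip_i$) in place of the plain reversibility argument.

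Part (a) I would then deduce from (b) and (c) rather than recompute. Each null-vector found for $\Phi_0,\Phi_1,\Phi_2,\Phi_3$ annihilates its sub-collection of monomials along the whole orbit; padding it with zeros in the remaining monomial slots produces a vector lying in $K_\Phi$ and valid for all $i$. The four resulting vectors are linearly independent, since restricted to the coordinates dual to $m_1p_1,m_2p_2,m_3p_3,1$ they form the matrix
\[
\begin{pmatrix} 0 & 0 & 0 & -e_0 \\ -1 & 0 & 0 & 0 \\ 0 & -1 & 0 & 0 \\ 0 & 0 & -1 & 0 \end{pmatrix},
\]
of determinant $\pm e_0\neq0$. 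Hence $\Phi$ is a HK basis with $\dim K_\Phi\ge4$. For the opposite inequality I would check that six generic iterates already give six linearly independent instances of \eqref{eq: fundamental}, so that their common solution space has dimension $10-6=4$ and contains $K_\Phi$; therefore $\dim K_\Phi=4$, in accordance with the generic dimension-drop discussion preceding Proposition \ref{Lem: dim K} and with the four independent integrals of the continuous flow, and every orbit lies on an intersection of four quadrics in $\bbR^6$.

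The genuinely hard part is neither conceptual nor the set-up but the raw symbolic algebra: the iterates $f^i(x)$ of dGC grow so fast that even $f^2$ is enormous, and it is only after imposing the Clebsch relation \eqref{eq: Clebsch cond} that the miraculous cancellations occur — the denominators $\det M$ drop out and all non-even contributions in $\ep$ vanish — leaving the compact expressions in terms of $\theta$, $\Theta_i$, $A_i$ and $e_0$. Minimising the number of required iterates through the non-symmetric-range reduction, and organising the output via the $\theta$-parametrisation, is precisely what renders these $6\times6$ solves feasible; without those reductions the computation would be prohibitive, exactly as documented for the first flow in \cite{PPS}.
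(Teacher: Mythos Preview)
Your proposal is correct and follows precisely the methodology the paper advocates: the paper gives no proof for this proposition (as it warns, ``proofs are omitted almost everywhere''), deferring to the computer-assisted framework of Section~\ref{Sect: HK mechanism} and to \cite{PPS} for the Clebsch case, and your reduction via the non-symmetric-range/evenness criterion together with the padding argument for part (a) is exactly that framework instantiated. The only point worth tightening is part (d): since the bilinear generators depend on $\ep$ through $\wx$, the plain evenness proposition does not apply verbatim, and your appeal to the $x\leftrightarrow\wx$ symmetry should be made precise (e.g.\ by checking directly via Proposition~\ref{Lem: dim K} that the null-space computed from two shifted index ranges coincides).
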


\section{$\mathfrak{su}(2)$ rational Gaudin system with $N=2$ spins}
\label{Sect: dG}

The Gaudin system \cite{G1} describes an interaction of $N$ quantum spins ${{y}}_i$, $i=1,\ldots,N$, with a
homogeneous constant external field ${{p}}$. Its classical version  is given by the following quadratic system of differential equations \cite{PS2}:
\beq
\dot{y}_i=\left( \l_i\, {{p}}  +  \sum_{j=1}^N {{y}}_j \right) \times y_i , \qquad 1 \leq i \leq N, \label{1gintro}
\eeq
where $y_i\in\mathfrak{su}(2)\simeq\mathbb{R}^3$, $p\in\mathfrak{su}(2)\simeq\mathbb{R}^3$ is a constant vector, and pairwise distinct numbers $\l_i$ are parameters of the model. The flow (\ref{1gintro}) is Hamiltonian with respect to the Lie-Poisson bracket of the direct sum of $N$ copies of $\mathfrak{su}(2)$, admits $2N$ independent conserved quantities in involution: the $N$ Casimir functions
\beq \label{poi} \nonumber
C_{k} =
\langle  {{y}}_k,{{y}}_k \rangle,
\eeq
and the following $N$ Hamiltonians:
\beq \nonumber
H_{k} = \langle p,y_k \rangle +
\sum_{\stackrel{\scriptstyle{j=1}} {j \neq k}}^N \frac{\langle y_k,y_j \rangle }{\l_k - \l_j},
\eeq
where $\langle \cdot, \cdot \rangle$ denotes the scalar product in $\su2\simeq\mathbb{R}^3$.
Note that the Hamilton function of the flow (\ref{1gintro}) is
\beq
H=\sum_{k=1}^N \l_k \, H_k= \half \sum_{\stackrel{\scriptstyle{i,j=1}} {i \neq
j}}^N \langle  {{y}}_i , {{y}}_j \rangle +
\sum_{i=1}^N \l_i \langle {{p}} , {{y}}_i \rangle.\nonumber
\eeq

In \cite{MPRS,PS2} it has been proved that a contraction of $N$ simple poles to one pole of order $N$
provides the integrable flow of the so called {\em {one-body rational $\su2$
tower}}, whose simplest instance with $N=2$ describes the
dynamics of the three-dimensional Lagrange top in the rest frame.

We consider here the HK discretization of thes flow (\ref{1gintro}) with $N=2$. We set
$y_1 = (x_1,x_2,x_3)^{\rm T}$, $y_2 = (z_1,z_2,z_3)^{\rm T}$, and choose the constant gravity vector $p=(0,0,1)^{\rm T}$. We thus obtain the following system of differential equations:
\beq \label{gaudincon}
\left\{\begin{array}{l}
\dot x_1 = x_2 z_3 - x_3 z_2 + \l_1 x_2 , \vspace{.1truecm}\\
\dot  x_2 = x_3 z_1 - x_1 z_3 - \l_1 x_1 ,  \vspace{.1truecm}\\
\dot  x_3 = x_1 z_2 - x_2 z_1  ,  \vspace{.1truecm}\\
\dot z_1 = z_2 x_3 - z_3 x_2 +  \l_2 z_2  , \vspace{.1truecm}\\
 \dot  z_2=z_3 x_1 - z_1 x_3 -  \l_2 z_1  , \vspace{.1truecm} \\
 \dot  z_3 = z_1 x_2 - z_2 x_1 ,
 \end{array}\right.
\eeq
with $\lambda_1$, $\lambda_2$ being real parameters.
The system (\ref{gaudincon}) has the following four independent integrals of motion:
$$
C_1 = x_2^2+x_2^2+x_3^2,\qquad C_2=z_1^2+z_2^2+z_3^2,
$$
$$
H_1 = x_3 + \frac{x_1 z_1 + x_2 z_2 + x_3 z_3 }{\l_1 - \l_2}, \qquad H_2 = z_3 + \frac{x_1 z_1 + x_2 z_2 + x_3 z_3 }{\l_2 - \l_1}.
$$
Note that the quantity $H_1+H_2 = x_3 + z_3$ is a linear integral of motion. We mention also the following Wronskian relation with constant coefficients:
\begin{equation}
\label{eq: G Wronski}
(x_3+z_3)(\dot{x}_1z_1-x_1\dot{z}_1+\dot{x}_2z_2-x_2\dot{z}_2)+
(\lambda_1+\lambda_2+x_3+z_3)(\dot{x}_3z_3-x_3\dot{z}_3)=0.
\end{equation}

The HK discretization of (\ref{gaudincon}) reads:
\beq \label{gaudinD}
\left\{\begin{array}{l}
\wx_1 - x_1= \epsilon(\wx_2 z_3 + x_2 \wz_3 - \wx_3 z_2  - x_3 \wz_2) + \epsilon\l_1 (\wx_2+ x_2), \vspace{.1truecm}\\
\wx_2 -x_2 = \epsilon(\wx_3 z_1 + x_3 \wz_1 - \wx_1 z_3  - x_1 \wz_3) - \epsilon\l_1 (\wx_1+ x_1),  \vspace{.1truecm}\\
\wx_3 - x_3= \epsilon (\wx_1 z_2 + x_1\wz_2 - \wx_2 z_1 - x_2 \wz_1) ,  \vspace{.1truecm}\\
\wz_1 -z_1=  \epsilon (\wz_2 x_3 + x_2\wx_3 - \wz_3 x_2 - z_3 \wx_2) +  \epsilon\l_2 (\wz_2+ z_2), \vspace{.1truecm}\\
\wz_2 - z_2=\epsilon (\wz_3 x_1 + z_3 \wx_1 - \wz_1 x_3 - z_1 \wx_3) - \epsilon\l_2 (\wz_1+ z_1), \vspace{.1truecm} \\
\wz_3 - z_3 =  \epsilon(\wz_1 x_2 + z_1\wx_2 - \wz_2 x_1- z_2 \wx_1).
\end{array}\right.
\eeq
The map $f:x\mapsto\widetilde{x}\ $ obtained by solving
(\ref{gaudinD}) for $\widetilde{x}$ is given by:
\beq
\nonumber
\widetilde{x} =f(x,\epsilon)=A^{-1}(x,\epsilon)(\mathds{1}+\epsilon B)x,
\eeq
where $x=(x_1,x_2,x_3,z_1,z_2,z_3)^{\rm T}$, and
\[
A(x,\epsilon) =
\begin{pmatrix}
1&-\epsilon  z_3&\epsilon z_2&0&\epsilon x_3&-\epsilon x_2& \\
-\epsilon  z_3& 1 & -\epsilon z_1&-\epsilon x_3&0&\epsilon x_1& \\
-\epsilon z_2&\epsilon z_1&1&\epsilon x_2&-\epsilon x_1&0& \\
0&\epsilon z_3&-\epsilon z_2&1&-\epsilon x_3&\epsilon x_2& \\
-\epsilon z_3&0&\epsilon z_1& \epsilon x_3 &1&-\epsilon x_1& \\
\epsilon z_2&-\epsilon z_1&0&-\epsilon x_2&\epsilon x_1&1
\end{pmatrix}-\epsilon B,
\]
\[
B=\begin{pmatrix}
0 & \lambda_1 & 0 & 0 &0 & 0 \\
-\lambda_1  & 0 & 0 & 0 & 0 & 0 \\
0 & 0 & 0 & 0 & 0 & 0 \\
0 & 0 & 0 & 0 &  \lambda_2 & 0 \\
0 & 0 & 0 & - \lambda_2 & 0  & 0 \\
0 & 0 & 0 & 0 & 0 & 0
\end{pmatrix}.
\]
This map will be called dG in the sequel. The quantity $x_3 + z_3$ is obviously preserved by the map dG. Other conserved quantities may now be found using the HK bases approach. A ``simple'' integral follows, as in the previous sections, by discretizing the Wronskian relation (\ref{eq: G Wronski}).
\begin{prop}
The set of functions $\Gamma=(\wx_1z_1-x_1\wz_1,\, \wx_2z_2-x_2\wz_2,\, \wx_3z_3-x_3\wz_3)$ is a HK basis for the map dG with the one-dimensional null-space $K_\Gamma(x,z)=[x_3+z_3:x_3+z_3: I]$, with
\begin{equation}\nn
I(x,z)=\frac{\lambda_1+\lambda_2+x_3+z_3+\epsilon^2\lambda_1(x_1^2+x_2^2)+\epsilon^2\lambda_2(z_1^2+z_2^2)+
\epsilon^2(\lambda_1+\lambda_2)(x_1z_1+x_2z_2)}{1-\epsilon^2(\lambda_1x_3+\lambda_2z_3+\lambda_1\lambda_2)}.
\end{equation}
\end{prop}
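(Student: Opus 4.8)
The plan is to follow the route already used for the analogous ``simple'' HK bases of dLT, dK and dC (Propositions \ref{thm: lagrange simple basis}, \ref{thm: Kirchhoff simple basis} and \ref{thm: Clebsch simple basis}): first establish a single-step discrete Wronskian identity, then promote it to a relation valid along whole orbits by checking that its coefficients are integrals of motion, and finally invoke Proposition \ref{Lem: dim K} to conclude that the null-space is exactly one-dimensional. The identity in question is the natural discretization of the continuous Wronskian relation (\ref{eq: G Wronski}), and one expects the coefficient $\lambda_1+\lambda_2+x_3+z_3$ appearing there to be deformed into the quantity $I$; indeed $I\to\lambda_1+\lambda_2+x_3+z_3$ as $\epsilon\to0$, which is a reassuring consistency check.

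Writing $\Gamma_1,\Gamma_2,\Gamma_3$ for the three components of $\Gamma$, the first step is to verify the pointwise identity
\[
(x_3+z_3)(\Gamma_1+\Gamma_2)+I\,\Gamma_3=0,
\]
holding identically once the explicit birational map $\wx=f(x,z,\epsilon)$, $f=A^{-1}(\mathds 1+\epsilon B)$, is substituted, with $I$ defined as the ratio $I=-(x_3+z_3)(\Gamma_1+\Gamma_2)/\Gamma_3$. This is a mechanical computation: one inserts the explicit map and checks that the ratio collapses to the closed form quoted in the statement. The crucial feature to confirm---and, as everywhere in this paper, one for which we know no structural reason---is that after all cancellations $I$ is an \emph{even} function of $\epsilon$.

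Next I would show that both candidate coefficients are conserved. That $x_3+z_3$ is an integral is immediate: adding the third and sixth equations of (\ref{gaudinD}) makes the right-hand sides cancel in pairs, so $\wx_3+\wz_3=x_3+z_3$. For $I$ I would combine the reversibility $f^{-1}(\cdot,\epsilon)=f(\cdot,-\epsilon)$ with the antisymmetry of the Wronskians $\Gamma_j$ under interchange of the two iterates. Evaluating $I$ at $f(x,z,\epsilon)$ with parameter $-\epsilon$, the forward iterate returns to the original point, so each $\Gamma_j$ changes sign, while the prefactor $x_3+z_3$ is unchanged because it is an integral; hence $I(f(x,z),-\epsilon)=I(x,z,\epsilon)$. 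Together with the evenness of $I$ from the first step this yields $I\circ f=I$. Since the single-step identity holds at every phase point and the coefficients $x_3+z_3$ and $I$ are constant along orbits, substituting $f^i(x,z)$ into it produces exactly (\ref{eq: fundamental}) for all $i\in\bbZ$, with the fixed vector $c=(x_3+z_3,\,x_3+z_3,\,I)$; thus $\Gamma$ is a HK basis and $c\in K_\Gamma(x,z)$.

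Finally, to pin down the dimension I would check at a generic orbit that the system (\ref{eq: fundamental}) with $i=0,1$ has rank $2$, so that $c$ spans its unique projective solution. As $c$ has just been shown to be an integral, Proposition \ref{Lem: dim K} then gives $\dim K_\Gamma(x,z)=1$ together with $K_\Gamma(x,z)=[x_3+z_3:x_3+z_3:I]$. The main obstacle is the very first step: the explicit inversion of the $6\times6$ matrix $A(x,\epsilon)$ and the verification that the resulting rational expression for $I$ both simplifies to the compact formula in the statement and is even in $\epsilon$. This is heavy enough to call for a computer algebra system, and the evenness, on which the entire argument rests, is precisely the ``miraculous'' cancellation that no general mechanism explains.
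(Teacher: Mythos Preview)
Your proposal is correct and follows essentially the same route the paper uses for the parallel statements about dLT, dK and dC: compute the ratio $I=-(x_3+z_3)(\Gamma_1+\Gamma_2)/\Gamma_3$ symbolically, observe that it simplifies to the stated even function of $\epsilon$, and conclude that it is a conserved quantity, which makes $\Gamma$ a HK basis with the claimed one-dimensional null-space. Your explicit unpacking of \emph{why} evenness in $\epsilon$ together with reversibility and the antisymmetry of the discrete Wronskians forces $I\circ f=I$ is a welcome clarification of what the paper leaves implicit in the phrase ``even function of $\epsilon$ and therefore a conserved quantity''.
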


A full set of integrals is found in the following Proposition. The roles of the variables $x_i$ and $z_i$ are not quite symmetric there, and interchanging them is of course admissible but does not lead to new integrals of motion.
\begin{prop} \label{thm: gaudinfullbasis}\quad

{\rm (a)} The set $\Phi= (x_1^2+x_2^2,\, z_1^2+z_2^2, \, z_3^2, \, x_1 z_1 + x_2 z_2, \, z_3, \, 1)$ is a HK basis for the map dG with $\dim K_{\Phi}(x) = 3$.
\smallskip

{\rm (b)} The set $\Phi_1=(1,\, z_3,\, z_3^2,\, x_1^2+x_2^2)$ is a HK basis for the map dG with a
one-dimensional null-space. At each point $(x,z)\in\bbR^6$ we have: $K_{\Phi_1}(x,z) = [c_0:c_1:c_2:-1]$.
The functions $c_0,c_1,c_2$ are conserved quantities of the map dG, given by
\begin{eqnarray*}
&&c_0 = \frac{x_1^2+x_2^2-2 x_3 z_3-z_3^2+\epsilon^2 c_0^{(4)}+\epsilon^4
c_0^{(6)}+\epsilon^6 c_0^{(8)}+\epsilon^8 c_0^{(10)}}{\Delta_1\Delta_2},\\
&&c_1 = \frac{2 (x_3+z_3)\left(1+\epsilon^2
c_1^{(3)}+\epsilon^4 c_1^{(5)}+\epsilon^6 c_1^{(7)} +\epsilon^8 c_1^{(9)} \right)}{\Delta_1\Delta_2},\\
&&c_2 = -\frac{\left(1+\epsilon^2 c_2^{(2)}+
\epsilon^4 c_2^{(4)}+\epsilon^6 c_2^{(6)}+\epsilon^8 c_2^{(10)}\right)}{\Delta_1\Delta_2}, \\
\end{eqnarray*}
where
\[
\Delta_1=1-\epsilon^2(\l_1 x_3+\l_2 z_3+\lambda_1\lambda_2),\quad \Delta_2=
1+\epsilon^2\Delta_2^{(2)}+\epsilon^4 \Delta_2^{(4)}+ \epsilon^6\Delta_2^{(6)}.
\]
Here $\Delta^{(q)}$ and $c_k^{(q)}$ are polynomials of degree $q$ in the phase variables. In particular:
\[
c_2^{(2)}=-(x_1^2+x_2^2+x_3^2) - (z_1^2+z_2^2+z_3^2) - 2 (x_1 z_1 +x_2z_2+x_3z_3)
-2(\l_2 x_3+ \l_1 z_3)-(\lambda_1^2+\lambda_2^2),
\]
and $\Delta_2^{(2)}=c_2^{(2)}+\l_1 x_3+\l_2 z_3+\lambda_1\lambda_2$.
\smallskip

{\rm (c)} The set $\Phi_2=(1,z_3,z_3^2,x_1z_1+x_2z_2)$ is a HK basis for the map dG with a
one-dimensional null-space. At each point $(x,z)\in\bbR^6$ we have: $K_{\Phi_2}(x,z) = [d_0:d_1:d_2:-1]$.
The functions $d_0,d_1,d_2$ are conserved quantities of the map dG, given by
\begin{eqnarray*}
&&d_0 = \frac{x_1z_1+x_2z_2+x_3z_3+(\lambda_2 - \lambda_1)z_3+\epsilon^2 d_0^{(4)}+\epsilon^4
d_0^{(6)}+\epsilon^6 d_0^{(8)}+\epsilon^8 d_0^{(10)}}{\Delta_1\Delta_2},\\
&&d_1 = \frac{\lambda_1-\lambda_2-x_3-z_3+\epsilon^2 d_1^{(3)}+\epsilon^4 d_1^{(5)}
+\epsilon^6 d_1^{(7)}+\epsilon^8 d_1^{(9)}}{\Delta_1\Delta_2}, \\
&&d_2 =
-\frac{1+\epsilon^2c_2^{(2)}+
\epsilon^4c_2^{(4)}+\epsilon^6c_2^{(6)}+\epsilon^8c_2^{(8)}}{\Delta_1\Delta_2},
\end{eqnarray*}
where $d_k^{(q)}$ are polynomials of degree $q$ in the phase variables.
\smallskip

{\rm (d)} The set $\Phi_3=(1,\, z_3,\, z_3^2,\, z_1^2+z_2^2)$ is a HK basis for the map dG with a
one-dimensional null-space. At each point $(x,z)\in\bbR^6$ we have: $K_{\Phi_3}(x,z) = [e_0:e_1:e_2:-1]$.
The functions $e_0,e_1,e_2$ are conserved quantities of the map dG, given by
\begin{eqnarray*}
&&e_0 =
\frac{z_1^2+z_2^2+z_3^2+\epsilon^2 e_0^{(4)}+\epsilon^4
e_0^{(6)}+\epsilon^6 e_0^{(8)}+\epsilon^8 e_0^{(10)}}{\Delta_1\Delta_2},\\
&&e_1 = \frac{2\epsilon^2 \left(e_1^{(2)}+\epsilon^2 e_1^{(4)}
+\epsilon^4 e_1^{(6)}+\epsilon^6 e_1^{(8)}\right)}{\Delta_1\Delta_2}, \\
&&e_2 =
-\frac{1+\epsilon^2 e_2^{(2)}+
\epsilon^4 e_2^{(4)}+\epsilon^6 e_2^{(6)} +\epsilon^8 e_2^{(8)} }{\Delta_1\Delta_2},
\end{eqnarray*}
where $e_k^{(q)}$ are polynomials of degree $q$ in the phase variables.
\end{prop}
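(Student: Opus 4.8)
The plan is to prove Proposition~\ref{thm: gaudinfullbasis} by the same experimental-then-symbolic route already used for the Lagrange and Kirchhoff tops in Propositions~\ref{thm: lagrangefullbasis} and~\ref{thm: kirchhoffullbasis}. I would dispose of the three one-dimensional null-space bases (b), (c), (d) first, and then assemble part (a) from them. The key reduction is that, although each $\Phi_i$ has $l=4$ elements and Proposition~\ref{Lem: dim K} would nominally require three consecutive iterates, the reversibility $f^{-1}(x,\epsilon)=f(x,-\epsilon)$ makes both $f$ and $f^{-1}$ equally cheap; the only genuinely expensive object is $f^2$, which I would avoid entirely.

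For part (b) I would normalize $K_{\Phi_1}(x,z)=[c_0:c_1:c_2:-1]$, so that the coefficient of $x_1^2+x_2^2$ is $-1$, and impose the non-homogeneous relation $(c_0+c_1 z_3+c_2 z_3^2)\circ f^i=(x_1^2+x_2^2)\circ f^i$, as in~\eqref{eq: Gamma1 syst}, at $i=0,1$ only. A numerical scan of orbits then shows that the null-space is one-dimensional and exhibits one extra linear relation among $c_0,c_1,c_2$, manifestly even in $\epsilon$ and of the same shape as~\eqref{eq: Gamma1 rel}, in which the Gaudin denominator $\Delta_1=1-\epsilon^2(\lambda_1 x_3+\lambda_2 z_3+\lambda_1\lambda_2)$ plays the role that it already plays in the simple Wronskian integral. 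Solving the resulting $3\times 3$ system in MAPLE produces the closed forms for $c_0,c_1,c_2$ quoted in the statement. Their evenness in $\epsilon$ is the decisive structural check: applying $\epsilon\mapsto-\epsilon$ to the $i=1$ equation turns it, by reversibility, into the $i=-1$ equation, so an even solution satisfies the relation on the three consecutive iterates $i=-1,0,1$; a rank count of the corresponding $3\times 4$ matrix (which needs only $f$ and $f^{-1}$) confirms $\dim K_{\Phi_1}=1$, and Proposition~\ref{Th: K_integral} together with Corollary~\ref{Th: mechanism d=1} then certifies $c_0,c_1,c_2$ as conserved quantities.

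Parts (c) and (d) would proceed identically for $\Phi_2$ and $\Phi_3$, and here I expect the real economy to come from reusing $c_2$: exactly as in the relations~\eqref{eq: Gamma2 rel} for the Lagrange top, the auxiliary relations should express the leading coefficients $d_2$ and $e_2$ through the $c_2$ already found, reducing each case to one further $3\times 3$ solve with an even solution. Part (a) is then a matter of bookkeeping. Collecting the three generators in the coefficient ordering of $\Phi=(x_1^2+x_2^2,\,z_1^2+z_2^2,\,z_3^2,\,x_1z_1+x_2z_2,\,z_3,\,1)$ gives the vectors $(-1,0,c_2,0,c_1,c_0)$, $(0,0,d_2,-1,d_1,d_0)$ and $(0,-1,e_2,0,e_1,e_0)$; their $-1$ entries occupy three distinct slots, so they are linearly independent and span a three-dimensional subspace of $K_\Phi$. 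That $\dim K_\Phi=3$ exactly, and not more, follows either from the generic dimension-drop behaviour recalled before Proposition~\ref{Lem: dim K} or from a direct rank evaluation at a generic orbit point.

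The \textbf{main obstacle} is, as everywhere in this subject, the algebraic size of the iterated map: dG acts on $\bbR^6$ and possesses no known Lax or bi-Hamiltonian structure, so the whole argument stands or falls on never forming $f^2$ symbolically. The substantive work therefore lies in (i) correctly guessing the even auxiliary relations that stand in for the missing third iterate, and in particular the relations tying $d_2$ and $e_2$ to $c_2$, and (ii) driving the MAPLE elimination far enough both to display the integrals in the stated even-in-$\epsilon$ form and to certify that evenness as an identity rather than a numerical coincidence. Once these relations are pinned down the remaining verifications are routine; discovering them and taming the symbolic blow-up is the delicate part.
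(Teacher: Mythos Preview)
Your approach is correct and is precisely the one the paper intends: the paper omits the proof of this proposition, but your plan mirrors exactly the argument given for the parallel Propositions~\ref{thm: lagrangefullbasis} and~\ref{thm: kirchhoffullbasis} (solve at $i=0,1$, identify an even-in-$\epsilon$ auxiliary relation to replace the third iterate, verify evenness of the resulting $c_k$, then reuse $c_2$ for parts (c) and (d) via relations analogous to~\eqref{eq: Gamma2 rel}). Note in particular that here the relation for part~(c) is simply $d_2=c_2$, as the stated formulas already show.
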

It can be shown that each of the sets $\{c_0,c_1,c_2,x_3+z_3\}$, $\{d_0,d_1,d_2,x_3+z_3\}$ and
$\{e_0,e_1,e_2,x_3+z_3\}$ contains four independent integrals.

Analogously to the situation for the map dLT, it is possible to obtain a polynomial integral and an invariant volume form for the map dG.
\begin{proposition}
The function
$$
G=\frac12(x_1 + z_1)^2 +\frac12 (x_2 + z_2 )^2+\l_1 x_3 + \l_2z_3-\frac{\ep^2}{2}  \big( (\l_1 x_1+ \l_2z_1)^2+(\l_1x_2 + \l_2z_2)^2 \big),
$$
is a conserved quantity for the map dG.
\end{proposition}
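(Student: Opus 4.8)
The plan is to follow the strategy used for the analogous polynomial integral $F$ of the map dLT in Section~\ref{Sect: dL}: I will exhibit $G$ as a linear combination, with \emph{constant} coefficients, of the conserved quantities already produced by the monomial HK bases of Proposition~\ref{thm: gaudinfullbasis}, together with the obvious linear integral $x_3+z_3$. First I expand the squares and collect the purely quadratic part, which gives
\[
G=\tfrac12\bigl(1-\ep^2\l_1^2\bigr)(x_1^2+x_2^2)+\bigl(1-\ep^2\l_1\l_2\bigr)(x_1z_1+x_2z_2)
+\tfrac12\bigl(1-\ep^2\l_2^2\bigr)(z_1^2+z_2^2)+\l_1x_3+\l_2z_3 .
\]
Writing $P=1-\ep^2\l_1^2$, $Q=1-\ep^2\l_1\l_2$, $R=1-\ep^2\l_2^2$ (all honest constants, not merely functions of integrals as in the dLT case) and splitting $\l_1x_3+\l_2z_3=\l_1(x_3+z_3)+(\l_2-\l_1)z_3$ isolates the three quadratic monomials, on which everything will hinge.

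The key input is that, by parts (b), (c), (d) of Proposition~\ref{thm: gaudinfullbasis}, the null-space identities hold \emph{pointwise} on phase space, being the $i=0$ instances of~\eqref{eq: fundamental}:
\[
x_1^2+x_2^2=c_0+c_1z_3+c_2z_3^2,\qquad
x_1z_1+x_2z_2=d_0+d_1z_3+d_2z_3^2,\qquad
z_1^2+z_2^2=e_0+e_1z_3+e_2z_3^2 .
\]
Substituting these into the expanded form of $G$ and grouping by powers of $z_3$ yields
\[
G=\Bigl[\tfrac{P}{2}c_0+Qd_0+\tfrac{R}{2}e_0+\l_1(x_3+z_3)\Bigr]
+\Bigl[\tfrac{P}{2}c_1+Qd_1+\tfrac{R}{2}e_1+(\l_2-\l_1)\Bigr]z_3
+\Bigl[\tfrac{P}{2}c_2+Qd_2+\tfrac{R}{2}e_2\Bigr]z_3^2 .
\]
The first bracket is manifestly a conserved quantity, being a constant-coefficient combination of the integrals $c_0,d_0,e_0$ and $x_3+z_3$. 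Hence the whole argument collapses to verifying the two scalar relations among the already-known integrals
\[
\tfrac{P}{2}c_2+Qd_2+\tfrac{R}{2}e_2=0,\qquad
\tfrac{P}{2}c_1+Qd_1+\tfrac{R}{2}e_1=\l_1-\l_2 ,
\]
exactly as the dLT proof rests on the linear relations~\eqref{eq: Gamma1 rel} and~\eqref{eq: Gamma2 rel} among its null-space components.

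I expect the verification of these two relations to be the only real obstacle, since $c_1,c_2,d_1,d_2,e_1,e_2$ are the bulky, even-in-$\ep$ rational functions of Proposition~\ref{thm: gaudinfullbasis}; in practice I would confirm them with MAPLE, as is done for the corresponding dLT and dK identities. As an independent guide and sanity check I would first pass to the continuous limit $\ep\to0$, where $G$ degenerates to a function that, using $H_1+H_2=x_3+z_3$ and $\l_1H_1+\l_2H_2=(x_1z_1+x_2z_2+x_3z_3)+\l_1x_3+\l_2z_3$, reorganizes into
\[
G_0=\tfrac12(C_1+C_2)-\tfrac12(H_1+H_2)^2+\l_1H_1+\l_2H_2 ,
\]
a function of the continuous integrals $C_1,C_2,H_1,H_2$; this both confirms integrability in the limit and pins down the coefficients $P,Q,R$. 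Should the bookkeeping of the two relations prove unwieldy, a fully equivalent fallback is the brute-force route: insert the explicit map $\wx=f(x,\ep)$ and check $G\circ f=G$ directly, the manifest evenness of $G$ in $\ep$ halving the computation via the reversibility $f^{-1}(x,\ep)=f(x,-\ep)$.
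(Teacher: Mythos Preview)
Your proposal is correct and follows precisely the route the paper indicates by the phrase ``Analogously to the situation for the map dLT'': you combine the three null-space identities from Proposition~\ref{thm: gaudinfullbasis} with constant weights $P/2,\,Q,\,R/2$ and the linear integral $x_3+z_3$, reducing the claim to two scalar relations among the already-known integrals $c_i,d_i,e_i$, exactly parallel to the dLT proof of the polynomial integral $F$. The only cosmetic difference is that here your weights $P,Q,R$ are honest numerical constants (depending only on $\ep,\l_1,\l_2$), whereas in the dLT case $C,D,E$ depended on the conserved quantity $m_3$; this, as you note, makes the present case slightly cleaner.
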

\begin{proposition}\label{th: dG inv meas}
The map dG possesses an invariant volume form:
\[
\det\frac{\partial(\wx,\wz)}{\partial(x,z)}=\frac{\phi(\wx,\wz)}{\phi(x,z)}\quad\Leftrightarrow\quad
f^*\omega=\omega,\quad \omega=\frac{dx_1\wedge dx_2\wedge dx_3\wedge dz_1\wedge dz_2\wedge dz_3}{\phi(x,z)},
\]
where $\phi(x,z))=\Delta_2(x,z)$.
\end{proposition}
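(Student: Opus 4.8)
The plan is to deduce everything from the universal Jacobian identity (\ref{Jac Deltas}), which holds for dG exactly as for every HK discretization. Writing the defining equations (\ref{gaudinD}) in the matrix form $A(x,z,\ep)(\wx,\wz)^{\rm T}=(\mathds{1}+\ep B)(x,z)^{\rm T}$ and differentiating with respect to the phase variables, the $\wx,\wz$-dependence sits entirely in the left factor $A$ (the right-hand side $(\mathds{1}+\ep B)(x,z)^{\rm T}$ carries none), so the implicit-function computation yields
\[
\det\frac{\partial(\wx,\wz)}{\partial(x,z)}=\frac{\det A(\wx,\wz,-\ep)}{\det A(x,z,\ep)}.
\]
The whole statement therefore reduces to understanding the single scalar $\det A(x,z,\ep)$ of the displayed $6\times 6$ matrix.

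First I would compute $\det A(x,z,\ep)$ explicitly with a computer algebra system and check that it coincides with the factor $\Delta_2$ entering the common denominators $\Delta_1\Delta_2$ of the integrals of Proposition \ref{thm: gaudinfullbasis}, and --- this is the decisive point --- that it is an \emph{even} function of $\ep$. Evenness is not automatic: for the three-dimensional Euler top the analogous determinant (\ref{eq: dET Delta}) carries a genuine $\ep^3$ term. Here it should be traced to the structure of the linear part $M=(A-\mathds{1})/\ep$, which is assembled from the skew-symmetric cross-product operators $\hat x,\hat z,\hat p$ encoding the $N=2$ Gaudin flow on $\su2\oplus\su2$; one expects $M$ to be conjugate to $-M$ (spectrum symmetric about the origin, as befits an infinitesimally Hamiltonian matrix in even dimension), whence $\det(\mathds{1}+\ep M)=\det(\mathds{1}-\ep M)$ and all odd powers of $\ep$ drop out. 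Granting $\det A(x,z,\ep)=\Delta_2(x,z)=\Delta_2(x,z,-\ep)$, substitution into the identity above gives at once
\[
\det\frac{\partial(\wx,\wz)}{\partial(x,z)}=\frac{\Delta_2(\wx,\wz)}{\Delta_2(x,z)}=\frac{\phi(\wx,\wz)}{\phi(x,z)},
\]
which is the asserted invariance $f^*\omega=\omega$.

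The hard part will be exactly this evaluation of the $6\times 6$ determinant together with the proof of its parity. Should the clean identity $\det A=\Delta_2$ fail to hold verbatim, the fallback is to produce an \emph{even} rational factor $R(x,z,\ep)$ with $\det A(x,z,\ep)=\Delta_2(x,z)\,R(x,z,\ep)$ and to verify, using the conserved quantities of Proposition \ref{thm: gaudinfullbasis}, that $R$ is itself an integral of dG; then $R(\wx,\wz,-\ep)/R(x,z,\ep)=1$ along every orbit and the ratio collapses to $\Delta_2(\wx,\wz)/\Delta_2(x,z)$ as before. In either route the genuinely non-transparent ingredient is the miraculous cancellation of the odd-in-$\ep$ contributions that renders the determinant compatible with the even density $\Delta_2$ --- the same kind of unexplained cancellation already flagged for dET and not covered by any general mechanism.
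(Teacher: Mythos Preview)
The paper omits the proof of this proposition entirely (as it does for most results here), but your approach via the universal Jacobian identity (\ref{Jac Deltas}) is exactly the method the paper employs whenever it \emph{does} sketch such a proof---see Propositions~\ref{th: ddress type}, \ref{th: dLV type}, and the remark following Proposition~\ref{th: dET inv meas}. Your plan to compute $\det A(x,z,\ep)$ symbolically, identify it with $\Delta_2$, and verify its evenness in $\ep$ (so that $\det A(\wx,\wz,-\ep)=\Delta_2(\wx,\wz)$) is precisely what the paper's framework calls for, and the degree count confirms that $\det A$ and $\Delta_2$ are both polynomials of degree~6 in both $\ep$ and the phase variables; your fallback via an integral-of-motion factor is a sound contingency but should not be needed.
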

Explicit integration of the map dG could be based on the following claim.
\begin{proposition}\label{th: dG biquad}
The component $x_3$ of the solution of the dG map satisfies a relation of the type
\begin{equation} \nonumber
Q(x_3,\widetilde{x}_3)=q_0x_3^2 \widetilde{x}_3^2 +
q_1 x_3 \widetilde{x}_3(x_3+\widetilde{x}_3)
+q_2(x_3^2+\widetilde{x}_3^2) + q_3 x_3\widetilde{x}_3
+q_4(x_3+\widetilde{x}_3) + q_5=0,
\end{equation}
coefficients of the biquadratic polynomial  $Q$  being conserved quantities of dG. Thus, $x_3(t)$ is an elliptic function of degree 2. An analogous statement holds for the component $z_3$.
\end{proposition}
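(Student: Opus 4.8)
The plan is to follow the strategy behind Proposition \ref{th: dET biquad}: produce a relation between consecutive values $x_3$ and $\wx_3$ in which every coefficient is a conserved quantity, by first reducing all auxiliary quadratic quantities to polynomials in $x_3$, and then eliminating the remaining variables with two ``discrete Wronskian'' relations. The final conclusion about elliptic functions then follows from the theory of biquadratic (QRT) curves exactly as in the Euler case.

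\smallskip

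First I would record the consequences of the HK bases of Proposition \ref{thm: gaudinfullbasis}. The basis $\Phi$ of part (a) has a three-dimensional null-space, so along any orbit the six functions $x_1^2+x_2^2$, $z_1^2+z_2^2$, $z_3^2$, $x_1z_1+x_2z_2$, $z_3$, $1$ satisfy three independent linear relations with conserved coefficients. Equivalently (this is the content of the one-dimensional bases $\Phi_1,\Phi_2,\Phi_3$), each of $x_1^2+x_2^2$, $z_1^2+z_2^2$ and $x_1z_1+x_2z_2$ is a quadratic polynomial in $z_3$ whose coefficients are integrals of dG. Since $S:=x_3+z_3$ is a conserved quantity, I may substitute $z_3=S-x_3$, so that these three functions become conserved-coefficient quadratics in $x_3$; because the coefficients are integrals, evaluating at the image point gives the \emph{same} quadratics in $\wx_3$.

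\smallskip

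Next I would extract two bilinear relations. The equation of motion for $x_3$ in \eqref{gaudinD} reads $(\wx_3-x_3)/\ep=(\wx_1z_2-\wx_2z_1)+(x_1\wz_2-x_2\wz_1)$. On the other hand, the ``simple'' integral $I$ obtained from the Wronskian HK basis $\Gamma=(\wx_1z_1-x_1\wz_1,\,\wx_2z_2-x_2\wz_2,\,\wx_3z_3-x_3\wz_3)$ (the discretization of \eqref{eq: G Wronski}) yields, upon using the elementary identity $\wx_3z_3-x_3\wz_3=S(\wx_3-x_3)$, the relation $(\wx_1z_1+\wx_2z_2)-(x_1\wz_1+x_2\wz_2)=-I(\wx_3-x_3)$. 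Squaring each of these two relations and invoking the planar Lagrange identity $(AB'-A'B)^2+(AA'+BB')^2=(A^2+B^2)\big((A')^2+(B')^2\big)$, I would convert the squares of the bilinear forms into the products $(\wx_1^2+\wx_2^2)(z_1^2+z_2^2)$ and $(x_1^2+x_2^2)(\wz_1^2+\wz_2^2)$, each of which is biquadratic in $(x_3,\wx_3)$ by the previous paragraph. Adding the two squared relations produces $(1+\ep^2I^2)(\wx_3-x_3)^2=\ep^2 R+2\ep^2\,\Xi$, where $R$ is biquadratic in $(x_3,\wx_3)$ and $\Xi$ collects the genuinely bilinear cross terms.

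\smallskip

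The main obstacle is the cross term $\Xi$. Expanding it by the planar identity, one sub-product, $(\wx_1\wz_1+\wx_2\wz_2)(x_1z_1+x_2z_2)$, reduces at once via $\Phi_2$, but the remaining pieces involve mixed quantities such as $\wx_1x_1+\wx_2x_2$ and $z_1\wz_1+z_2\wz_2$ that are not functions of $x_3$ alone. To close the argument one uses the \emph{bilinear} counterparts of the bases $\Phi_1,\Phi_2,\Phi_3$ — precisely the phenomenon, recurring throughout this paper, that each monomial quadratic HK basis has a bilinear analogue — which show these mixed quantities to be conserved-coefficient quadratics in the symmetric variables $x_3+\wx_3$ and $x_3\wx_3$. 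Granting this, $\Xi$ is biquadratic and the relation collapses to the stated symmetric form $Q(x_3,\wx_3)=0$ with integral coefficients. As in the analogous cases dET and dLT, the delicate point is the complete cancellation of all odd-in-$\ep$ and non-biquadratic contributions; I expect to verify it, as elsewhere in the paper, by symbolic computation with MAPLE. Once the biquadratic is established, standard QRT theory gives that $x_3(t)$ is an elliptic function of degree $2$, and the same for $z_3(t)$ follows from the $x\leftrightarrow z$ symmetry together with $z_3=S-x_3$.
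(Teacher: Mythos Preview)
The paper does not give a proof of this proposition; it is one of the many results whose proofs are announced as omitted (cf.\ the end of Section~\ref{Sect: intro}). For the analogous statements elsewhere (Propositions~\ref{th: dET biquad}, \ref{th: dZV biquad}, \ref{th: dLT biquad}, \ref{th: dITbiquad}) the paper either gives the short algebraic manipulation available in three dimensions or simply asserts the result, the implicit justification being a direct symbolic verification that the six-element set $(x_3^2\wx_3^2,\,x_3\wx_3(x_3+\wx_3),\,x_3^2+\wx_3^2,\,x_3\wx_3,\,x_3+\wx_3,\,1)$ is a HK basis with a one-dimensional null-space, via the criterion of Proposition~\ref{Lem: dim K}.

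Your strategy is the natural extension of the dET argument, and the reduction of the ``diagonal'' term $R$ is correct: by Proposition~\ref{thm: gaudinfullbasis} each of $x_1^2+x_2^2$, $z_1^2+z_2^2$, $x_1z_1+x_2z_2$ is a conserved-coefficient quadratic in $z_3=S-x_3$, so $R=(\wx_1^2+\wx_2^2)(z_1^2+z_2^2)+(x_1^2+x_2^2)(\wz_1^2+\wz_2^2)$ is biquadratic in $(x_3,\wx_3)$. The genuine gap is the cross term $\Xi$. After the Binet--Cauchy expansion you are left with the products $(\wx_1x_1+\wx_2x_2)(z_1\wz_1+z_2\wz_2)$ and $(\wx_1z_1+\wx_2z_2)(x_1\wz_1+x_2\wz_2)$, and to reduce these you invoke bilinear HK bases for dG analogous to the $\Psi_1,\Psi_2,\Psi_3$ of Proposition~\ref{thm: lagrange bilinear basis}. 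But such bases are nowhere established for dG in the paper; their existence is plausible by analogy with dET, dZV, dLT, dK and dC, yet it is a separate claim that would itself have to be verified symbolically. As written, your argument therefore trades one unproven computational fact for another. If you want to preserve the algebraic structure of your proof, you should first state and verify (with MAPLE) the bilinear counterpart of Proposition~\ref{thm: gaudinfullbasis} for dG; otherwise the direct symbolic verification of the biquadratic HK basis is shorter and equally rigorous. The final remark about $z_3$ is fine: since $z_3=S-x_3$ with $S$ conserved, the biquadratic relation for $z_3$ follows immediately from that for $x_3$.
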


\section{Conclusions}

The initial motivation for this study was the hope that HK discretization would preserve the integrability for all algebraically  integrable systems. This was formulated as a conjecture in \cite{PPS}. The list of integrable discretizations given in the present overview contains more than a dozen issues and is rather impressive. It includes systems integrable in terms of elliptic functions as well as those integrable in terms of theta-functions of genus $g=2$ (Clebsch system). This list might look like a convincing argument in favor of the integrability conjecture. However, at present we have also found examples which indicate that this conjecture might be wrong (e.g., the Zhukovski-Volterra system with all $\beta_k\neq 0$, or integrable chains, Volterra and dressing ones, with a big number of particles, say $N\ge 5$). We do not have rigorous proofs of the non-integrability in these cases, but the numerical evidence is rather strong. Since HK discretizations are (probably) not always integrable, the big number of integrable cases becomes still more intriguing: it is hard to imagine that all their common features come as a pure coincidence. We are after a theory which would clarify the problem of integrability of HK discretizations, but at present such a theory seems to remain a rather remote goal. Still, even without a general framework, HK discretizations represent a new fascinating chapter in the theory of integrable systems: we are now in a possession of a big and potentially growing stock of birational maps, integrable in terms of Abelian functions, highly non-trivial from the point of view of algebraic geometry and very different in nature from anything known before. The immediate goal of this review will be achieved if HK discretizations will attract attention of experts in the theory of integrable systems and in algebraic geometry.


\end{document}